\title{Strategy Complexity of Parity Objectives in Countable MDPs} %
\titlerunning{Strategy Complexity of Parity Objectives in Countable MDPs} %
\author{Stefan Kiefer}{Department of Computer Science, University of Oxford, UK}{}{}{}
\author{Richard Mayr}{School of Informatics, University of Edinburgh, UK}{}{}{} 
\author{Mahsa Shirmohammadi}{CNRS \& IRIF,  Universit\'e de Paris, FR}{}{}{}
\author{Patrick Totzke}{Department of Computer Science, University of Liverpool, UK}{}{}{}
\authorrunning{S. Kiefer,  R. Mayr, M. Shirmohammadi and P. Totzke} %
\keywords{Markov decision processes, Parity objectives, Levy's zero-one law} %
\newcounter{todocounter}
\tikzset{every state/.style={
            anchor=center,
            line width=0.6mm,
            fill=black!10,
            rounded corners=0.5mm,
            minimum size=0.7cm,inner sep=2pt,
}}
\tikzset{cstate/.style={state,rectangle}}
\tikzset{rstate/.style={state,circle}}
\tikzset{ucstate/.style={cstate, 
            minimum size=0.3cm,
}}
\tikzset{layer1/.style={
        fill=red!50!white,
        draw=red!80!black,
}}
\tikzset{layer2/.style={
        fill=blue!10,
        draw=blue!80!black,
}}
\tikzset{every edge/.append style={
  shorten <=2pt,
  shorten >=2pt,
  -{Triangle}, 
}}
\newcommand{\+}[1]{\mathbb{#1}}
\newcommand{\N}{\+{N}}
\newcommand{\x}{\times}
\newcommand{\rsymbol}{\ocircle}
\newcommand{\zsymbol}{\Box}
\newcommand{\zstates}{\states_\zsymbol}
\newcommand{\rstates}{\states_\rsymbol}
\newcommand{\reachset}{T}
\newcommand{\eqby}[2][=]{\stackrel{\text{{\tiny{#2}}}}{#1}}
\newcommand{\eqdef}{\eqby{def}}
\newcommand{\eps}{\varepsilon}
\newcommand{\step}[2][]{\xrightarrow[#1]{#2}}
\newcommand{\problemx}[3]{
\par\noindent\underline{\sc#1}\par\nobreak\vskip.2\baselineskip
\begingroup\clubpenalty10000\widowpenalty10000
\setbox0\hbox{\bf INPUT:\ }\setbox1\hbox{\bf QUESTION:\ }
\dimen0=\wd0\ifnum\wd1>\dimen0\dimen0=\wd1\fi
\vskip-\parskip\noindent
\hbox to\dimen0{\box0\hfil}\hangindent\dimen0\hangafter1\ignorespaces#2\par
\vskip-\parskip\noindent
\hbox to\dimen0{\box1\hfil}\hangindent\dimen0\hangafter1\ignorespaces#3\par
\endgroup}
\newcommand{\Prob}[2][]{\+{P}^{#1}_{#2}}
\newcommand{\dist}{\mathcal{D}}
\newcommand{\supp}{{\sf supp}}
\newcommand{\cParity}[1]{{#1}\text{-}\mathtt{Parity}}
\newcommand{\Parity}[1]{\mathtt{Parity}(#1)}
\newcommand{\reach}[1]{\mathtt{Reach}(#1)}
\newcommand{\safety}[1]{\mathtt{Safety}(#1)}
\newcommand{\always}{{\sf G}}
\newcommand{\eventually}{{\sf F}}
\renewcommand{\next}{{\sf X}}
\newcommand{\hide}[1]{}
\newcommand{\ignore}[1]{}
\newcommand{\lrc}[1]{(#1)}
\newcommand{\states}{S}
\newcommand{\state}{s}
\newcommand{\transition}{{\longrightarrow}}
\newcommand{\probp}{P}
\newcommand{\mdp}{{\mathcal M}}
\newcommand{\tuple}[1]{\lrc{#1}}
\newcommand{\mdptuple}{\tuple{\states,\zstates,\rstates,\transition,\probp}}
\newcommand{\play}{\rho}
\newcommand{\partialplay}{\rho}
\newcommand{\zstrat}{\sigma}
\newcommand{\zstratset}{\Sigma}
\newcommand{\probm}{{\+{P}}}
\newcommand{\expectval}{\operatorname*{\+E}}
\newcommand{\expectation}[1][]{ \operatorname*{\+E}_{#1}}
\newcommand{\valueof}[2]{{\mathtt{val}_{#1}(#2)}}
\newcommand{\formula}{{\varphi}}
\newcommand{\lrd}[1]{\{#1\}}
\newcommand{\nat}{\mathbb N}
\newcommand{\setcomp}[2]{\lrd{{#1} \mid {#2}}}
\newcommand{\bigdenotationof}[2]{\Big\llbracket #1\Big\rrbracket^{#2}}
\newcommand{\denotationof}[2]{\llbracket #1\rrbracket^{#2}}
\newcommand{\stateset}{Q}
\newcommand{\playset}{{\?R}}
\newcommand{\updatefun}{u}
\newcommand{\memory}{{\sf M}}
\newcommand{\memconf}{{\sf m}}
\newcommand{\colorset}[3]{[#1]^{\coloring#2#3}}
\newcommand{\coloring}{{\mathit{C}ol}}
\newcommand{\colorof}[1]{\coloring\lrc{{#1}}}
\newcommand{\cset}{{\mathcal C}}
\newcommand{\constraint}{\rhd}
\newcommand{\safe}[1]{{\it Safe}(#1)}
\newcommand{\safesub}[2]{{\it Safe_{#1}}(#2)}
\mathchardef\mhyphen="2D %
\newcommand{\optav}{\zstrat_{\mathit{opt\mhyphen av}}}
\newcommand{\epsoptav}{\zstrat_{\eps}}
\newcommand{\M}{\mathcal{M}}
\newcommand{\even}{{\mathit{even}}}
\newcommand{\core}{{\mathit{core}}}
\newcommand{\fix}{{\mathit{fix}}}
\newcommand{\Fix}[1]{\mathsf{Fix}_{#1}}
\newcommand{\ini}{R}
\newcommand{\closure}[1]{\mathit{Cl}(#1)}
\newcommand{\fixin}[2]{{#1}[#2]}
\newcommand{\bubblearound}[3]{\mathsf{bubble}(#1,#2,#3)} %
\newcommand{\fixinbubble}[4]{{#1}[#2,#3,#4]} %
\newcommand{\obs}[1]{\tau_{#1}}  %
\newcommand{\ors}[1]{\rho_{#1}}  %
\newcommand{\err}[2]{{\eps_{#2}}}  %
\newcommand{\previ}{{i}}
\newcommand{\nexti}{{i+1}}
\newcommand{\Alphaset}[1]{\mathsf{ALPHA}_{#1}}
\newcommand{\Betaset}[1]{\mathsf{BETA}_{#1}}
\newcommand{\Gammaset}[1]{\mathsf{GAMMA}_{#1}}
\newcommand{\Fixx}[1]{\mathsf{FIX}_{#1}}
\newcommand{\chain}{{\mathcal C}}
\newcommand{\bubble}[2]{{\sf bubble}_{#1}(#2)}
\newcommand{\F}{{\mathcal F}}
\newcommand{\E}{{\mathcal E}}
\newcommand{\indicatorfun}{{\boldsymbol{1}}}
\newcommand{\poststar}[1]{\mathit{Post}^*(#1)}
\newcommand{\ferr}[1]{L(#1)}
\newcommand{\RVNotInT}[1]{[\next^{#1} \neg \reachset]}
\newcommand{\pmdp}{\mdp_{*}}
\newcommand{\pstates}{\states_{*}}
\newcommand{\pzstates}{\states_{*\zsymbol}}
\newcommand{\prstates}{\states_{*\rsymbol}}
\newcommand{\ptransition}{\transition_{*}}
\newcommand{\pprobp}{\probp_{*}}
\begin{document}

\maketitle

\begin{abstract}
We study countably infinite MDPs with parity objectives.
Unlike in finite MDPs, optimal strategies need not exist,
and may
require infinite memory if they do.
We provide a complete picture of the exact strategy complexity
of $\eps$-optimal strategies (and optimal strategies, where they exist)
for all subclasses of parity objectives in the Mostowski hierarchy.
Either MD-strategies, Markov strategies, or 1-bit Markov strategies
are necessary and sufficient, depending on the number of colors,
the branching degree of the MDP, and whether one considers
$\eps$-optimal or optimal strategies.
In particular, 1-bit Markov strategies
are necessary and sufficient for $\eps$-optimal (resp.~optimal) strategies
for general parity objectives.

\end{abstract}

\section{Introduction}\label{sec:intro}
{\bf\noindent Background.}
Markov decision processes (MDPs) are a standard model for dynamic systems that
exhibit both stochastic and controlled behavior \cite{Puterman:book}.
MDPs play a prominent role in numerous domains, including artificial intelligence and machine learning~\cite{sutton2018reinforcement,sigaud2013markov}, control theory~\cite{blondel2000survey,NIPS2004_2569}, operations research and finance~\cite{bauerle2011finance,schal2002markov}, and formal verification~\cite{ModCheckHB18,ModCheckPrinciples08}.

An MDP is a directed graph where states are either random or controlled.
Its observed behavior is described by runs, which are infinite paths that are, in part, determined by the choices of a controller.
If the current state is random then the next state is chosen according to a fixed probability distribution.
Otherwise, if the current state is controlled, the controller can choose a distribution over all possible successor states.
By fixing a strategy for the controller (and initial state), one obtains a probability space
of runs of the MDP. The goal of the controller is to optimize the expected value of
some objective function on the runs.

The type of strategy necessary to achieve an optimal (resp.\ $\eps$-optimal)
value for a given objective is called its \emph{strategy complexity}.
There are different types of strategies, depending on whether one can take
the whole history of the run into account (history-dependent; (H)),
or whether one is limited to a finite amount of memory (finite memory; (F))
or whether decisions are based only on the current state (memoryless; (M)).
Moreover, the strategy type depends on whether the controller can
randomize (R) or is limited to deterministic choices~(D).
The simplest type, MD, refers to memoryless deterministic strategies.
\emph{Markov strategies} are strategies that base their decisions only on
the current state and the number of steps in the history of the run.
Thus they do use infinite memory, but only in a very restricted form
by maintaining an unbounded step-counter.
Slightly more general are \emph{1-bit Markov strategies} that use 1 bit of
extra memory in addition to a step-counter. 

\medskip
{\bf\noindent Parity objectives.}
We study countably infinite MDPs with parity objectives.
Parity conditions are widely used in temporal
logic and formal verification, e.g., they can express $\omega$-regular
languages and modal $\mu$-calculus \cite{GTW:2002}.
Every state has a \emph{color}, out of a finite set of colors encoded as natural
numbers. A run is winning iff the highest color that is seen
infinitely often is even. The controller wants to maximize 
the probability of winning runs.
The Mostowski hierarchy \cite{Mostowski:84} is a classification of
parity conditions based on restricting the set of allowed colors.
For instance, $\cParity{\{1,2,3\}}$ objectives only use colors $1$, $2$, and $3$.
This includes B\"uchi ($\cParity{\{1,2\}}$) and co-B\"uchi objectives ($\cParity{\{0,1\}}$),
both of which further subsume reachability and safety objectives.

\begin{figure*}[t]
    \centering
    \begin{subfigure}[t]{0.48\textwidth}
        \centering
        \centering
\tikzstyle{dashdotted}=[dash pattern=on 3pt off 2pt on \the\pgflinewidth off 2pt]
\tikzstyle{densely dashdotted}=      [dash pattern=on 3pt off 1pt on \the\pgflinewidth off 1pt]
\tikzstyle{loosely dashdotted}=      [dash pattern=on 3pt off 4pt on \the\pgflinewidth off 4pt]
\begin{tikzpicture}[>=latex',shorten >=1pt,node distance=1.9cm,on grid,auto]

\draw[thick,blue!70!white, pattern=north west lines, pattern color=blue!40!white, fill opacity=.5] (-.7,1.25)  -- (2.5,.4)--(2.5,-1.5) --(-.7,-1.5)--(-.7,1.25);

\draw[thick,red!70!white, pattern=north east lines, pattern
color=red!40!white,fill opacity=0.5]  (5.7,4)-- (5.7,-.4) -- (-.7,1.35) --(-.7,4);

\draw[thick,green, pattern=north east lines, pattern color=green, fill
opacity=.5] (2.6,.35)  -- (5.7,-.5)--(5.7,-1.5) --(2.6,-1.5) -- (2.6,.35);

\node (safe)  at(.8,-1) [draw=none] {$\mathtt{Safety}$};
\node (reach) at(4.2,-1)[draw=none] {$\mathtt{Reach}$};
\node (s01)   at(.8,.5) [draw=none] {$\cParity{\{0,1\}}$};
\node (s12)   at(4.2,.5)[draw=none] {$\cParity{\{1,2\}}$};
\node (s012)  at(.8,2) [draw=none] {$\cParity{\{0,1,2\}}$};
\node (s123)  at(4.2,2) [draw=none] {$\cParity{\{1,2,3\}}$};
\node (s012)  at(.8,3.5) [draw=none] {$\cParity{\{0,1,2,3\}}$};
\node (s123)  at(4.2,3.5) [draw=none] {$\cParity{\{1,2,3,4\}}$};

\node[label=below:\rotatebox{-15}{\textbf{\textcolor{red}{1-bit Markov}}}] at (2.4,1.8) {};
\node[label=below:\rotatebox{-15}{\textbf{\textcolor{blue}{Markov}}}] at (1.1,.4) {};
\node[label=below:\rotatebox{-15}{\textbf{\textcolor{green!80!black}{MD}}}] at (4.3,0) {};

\end{tikzpicture}
        \caption{$\eps$-optimal strategies for infinitely branching MDPs.}
    \end{subfigure}%
    \;\; 
    \begin{subfigure}[t]{0.48\textwidth}
        \centering
        \centering
\tikzstyle{dashdotted}=[dash pattern=on 3pt off 2pt on \the\pgflinewidth off 2pt]
\tikzstyle{densely dashdotted}=      [dash pattern=on 3pt off 1pt on \the\pgflinewidth off 1pt]
\tikzstyle{loosely dashdotted}=      [dash pattern=on 3pt off 4pt on \the\pgflinewidth off 4pt]
\begin{tikzpicture}[>=latex',shorten >=1pt,node distance=1.9cm,on grid,auto]

\draw[thick,green, pattern=north east lines, pattern color=green, fill
opacity=.5] (-.7,.1)  -- (2.5,-.8)--(2.5,1.9) --(5.7,0.8) -- (5.7, -1.6) -- (-.7,-1.6)--(-.7,.1);
\draw[thick,blue!70!white, pattern=north west lines, pattern color=blue!40!white,fill opacity=0.5] (-.7,3)  -- (2.4,1.95)--(2.4,-.7) --(-.7,.2)--(-.7,3);

\draw[thick,red!70!white, pattern=north east lines, pattern color=red!40!white,fill opacity=0.5] (5.7,4) -- (5.7,0.9)--(-.7,3.1)  --(-.7,4);

\node (safe)  at(.8,-1) [draw=none] {$\mathtt{Safety}$};
\node (reach) at(4.2,-1)[draw=none] {$\mathtt{Reach}$};
\node (s01)   at(.8,.5) [draw=none] {$\cParity{\{0,1\}}$};
\node (s12)   at(4.2,.5)[draw=none] {$\cParity{\{1,2\}}$};
\node (s012)  at(.8,2) [draw=none] {$\cParity{\{0,1,2\}}$};
\node (s123)  at(4.2,2) [draw=none] {$\cParity{\{1,2,3\}}$};
\node (s012)  at(.8,3.5) [draw=none] {$\cParity{\{0,1,2,3\}}$};
\node (s123)  at(4.2,3.5) [draw=none] {$\cParity{\{1,2,3,4\}}$};

\node[label=below:\rotatebox{-15}{\textbf{\textcolor{blue!90!black}{Markov}}}] at (1.1,1.9) {};
\node[label=below:\rotatebox{-15}{\textbf{\textcolor{green!80!black}{MD}}}] at (3.7,1.5) {};
\node[label=below:\rotatebox{-15}{\textbf{\textcolor{red}{1-bit Markov}}}] at (3.2,3.5) {};

\end{tikzpicture}
        \caption{Optimal strategies for infinitely branching MDPs.}
    \end{subfigure}
    \caption{These diagrams show the strategy complexity of
      $\eps$-optimal strategies and optimal strategies
      (where they exist) for parity objectives.
      Depending on the position in the
      Mostowski hierarchy, either MD-strategies (green), deterministic Markov-strategies (blue)
      or deterministic 1-bit Markov strategies (red) are necessary and
      sufficient (and randomization does not help \cite{KMST:ICALP2019}). 
      If the MDPs are finitely branching then the Markov strategies can
      be replaced by MD-strategies (i.e., the blue parts turn
      green), but the deterministic 1-bit Markov part (red) remains unchanged.
    }
    \label{fig:overview}
\end{figure*}
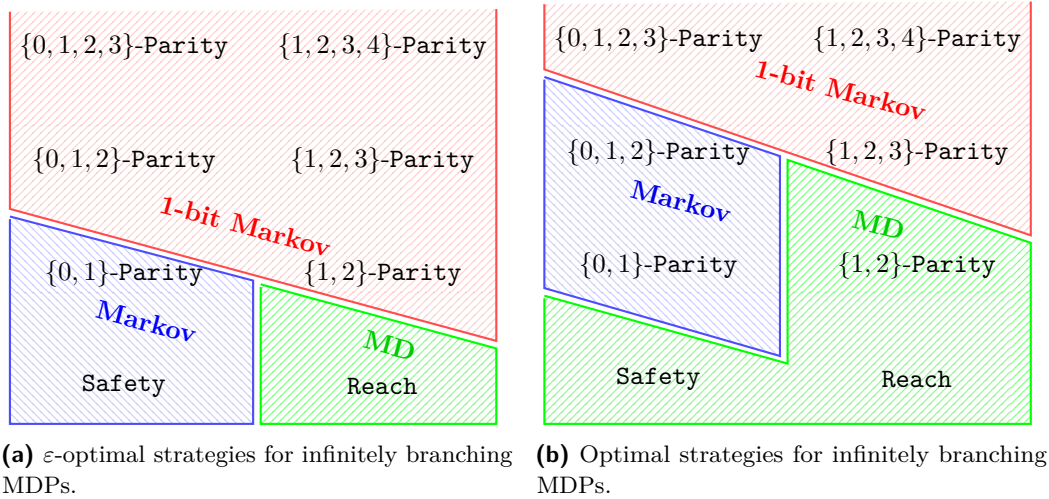

\medskip
{\bf\noindent Related work.}
In \emph{finite} MDPs, there always exist optimal MD-strategies for parity
objectives. In fact, this holds even for finite turn-based 2-player stochastic parity games
\cite{Chatterjee:2004:QSP:982792.982808,Zielonka04}.
Similarly, there always exist optimal MD-strategies in countably infinite
\emph{non-stochastic} turn-based 2-player parity games \cite{Zielonka:1998}.

The picture is more complex for countably infinite MDPs.
Optimal strategies need not exist (not even for reachability objectives
\cite{Puterman:book,Ornstein:AMS1969}),
and $\eps$-optimal strategies for B\"uchi objectives \cite{Hill:79}
and optimal strategies for parity objectives \cite{KMSW2017} 
require infinite memory.

The paper \cite{KMSW2017} gave a complete classification whether MD-strategies
suffice or whether infinite memory is required for
$\eps$-optimal (resp.\ optimal) strategies for all subclasses of parity objectives in the
Mostowski-hierarchy.

However, the mere fact that infinite memory is required for (a subclass of)
parity does not establish the precise strategy complexity.
E.g., are Markov strategies (or Markov strategies with finite extra memory)
sufficient?

In \cite{KMST:ICALP2019} we showed that deterministic 1-bit Markov strategies
are both necessary and sufficient for $\eps$-optimal strategies for B\"uchi
objectives.
I.e., deterministic 1-bit Markov strategies are sufficient, but neither
randomized Markov strategies nor randomized finite-memory strategies are
sufficient.
This solved a 40-year old problem in gambling theory from \cite{Hill:79,Hill:99}. 
The same paper \cite{KMST:ICALP2019} showed that even for finitely branching MDPs with $\cParity{\{1,2,3\}}$ objectives,
optimal strategies (where they exist) need to be \emph{at least} deterministic
1-bit Markov in general, i.e., neither randomized Markov nor randomized finite-memory strategies are sufficient.

While the lower bounds for $\eps$-optimal strategies for B\"uchi objectives
(resp.\ for optimal strategies for $\cParity{\{1,2,3\}}$ objectives)
carry over to general parity objectives, the upper bounds on the strategy complexity of
$\eps$-optimal (resp.\ optimal) parity remained open.

{\bf\noindent A basic upper bound and related conjecture.}
A basic upper bound on the complexity of $\eps$-optimal strategies for parity
can be obtained by using a combination of the results of
\cite{KMST:ICALP2019} on B\"uchi objectives (1-bit Markov) and 
L\'evy's zero-one law as follows.
(However, note that the following argument does not work directly for optimal strategies.) 

Informally speaking, L\'evy's zero-one law implies that, for a tail objective
(like parity) and any strategy, the level of attainment from the current state
almost surely converges to either zero or one. I.e., the runs that always stay
in states where the strategy attains something in $(0,1)$ is a null-set
(cf.\ \cref{app:levy01}).
A consequence for parity is that 
almost all winning runs must eventually, with ever higher
probability, commit to winning by some particular color.
Thus, with minimal losses (e.g., $\eps/2$),
after a sufficiently long finite prefix (depending on~$\eps$),
one can switch to a strategy that aims to visit some \emph{particular} color $x$
infinitely often.
The latter objective is like a B\"uchi objective where the states of color~$x$ are accepting and states of color~$>x$ are considered losing sinks.
By \cite{KMST:ICALP2019}, an $\eps/2$-optimal strategy for such a B\"uchi
objective can be chosen 1-bit Markov. However, one would also need to remember
which color~$x$ one is supposed to win by and \emph{stick to that color}.
The latter is critical, since strategies that switch focus between winning
colors infinitely often
(e.g., if they follow some local criteria based on the value of the current state
wrt.\ various colors) can end up losing.
Overall, the memory needed for such an
$\eps$-optimal strategy for parity is: $\lceil\log_2(c)\rceil$
bits for $c$ even colors to remember which color $x$ one
is supposed to win by and Markov plus 1 bit for the B\"uchi strategy (see
above), where the Markov step-counter also determines whether one still plays in the
prefix. Thus Markov plus $(1+\lceil\log_2(c)\rceil)$ bits are sufficient.
This argument would suggest that more memory is required for more colors.
However, our result shows that this is \emph{not} the case.

\medskip
{\bf\noindent Our contributions.}
We show \emph{tight} upper bounds on the strategy complexity of $\eps$-optimal
(resp.\ optimal) strategies for parity objectives:
They can be chosen as deterministic 1-bit Markov, regardless of the number of
colors.
I.e., we provide matching upper bounds to the lower bounds from \cite{KMST:ICALP2019}.

In \cref{epsParity} we prove \cref{theo-eps-opt-full}. An iterative plastering
construction (i.e., fixing player choices on larger and larger subspaces)
builds an $\eps$-optimal 1-bit Markov strategy where
the probability of never switching between winning even colors is $\geq 1-\eps$.
Its correctness relies heavily on L\'evy's zero-one law.
The number of iterations is finite and proportional to the number of even colors.
It eliminates the need to remember the winning color $x$ and the
$\lceil\log_2(c)\rceil$ part of the memory.

\begin{theorem}\label{theo-eps-opt-full}
Consider an MDP $\mdp$, a parity objective and
a finite set $\states_0$ of initial states.

For every $\eps >0$ there exists a deterministic 1-bit Markov strategy
that is $\eps$-optimal from every state $\state \in \states_0$.
\end{theorem}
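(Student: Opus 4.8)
The plan is to prove this by induction on the number of even colors appearing in the parity objective, implementing an iterative "plastering" construction that successively fixes the controller's choices on larger and larger portions of the state space. For the base case (no even colors, i.e.\ essentially a co-B\"uchi–type situation, or trivially a safety objective) the result follows from the known MD/Markov bounds. For the inductive step, let $x$ be the largest even color. Informally, a run wins either by seeing color $x$ infinitely often, or by never seeing any color $\geq x$ from some point on and then winning via the strictly smaller color set $\{0,\dots,x-1\}$, to which the induction hypothesis applies (after restricting to the sub-MDP obtained by making states of color $>x$ into losing sinks on the relevant subspace).

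The key steps, in order, would be: (1) Use L\'evy's zero-one law (as in \cref{app:levy01}) to argue that for any fixed $\eps$-optimal strategy, almost every winning run eventually reaches a state from which the value of the "win via color $x$" B\"uchi-style objective is close to $1$, or else commits forever to staying below color $x$; this lets us partition, up to an $\eps/2$ loss, the winning runs according to which of these two regimes they fall into. (2) On the region where "win via color $x$" has value near $1$, invoke the B\"uchi result of \cite{KMST:ICALP2019} to obtain a deterministic 1-bit Markov strategy that is $\eps/2$-optimal for reaching color $x$ infinitely often, where states of color $>x$ are treated as losing sinks. (3) On the complementary region, apply the induction hypothesis to the residual $\cParity{\{0,\dots,x-1\}}$ objective, obtaining a deterministic 1-bit Markov strategy there. (4) Crucially, "plaster" these together: after a sufficiently long finite prefix (whose length is dictated by $\eps$ and recorded implicitly in the Markov step-counter), fix the controller's choices on an ever-growing subspace so that the resulting combined strategy does not need to remember which even color it is currently targeting — the commitment is baked into the fixed choices rather than into extra memory bits. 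The single extra bit is reused for the B\"uchi sub-strategy; the step-counter handles both the prefix bookkeeping and the Markov dependence.

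The main obstacle is step~(4): showing that the plastering terminates after finitely many iterations (proportional to the number of even colors) and that the finitely many "fixings" can be glued into one global deterministic 1-bit Markov strategy whose failure probability — the probability of switching focus between winning even colors infinitely often, or of slipping out of all the plastered regions — stays below $\eps$. The delicate part is that a naive combination of the sub-strategies would switch focus between winning colors infinitely often (following local value criteria), which can lose with probability one; the construction must instead commit irrevocably, and the argument that this commitment costs only $\eps$ in total leans entirely on L\'evy's zero-one law forcing the per-state attainment levels to converge to $0$ or $1$ almost surely, so that on each successive subspace only a vanishing fraction of runs is still "undecided." A secondary technical point is ensuring the construction is uniform over the finite set $\states_0$ of initial states, which is handled by taking a common prefix length and a common plastering schedule that works simultaneously for all $\state\in\states_0$.
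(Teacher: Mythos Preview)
Your proposal has the right ingredients --- L\'evy's zero-one law, the B\"uchi result of \cite{KMST:ICALP2019}, and a plastering construction with one phase per even color --- and the high-level picture matches the paper. But the paper does not argue by induction on the number of even colors. Instead it first performs three reductions: to finitely branching MDPs (\cref{lem:inf-branching-to-finite-branching}), to acyclic MDPs by encoding the step counter into the state (\cref{lem:acyclic-Markov}, turning ``1-bit Markov'' into ``1-bit''), and then to the \emph{layered MDP} $\?L(\mdp)$ (\cref{def:layered}), which encodes the single memory bit as a second copy of the state space. After these reductions the goal becomes: build one \emph{MD} strategy in $\?L(\mdp)$ that is $\eps$-optimal from a finite set of states (\cref{lem-eps-opt}). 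The plastering is then carried out inside $\?L(\mdp)$, iterating over even colors from smallest to largest (fixing an MD strategy $\tau_e$ on a region $\fix_e$ at each step), followed by a final MD reachability phase toward $\bigcup_e \core_e$.

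The gap in your sketch is step~(4). You are combining two 1-bit Markov strategies --- one from the B\"uchi result, one from the induction hypothesis --- and asserting the bit can be ``reused.'' But the inductively obtained strategy may already be using its bit to track which of several nested B\"uchi phases is active; when a run crosses from your B\"uchi-$x$ region into the inductively plastered region, the bit value it carries has no meaning there, and you have not argued why this does not cost more than $\eps$ across the full inductive depth. The paper's layered-MDP reduction is precisely the device that makes this rigorous: every B\"uchi sub-strategy becomes an MD strategy in $\?L(\mdp)$, and plastering MD strategies on disjoint state-regions trivially yields an MD strategy, with no bit-compatibility argument needed at all. That reduction is the idea your outline is missing. (A minor point: your base case is mis-described --- with zero even colors the objective is never satisfied, so the value is $0$ everywhere and the base case is vacuous, not co-B\"uchi.)
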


In \cref{as-par} we prove \cref{theo:opt-par-main}.
If an optimal strategy exists, then an optimal 1-bit Markov strategy
can be constructed by the so-called
\emph{sea urchin} construction. It is a very complex plastering
construction with infinitely many iterations that uses the results
of \cref{theo-eps-opt-full} and L\'evy's zero-one law as building blocks.
Its name comes from the shape of the subspace
in which player choices get fixed: a growing finite body
(around a start set $\states_0$) with
a finite, but increasing, number of spikes, where each spike is of infinite
size; cf.~\cref{fig:as-par:sea-urchin}.
E.g., if the initial states are almost surely winning
then, at the stage with $i$ spikes, this strategy attains parity with
some probability $\ge 1-2^{-i}$ already \emph{inside} this subspace,
and in the limit of $i \rightarrow \infty$ it attains parity almost surely.
A further step even yields a single deterministic 1-bit Markov strategy
that is optimal from every state that has an optimal strategy. 

\begin{theorem}\label{theo:opt-par-main}
Consider an MDP $\mdp$ with a parity objective and
let $\states_{\mathit{opt}}$ be
the subset of states that have an optimal strategy.

There exists a deterministic 1-bit Markov strategy
that is optimal from every $\state \in \states_{\mathit{opt}}$.
\end{theorem}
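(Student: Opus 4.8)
We prove \cref{theo:opt-par-main} by an infinite \emph{sea urchin} plastering construction (see \cref{fig:as-par:sea-urchin}), using the $\eps$-optimal 1-bit Markov strategies of \cref{theo-eps-opt-full} and L\'evy's zero-one law as building blocks. Fix enumerations $\states = \{q_1, q_2, \dots\}$ and $\states_{\mathit{opt}} = \{s_1, s_2, \dots\}$. The plan is to maintain, for each stage $i \in \N$, a finite ``body'' $B_i$ with $\{q_1, \dots, q_i\} \cup \{s_1, \dots, s_i\} \subseteq B_i \subseteq B_{i+1}$ and $\bigcup_i B_i = \states$, together with $i$ infinite ``spikes'' $P_1, \dots, P_i$ that are attached to the body at finitely many entry points and are otherwise pairwise disjoint; the subspace $U_i = B_i \cup P_1 \cup \dots \cup P_i$ is the region on which the choices of the strategy $\sigma_i$ under construction are fixed at stage $i$ and never revised later. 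On $B_i$ the fixed choices are positional-per-step (they may depend on $i$ and on the current step count), while on each spike $P_j$ we hard-wire a 1-bit Markov strategy that \emph{commits to a single even color} $x_j$, namely a $2^{-j}$-optimal strategy for the B\"uchi-type objective obtained from the parity objective by making the states of color $x_j$ accepting and the states of color $>x_j$ losing sinks; such a strategy exists and can be taken 1-bit Markov by \cref{theo-eps-opt-full} applied to that B\"uchi objective. Since a run is in at most one spike at any time and never leaves a spike once it has entered it, the single memory bit and the step counter are re-used across body and all spikes without conflict, so the limit strategy $\sigma = \lim_i \sigma_i$ will be deterministic, 1-bit, and Markov.

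The heart of the construction is the passage from stage $i$ to stage $i+1$. First enlarge $B_i$ to $B_{i+1}$ so as to absorb $q_{i+1}$ and $s_{i+1}$, and so as to be closed under the choices already fixed up to the relevant horizon; then consider the runs that under $\sigma_i$ have not yet entered any spike. By L\'evy's zero-one law (\cref{app:levy01}) applied to the tail objective parity, along almost every winning such run the conditional probability of winning from the current state converges to $1$, which forces the run to eventually commit: past a finite horizon it visits, with probability arbitrarily close to that conditional value, states from which some \emph{fixed} even color can be secured with arbitrarily small loss. Combining this with \cref{theo-eps-opt-full}, one extracts a finite time $N_{i+1}$, an entry region inside $B_{i+1}$, and a committing color, such that routing the residual probability mass through a fresh infinite spike $P_{i+1}$ that runs the corresponding $2^{-(i+1)}$-optimal B\"uchi-type 1-bit Markov strategy captures enough additional winning probability to maintain the invariant that \emph{for every $s \in \states_{\mathit{opt}}$, the probability that the run from $s$ under $\sigma$ enters some spike $P_j$ with $j \le i+1$ and thereafter wins is at least $\valueof{\mdp}{s} - 2^{-(i+1)}$}.

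It then remains to check optimality of the limit. Since $\bigcup_i B_i = \states$, every state is eventually swallowed by the body, so $\sigma$ is totally defined and coincides with $\sigma_i$ on $U_i$; hence it is a deterministic 1-bit Markov strategy. For $s \in \states_{\mathit{opt}}$, letting $i \to \infty$ in the invariant yields $\Prob[\sigma]{s}(\Parity{\mdp}) \ge \sup_i \bigl(\valueof{\mdp}{s} - 2^{-i}\bigr) = \valueof{\mdp}{s}$, while the reverse inequality holds by definition of the value; thus $\Prob[\sigma]{s}(\Parity{\mdp}) = \valueof{\mdp}{s}$, and since $s \in \states_{\mathit{opt}}$ the value is attained, so $\sigma$ is optimal from $s$.

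The main obstacle is the inductive step together with its correctness, which must simultaneously (i) use L\'evy's zero-one law to show that, among the not-yet-committed winning runs, the mass that never commits to any color is negligible, so that the spikes really capture all of the winning probability in the limit; (ii) use \cref{theo-eps-opt-full} to realise each per-color commitment inside a spike by a 1-bit Markov strategy and argue that the single bit survives the re-use across body and spikes; (iii) control the geometry of bodies and spikes (their entry points, and the impossibility of runs leaking back and forth) tightly enough that the step-counter/bit bookkeeping is consistent; and, crucially, (iv) arrange the $2^{-i}$ errors to telescope so that the construction converges to the \emph{exact} value rather than to an $\eps$-approximation --- this is what distinguishes the optimal case from \cref{theo-eps-opt-full} and forces infinitely many iterations with infinite spikes.
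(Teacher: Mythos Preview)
Your sketch diverges from the paper's proof in a structurally important way, and as written it has a genuine gap.

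The paper does \emph{not} run the sea urchin construction directly for the quantitative optimal case. Instead it proceeds modularly: first reduce to acyclic MDPs (\cref{lem:acyclic-Markov}), then pass to the layered MDP (\cref{prop:layered}), then reduce the existence of optimal MD strategies to the existence of \emph{almost-surely winning} MD strategies via the conditioned MDP (\cref{def:conditionedmdp}, \cref{thm:reduction-to-as}), and only then apply the sea urchin construction (\cref{thm:opt-par-acyclic}) in a layered MDP where \emph{every} state is almost surely winning. The conditioned-MDP step is what converts the value-achieving problem into a $0$--$1$ problem; it also handles the uniformity over all of $\states_{\mathit{opt}}$ at once (via \cref{thm:reduction-to-as}.2 and the uniformization argument behind it). You bypass this entirely and try to maintain the invariant $\Prob[\sigma]{s}(\formula)\ge \valueof{\mdp}{s}-2^{-i}$ for every $s\in\states_{\mathit{opt}}$ simultaneously; you do not explain how a single choice of body $B_{i+1}$, horizon $N_{i+1}$, and spike $P_{i+1}$ can serve all countably many initial states at once.

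More concretely, two claims in your sketch are not justified and in fact are not how the paper's spikes work. First, you say each spike $P_j$ commits to a \emph{single} even color $x_j$ and runs a B\"uchi-type strategy. The paper's spikes do no such thing: each spike is the set $\Betaset{i}$ where a fixed MD strategy $\obs{i}$ attains at least $\beta$ for a modified \emph{parity} objective $\formula_i=\formula\land\always(L\setminus\closure{\Fixx{i-1}})$, obtained from \cref{lem-eps-opt}; committing to a single color per spike is precisely the naive $1+\lceil\log_2 c\rceil$-bit construction the introduction warns against. Second, you assert that a run ``never leaves a spike once it has entered it'', and rely on this to argue that the single memory bit can be safely re-used across body and spikes. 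In the paper, runs \emph{can} leave $\Betaset{j}$; what is shown (via \cref{lem:LZO-1}) is that from $\Alphaset{j}$ the probability of staying in $\Betaset{j}$ forever is at least $(\alpha-\beta)/(1-\beta)$, and the correctness argument tracks this leakage explicitly. The bit-consistency issue is solved not by forbidding exits but by working in the layered MDP, where the bit is part of the state and the sought strategy is genuinely MD. Without the layered-MDP / conditioned-MDP reductions, your step (iii) about ``the impossibility of runs leaking back and forth'' is the crux, and it is asserted rather than proved.
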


In \cref{theo-eps-opt-full} and \cref{theo:opt-par-main} the
initial content of the 1-bit memory is
irrelevant (cf.\ \cref{lem-eps-opt}, \cref{thm:opt-par-acyclic} and
\cref{rem:as-layered-memconf}).

Moreover, we show
in \cref{as012} and \cref{eps01}
that in certain subcases deterministic Markov strategies are
necessary and sufficient
(i.e., these require a Markov step-counter, but not the extra bit):
optimal strategies for co-B\"uchi and $\cParity{\{0,1,2\}}$,
and $\eps$-optimal strategies for safety and co-B\"uchi.
In the special case of finitely branching MDPs, these Markov strategies
(but not the 1-bit Markov strategies) can be replaced by MD-strategies.

Together with the previously established lower bounds,
this yields a complete picture of the \emph{exact} strategy complexity
of parity objectives at all levels of the Mostowski hierarchy, for countable MDPs.
\cref{fig:overview} gives a complete overview.

\ignore{
\begin{figure*}[t!]
    \centering
    \begin{minipage}{.35\textwidth}
        \begin{center}				
        \scalebox{.9}{\input{Figures/fig1.tex}}
				\end{center}
				~~\textbf{a)}~Optimal strategies for \\ infinitely branching MDPs
    \end{minipage}%
		\quad \quad \quad \quad
    \begin{minipage}{0.35\textwidth}
        \begin{center}
        \scalebox{.9}{\input{Figures/fig2.tex}}
				\end{center}
				~~\textbf{b)}~Optimal strategies for \\ finitely branching MDPs.
    \end{minipage}
		\caption{These diagrams show the strategy complexity of
                  optimal strategies for parity objectives.
                  Depending on the position in the
                  Mostowski hierarchy and the branching degree,
                  either MD-strategies (green), Markov-strategies (blue)
                  or 1-bit Markov strategies (red) are necessary and sufficient.
            	}

    \begin{minipage}{.35\textwidth}
        \begin{center}				
        \scalebox{.9}{\input{Figures/fig3.tex}}
				\end{center}
				~~\textbf{a)}~$\eps$-optimal strategies for \\ infinitely branching MDPs
    \end{minipage}%
		\quad \quad \quad \quad
    \begin{minipage}{0.35\textwidth}
        \begin{center}
        \scalebox{.9}{\input{Figures/fig4.tex}}
				\end{center}
				~~\textbf{b)}~$\eps$-optimal strategies for \\ finitely branching MDPs
    \end{minipage}
		\caption{These diagrams show the strategy complexity of
                  $\eps$-optimal strategies for parity objectives.
                  Depending on the position in the
                  Mostowski hierarchy and the branching degree,
                  either MD-strategies (green), Markov-strategies (blue)
                  or 1-bit Markov strategies (red) are necessary and sufficient.
            	}
\label{fig:overview}
\end{figure*}
}

\section{Preliminaries}\label{sec:prelim}
A \emph{probability distribution} over a countable set $S$ is a function
$f:\states\to[0,1]$ with $\sum_{\state\in \states}f(\state)=1$.
We write 
$\dist(\states)$ for the set of all probability distributions over $\states$.

We study 
\emph{Markov decision processes}
(MDPs)
over 
countably infinite
state spaces. Formally,
an MDP $\mdp=\mdptuple$ consists of
a countable set~$\states$ of \emph{states}, 
which is partitioned into a set~$\zstates$ of \emph{controlled states} 
and  a set~$\rstates$ of \emph{random states},
a  \emph{transition relation}~$\transition\subseteq\states\x\states$,
and a  \emph{probability function}~$\probp:\rstates \to \dist(\states)$. 
We  write $\state\transition{}\state'$ if $\tuple{\state,\state'}\in \transition$,
and  refer to~$s'$ as a \emph{successor} of~$s$. 
We assume that every state has at least one successor.  
The probability function~$P$  assigns to each random state~$\state\in \rstates$
a probability distribution~$P(\state)$ over its set of successors.
A \emph{sink} is a subset $T \subseteq \states$ closed under the $\transition$ relation.
An MDP is \emph{acyclic} if the underlying  graph~$(S,\transition)$ is acyclic.
It is  \emph{finitely branching} 
if every state has finitely many successors
and \emph{infinitely branching} otherwise.
An MDP without controlled states
($\zstates=\emptyset$) is
a \emph{Markov chain}.

\medskip
\noindent{\bf Strategies and Probability Measures.}
A \emph{run}~$\play$ is an  infinite sequence $\state_0\state_1\cdots$ of states
such that $\state_i\transition{}\state_{i+1}$ for all~$i\in \mathbb{N}$;
write~$\play(i)\eqdef\state_i$ for the $i$-th state along~$\play$.
A \emph{partial run} is a finite prefix of a run.
We say that (partial) run $\play$ \emph{visits} $\state$ if
$\state=\play(i)$ for some $i$, and that~$\play$ \emph{starts in} $s$ if $\state=\play(0)$. 

A \emph{strategy} %
is a function $\zstrat:\states^*\zstates \to \dist(S)$ that 
assigns to partial runs $\partialplay\state \in \states^*\zstates$ 
a distribution over the successors of $\state$. 
A (partial) run~$\state_0\state_1\cdots$ is \emph{induced by} strategy~$\zstrat$
if for all~$i$
either $\state_i \in \zstates$ and $\zstrat(\state_0\state_1\cdots\state_i)(\state_{i+1})>0$,
or
$\state_i \in \rstates$ and $\probp(\state_i)(\state_{i+1})>0$.

 A strategy~$\zstrat$ and an initial state $\state_0\in \states$
induce a standard probability measure on sets of infinite plays. We write $\probm_{\mdp,\state_0,\zstrat}({\playset})$ for the probability of a 
measurable set $\playset \subseteq \state_0 \states^\omega$ of runs starting from~$\state_0$.
As usual, it is  first defined on the \emph{cylinders} $s_0 s_1 \ldots s_n \states^\omega$, where $s_1, \ldots, s_n \in \states$:
if $s_0 s_1 \ldots s_n$ is not a partial run induced by~$\zstrat$ then
$\probm_{\mdp,\state_0,\zstrat}(s_0 s_1 \ldots s_n \states^\omega) \eqdef 0$.
Otherwise, $\probm_{\mdp,\state_0,\zstrat}(s_0 s_1 \ldots s_n \states^\omega) \eqdef \prod_{i=0}^{n-1} \bar{\zstrat}(s_0 s_1 \ldots s_i)(s_{i+1})$, where $\bar{\zstrat}$ is the map that extends~$\zstrat$ by $\bar{\zstrat}(w s) = \probp(s)$ for all $w s \in \states^* \rstates$.
By Carath\'eodory's theorem~\cite{billingsley-1995-probability}, 
this extends uniquely to a probability measure~$\probm_{\mdp,\state_0,\zstrat}$ on 
 measurable subsets of $s_0S^{\omega}$.
We will write $\expectval_{\mdp,\state_0,\zstrat}$
for the expectation w.r.t.~$\probm_{\mdp,\state_0,\zstrat}$.
We may drop the subscripts from  notations, if it is understood.

\medskip
\noindent {\bf Objectives.} 
The objective of the player is determined by a predicate on infinite plays.
We assume familiarity with the syntax and semantics of the temporal
logic LTL \cite{CGP:book}.
Formulas are interpreted on the structure $(\states,\transition)$.
We use 
$\denotationof{\formula}{\state} \subseteq \state \states^\omega$ to denote the set of runs starting from
$\state$ that satisfy the LTL formula $\formula$,
which is a measurable set \cite{Vardi:probabilistic}.
We also write $\denotationof{\formula}{}$ for $\bigcup_{s \in S} \denotationof{\formula}{s}$.
Where it does not cause confusion we will
identify $\varphi$ and $\denotationof{\formula}{}$
and just write
$\probm_{\mdp,\state,\zstrat}(\formula)$
instead of 
$\probm_{\mdp,\state,\zstrat}(\denotationof\formula\state)$.

Given a  set $\reachset \subseteq \states$   of  states, 
  the \emph{reachability} objective~$\reach{\reachset}$ is the set of  runs that visit  $\reachset$ at least once; and   the \emph{safety objective}~$\safety{\reachset}$ is the set of  runs that never visit~$T$.

Let $\cset \subseteq \nat$ be a finite set of colors.
A \emph{color function} $\coloring:\states\to \cset$ assigns to each state~$\state$ its color~$\colorof\state$. The parity objective, written as
$\Parity{\coloring}$, is the set of infinite runs such that the largest
color that occurs infinitely often along the run is even.
To define this formally, let
$\even(\cset)=\{i\in \cset\mid i\equiv 0\mod{2}\}$.
For 
$\mathord{\constraint}\in\{\mathord{<},\mathord{\le},\mathord{=},\mathord{\geq},\mathord{>}\}$,
$n\in\nat$, 
and $\stateset\subseteq\states$, let
$\colorset \stateset\constraint n \eqdef \setcomp{\state\in \stateset}{\colorof\state\constraint n}$
be the set of states in $\stateset$ with color $\constraint n$.
Then %
\[
 \Parity{\coloring} \eqdef
     \bigvee_{i\in \even(\cset)}\left(\always\eventually \colorset{\states}{= i}{} \wedge
 \eventually\always \colorset{\states}{\leq  i}{}\right).
\]

The Mostowski hierarchy \cite{Mostowski:84} classifies parity objectives
by restricting the range of~$\coloring$ to a set of colors $\cset \subseteq \nat$.
We write $\cParity{\cset}$ for such restricted parity objectives.
In particular, the classical B\"uchi and co\nobreakdash-B\"uchi  objectives  correspond to
$\cParity{\{1,2\}}$ and $\cParity{\{0,1\}}$, respectively.
These two classes are incomparable but both subsume the  reachability and safety objectives.
Assuming that $\reachset$ is a sink, $\reach{\reachset}=\Parity{\coloring}$
for the coloring with $\coloring(s) = 1 \iff s\notin\reachset$
and $\safety{\reachset}=\Parity{\coloring}$
for the coloring with $\coloring(s) = 1 \iff s\in\reachset$.
Similarly, $\cParity{\{0,1,2\}}$ and $\cParity{\{1,2,3\}}$ are incomparable,
but they both subsume (modulo renaming of colors) B\"uchi and co-B\"uchi objectives.

{\renewcommand{\denotationof}[1]{#1}
An objective $\formula$ is called a \emph{tail objective}
(resp.\ \emph{suffix-closed}) iff for every run
$\rho'\rho$ with some finite prefix $\rho'$ we have
$\rho'\rho \in \denotationof{\formula}{} \Leftrightarrow \rho \in \denotationof{\formula}{}$
(resp.\ $\rho'\rho \in \denotationof{\formula}{} \Rightarrow \rho \in \denotationof{\formula}{}$).
In particular, $\Parity{\coloring}$ is tail for every coloring $\coloring$.
Moreover, if $\formula$ is suffix-closed then $\eventually\formula$ is tail.
}

\medskip
\noindent{\bf Strategy Classes.}
Strategies  $\zstrat:\states^*\zstates \to \dist(S)$ are in general  \emph{randomized} (R) in the sense that they take values in $\dist(\states)$. 
A strategy~$\zstrat$ is \emph{deterministic} (D) if $\zstrat(\rho)$ is a Dirac distribution 
for all partial runs~$\rho\in \states^{*} \zstates$.

We formalize the amount of \emph{memory} needed to implement strategies.
Let $\memory$ be a countable set of memory modes. An \emph{update function}
is a function $\updatefun: \memory\times \states \to \dist(\memory\times \states)$
that meets the following two conditions, for all modes $\memconf \in \memory$:
\begin{itemize}
	\item for all controlled states~$\state\in \zstates$, 
            the distribution
            $\updatefun((\memconf,\state))$ is over 
	$\memory \times \{\state'\mid \state \transition{} \state'\}$.
	\item for all random states $\state \in \rstates$, we have that
            $\sum_{\memconf'\in \memory} \updatefun((\memconf,\state))(\memconf',\state')=P(\state)(\state')$.
\end{itemize}

\newcommand{\inducedStrat}[2]{#1[#2]}

An update function~$\updatefun$ together with an initial memory~$\memconf_0$
induce a strategy~$\inducedStrat{\updatefun}{\memconf_0}:\states^*\zstates \to \dist(S)$ as follows.
Consider the Markov chain with states set $\memory \times \states$,
transition relation $(\memory\x\states)^2$ and probability function~$\updatefun$.
Any partial run $\rho=s_0 \cdots s_i$ in $\mdp$ gives rise to a
set 
{$H(\rho)=\{(\memconf_0,s_0) \cdots (\memconf_i,s_i) \mid \memconf_0,\ldots, \memconf_i\in \memory\}$}
of partial runs in this Markov chain.
Each $\rho s \in \state_0 \states^{*} \zstates$
induces a 
probability distribution~$\mu_{\rho \state}\in \dist(\memory)$, 
the  probability of being in state $(\memconf,\state)$
conditioned on having taken some partial run
from~$H(\rho \state)$.
We define~$\inducedStrat{\updatefun}{\memconf_0}$ such that
$\inducedStrat{\updatefun}{\memconf_0}(\rho \state)(\state')\eqdef\sum_{\memconf,\memconf'\in \memory} \mu_{\rho \state}(\memconf) \updatefun((\memconf,\state))(\memconf',\state')$
for all $\rho \state\in \states^{*} \zstates$ and~$\state' \in \states$.

We say that a strategy $\zstrat$ can be \emph{implemented} with
memory~$\memory$ (and initial memory $\memconf_0$) if there exists %
an update function $\updatefun$ such that $\zstrat=\inducedStrat{\updatefun}{\memconf_0}$.
In this case we may also write $\inducedStrat{\zstrat}{\memconf_0}$ to explicitly specify the initial
memory mode $\memconf_0$. Based on this, we can define several classes of strategies:
\begin{itemize}
\item A strategy $\zstrat$ is \emph{memoryless}~(M) (also called \emph{positional}) 
if it can be implemented with a memory  of size~$1$. 
We may view
M-strategies as functions $\zstrat: \zstates \to \dist(\states)$.

\item %
A strategy~$\zstrat$ is \emph{finite memory}~(F) if 
there exists a finite memory~$\memory$ implementing~$\zstrat$.
More specifically, a strategy is \emph{$k$-bit} 
if it can be implemented with a memory of size~$2^k$.
Such a strategy is then determined by a function
$\updatefun:\{0,1\}^k\x \states \to \dist(\{0,1\}^k \x \states)$.

\item A strategy~$\zstrat$ is \emph{Markov} if 
 it can be implemented with the natural numbers $\memory=\mathbb{N}$
as the memory, initial memory mode $\memconf_0=0$
and a function $\updatefun$ such that the distribution
$\updatefun(\memconf,\state)$ is over $\{\memconf+1\}\times \states$
for all $\memconf\in \memory$ and $\state\in \states$.
Intuitively, such a  strategy depends only on 
the current state and the number of steps taken so far.

\item A strategy~$\zstrat$ is \emph{k-bit Markov} if
it can be implemented with memory $\memory = \N \times \{0,1\}^k$,
$\memconf_0 \in \{0\}\x\{0,1\}^k$ 
and a function $\updatefun$ such that the distribution
$\updatefun((n,b,\state))$ is over $\{n+1\}\times \{0,1\}^k\times \states$
for all $(n,b)\in \memory$ and $\state\in \states$.
\end{itemize}

\emph{Deterministic 1-bit} strategies are central in this paper; by this we mean strategies that are both deterministic and 1-bit.

\medskip
\noindent{\bf Optimal and $\eps$-optimal Strategies.}
Given an objective~$\formula$, the value of state~$s$ in an MDP~$\mdp$, denoted by 
$\valueof{\mdp}{s}$, is the supremum probability of achieving~$\formula$.
 Formally, we have  $\valueof{\mdp}{s} \eqdef\sup_{\sigma \in \Sigma} \probm_{\mdp,\state,\zstrat}(\formula)$ where $\Sigma$ is the set of all strategies.
For $\eps\ge 0$ and state~$s\in\states$, we say that a strategy is $\eps$-optimal from $s$
iff $\probm_{\mdp,\state,\zstrat}(\formula) \geq \valueof{\mdp}{s} -\eps$.
A $0$-optimal strategy is called optimal. 
An optimal strategy is almost-surely winning if $\valueof{\mdp}{s} = 1$. 
 
Considering an MD strategy as a function $\zstrat: \zstates \to \states$ and $\eps\ge 0$, $\zstrat$ is \emph{uniformly} $\eps$-optimal  (resp.~uniformly optimal) if it is $\eps$-optimal (resp.~optimal) from every $s\in S$.

\medskip
\noindent{\bf Fixing and Safe Sets.} Let  $\sigma$ 
be  an MD strategy. 
Given a set $S'\subseteq S$ of states, 
   write $\fixin{\mdp}{\sigma,S'}$ for the MDP
    obtained from $\mdp$ by fixing the strategy $\sigma$ for all states in $S'$, that is,  
        $\fixin{\mdp}{\sigma,S'}\eqdef
        (\states,\zstates\setminus S',\rstates\cup S',\transition{},\probp')$
        where $\probp'(s)\eqdef\sigma(s)$ for all $s\in S'$.
   
For an objective $\formula$ and 
 a threshold $\beta \in [0,1]$,
denote by $\safesub{\mdp,\zstrat,\formula}{\beta}$ the set of all states~$s$ starting from which $\zstrat$ attains at least probability~$\beta$; and denote by 
$\safesub{\mdp,\formula}{\beta}$ 
the set of states whose value for $\formula$ is at least $\beta$. Formally, 
\begin{equation}
    \label{def:safeset}
    \begin{aligned}[c]
        \safesub{\mdp,\zstrat,\formula}{\beta}&\eqdef\{\state \in \states \mid \probm_{\mdp,\state,\zstrat}(\formula) \ge \beta\}
    ,&& %
        \safesub{\mdp,\formula}{\beta}&\eqdef\{\state \in \states \mid \valueof{\mdp,\formula}{\state} \ge \beta\}.
    \end{aligned}
\end{equation}

\section{$\eps$-Optimal Strategies for Parity}\label{epsParity}
In this section we prove \cref{theo-eps-opt-full}, stating that $\eps$-optimal strategies for parity objectives can be chosen 1-bit Markov. 
Given an MDP  we convert it by three successive  reductions to a structurally simpler MDP where   strategies require  less sophistication to achieve  parity.

\subparagraph*{First reduction (Finitely Branching).} 
This reduction converts an infinitely branching MDP~$\mdp$ to a finitely branching one~$\mdp'$, with a clear
 bijection between the strategies in $\mdp$ and $\mdp'$.
 The construction,  first presented in our previous work~\cite{KMST:ICALP2019}, replaces each controlled state~$\state$, that has infinitely many  successors $(s_i)_{i\in \mathbb{N}}$, with a ``ladder'' of controlled states $(q_i)_{i\in \mathbb{N}}$, where each $q_i$ has only two successors: $q_{i+1}$ and $s_i$.
 Roughly speaking, the controller choice of
 successor $s_n$ at $s$ in $\mdp$, 
 is simulated by a series of choices  $q_{i+1}$ at $q_{i}$, $0\leq i <n$, followed by a choice of successor~$s_n$ in state~$q_n$ in~$\M'$, and vice versa.
  
  To prevent  scenarios when the controller in~$\M'$  stays on a ladder  and
  never commits to a decision, we assign color~$1$
  to all states~$(q_i)_{i\geq 1}$ on  the ladder
  ($q_0$ inherits the color of~$s$). Hence, a hesitant
  run on the ladder is losing for  parity.
  So w.l.o.g.\ we can assume that the given $\mdp$ is finitely branching.

\begin{restatable}{lemma}{infbranchingtofinitebranching}
\label{lem:inf-branching-to-finite-branching}
\
\begin{enumerate}
\item
Suppose that for every finitely branching acyclic MDP with
a finite set $\states_0$ of initial states, 
and a parity objective,
there exist $\eps$-optimal deterministic $1$-bit strategies
from $\states_0$. 

Then even for every infinitely branching acyclic MDP 
with a finite set $\states_0$ of initial states and 
a parity objective,
there exist $\eps$-optimal deterministic $1$-bit strategies
from $\states_0$.
\item
Suppose that for every finitely branching acyclic MDP with
a parity objective,
there exists a deterministic $1$-bit strategy that is optimal
from all states that have an optimal strategy.

Then even for every infinitely branching acyclic MDP
with a parity objective,
there exists a deterministic $1$-bit strategy that is optimal
from all states that have an optimal strategy.
\end{enumerate}
\end{restatable}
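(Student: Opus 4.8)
The plan is to make the informal ``ladder'' construction precise as a map $\mdp\mapsto\mdp'$, to show that it preserves values and transports deterministic $1$-bit strategies between $\mdp$ and $\mdp'$ without decreasing the attained parity probability, and then simply to feed $\mdp'$ to the hypothesis. For the construction: if $0$ is a colour of $\mdp$, first shift all colours of $\mdp$ up by $2$, which does not change the parity objective as a set of runs and makes $0$ and $1$ unused. Then replace every controlled state $s$ with successor set $(s_i)_{i\in\N}$ by a chain of fresh controlled states $(q^s_i)_{i\in\N}$ with $q^s_i\transition q^s_{i+1}$ and $q^s_i\transition s_i$, redirect every edge entering $s$ to $q^s_0$, let $q^s_0$ inherit the colour of $s$, and give every $q^s_i$ with $i\ge 1$ the colour $1$; symmetrically replace every random state with infinitely many successors by a chain of fresh random states whose exit probabilities realise $P(s)$, again with the rungs coloured $1$. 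One checks routinely that $\mdp'$ is finitely branching; that $\mdp'$ is acyclic whenever $\mdp$ is (contracting a chain turns a cycle through it into a cycle of $\mdp$); and that $\states_0$, identified with $\{q^s_0\mid s\in\states_0\}$, is still finite.

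\textbf{Strategy correspondence.} The argument rests on a correspondence in both directions. Given any strategy $\sigma$ in $\mdp$, simulate it in $\mdp'$ by descending each chain $(q^s_i)_i$ and committing to $s_i$ with the conditional probability that $\sigma$'s distribution at $s$ assigns to $s_i$; since that is a probability distribution, every chain is left almost surely, and since $0$ is unused and $1$ is odd, inserting visits of colour $1$ changes neither the winner of any run that does leave every chain nor the value. Hence $\valueof{\mdp'}{s}\ge\valueof{\mdp}{s}$ for $s\in\states$, and optimal strategies transfer from $\mdp$ to $\mdp'$. Conversely, from a deterministic $1$-bit strategy $\sigma'$ in $\mdp'$ implemented by an update function $u'$, define $u$ on $\mdp$ by: at $(b,s)$ with $s$ a laddered state, deterministically iterate $u'$ along the chain $(q^s_i)_i$ from $(b,q^s_0)$, and if this ever commits to some $s_n$ with resulting bit $\beta$ put $u(b,s)=(\beta,s_n)$, otherwise (in $\mdp'$ the run then descends the chain forever, which is losing) let $u(b,s)$ be arbitrary; on all other states let $u$ copy $u'$. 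The induced $\sigma$ is again deterministic $1$-bit — crucially, the length of the simulated descent is consumed inside the definition of $u$ and is \emph{not} stored, so no step counter is introduced. Contracting chain descents gives a measure-preserving correspondence between the $\sigma$-runs of $\mdp$ and those $\sigma'$-runs of $\mdp'$ that leave every chain, preserving parity; as the only $\sigma'$-runs left out (those descending some chain forever) are losing, $\probm_{\mdp,s,\sigma}(\Parity{\coloring})\ge\probm_{\mdp',s,\sigma'}(\Parity{\coloring})$ for every $s\in\states$, and together with $\valueof{\mdp}{s}=\valueof{\mdp'}{s}$ this also shows $\sigma$ is optimal from $s$ whenever $\sigma'$ is.

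\textbf{Conclusion and main obstacle.} Both parts now follow. For~(1), given $\eps>0$ apply the hypothesis to $\mdp'$, $\states_0$ and its parity objective to get a deterministic $1$-bit $\sigma'$ that is $\eps$-optimal from $\states_0$ in $\mdp'$; its transfer $\sigma$ satisfies $\probm_{\mdp,s,\sigma}(\Parity{\coloring})\ge\probm_{\mdp',s,\sigma'}(\Parity{\coloring})\ge\valueof{\mdp'}{s}-\eps=\valueof{\mdp}{s}-\eps$ for $s\in\states_0$. For~(2), the correspondence shows that a state of $\states$ has an optimal strategy in $\mdp$ iff it has one in $\mdp'$, so the hypothesis applied to $\mdp'$ yields a deterministic $1$-bit strategy optimal from all states of $\mdp'$ that have an optimal strategy, whose transfer is optimal from all states of $\mdp$ that have an optimal strategy. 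The only delicate point — and the place real care is needed — is the interaction of the chain detours with parity: one must check that the colour shift makes the inserted colour $1$ harmless for every run that leaves all chains, and that the sole runs unmatched by the contraction map, those descending some chain forever, are null under the $\mdp\to\mdp'$ simulation and losing under the $\mdp'\to\mdp$ transfer, so that each inequality points the way it must.
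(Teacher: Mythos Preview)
Your proposal is correct and follows essentially the same ladder construction and back-transfer argument as the paper's proof; the only cosmetic difference is that the paper labels the fresh rung states with a colour $-1$ strictly below all others, whereas you first shift all colours up by~$2$ and use colour~$1$, which achieves the same effect while staying inside $\mathbb{N}$. One small point to tighten: your description of the back-transfer (``deterministically iterate $u'$ along the chain \ldots\ and if this ever commits to some $s_n$'') is phrased for controlled ladders; for random ladders you should say explicitly that the bit evolution along the chain is still deterministic, so for each exit rung~$i$ there is a determined bit~$\beta_i$, and $u(b,s)$ updates the memory to~$\beta_i$ when the random successor is~$s_i$ --- exactly as the paper spells out case by case.
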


\subparagraph*{Second reduction (Acyclicity).}
A deterministic 1-bit Markov strategy can be seen as a function~$\zstrat: \mathbb{N}\times \{0,1\} \times \states \to \{0,1\} \times \states$, where $\sigma$
has access to an internal bit $b\in \{0,1\}$,
which can be updated freely, and a step counter $k\in \mathbb{N}$, which
increments by one in each step. Having $b$ and $k$, $\sigma$ produces a
decision based on the current state of the MDP.

Following \cite{KMST:ICALP2019}, we encode the step-counter from  strategies
into MDPs s.t.\ the current state of the system uniquely determines
the length of the path taken so far.
This translation
allows us to focus on acyclic MDPs.

\begin{restatable}{lemma}{acyclicMarkov}
\label{lem:acyclic-Markov}
Consider MDPs with a parity objective and $k \in \nat$.
\begin{enumerate}
\item
Suppose that for every acyclic MDP $\mdp'$ and every \emph{finite} set of initial
states $\states_0'$ and $\eps > 0$, there exists a deterministic $k$-bit strategy
that is $\eps$-optimal from all states $\state \in \states_0'$.

Then for every MDP $\mdp$ and every \emph{finite} set of initial states $\states_0$ and $\eps >0$,
there exists a deterministic $k$-bit Markov strategy that is
$\eps$-optimal from all states $\state \in \states_0$.
\item
  Suppose that for every acyclic MDP $\mdp'$ and $\eps > 0$,
  there exists a deterministic $k$-bit strategy
that is $\eps$-optimal from all states.
Then for every MDP $\mdp$ and $\eps >0$,
there exists a deterministic $k$-bit Markov strategy that is
$\eps$-optimal from all states.
\item
Suppose that for every acyclic MDP $\mdp'$, where
$\states_{\mathit{opt}}'$ is the subset of states that have an optimal
strategy, 
there exists a deterministic $k$-bit strategy
that is optimal from all states $\state \in \states_{\mathit{opt}}'$.
Then for every MDP $\mdp$, where
$\states_{\mathit{opt}}$ is the subset of states that have an optimal
strategy, 
there exists a deterministic $k$-bit Markov strategy
that is optimal from all states $\state \in \states_{\mathit{opt}}$.
\end{enumerate}
\end{restatable}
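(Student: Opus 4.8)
The plan is to encode the step-counter of a Markov(-type) strategy into the state space of the MDP, so that a plain $k$-bit strategy on the transformed MDP corresponds to a $k$-bit Markov strategy on the original. Concretely, given $\mdp = \mdptuple$ and a finite set $\states_0$ of initial states, I would build an acyclic MDP $\mdp'$ whose states are pairs $(\state,n)$ with $\state\in\states$ and $n\in\nat$, where $n$ records the number of steps taken since leaving $\states_0$. The transition relation is $(\state,n)\transition{}(\state',n+1)$ whenever $\state\transition{}\state'$ in $\mdp$; random states $(\state,n)$ with $\state\in\rstates$ inherit the distribution $P(\state)$ on the first component; the controlled/random partition is inherited likewise; and the coloring is $\colorof{(\state,n)} = \colorof{\state}$, so that the parity objective is preserved along every run. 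The initial states of $\mdp'$ are $\{(\state,0)\mid \state\in\states_0\}$ (a finite set for parts~1 and~3 where one is given a finite $\states_0$; for part~2 one takes all $(\state,0)$, $\state\in\states$, as initial). This $\mdp'$ is acyclic because the second coordinate strictly increases along every edge.

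The next step is to set up the bijection between runs. There is an obvious projection $\pi$ sending a run $(\state_0,0)(\state_1,1)(\state_2,2)\cdots$ in $\mdp'$ (started from an initial state) to the run $\state_0\state_1\state_2\cdots$ in $\mdp$, and it is a bijection between runs of $\mdp'$ from $(\state,0)$ and runs of $\mdp$ from $\state$ that preserves the parity objective and, crucially, preserves the cylinder-sets, hence the probability measures induced by corresponding strategies. So I would first show: for any strategy $\zstrat'$ on $\mdp'$ there is a strategy $\zstrat$ on $\mdp$ (namely $\zstrat(\state_0\cdots\state_i) = $ the first-component part of $\zstrat'((\state_0,0)\cdots(\state_i,i))$) with $\probm_{\mdp,\state,\zstrat}(\Parity\coloring) = \probm_{\mdp',(\state,0),\zstrat'}(\Parity\coloring)$, and vice versa; and that the values agree, $\valueof{\mdp}{\state} = \valueof{\mdp'}{(\state,0)}$. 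Then, given a deterministic $k$-bit strategy $\inducedStrat{\updatefun'}{\memconf_0}$ on $\mdp'$, implemented by $\updatefun':\{0,1\}^k\x\states' \to \dist(\{0,1\}^k\x\states')$, I would read off a deterministic $k$-bit \emph{Markov} strategy on $\mdp$: the Markov memory is $\N\x\{0,1\}^k$, and at step $n$ in memory bit $b$ at state $\state$ it does what $\updatefun'$ does at $(b,(\state,n))$, incrementing the counter to $n+1$. One checks directly from the definitions (the step-counter in $\mdp$ tracks exactly the second coordinate in $\mdp'$) that this induces the strategy $\zstrat$ corresponding to $\zstrat'$ above, so it attains the same value; $\eps$-optimality (parts~1,2) and optimality (part~3) transfer immediately. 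For part~3 one additionally notes that $\state\in\states_{\mathit{opt}}$ iff $(\state,0)\in\states'_{\mathit{opt}}$, which follows from value-preservation and the strategy bijection.

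The one genuine subtlety — and the step I'd expect to take most care over — is the \emph{uniformity/quantifier} management, especially for parts~2 and~3 where a \emph{single} strategy must work from infinitely many start states simultaneously. In $\mdp'$ the states reachable from distinct initial states $(\state,0),(\state'',0)$ are disjoint (different runs never merge, by acyclicity plus the counter), so a uniform $k$-bit strategy on $\mdp'$ genuinely handles all of them at once, and the induced Markov strategy on $\mdp$ is a single object; but I would double-check that the hypothesis is being applied to $\mdp'$ with the right notion of "all states" vs.\ "finite set of initial states", and that no issue arises from states of $\mdp$ being reachable at many different step-counts $n$ (this is fine precisely because the Markov strategy is allowed to depend on $n$). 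A second minor point is well-definedness of the induced Markov strategy on partial runs of $\mdp$ that are not prefixes of runs from $\states_0$ — these are irrelevant since they have probability $0$, but one should fix an arbitrary default to make it a total function. Everything else is bookkeeping: the measure-preservation of $\pi$ on cylinders, which propagates to all measurable sets by Carath\'eodory, and the observation that the parity objective, being a tail property of the first-component sequence, is preserved under $\pi$.
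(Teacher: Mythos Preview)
Your proposal is correct and takes essentially the same approach as the paper: encode a step counter into the state space via $\states' = \states \times \nat$ with $(\state,n)\transition'(\state',n+1)$, observe this is acyclic, and translate a $k$-bit strategy on $\mdp'$ back to a $k$-bit Markov strategy on $\mdp$ by having the Markov counter track the second coordinate. One small correction: your parenthetical claim that the states reachable from distinct initial states $(\state,0)$ and $(\state'',0)$ in $\mdp'$ are disjoint is false (both can reach $(s,n)$ whenever $s$ is reachable from each of $\state,\state''$ in exactly $n$ steps in $\mdp$), but this claim is not load-bearing---the hypothesis already hands you a \emph{single} $k$-bit strategy good from all relevant states of $\mdp'$, and that translates directly to a single $k$-bit Markov strategy on $\mdp$.
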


By \cref{lem:acyclic-Markov}, 
the sufficiency of deterministic 1-bit strategies in acyclic MDPs  implies the sufficiency of 
deterministic 1-bit Markov strategies in general MDPs. Thus to prove \cref{theo-eps-opt-full}, it suffices to prove the
following:

\begin{theorem}\label{theo-eps-opt}
Consider an acyclic MDP $\mdp$, a parity objective and
a finite set $\states_0$ of  states.
For every $\eps >0$ there exists a deterministic 1-bit  strategy
that is $\eps$-optimal from every~$\state \in \states_0$.
\end{theorem}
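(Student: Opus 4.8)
The plan is to prove \cref{theo-eps-opt} by an iterative \emph{plastering} construction over the even colors: I process them top-down, at each stage fixing a deterministic $1$-bit B\"uchi strategy (supplied by \cite{KMST:ICALP2019}) on a suitable safe set, and I use L\'evy's zero--one law (\cref{app:levy01}) to bound the total loss by $\eps$. Fix the acyclic MDP $\mdp$, the coloring $\coloring$ with color set $\cset$, and write $\even(\cset)=\{e_1<e_2<\dots<e_m\}$. Every run in $\Parity{\coloring}$ has a unique \emph{winning color}, the largest color seen infinitely often, which is some $e_j$; so $\Parity{\coloring}$ is the disjoint union over $j$ of the tail events $W_j=\{\text{color }e_j\text{ occurs i.o.}\}\cap\{\text{no color }>e_j\text{ occurs i.o.}\}$. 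The construction produces pairwise disjoint regions $F_m,\dots,F_1$ and a single deterministic $1$-bit strategy $\zstrat$ that, inside each $F_j$, only ever runs one $1$-bit B\"uchi strategy aimed at color $e_j$. The reason one bit suffices overall --- rather than the $\lceil\log_2 c\rceil$ bits of the naive bound from the introduction --- is precisely that the ``currently targeted'' winning color is encoded by \emph{which} region $F_j$ the run sits in, which in an acyclic MDP is determined by the current state and need not be stored in memory.

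Stage $j$ (for $j=m,m-1,\dots,1$) works in the residual MDP $\mdp_j$ obtained from $\mdp$ by fixing on $F_{>j}:=F_m\cup\dots\cup F_{j+1}$ the $1$-bit strategies chosen at the earlier (higher-color) stages, then collapsing $F_{>j}$ to a single \emph{winning} sink and every state of color $>e_j$ outside $F_{>j}$ to a \emph{losing} sink. Let $X_j$ be the set of states of color $e_j$ and consider in $\mdp_j$ the B\"uchi-type objective $\psi_j:=\reach{F_{>j}}\vee\buchi{X_j}$: a run satisfying $\psi_j$ in $\mdp_j$ is either absorbed into $F_{>j}$ (handled recursively) or witnesses $W_j$ in $\mdp$. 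Fix reals $\delta_j>0$ in advance with $\sum_j 2\delta_j<\eps/2$. By \cite{KMST:ICALP2019} there is a deterministic $1$-bit strategy $\tau_j$ that is $\delta_j$-optimal for $\psi_j$ in $\mdp_j$. Let $F_j$ be the set of states \emph{outside $F_{>j}$} from which $\psi_j$ has value at least $1-\delta_j$ in $\mdp_j$, fix $\tau_j$ on $F_j$, and pass to $\mdp_{j-1}:=\fixin{\mdp_j}{\tau_j,F_j}$. After stage $1$ put $A_0:=\states\setminus\bigcup_j F_j$; on $A_0$ let $\zstrat$ be a memoryless $(\eps/4)$-optimal strategy for $\reach{\bigcup_j F_j}$ (which exists in every countable MDP \cite{Ornstein:AMS1969}), and on each $F_j$ let $\zstrat$ run $\tau_j$, re-initialising the bit to the initial value of $\tau_j$ the first time the run enters $F_j$. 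This is a single deterministic $1$-bit strategy.

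Correctness from $s\in\states_0$ uses L\'evy's zero--one law. Fix a $\gamma$-optimal strategy $\hat\zstrat$ for $\Parity{\coloring}$ in $\mdp$. Since $\Parity{\coloring}$ and each $W_j$ are tail, L\'evy implies that $\probm_{\mdp,s,\hat\zstrat}$-almost every winning run eventually commits to its winning color $e_j$, in the sense that --- after its last visit to a color-$>e_j$ state --- the residual value of $\psi_j$ in $\mdp_j$ tends to $1$ along it, so the run eventually enters $F_j$. Hence $\probm_{\mdp,s,\hat\zstrat}(\reach{\bigcup_j F_j})\ge \valueof{\mdp}{s}-\gamma'$ for some $\gamma'$ that tends to $0$ with $\gamma$, so $\valueof{\mdp}{s}\le\valueof{\mdp,\reach{\bigcup_j F_j}}{s}+\gamma'$ and the memoryless part of $\zstrat$ brings a run from $s$ into some $F_j$ with probability at least $\valueof{\mdp}{s}-\gamma'-\eps/4$. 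Conditioned on first entering $F_j$ at a state $t$ of $\psi_j$-value $\ge 1-\delta_j$, the fixed strategy $\tau_j$ achieves $\psi_j$ from $t$ with probability $\ge 1-2\delta_j$; achieving $\psi_j$ means either winning by $e_j$ (hence $\Parity{\coloring}$) or entering $F_{>j}$, where the argument recurses at a strictly larger index, so after at most $m$ steps a winning color is reached and the loss incurred along the recursion is $\le\sum_j 2\delta_j<\eps/2$. Choosing $\gamma$ (hence $\gamma'$) small enough gives $\probm_{\mdp,s,\zstrat}(\Parity{\coloring})\ge\valueof{\mdp}{s}-\eps$.

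The step I expect to be the main obstacle is making the finitely many stages mesh so that, simultaneously: (i) the $F_j$ are pairwise disjoint; (ii) \emph{every} parity-winning run of a near-optimal strategy is, with high probability, eventually absorbed into some $F_j$ --- which needs the L\'evy commitment argument carried out inside each \emph{modified} residual $\mdp_j$ (with its collapsed winning and losing sinks), not merely in $\mdp$; and (iii) once inside $F_j$, a run under $\tau_j$ leaks into a \emph{lower-color} region $F_{j'}$ ($j'<j$) --- which would count as a switch of winning color --- only with small probability, so that the total switching probability over all $m$ levels stays below $\eps$. For (iii) the natural tool is that, under a near-optimal strategy, the conditional probability of the tail event $W_j$ is a bounded martingale, so by the maximal inequality it rarely drops far once it is high; but turning this into a clean statement about the plastered $\zstrat$, together with checking that a run visiting color $e_j$ infinitely often yet never entering $F_j$ is negligible (it cannot win by a color $<e_j$, and if it won by $e_j$ then L\'evy in $\mdp_j$ would have placed it in $F_j$), is where essentially all the work lies --- it is exactly the assertion that $\zstrat$ never switches between winning colors with probability $\ge 1-\eps$.
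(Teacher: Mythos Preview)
Your overall architecture---iterative plastering over even colors, $1$-bit B\"uchi strategies from \cite{KMST:ICALP2019} on safe regions, L\'evy's zero--one law to control losses, a final reachability phase---matches the paper's. The paper differs in two structural choices: it processes colors bottom-up ($2,4,\dots,e_{\max}$) rather than top-down, and it first passes to the \emph{layered} MDP $\?L(\mdp)$ (\cref{def:layered}) so that the task reduces to constructing a single \emph{MD} strategy; this cleanly sidesteps the question of how several $1$-bit strategies are spliced and when exactly the bit is ``re-initialised'' upon entering a new region.

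There is, however, a genuine gap in your argument, located at the interaction between $\tau_j$ and $F_j$. The B\"uchi result you cite (\cref{theo-buchi-icalp}) only delivers a $1$-bit strategy that is $\delta_j$-optimal from a \emph{pre-specified finite} set of initial states; it does not give a uniformly $\delta_j$-optimal strategy. You invoke it without naming such a finite set, and then you define $F_j$ by \emph{value} (states with $\psi_j$-value $\ge 1-\delta_j$) rather than by $\tau_j$'s \emph{attainment}. Consequently the step ``conditioned on first entering $F_j$ at $t$, the fixed $\tau_j$ achieves $\psi_j$ from $t$ with probability $\ge 1-2\delta_j$'' is unjustified: $t$ may have high value while $\tau_j$ attains nothing there. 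This is not a cosmetic slip; it is precisely the place where the plastering can silently fail.

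The paper closes this gap as follows. At stage $e$ it first runs L\'evy (\cref{lem2:reachG01}) inside the Markov chains induced by near-optimal strategies from the finitely many initial states, in order to extract a \emph{finite} set $R_e$ of states that have high value for the relevant objective $\theta_e$ and are reached with high probability. Only then is the B\"uchi theorem applied, from the finite set $R_e$, to obtain $\tau_e$. The region $\fix_e$ is then defined by $\tau_e$'s attainment, $\fix_e=\safesub{\?L_{e-2},\tau_e,\theta_e}{\beta}$, which guarantees \emph{by construction} that $\tau_e$ performs well from every entry point into $\fix_e$; and one checks $R_e\subseteq\fix_e$, so the high-probability absorption argument still goes through. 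Your top-down direction could probably be made to work with the same fix (first L\'evy to a finite set, then B\"uchi from that set, then define $F_j$ by attainment of $\tau_j$), but as written the value-based definition of $F_j$ is where the proof breaks.
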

\subparagraph*{Third reduction (Layered MDP).} 
This reduction is in the same spirit of the previous one, in which the bit $b\in \{0,1\}$ is transferred from strategies to MDPs. Given an MDP~$\M$, the corresponding \emph{layered} MDP~$\?L(\mdp)$ has two copies of each state~$s\in S$ and each transition~$t\in\transition_1$ of~$\mdp$, one augmented with bit~$0$ and another with bit~$1$: $(s,i)$ and $(t,j)$ with $i,j\in \{0,1\}$.   
The states~$(s,i)$ are random  if $s\in \states_\rsymbol$ and controlled if $s\in \states_\zsymbol$ . All the~$(t,j)$  are controlled.
If there is a transition $t=(a,b)$ from  state $a$ to $b$ in~$\mdp$, there will be 
two transitions from $(a,i)$ to $(t,i)$, and four transitions from $(t,i)$ to $(b,j)$ in $\?L(\mdp)$; see Figure~\ref{fig:layeredsys}. 
 
 A 1-bit deterministic strategy in~$\mdp$ 
 at a state~$a$ picks a single   successor~$b$ and may flip the bit from~$i$ to~$j$; this is simulated in $\?L(\mdp)$ with an MD strategy~$\sigma$  within two consecutive steps: $\sigma$ first chooses the transition~$t=(a,b)$ by 
 $\sigma(a,i)=(t,i)$ 
 and then updates the bit by $\sigma(t,i)=(b,j)$ thereby moving from layer~$i$ to layer~$j$. 
  The controlled states~$(t,i)$ are essential  for a correct simulation, since otherwise
    the controller cannot freely flip the bit (switch between layers) after it observes the successor chosen randomly  at a random state.

\begin{definition}[Layered MDP] 
\label{def:layered}
  Given an MDP $\mdp=(\states,\states_\zsymbol,\states_\rsymbol, \transition_1, \probp_1)$ with coloring $\coloring_1:\states\to \cset$,
  we define the corresponding
  \emph{layered MDP} $\?L(\mdp)=(L,L_\zsymbol,L_\rsymbol, \transition_2, \probp_2)$
  with coloring $\coloring_2:L\to \cset$
  as follows.
 \begin{itemize}
 \item  $L\eqdef (\states\cup \transition_1 )\x\{0,1\}$ where the set of controlled states is $L_\zsymbol \eqdef (\states_\zsymbol\cup \transition_1 )\x\{0,1\}$.
 \item    For all 
    $t\in \transition_1$ such that $t=(\state,\state')$ and for all  $i,j\in \{0,1\}$, we have:
     \begin{enumerate}

    \item $(\state,i)\transition_2 (t,i)$ and $(t,i)\transition_2 (\state',j)$,
      
    \item  $\probp(\state,i)((t,i)) \eqdef
      \probp(\state)(\state')$ iff $\state \in \states_\rsymbol$, and
  \item $\coloring_2((\state,i)) \eqdef \coloring_1(\state)$
          and  $\coloring_2((t,i)) \eqdef \coloring_1(\state')$.
        
        \end{enumerate}
  
  \end{itemize}

\end{definition}

The layered MDP of an acyclic MDP is acyclic.
For $q\in \states\cup \transition_1$, we refer to the copies of~$q$ in layer $0$ and layer $1$ as \emph{siblings}: $(q,0)$ and $(q,1)$. A set $B\subseteq L$ is 
\emph{closed} if for each state~$(q,i)\in B$ its sibling is also in $B$.
 Denote by~$\closure{B}$  the minimal closed  superset of~$B$.

\begin{restatable}{lemma}{proplayered}
\label{prop:layered}
Consider an acyclic MDP~$\mdp=\mdptuple$ with
a parity objective~$\formula=\Parity{\coloring}$ and let $\?L(\mdp)$ be the
corresponding layered MDP.

For every deterministic $1$-bit strategy $\inducedStrat{\updatefun}{\memconf_0}$ in $\mdp$
there is a corresponding MD strategy $\tau$ in~$\?L(\mdp)$, and vice-versa,
such that for every~$\state_0 \in \states$,
$
\probm_{\?L(\mdp),(\state_0,\memconf_0),\tau}(\formula) =
\probm_{\mdp,\state_0,\inducedStrat{\updatefun}{\memconf_0}}(\formula) 
$.
\end{restatable}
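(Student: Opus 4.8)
The plan is to make the strategy correspondence completely explicit and then transport it, first to runs and then to the induced probability measures; acyclicity of $\mdp$ is inherited from the preceding reductions but plays no role in the argument. The first step is to unwind what an MD strategy $\tau$ in $\?L(\mdp)$ is. By the shape of $\transition_2$ in \cref{def:layered}, the only successors of a state-copy $(\state,i)$ are the transition-copies $(t,i)$ with $t=(\state,\state')$ in the \emph{same} layer, so at such a copy (when $\state\in\zstates$) an MD strategy merely picks a successor $\state'$ of $\state$; and the only successors of a transition-copy $(t,i)$, $t=(\state,\state')$, are $(\state',0)$ and $(\state',1)$, so at $(t,i)$ an MD strategy picks a new bit $j\in\{0,1\}$ depending on $t$ (hence on $\state$ and $\state'$) and on $i$. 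This is exactly the data of a deterministic $1$-bit update function $\updatefun$ on $\mdp$: at a controlled state it names the chosen successor together with the updated bit, and at a random state it names, for each realised successor, the updated bit — which is precisely what the auxiliary controlled states $(t,i)$ encode, since they let the controller react to the transition drawn at a random state. So I would record a bijection $\tau\leftrightarrow\updatefun$, pairing the initial memory $\memconf_0$ of $\updatefun$ with the layer of the initial state $(\state_0,\memconf_0)$ of $\?L(\mdp)$; this uses the normal form for deterministic $1$-bit strategies discussed at the end.

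Next I would fix $\state_0$ together with a corresponding pair $\tau\leftrightarrow\inducedStrat{\updatefun}{\memconf_0}$ and set up a run-level bijection $\iota$ that doubles each step. A run $\state_0\state_1\state_2\cdots$ of $\mdp$ induced by $\inducedStrat{\updatefun}{\memconf_0}$, carrying memory sequence $\memconf_0=b_0,b_1,b_2,\dots$, lifts to the run $(\state_0,b_0)(t_0,b_0)(\state_1,b_1)(t_1,b_1)(\state_2,b_2)\cdots$ of $\?L(\mdp)$ with $t_i=(\state_i,\state_{i+1})$; conversely every run of $\?L(\mdp)$ induced by $\tau$ from $(\state_0,\memconf_0)$ has this doubled form (the layered structure forces it) and projects back, so $\iota$ is a bijection between the two run sets. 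By the colouring clause of \cref{def:layered}, the colour sequence of $\iota(\rho)$ is $\coloring_1(\state_0),\coloring_1(\state_1),\coloring_1(\state_1),\coloring_1(\state_2),\coloring_1(\state_2),\dots$, i.e.\ the colour sequence of $\rho$ with each colour after the first one repeated once; hence $\rho$ and $\iota(\rho)$ induce the same set of colours occurring infinitely often, and therefore $\rho\in\Parity{\coloring_1}$ iff $\iota(\rho)\in\Parity{\coloring_2}$.

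It then remains to match the measures. The map $\iota$ sends the cylinder of $\state_0\states^\omega$ generated by a partial run $\state_0\cdots\state_n$ onto the set of runs of $\?L(\mdp)$ extending $(\state_0,b_0)(t_0,b_0)\cdots(\state_n,b_n)$, and corresponding cylinders carry equal probability: every controlled step contributes factor $1$ on both sides (the moves at the controlled copies $(\state,i)$ with $\state\in\zstates$ and at every $(t,i)$ are forced by $\tau$, matching the forced $\updatefun$-choice at a controlled state of $\mdp$), and at a random copy $(\state,i)$ with $\state\in\rstates$ we have $\probp_2((\state,i))((t,i))=\probp_1(\state)(\state')$ for $t=(\state,\state')$ by \cref{def:layered}, which is exactly the factor contributed in $\mdp$. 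Since $\iota$ is injective and its image has full $\probm_{\?L(\mdp),(\state_0,\memconf_0),\tau}$-measure, Carath\'eodory's theorem yields that $\probm_{\?L(\mdp),(\state_0,\memconf_0),\tau}$ is the pushforward of $\probm_{\mdp,\state_0,\inducedStrat{\updatefun}{\memconf_0}}$ along $\iota$; applying this to the parity event (which is measurable and, by the previous paragraph, preserved by $\iota$) gives the claimed identity $\probm_{\?L(\mdp),(\state_0,\memconf_0),\tau}(\formula)=\probm_{\mdp,\state_0,\inducedStrat{\updatefun}{\memconf_0}}(\formula)$.

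I expect the real work to be bookkeeping rather than genuine mathematical difficulty, concentrated in two places. One is the measure-theoretic packaging: turning $\iota$ into a measurable isomorphism between the relevant sub-$\sigma$-algebras so that ``pushforward'' is literally meaningful and Carath\'eodory applies cleanly. The other — and the one that has to be handled with care — is justifying the normal form for deterministic $1$-bit strategies invoked above, namely that the bit update may legitimately depend on the realised random transition, so that the strategy is implemented by an update function that is Dirac at controlled states and deterministic once the drawn successor is known at random states. This normal form is exactly what makes the correspondence with MD strategies in $\?L(\mdp)$ a bijection, and it is the reason the construction routes through the auxiliary controlled states $(t,i)$ instead of using direct transitions $(\state,i)\transition_2(\state',j)$.
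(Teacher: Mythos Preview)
Your proposal is correct and follows essentially the same approach as the paper's proof: define the explicit correspondence between $\tau$ and $\updatefun$ (split into the cases $\state\in\zstates$ and $\state\in\rstates$, with the transition-copy $(t,i)$ encoding the bit update after observing a random successor), lift it to a run bijection, observe that colours are preserved, and conclude equality of the parity probabilities. The paper's version is considerably terser---it states the two directions of the correspondence and then simply asserts ``there is a clear bijection between the runs'' and ``the parity colours are lifted accordingly''---whereas you spell out the doubled-run map, the exact colour sequence (which you compute correctly from \cref{def:layered}), and the cylinder-level measure matching; but the underlying argument is the same.
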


\begin{remark}\label{rem:as-layered-memconf}
We note that in a layered system $\?L(\mdp)$, any two siblings have the same value w.r.t.\ a parity objective $\formula$.
Moreover, any state $\state$ in~$\mdp$ has an optimal strategy iff
$(\state,0)\in \?L(\mdp)$ has an optimal strategy
iff its sibling $(\state,1)$ has an optimal strategy.

Suppose $\tau$ is an MD strategy in $\?L(\mdp)$ that is optimal for all states that have an optimal strategy.
Let $\updatefun$ be the update function of a corresponding $1$-bit strategy in $\mdp$, derived as described in \cref{prop:layered}.
Then for every state $\state$ in~$\mdp$ that has an optimal strategy we have
$
\probm_{\mdp,\state,\inducedStrat{\updatefun}{0}}(\formula) =
\probm_{\?L(\mdp),(\state,0),\tau}(\formula) =
\probm_{\?L(\mdp),(\state,1),\tau}(\formula) =
\probm_{\mdp,\state,\inducedStrat{\updatefun}{1}}(\formula)
$.
That is, both $\inducedStrat{\updatefun}{0}$ and $\inducedStrat{\updatefun}{1}$ are optimal from~$\state$, so the initial memory mode is irrelevant.
\qed 
\end{remark}

To prove \cref{theo-eps-opt}, given an 
acyclic MDP, a set of initial states~$S_0$ and $\eps>0$, we consider the 
 layered MDP~$\?L(\mdp)$ and set 
 $L_0=S_0\times \{0\}$ of initial states.
 In the following lemma, we 
 prove that there exists a single  MD strategy
that is  $\eps$-optimal starting
from every state~$\ell_0\in L_0$ in~$\?L(\mdp)$. This and \cref{prop:layered} will directly lead to
 \cref{theo-eps-opt}.

\begin{lemma}\label{lem-eps-opt}
Consider an acyclic MDP~$\mdp$ and parity objective~$\formula=\Parity{\coloring}$. Let~$\?L(\mdp)$ be the layered MDP of~$\mdp$ and $\coloring$.
For all finite sets~$L_0$ of  states in~$\?L(\mdp)$ and all~$\eps>0$
there exists a single MD strategy that is $\eps$-optimal  for~$\formula$
 from every state~$\ell_0\in L_0$.
\end{lemma}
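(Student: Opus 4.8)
The plan is to build the $\eps$-optimal MD strategy by an iterative \emph{plastering} construction over the acyclic layered MDP $\?L(\mdp)$, where at each stage we fix player choices on a larger finite subset of states, committing more and more of the run to a single winning even color. Let $c_1 < c_2 < \dots < c_m$ enumerate the even colors in $\cset$, written in increasing order; we will handle them from the \emph{highest} down. For the top even color $c_m$, the objective ``visit color $c_m$ infinitely often while eventually avoiding all colors $> c_m$'' is exactly a B\"uchi objective after turning every state of color $> c_m$ into a losing sink. By the result of \cite{KMST:ICALP2019} lifted through the layered/acyclic reductions (so that in the layered MDP an MD strategy suffices for B\"uchi; cf.\ \cref{prop:layered} applied to the B\"uchi case), there is a single MD strategy $\tau_m$ that is $(\eps/2^{m})$-optimal for this B\"uchi objective from every state of $L_0$. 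Now one iterates: on the complement region, where winning by $c_m$ is hopeless (value below some small threshold), one recolors all states of color $\ge c_m$ as losing sinks and repeats the argument for $c_{m-1}$, obtaining $\tau_{m-1}$, and so on down to $c_1$. The key point, which is why this works for \emph{parity} and not just B\"uchi, is that the regions get smaller in a controlled way: L\'evy's zero--one law (cf.\ \cref{app:levy01}) guarantees that under an (almost) optimal strategy, almost every winning run eventually enters a region where the value with respect to a single fixed color is close to $1$, so the ``undecided'' set is a null set and the losses incurred by committing to one color at each stage are summable.

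More concretely, I would proceed as follows. First, fix $\eps>0$ and let $\delta = \eps / (2m)$. Stage $m$: let $A_m \subseteq L$ be the set of states whose value for the ``B\"uchi by $c_m$'' objective (states of color $>c_m$ made losing sinks) is $\ge \delta$; pick an MD strategy $\tau_m$ that is $\delta$-optimal for this B\"uchi objective, defined on $A_m$. On $L \setminus A_m$ every strategy wins ``by $c_m$'' with probability $< \delta$, so winning runs there must win by some color $< c_m$; recolor all states of color $\ge c_m$ in $L \setminus A_m$ as losing sinks and recurse with colors $c_1,\dots,c_{m-1}$. This yields, after $m$ stages, a partition of (a subset of) $L$ into regions $A_m, A_{m-1}',\dots$ and MD strategies $\tau_m,\tau_{m-1},\dots$ on them; assemble the single MD strategy $\tau$ that follows $\tau_j$ on the region assigned to color $c_j$ (acyclicity of $\?L(\mdp)$ ensures a run, once it leaves a region, never returns, so there is no conflict in stitching the strategies together, and $\tau$ is genuinely memoryless). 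Finally, one shows $\probm_{\?L(\mdp),\ell_0,\tau}(\formula) \ge \valueof{}{\ell_0} - \eps$ for every $\ell_0 \in L_0$ by a union bound over the $m$ stages: the probability that an optimal play wins but our $\tau$ fails is bounded by the probability of ever being in an ``undecided'' situation (which L\'evy's law makes arbitrarily small, absorbed into one $\delta$ per stage) plus the per-stage B\"uchi suboptimality $\delta$, for a total of at most $2m\delta = \eps$.

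The main obstacle I expect is the bookkeeping that makes the recursion legitimate: one must argue that after recoloring states of color $\ge c_j$ as losing sinks on the complement region $L \setminus A_j$, the value of the \emph{modified} parity objective at the surviving states has not dropped — i.e.\ that the winning runs we ``give up'' in region $L \setminus A_j$ really were winning by colors $< c_j$ with only $\eps$-negligible probability, and that optimal play from a surviving state does not need to route through the discarded region. This is exactly where L\'evy's zero--one law is doing the heavy lifting, and it is the place where a naive ``follow the locally best color'' strategy would fail (as the introduction warns): one must commit. Formally I would phrase the per-stage loss bound as: for any strategy $\pi$ and any $\ell_0$, $\probm_{\ell_0,\pi}(\formula) \le \probm_{\ell_0,\pi}(\text{win by } c_m) + \probm_{\ell_0,\pi}(\formula \wedge \text{reach } L\setminus A_m \text{ at a ``late'' time})$, and the second term is $\le \delta$ by choosing the ``late'' cutoff via L\'evy's law; then recurse on the sub-MDP. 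Acyclicity keeps this clean because ``reach $L\setminus A_m$'' is a permanent, irreversible event, so the stages genuinely decouple. Getting the order of quantifiers right — choosing the L\'evy cutoff uniformly over the finite set $L_0$ and over the finitely many stages before fixing the regions — is the delicate part, but since $L_0$ is finite and $m$ is finite, a single cutoff works.
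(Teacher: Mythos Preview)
Your plan has a genuine gap, and it is not just bookkeeping. The problem is the threshold defining the regions: you take $A_m$ to be the set of states where the B\"uchi-by-$c_m$ value is at least $\delta$, a \emph{small} number, and then play the $\delta$-optimal B\"uchi strategy $\tau_m$ on all of $A_m$. This gives no useful guarantee at states where that B\"uchi value is close to $\delta$, and $\tau_m$ may destroy the value contributed by the other colors. Concretely: let $\ell_0\in L_0$ be controlled with successors $s_1, s_2$; from $s_1$ one wins by color $c_m$ with probability exactly $\delta$ and otherwise reaches a losing sink; from $s_2$ one wins by $c_{m-1}$ almost surely. Then the parity value of $\ell_0$ is $1$ (via $s_2$) while its B\"uchi-by-$c_m$ value is $\delta$ (via $s_1$), so $\ell_0\in A_m$. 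A $\delta$-optimal B\"uchi-by-$c_m$ strategy may legitimately pick $s_1$ (this is even optimal for that objective), after which with probability $1-\delta$ the run lands in a losing sink inside $L\setminus A_m$, where your recursion recovers nothing. Your assembled strategy then achieves at most $\delta$ from $\ell_0$. A secondary error: the claim that acyclicity ensures a run never returns to a region it has left is false --- acyclicity forbids revisiting \emph{states}, not value-defined sets like $A_m$.

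What the paper's proof does differently fixes both issues. For each even color $e$ it uses L\'evy's zero-one law \emph{up front} (not as a post-hoc error bound) to locate, under a near-optimal strategy for the current objective, a finite set $\ini_e$ of states where the single-color objective $\theta_e$ already has value close to~$1$; the MD B\"uchi strategy $\tau_e$ is then fixed only on the region $\fix_e$ where $\tau_e$ itself attains a \emph{high} threshold $\beta$ close to~$1$, so committing to color $e$ there costs at most $1-\beta$ rather than $1-\delta$. After iterating over all even colors (each new region is required to avoid the previously fixed ones, so the $\fix_e$ are disjoint by construction), a separate final \emph{reachability} MD strategy is fixed on all remaining states, steering runs into the union of inner sets $\core_e\subseteq\fix_e$; a second application of L\'evy's law then shows that once in $\core_e$ the run stays in $\fix_e$ (and hence eventually wins by $e$) with probability at least $1-\gamma$. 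The two ingredients your plan lacks --- the high-attainment definition of the fixing regions and the explicit reachability phase --- are precisely what block the counterexample above.
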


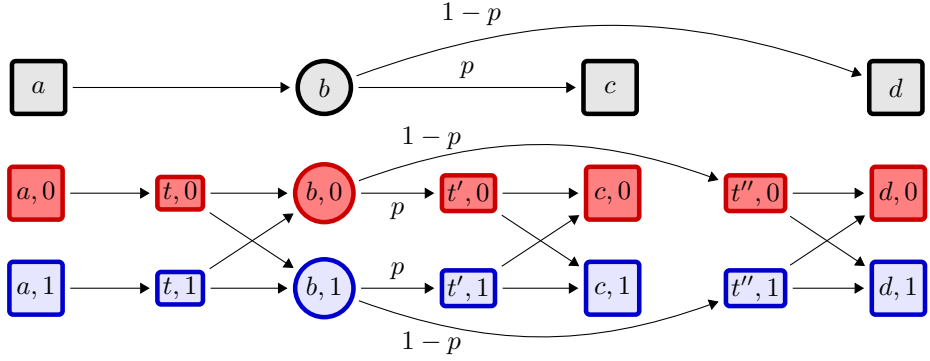
\begin{figure*}[t]
  \centering
  \begin{tikzpicture}[
scale=1.0, every node/.style={transform shape},
node distance=.8cm and 3cm]
\node[cstate] (s00) at (0,0) {$a$};
\node[rstate, right=of s00] (s10) {$b$};
\node[cstate, right=of s10] (s20) {$c$};
\node[cstate, right=of s20] (s30) {$d$};

\draw (s00) edge (s10);
\draw (s10) edge node[above,sloped]{$p$} (s20);
\draw (s10) edge[bend left=20] node[auto,pos=0.3]{$1-p$} (s30);

\end{tikzpicture}
  
  \begin{tikzpicture}[
scale=1, every node/.style={transform shape},
node distance=.5cm and 3cm]
\node[cstate,layer1] (s00) at (0,0) {$a,0$};
\node[rstate,layer1, right=of s00] (s10) {$b,0$};
\node[cstate,layer1, right=3cm of s10] (s20) {$c,0$};
\node[cstate,layer1, right=3cm of s20] (s30) {$d,0$};

\node[cstate,layer2, below=of s00] (s01) {$a,1$};
\node[rstate,layer2, right=of s01] (s11) {$b,1$};
\node[cstate,layer2, right=of s11] (s21) {$c,1$};
\node[cstate,layer2, right=of s21] (s31) {$d,1$};

\node[ucstate,layer1] (u10) at ($(s00)!.5!(s10)$) {$t,0$};
\node[ucstate,layer2] (u11) at ($(s01)!.5!(s11)$) {$t,1$};
\node[ucstate,layer1] (u20) at ($(s10)!.5!(s20)$) {$t',0$};
\node[ucstate,layer2] (u21) at ($(s11)!.5!(s21)$) {$t',1$};
\node[ucstate,layer1] (u30) at ($(s20)!.5!(s30)$) {$t'',0$};
\node[ucstate,layer2] (u31) at ($(s21)!.5!(s31)$) {$t'',1$};



\draw (s10) edge node[below] {$p$} (u20);
\draw (s11) edge node[above] {$p$} (u21);

\draw (s00) edge (u10);
\draw (s01) edge (u11);

\draw (u10) edge  (s10);
\draw (u11) edge  (s11);
\draw (u10) edge  (s11);
\draw (u11) edge  (s10);

\draw (u20) edge  (s20);
\draw (u20) edge  (s21);
\draw (u21) edge  (s20);
\draw (u21) edge  (s21);

\draw (s10) edge[bend left=20] node[above,pos=0.2] {$1-p$} (u30);
\draw (s11) edge[bend right=20] node[below,pos=0.2] {$1-p$} (u31);
\draw (u30) edge  (s30);
\draw (u30) edge  (s31);
\draw (u31) edge  (s30);
\draw (u31) edge  (s31);
\end{tikzpicture}
  \caption{An MDP $\mdp$ (in grey) and the corresponding layered MDP $\?L(\mdp)$ with states of layer~0 and 1 in red and blue, respectively. Here, $t=(a,b)$, $t'=(b,c)$ and $t''=(b,d)$ are   transitions of~$\mdp$.
  }
\label{fig:layeredsys}
\end{figure*}

In the rest of this section, we prove \cref{lem-eps-opt}. We fix 
a  layered MDP $\?L(\mdp)$ (or simply~$\?L$) obtained from a given acyclic and finitely branching MDP~$\mdp$ and a coloring~$\coloring: \states \to \cset$, where 
the set of states is~$L$ and the finite set of initial states is~$L_0\subseteq L$. 
Let $\formula$ be the resulting parity objective in $\?L$.

Recall that $\even(\cset)=2\nat \cap \cset$
denotes the set of even colors.
We denote by $e_{\max}$ the largest even color in $\even(\cset)$
and assume w.l.o.g., that $\even(\cset)$ contains all even numbers from
$2$ to $e_{\max}$ inclusive.
 We have:
\begin{align*}
 \varphi & \eqdef \bigvee_{e\in \even(\cset)}\left(\always\eventually \colorset{L}{= e}{} \wedge
 \eventually\always \colorset{L}{\leq  e}{}\right)&\\
& = \bigvee_{e\in \even(\cset)}\left(\eventually \always\eventually \colorset{L}{= e}{} \wedge
 \eventually\always \colorset{L}{\leq  e}{}\right)&
  \text{since $\always\eventually \colorset{L}{= e}{}$ is a tail objective}\\
&= \bigvee_{e\in \even(\cset)}\eventually \left(\always\eventually \colorset{L}{= e}{} \wedge
\always \colorset{L}{\leq  e}{}\right)& \text{since $\eventually \always
 A \wedge \eventually \always B = \eventually (\always A \wedge \always B)$} \\
&= \bigvee_{e\in \even(\cset)} \eventually \varphi_e\;,& 
\end{align*}
where $\varphi_e \eqdef  \left(\always\eventually \colorset{L}{= e}{} \wedge
\always \colorset{L}{\leq  e}{}\right)$.
Indeed, $\varphi_e$ is the set of runs that win through color~$e$ (i.e., by visiting color~$e$ infinitely often and never visiting larger colors).
Since the $\eventually \varphi_e$ are disjoint,  for all states~$\ell$ and strategies~$\sigma$, we have:
\begin{equation}\label{eq:varphii}
\Prob{\?L,\ell,\sigma}(\varphi)= \sum_{e\in \even(\cset)} \Prob{\?L,\ell,\sigma}(\eventually \varphi_e).
\end{equation}

Fix $\eps > 0$ and define $\gamma\eqdef \frac{\eps}{e_{\max}+2}$. 
To construct an MD strategy~$\hat{\sigma}$ that is  $\eps$-optimal starting from every state  in~$L_{0}$
we have an iterative procedure. 
In each iteration, we define $\hat{\sigma}$ at states in some carefully chosen region; and continuing in this fashion, we gradually fix all choices of~$\hat{\sigma}$. In an iteration,
in order  to fix ``good'' choices in the ``right'' region 
we need to carefully observe the behavior of finitely many 
$\frac{\gamma}{2}$-optimal 
strategies~$\sigma_{\ell_0}$, one for each $\ell_0\in L_0$, which must respect the 
choices already fixed in previous iterations. 
We thus view these strategies~$\sigma_{\ell_0}$ to be $\frac{\gamma}{2}$-optimal not in~$\?L$
but in another layered MDP that is derived from~$\?L$ after
fixing the choices of partially defined~$\hat{\sigma}$.

In more detail, 
the proof consists of  exactly $\frac{e_{\max}}{2}+1$ iterations:
one iteration for each even color~$e$ and a final ``reach'' iteration.
Starting from color~$2$ and $\?L_0 \eqdef \?L$,
 in the iteration~$e \in \{2,\cdots ,e_{\max}\}$,
we obtain a layered MDP~$\?L_{e}$ from~$\?L_{e-2}$
by fixing a single choice for each controlled state in a set~$\fix_{e}$.
Roughly speaking, a run that falls in the set~$\fix_{e}$
is likely going to win through~$\varphi_e$ (win through color~$e)$.
We identify a certain subspace of~$\fix_e$, referred to as $\core_e$, such that 
the following crucial fact holds:  
Once  
 $\core_e$ is visited the run remains in $\fix_e$  with probability at least~$1-\gamma$.
At the final iteration,
we fix the choices of all remaining states to maximize  the probability 
of falling into the union of $\core_e$ sets. 
As mentioned, the majority of such runs that visit  $\core_e$, for some color~$e$, will stay in   $\fix_e$ forever and thus win  parity  through color~$e$. 
After all the iterations, all choices of all controlled states are fixed, and this prescribes the MD strategy~$\hat{\sigma}$ from~$L_0$ in~$\?L$.

In order to define the sets~$\fix_e$ we heavily use L\'evy's zero-one law and follow an inductive transformation on objectives.
L\'evy's zero-one states that, for a given set of (infinite) runs of a Markov chain, if we gradually observe a random run of the chain,  we will become more and more certain whether the random run belongs to that set. This law has a strong implications for tail objectives. It asserts that
on almost all runs $s_0 s_1 s_2 \cdots $ the limit of the value of~$s_i$
w.r.t.\ a tail objective tends to either 0 or 1. %

In each iteration~$e\in \{2,\cdots,e_{\max}\}$, we transform   an objective~$\psi_{e-2}$ to a next objective~$\psi_{e}$ 
where $\psi_0\eqdef \varphi$ is the parity objective and the result of the last transformation is $\psi_{e_{\max}}= \bigvee_{e\in \even(\cset)}  \eventually \core_e$. We will also move from the MDP~$\?L_{e-2}$ to~$\?L_{e}$ after the fixings so as to maintain the following {\bf invariant}:  
For all $\ell_0\in L_0$, the value of~$\ell_0$ for~$\psi_e$ in $\?L_{e}$
is almost as high as its value for $\varphi$ in~$\?L$, that is
\begin{equation}\label{inv-eps}
	\valueof{\?L_e,\psi_e}{\ell_0}\geq \valueof{\?L,\varphi}{\ell_0}-e\cdot\gamma.
\end{equation}
Recall that $\formula=\bigvee_{e\in \even(\cset)} \eventually \varphi_e$. Let $\Fix{0}\eqdef \emptyset$ and write 
$\Fix{e}\eqdef \bigcup_{e'\leq e} \closure{\fix_{e'}}$
 for $e\in \{2, 4,\cdots, e_{\max}\}$.
We   define:
\begin{align}\label{def-psi}
	\psi_0\eqdef  \, \, \bigvee_{e'>0} \eventually \varphi_{e'}
\wedge  \always \, \neg \Fix{0} =\varphi & \quad \quad &
\psi_{e} \eqdef  \, \, \bigvee_{e'\leq e} \eventually \core_{e'} \vee 
\bigvee_{e'>e}  ( \eventually \varphi_{e'}
\wedge \always \, \neg \Fix{e}). 
\end{align}

At each transformation, we examine  the disjunct~$\chi_{e}\eqdef \eventually \varphi_e \wedge  \always  \neg \Fix{e-2}$ in $\psi_{e-2}$. The set of runs satisfying  this objective~$\chi_{e}$ not only win through color~$e$ but also avoid the previously fixed regions. 
Roughly speaking, the aim is to transform~$\chi_e$ to $\eventually \core_e$, to move from~$\psi_{e-2}$ to~$\psi_e$. We apply L\'evy's zero-one law to 
deduce that the runs satisfying the $\chi_{e}$  are likely to enter a region that has a high value for 
a slightly simpler objective, namely
\begin{equation}\label{def-theta}
	\theta_e\eqdef \varphi_{e}
\wedge  \always \, \neg \Fix{e-2}.
\end{equation}
To do so, we  observe in~$\?L_{e-2}$  the behavior of  several arbitrary $\frac{\gamma}{2}$-optimal 
strategies~$\sigma_{\ell_0}$ for~$\psi_{e-2}$, one for each~$\ell_0\in L_0$. Then, for each $\sigma_{\ell_0}$,
we   apply L\'evy's zero-one law separately; this provides that there exists a finite set~$\ini_e$ of states that have a high value for $\theta_e$, and is reached by one of the~$\sigma_{\ell_0}$ with probability as high as the probability of satisfying the  disjunct~$\chi_{e}$. 
Now we use our previous results~\cite{KMST:ICALP2019} on the strategy
complexity of B\"uchi objectives
and  prove the existence of  an MD strategy~$\tau_e$
that is almost optimal  for~$\theta_e$ (error less than $\gamma$), starting from every state in~$\ini_e$. We define sets $\fix_e$ and $\core_e$ to be the set of states from which~$\tau_e$ attains a high probability for~$\theta_e$ in~$\?L_{e-2}$; see Figure~\ref{fig:eps-opt-parity}. Define
$\beta\eqdef  1- \gamma$  and  $ \alpha \eqdef  1 - \gamma^2$, and 
\begin{align} \label{eq:defShell-fix-core}
\fix_e \eqdef   \safesub{\?L_{e-2},\tau_e, \theta_{e}}{\beta} &  \quad &
\core_e \eqdef  \safesub{\?L_{e-2},\tau_e, \theta_{e}}{\alpha}.
\end{align}
We fix the strategy $\tau_e$ in the  $\fix_e$-region to derive the MDP~$\?L_e$ from $\?L_{e-2}$.
Formally,
\begin{equation}
\label{eq:Le}
\?L_e \eqdef \, \, \fixin{\?L_{e-2}}{\tau_e,\fix_e}.
\end{equation}

\begin{figure}[t]
\centering
\includegraphics[width=\linewidth]{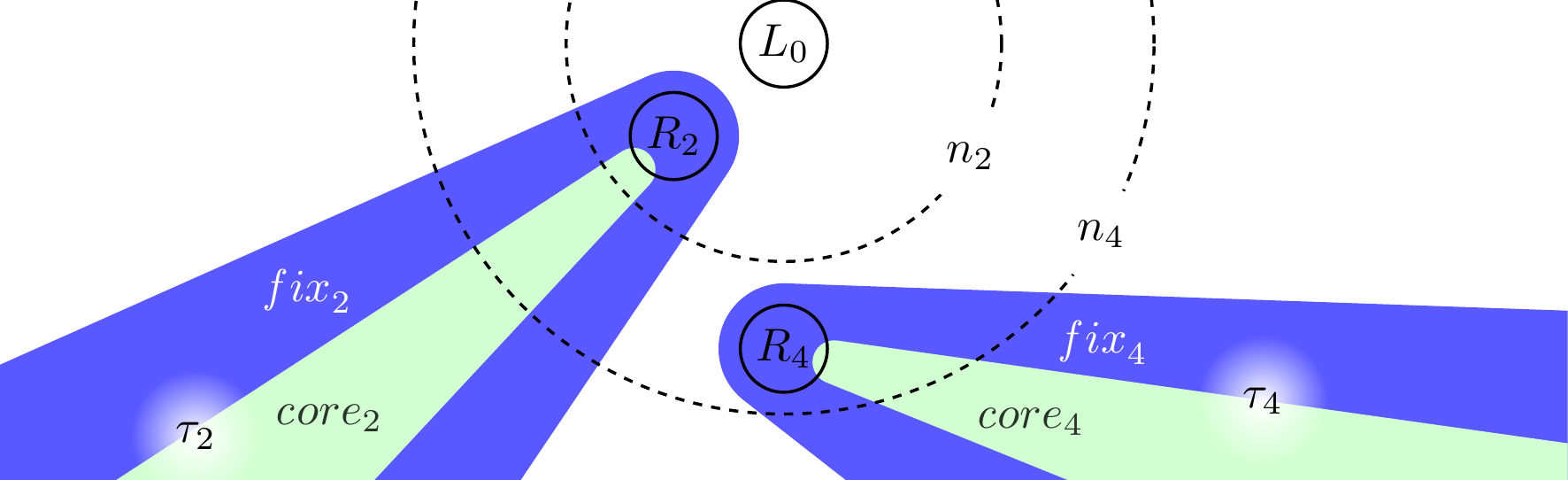}
\caption[bla]{
    The construction for \cref{lem-eps-opt}.
    In the first iteration, for color~$2$, we fix the MD strategy $\tau_2$ in the $\fix_2$-region.
In the second iteration, for color~$4$, we fix $\tau_4$ in  $\fix_4$, and so on for all even colors. Everywhere else we fix an~$\gamma$-optimal reachability strategy towards $\bigcup_{e=2}^{e_{\max}} \core_e$ (in green).}
  \label{fig:eps-opt-parity}
\end{figure}

\subparagraph*{Iteration~$e\in\{2,\cdots ,e_{\max}\}$:}  
For  all states~$\ell_0\in L_0$, 
let $\sigma_{\ell_0}$ be a general (not necessarily MD) $\frac{\gamma}{2}$-optimal
strategy w.r.t.~$\psi_{e-2}$ in the layered MDP~$\?L_{e-2}$.
Consider  the Markov chain~$\chain_{\ell_0}$ induced by~$\?L_{e-2}$,  
the fixed initial state~$\ell_0$ and strategy~$\sigma_{\ell_0}$. 

By definition (Equation \ref{def-theta}),  $\theta_{e}$ is suffix-closed and $\eventually \theta_e$ is tail.  
The strategy~$\sigma_{\ell_0}$ attains~$\eventually \theta_e$ with probability at least  as large as it achieves disjunct~$\chi_{e}$ in~$\psi_{e-2}$.
We apply L\'evy's zero-one law to deduce that the  winning runs of $\eventually \theta_e$ likely reach a finite set~$\ini_e$ of states that have a high value for~$\theta_e$.
In other words, most runs that  eventually win through
 color~$e$, while eventually avoiding  $\Fix{e-2}$, will  reach~$\ini_e$ within a bounded number of
 steps.

\begin{restatable}{lemma}{lemtworeachGzeroone}
\label{lem2:reachG01}\label{lem:LZO-3}
Let~$s_0\in S$ and $\E$ be a suffix-closed objective.
For all $\eps, \eps'>0$, there exist $n$ and a finite set
$F \subseteq \safesub{\E}{1-\eps}$
such that 
$
\Prob{s_0}(\eventually \E \land \eventually^{\leq n} \,F) \geq \Prob{s_0}(\eventually \E)-\eps'.
$
\end{restatable}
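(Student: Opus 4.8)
The plan is to exploit L\'evy's zero-one law applied to the tail objective $\eventually\E$. Since $\E$ is suffix-closed, $\eventually\E$ is a tail objective (as observed in the preliminaries), so in the Markov chain induced by the fixed strategy and initial state $s_0$, the sequence of values $\valueof{}{\play(i)}$ (w.r.t.\ $\eventually\E$) converges $\probm_{s_0}$-almost surely, and on almost every run in $\eventually\E$ this limit is $1$. Concretely, I would work with the conditional-expectation martingale $M_i \eqdef \probm_{s_0}(\eventually\E \mid \F_i)$ where $\F_i$ is the $\sigma$-algebra generated by the first $i$ states; L\'evy's zero-one law gives $M_i \to \indicatorfun_{\eventually\E}$ almost surely. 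On a run $\play\in\eventually\E$, from some point on $\play$ has already reached $\E$ (here I use suffix-closure to relate the value of the current state to $\eventually\E$, rather than just $\E$): more precisely, once the run has entered $\E$, its suffix lies in $\E$, so the current state $\play(i)$ satisfies $\valueof{}{\play(i)} \ge \probm_{\play(i)}(\eventually\E)$, and in fact on almost every run in $\eventually\E$ we have $\valueof{}{\play(i)} \ge 1-\eps$ for all sufficiently large $i$.

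The key steps, in order, are as follows. First, fix the Markov chain $\chain$ induced by the strategy and $s_0$, and note $\probm_{s_0}(\eventually\E)$ is well-defined. Second, invoke L\'evy's zero-one law to obtain that $\probm_{s_0}$-almost every run $\play\in\eventually\E$ eventually stays forever in the set $\safesub{\E}{1-\eps} = \{s \mid \valueof{}{s}\ge 1-\eps\}$ w.r.t.\ $\eventually\E$ — more carefully, define the stopping-style event $A_n \eqdef (\eventually\E) \cap \{\play : \play(i)\in\safesub{\E}{1-\eps}\text{ for some }i\le n\}$; then $A_n \uparrow$ a set containing $\probm_{s_0}$-a.e.\ run of $\eventually\E$, so $\probm_{s_0}(A_n)\to\probm_{s_0}(\eventually\E)$ and we can pick $n$ with $\probm_{s_0}(A_n)\ge\probm_{s_0}(\eventually\E)-\eps'/2$. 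Third, since the state space is countable, the set of states reachable within $n$ steps along runs counted in $A_n$ is countable; truncate it to a finite subset $F\subseteq\safesub{\E}{1-\eps}$ carrying all but $\eps'/2$ of the probability mass of the event "some state in $\safesub{\E}{1-\eps}$ is hit within $n$ steps, and $\eventually\E$ holds." Combining, $\probm_{s_0}(\eventually\E \land \eventually^{\le n}F)\ge\probm_{s_0}(\eventually\E)-\eps'$, as required.

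I would be a little careful about two points of bookkeeping. The value $\valueof{}{\cdot}$ in the statement is the value in the underlying MDP, not in the Markov chain $\chain$; but since the strategy can only do worse than optimal, $\valueof{}{s}\ge\probm_{\chain,s}(\eventually\E)$ for states $s$ reachable in $\chain$, and L\'evy's law is applied inside $\chain$ to the function $s\mapsto\probm_{\chain,s}(\eventually\E)$, which is $\ge$ the martingale limit; so membership in $\safesub{\E}{1-\eps}$ (the MDP-value set) follows a fortiori. The use of suffix-closure of $\E$ is exactly what lets me pass from "the suffix of $\play$ from position $i$ lies in $\E$" (which is what the martingale converging to $1$ tells us about runs in $\eventually\E$) to a statement about the single state $\play(i)$ having high value for $\eventually\E$.

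The main obstacle is the correct formulation of the L\'evy step: one must phrase the almost-sure convergence so that it yields a \emph{finite} time bound $n$ and a \emph{finite} witness set $F$ uniformly, rather than a run-dependent time. This is handled by the standard trick of replacing the a.s.\ statement with the monotone limit $A_n\uparrow$ and invoking continuity of measure, then truncating a countable reachable set to a finite one — routine once set up, but it is the crux of why the quantifiers in the statement ($\exists n,\exists F$ finite) come out right.
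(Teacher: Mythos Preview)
There is a genuine gap in your L\'evy step. You apply L\'evy's zero-one law to the \emph{tail} objective $\eventually\E$, which indeed yields that on $\probm_{s_0}$-almost every run $\play\in\eventually\E$ one has $\Prob{\play(i)}(\eventually\E)\to 1$. But the set $\safesub{\E}{1-\eps}$ in the lemma is $\{s\mid \Prob{s}(\E)\ge 1-\eps\}$, defined with respect to $\E$ itself, not $\eventually\E$. Since $\E\subseteq\eventually\E$, we always have $\Prob{s}(\E)\le\Prob{s}(\eventually\E)$, so knowing $\Prob{\play(i)}(\eventually\E)\ge 1-\eps$ does \emph{not} give $\play(i)\in\safesub{\E}{1-\eps}$. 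Your attempted bridge (``once the run has entered $\E$, its suffix lies in $\E$, so the current state $\play(i)$ satisfies $\valueof{}{\play(i)}\ge\probm_{\play(i)}(\eventually\E)$'') is not a valid inequality: the fact that \emph{one particular} continuation from $\play(i)$ lies in $\E$ says nothing about $\Prob{\play(i)}(\E)$. Concretely, take $\E=\always(\text{color }0)$ on a chain where a color-$1$ state $a$ leads deterministically to a color-$0$ absorbing state $b$: then $\Prob{a}(\eventually\E)=1$ but $\Prob{a}(\E)=0$, so $a\in\safesub{\eventually\E}{1-\eps}\setminus\safesub{\E}{1-\eps}$.

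The paper's fix is to apply L\'evy's zero-one law directly to $\E$ (the law holds for arbitrary measurable events, not only tail events): one has $\Prob{s_0}(\E\mid\F_i)(\rho)\to\indicatorfun_\E(\rho)$ almost surely. Suffix-closure is then used in the correct direction, namely to obtain the inequality $\Prob{s_0}(\E\mid\F_i)(\rho)\le\Prob{\rho(i)}(\E)$ (if the whole run $s_0\cdots s_{i-1}\rho'$ lies in $\E$ then so does the suffix $\rho'$). Hence any run that forever avoids $\safesub{\E}{1-\eps}$ has its martingale bounded below $1-\eps$, so by L\'evy it cannot lie in $\E$; a union-bound over prefixes then rules out lying in $\eventually\E$. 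From $\Prob{s_0}(\eventually\E\wedge\eventually\safesub{\E}{1-\eps})=\Prob{s_0}(\eventually\E)$ one finishes by continuity of measures (to get $n$) and finite branching of the chain (to get $F$ as the finitely many states in $\safesub{\E}{1-\eps}$ reachable within $n$ steps). Two minor side remarks: the ambient setting of this lemma is a fixed Markov chain, so your MDP-vs-chain bookkeeping paragraph is unnecessary; and your truncation-to-finite-$F$ argument would also work, but the paper simply uses that the $n$-step reachable set is finite.
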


By \cref{lem2:reachG01},
there exist $n_{\ell_0}$ and a finite set~$\ini_{\ell_0} \subseteq \safesub{\?L_{e-2}, \theta_e}{\alpha}$
such that 
\begin{equation}\label{eq:fsafe2}
\begin{aligned}
\Prob{\?L_{e-2},\ell_0,\sigma_{\ell_0}}(\eventually\theta_e   \wedge \eventually^{\leq n_{\ell_0}} \, \ini_{\ell_0}) \geq \Prob{\?L_{e-2},\ell_0,\sigma_{\ell_0}}(\eventually \theta_e
)-\frac{\gamma}{2}.
\end{aligned}
\end{equation}
Define $n_e\eqdef \max_{\ell_0\in L_0}(n_{\ell_0})$ and
$\ini\eqdef \bigcup_{\ell_0\in L_0} \ini_{\ell_0}$. 
Write $\ini_e\eqdef \{(s,0)\mid \exists b \cdot (s,b)\in \ini\}$ for the
projection of~$\ini_e$ on the layer $0$.

\begin{remark}\label{rem:quasi-tail}
Suppose $\E' \subseteq \E$ and $\eps > 0$ are such that $\Prob{}(\E') \ge \Prob{}(\E) - \eps$.
Then, for any $\playset$, we have $\Prob{}(\E' \cap \playset) \ge \Prob{}(\E \cap \playset) - \eps$.
\end{remark}
\begin{proof}
We have:
\[
\Prob{}(\E' \cap \playset) 
\ =\ \Prob{}(\E') - \Prob{}(\E' \setminus \playset)
\ \ge\ \Prob{}(\E) - \eps - \Prob{}(\E' \setminus \playset)
\ \ge\ \Prob{}(\E) - \eps - \Prob{}(\E \setminus \playset)
\ =\ \Prob{}(\E \cap \playset) - \eps\,.
\]
\end{proof}

We  
apply Remark~\ref{rem:quasi-tail} to Equation~\eqref{eq:fsafe2}  to get
\begin{equation*}
\begin{aligned}
\Prob{\?L_{e-2},\ell_0,\sigma_{\ell_0}}(\eventually\theta_e \wedge \always \, \neg  \Fix{e-2} \wedge \eventually \closure{\ini_e) }  \geq \,\Prob{\?L_{e-2},\ell_0,\sigma_{\ell_0}}(\eventually \theta_{e}
\wedge \always \, \neg \Fix{e-2})-\frac{\gamma}{2}.
\end{aligned}
\end{equation*}
Since $\eventually \always \, \neg \Fix{e-2} 
\wedge \always \, \neg \Fix{e-2}=\always \, \neg \Fix{e-2}$ and $\chi_{e}=\eventually \varphi_e \wedge  \always  \neg \Fix{e-2}$, 
\begin{equation}
\begin{aligned}\label{eq:fsafe22}
\Prob{\?L_{e-2},\ell_0,\sigma_{\ell_0}}(\chi_{e} \wedge \eventually \closure{\ini_e) }  \geq \, \Prob{\?L_{e-2},\ell_0,\sigma_{\ell_0}}(\chi_{e})-\frac{\gamma}{2}.
\end{aligned}
\end{equation}

We think of ~$\always\eventually \colorset{\states}{= e}{}$ as a B\"uchi condition on a slightly modified MDP.
This allows us to apply the following theorem from \cite{KMST:ICALP2019} about the strategy
complexity of B\"uchi objectives.
\begin{restatable}[Theorem 5 in~\cite{KMST:ICALP2019}]{theorem}{theoBuchiIcalp}
\label{theo-buchi-icalp}
	For every acyclic countable MDP~$\mdp$, a B\"uchi objective~$\varphi$, finite set~$I$ of initial states and $\eps>0$, there exists a deterministic 1-bit strategy   that is $\eps$-optimal from every $s\in I$.
\end{restatable}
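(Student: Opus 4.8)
The statement is Theorem~5 of \cite{KMST:ICALP2019}; the plan below is how I would reprove it.

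The plan is to exploit acyclicity so that the implicit step counter --- the length of the path taken so far, which in an acyclic MDP is determined by the current state --- comes for free, and to spend the single bit only on recording whether the accepting set $F$ has been visited ``recently''. Concretely, I would fix an increasing sequence of \emph{checkpoints} $0 = T_0 < T_1 < T_2 < \cdots$ (values of the step counter) and a sequence of error budgets $\eps_n > 0$ with $\sum_n \eps_n \le \eps$, and define the strategy interval by interval: while the step counter lies in $[T_n, T_{n+1})$ and the bit is $0$, follow a fixed MD strategy that is near-optimal for the reachability objective $\reach{F}$; the first time $F$ is entered, flip the bit to $1$ and for the remainder of the interval follow a fixed MD strategy that is near-optimal for reaching the region of states of B\"uchi-value at least $1 - \eps_{n+1}$; at the checkpoint $T_{n+1}$ reset the bit to $0$ and start the next interval. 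Such a strategy is genuinely deterministic $1$-bit in an acyclic MDP: the interval index $n$ is read off the current state, and the only true memory is the one bit ``has $F$ been seen since the last checkpoint''. Here I use as a black box that $\eps$-optimal MD strategies exist for reachability objectives in countable MDPs.

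The correctness argument proceeds by induction on $n$, maintaining the invariant that, with probability at least $\valueof{\mdp}{s} - (\eps_1 + \cdots + \eps_n)$, the bit equals $1$ at each of the checkpoints $T_1, \ldots, T_n$ --- i.e., $F$ has been visited inside each of the first $n$ intervals --- and, conditioned on this, the state occupied at time $T_n$ has B\"uchi-value at least $1 - \eps_{n+1}$. The inductive step combines two ingredients. First, since $\buchi F$ is a tail objective, L\'evy's zero--one law forces the value of the current state along a run to converge to $0$ or $1$; in particular, on the runs destined to win, this value is eventually $\ge 1 - \eps_{n+1}$, which is what lets the high-value part of the invariant regenerate at the next checkpoint. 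Second, a uniform reachability bound in the spirit of \cref{lem2:reachG01} (reaching a \emph{finite} high-value witness set within a bounded number of steps) lets me choose the gap $T_{n+1} - T_n$ large enough that, from any state of value $\ge 1 - \eps_n$ that can occur at time $T_n$, the reach-$F$ sub-strategy hits $F$ \emph{before} time $T_{n+1}$ with probability at least $1 - \eps_n$ up to an arbitrarily small slack --- uniformly over the possibly infinitely many such states. Letting $n \to \infty$, with probability at least $\valueof{\mdp}{s} - \eps$ the set $F$ is visited in all but finitely many intervals, hence infinitely often, so $\buchi F$ is achieved from every $s \in I$.

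I expect the main obstacle to be precisely this inductive step: one must commit to the checkpoints $T_n$ and to the MD sub-strategies \emph{before} running the analysis, yet the analysis needs the reachability-within-one-interval probabilities to be close to the true reachability values \emph{uniformly} in the starting state, which is exactly where the finiteness of the witness set supplied by L\'evy's zero--one law is indispensable. This interplay is also the conceptual reason one extra bit is needed: the strategy must \emph{detect} the visit to $F$ inside an interval in order to decide whether that interval ``succeeded'', something a step counter alone cannot do, and one bit both suffices for this and --- by the matching lower bound of \cite{KMST:ICALP2019} --- is necessary.
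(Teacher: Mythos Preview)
The paper does not prove this statement; it is imported from \cite{KMST:ICALP2019} and used only as a black box (in the proof of \cref{claim:eps-opt-parity-1}). So there is no proof in the paper to compare against, and your sketch has to stand on its own.

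Your high-level architecture --- checkpoints, one bit recording ``$F$ seen since the last checkpoint'', and L\'evy's zero--one law to regenerate high B\"uchi value at each checkpoint --- is indeed the shape of the argument in \cite{KMST:ICALP2019}. But two points in your sketch do not go through as written.

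The substantive gap is in the inductive step. Playing an MD strategy that is merely near-optimal for $\reach{F}$ may reach an $F$-state of \emph{arbitrarily low} B\"uchi value, after which your bit-$1$ sub-strategy (``reach the region of B\"uchi value $\ge 1-\eps_{n+1}$'') has no chance of succeeding. Concretely, take a controlled state $s$ with $\valueof{\mdp}{s}=1$ that has two successors, each leading deterministically into~$F$ in one step, but only one of which has positive B\"uchi value thereafter; both successors are optimal for $\reach{F}$, so a near-optimal reach-$F$ MD strategy is free to pick the losing one. Your invariant then fails already at $T_1$. The repair is to make the reachability target in each phase not $F$ but $F \cap \safesub{\mdp,\formula}{1-\eps_{n+1}}$: from a state of B\"uchi value $\ge 1-\eps_n$, a near-optimal B\"uchi strategy visits $F$ infinitely often while (by L\'evy) the value along the run tends to~$1$, so with probability $\ge 1-\eps_n$ it eventually visits an $F$-state of value $\ge 1-\eps_{n+1}$; hence the reachability value towards $F\cap\safesub{\mdp,\formula}{1-\eps_{n+1}}$ is itself $\ge 1-\eps_n$, and an MD reachability strategy for \emph{this} target does regenerate the invariant.

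The second issue is technical but real: your claim that in an acyclic MDP ``the length of the path taken so far \ldots is determined by the current state'' is false --- two paths of different lengths may end at the same state --- so ``reset the bit at time $T_{n+1}$'' is not a function of the current state and your strategy is not $1$-bit as described. This is fixable (e.g., define checkpoints via a level function such as the longest distance from~$I$, which strictly increases along every transition in an acyclic graph), but it needs to be said.
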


Using \cref{theo-buchi-icalp}, we prove the following. %

\begin{restatable}{claim}{claimepsoptimaltaue}
\label{claim:eps-opt-parity-1}
In MDP $\?L_{e-2}$, there is an MD strategy $\tau_e$,  
that is $(\alpha-\beta)$-optimal   for~$\theta_e$ 
from~$\ini_e$.
\end{restatable}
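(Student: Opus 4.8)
The plan is to observe that $\theta_e$ is, up to a harmless modification of the arena, a B\"uchi objective living on a \emph{layered} MDP, and then to combine \cref{theo-buchi-icalp} (which on its own only yields deterministic $1$-bit strategies) with \cref{prop:layered} (which trades the extra bit for a layer) to obtain an actual MD strategy.

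First I would build an MDP $\mdp^\dagger$ from $\?L_{e-2}$ by redirecting every state of color~$>e$ and every state of $\Fix{e-2}$ to a fresh losing sink (of color~$1$), leaving everything else unchanged, and re-coloring so that the associated parity objective on $\mdp^\dagger$ is exactly the B\"uchi objective $\psi\eqdef\always\eventually\colorset{L}{= e}{}$. The crucial point is that both forbidden regions are closed under siblings: $\Fix{e-2}=\bigcup_{e'\le e-2}\closure{\fix_{e'}}$ is a union of closed sets by construction, and $\colorset{L}{> e}{}$ is closed because the coloring of a layered MDP ignores the layer. Hence redirecting them commutes with the layering, so $\mdp^\dagger$ is — on the part reachable from $\ini_e$, which is all that matters — isomorphic to the layered MDP $\?L(\mdp^\flat)$ of a suitable countable MDP $\mdp^\flat$, and $\mdp^\flat$ is acyclic since $\mdp^\dagger=\?L(\mdp^\flat)$ is (a cycle in $\mdp^\flat$ lifts to a layer-$0$ cycle in $\?L(\mdp^\flat)$). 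Note also that the fixings that turned $\?L$ into $\?L_{e-2}$ only ever touch $\Fix{e-2}$, which lies entirely inside the forbidden region and is thus irrelevant in $\mdp^\dagger$ — this is why the layered structure reappears even though $\?L_{e-2}$ itself need not be a layered MDP. Since a run satisfying $\theta_e=\varphi_e\wedge\always\neg\Fix{e-2}$ must avoid both forbidden regions, it never reaches the new sink, and for every non-forbidden state a run satisfies $\theta_e$ in $\?L_{e-2}$ iff the corresponding run satisfies $\psi$ in $\mdp^\dagger$. From this I would read off that strategies transfer both ways with identical attainment, that $\valueof{\mdp^\dagger,\psi}{\ell}\ge\valueof{\?L_{e-2},\theta_e}{\ell}$ for every non-forbidden $\ell$ (restrict any $\?L_{e-2}$-strategy to $\mdp^\dagger$: its $\theta_e$-runs stay in the shared part and there satisfy $\psi$), and that the states of $\ini_e\subseteq\safesub{\?L_{e-2},\theta_e}{\alpha}$, having positive $\theta_e$-value, are non-forbidden and hence survive in $\mdp^\dagger$.

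Then I would apply \cref{theo-buchi-icalp} to the acyclic countable MDP $\mdp^\flat$ with B\"uchi set $\colorset{L}{= e}{}$, finite initial set $I^\flat$ (the projection of $\ini_e$), and error $\alpha-\beta=\gamma(1-\gamma)>0$, obtaining a deterministic $1$-bit strategy $\sigma^\flat$ on $\mdp^\flat$ that is $(\alpha-\beta)$-optimal for $\psi$ from every state of $I^\flat$. By \cref{prop:layered}, $\sigma^\flat$ corresponds to an MD strategy $\tau^\flat$ on $\?L(\mdp^\flat)=\mdp^\dagger$ with $\Prob{\mdp^\dagger,(s,0),\tau^\flat}(\psi)=\Prob{\mdp^\flat,s,\sigma^\flat}(\psi)$ for all $s$. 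Finally I would let $\tau_e$ agree with $\tau^\flat$ on all controlled states outside the forbidden region (where $\mdp^\dagger$ and $\?L_{e-2}$ coincide) and make arbitrary MD choices at the remaining controlled states (those of color~$>e$; states of $\Fix{e-2}$ are already fixed in $\?L_{e-2}$). For $(s,0)\in\ini_e$ the $\theta_e$-runs never leave the shared part, so
\begin{align*}
\Prob{\?L_{e-2},(s,0),\tau_e}(\theta_e)
&=\Prob{\mdp^\dagger,(s,0),\tau^\flat}(\psi)
=\Prob{\mdp^\flat,s,\sigma^\flat}(\psi)\\
&\ge\valueof{\mdp^\flat,\psi}{s}-(\alpha-\beta)
=\valueof{\mdp^\dagger,\psi}{(s,0)}-(\alpha-\beta)
\ge\valueof{\?L_{e-2},\theta_e}{(s,0)}-(\alpha-\beta),
\end{align*}
which is exactly the asserted $(\alpha-\beta)$-optimality of $\tau_e$ for $\theta_e$ from $\ini_e$.

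The hard part will be the bookkeeping in the first step: making fully precise that sinkifying a sibling-closed region of a layered MDP yields — at least on the part reachable from the relevant initial states — the layered MDP of a modified base MDP, and that this is compatible with the earlier fixings producing $\?L_{e-2}$ (which only affect $\Fix{e-2}$). Once this correspondence is nailed down, together with the routine facts that siblings have equal value for layer-blind objectives and that $\valueof{\mdp^\dagger,\psi}{\ell}\ge\valueof{\?L_{e-2},\theta_e}{\ell}$, the remainder is a mechanical transfer of the known $1$-bit B\"uchi result of \cref{theo-buchi-icalp} across \cref{prop:layered}.
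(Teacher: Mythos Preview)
Your approach is essentially the same as the paper's: reduce $\theta_e$ to a B\"uchi objective by sinkifying the forbidden region (colors $>e$ and $\Fix{e-2}$), apply \cref{theo-buchi-icalp} to obtain a deterministic $1$-bit strategy, and then use the layering correspondence to convert that $1$-bit strategy into an MD strategy on $\?L_{e-2}$. The only difference is presentational: the paper carries out the sinkification on the \emph{base} MDP~$\mdp$ (projecting $\Fix{e-2}$ down) and then translates the resulting $1$-bit strategy upward into $\?L_{e-2}$ by hand, whereas you sinkify directly in $\?L_{e-2}$, recognise the result as a layered MDP $\?L(\mdp^\flat)$, and invoke \cref{prop:layered} for the translation --- this is the same argument traversed in the opposite direction, and your extra bookkeeping (sibling-closedness of the forbidden region, fixings confined to $\Fix{e-2}$) is exactly what justifies the paper's implicit passage between the two levels.
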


Notice that $\tau_e$ is used to define regions~$\core_e\subseteq \fix_e$; see Equation~\eqref{eq:defShell-fix-core} and  \cref{fig:eps-opt-parity}.
Since $\valueof{\?L_{e-2},\theta_e}{\ell}=\valueof{\?L_{e-2},\theta_e}{\ell'}$ holds for all siblings~$\ell$ and $\ell'$, all states in~$\ini_{e}$ have value~$\geq \alpha$
w.r.t.~$\theta_e$. We have chosen $\tau_e$ to be $(\alpha-\beta)$-optimal, which implies 
$\Prob{\?L_{e-2},\ell,\tau_e}(\theta_e) \geq \beta$ for all~$\ell \in \ini_e$. This shows that  $\ini_e\subseteq \fix_e$.
Strategy~$\tau_e$ is also used to obtain $\?L_e$ from~$\?L_{e-2}$:
 for all controlled states~$\ell\in \fix_e$,  the successor is fixed to be~$\tau_e(\ell)$ in~$\?L_e$, 
 see Equation~\eqref{eq:Le}.

\subparagraph*{Invariant~\eqref{inv-eps}:}
  Given a state~$\ell_0\in L_0$, this invariant states that, for all colors~$e$, $\valueof{\?L_e,\psi_e}{\ell_0}\geq \valueof{\?L,\varphi}{\ell_0}-e\cdot\gamma$ holds.  Recall that $\psi_0=\formula$ and $\?L_0=\?L$. To prove the invariant, by an induction on even colors~$e$, it suffices to  
 prove the following:
\[\valueof{\?L_e,\psi_e}{\ell_0}\geq \valueof{\?L_{e-2},\psi_{e-2}}{\ell_0}-2\gamma.\] 
We construct a
 strategy~$\pi$ for~$\psi_{e}$ in~$\?L_e$ such that 
 $\Prob{\?L_{e},\ell_0,\pi}(\psi_{e})\geq \valueof{\?L_{e-2},\psi_{e-2}}{\ell_0}-2\gamma$.
 Intuitively speaking, $\pi$ enforces that
most runs that win through colors~$e'$, with $e' \leq e$, eventually
reach the $\core_{e'}$-region
 and most remaining winning runs always avoid the $\Fix{e}$-region.

The strategy $\pi$ is defined by combining $\sigma_{\ell_0}$ and $\tau_e$; recall that the strategy $\sigma_{\ell_0}$ is  $\frac{\gamma}{2}$-optimal
w.r.t.~$\psi_{e-2}$ starting from~$\ell_0$ in~$\?L_{e-2}$. 
We define~$\pi$  such that it starts by following  $\sigma_{\ell_0}$.   If it ever enters $\closure{\fix_e}$ then we ensure that it 
enters~$\fix_e$ as well (in at most one more step). Then $\pi$ continues by  playing as~$\tau_e$ does forever.	

The following claim concludes the proof of {\bf Invariant~\ref{inv-eps}}.

\begin{restatable}{claim}{claimepsoptimalpi}
\label{claim-eps-optimal-pi}
$\Prob{\?L_{e},\ell_0,\pi}(\psi_{e})\geq \valueof{\?L_{e-2},\psi_{e-2}}{\ell_0}-2\gamma$.
\end{restatable}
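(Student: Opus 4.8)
The plan is to bound $\Prob{\?L_{e},\ell_0,\pi}(\psi_{e})$ from below disjunct-by-disjunct, matching each term of $\psi_e$ against a corresponding term of $\psi_{e-2}$ as witnessed by the $\frac{\gamma}{2}$-optimal strategy $\sigma_{\ell_0}$, and losing at most $\gamma$ in each of two places (first when passing from $\psi_{e-2}$ to $\sigma_{\ell_0}$, then once more when rerouting through $\ini_e$ and switching to $\tau_e$). Recall $\psi_{e-2} = \bigvee_{e'\le e-2}\eventually\core_{e'} \vee \bigvee_{e'>e-2}(\eventually\varphi_{e'}\wedge\always\neg\Fix{e-2})$, and $\psi_e = \bigvee_{e'\le e}\eventually\core_{e'} \vee \bigvee_{e'>e}(\eventually\varphi_{e'}\wedge\always\neg\Fix{e})$. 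The two differ exactly in that the disjunct $\chi_e = \eventually\varphi_e \wedge \always\neg\Fix{e-2}$ (the $e'=e$ term of $\psi_{e-2}$) is replaced by $\eventually\core_e$, and that $\Fix{e-2}$ is tightened to $\Fix{e}=\Fix{e-2}\cup\closure{\fix_e}$ in the remaining high-color disjuncts.

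First I would handle the ``old'' disjuncts, i.e.\ $\bigvee_{e'\le e-2}\eventually\core_{e'}$ and $\bigvee_{e'>e}(\eventually\varphi_{e'}\wedge\always\neg\Fix{e})$, for runs that never enter $\closure{\fix_e}$. On such runs $\pi$ behaves exactly like $\sigma_{\ell_0}$ (since $\pi$ only deviates upon entering $\closure{\fix_e}$), and since these runs avoid $\closure{\fix_e}$ they also avoid $\fix_e$, hence $\always\neg\Fix{e-2}$ and $\always\neg\Fix{e}$ coincide on them; so the contribution of these disjuncts under $\pi$ is at least their contribution under $\sigma_{\ell_0}$ intersected with $\always\neg\closure{\fix_e}$. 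Next, for the rerouted disjunct: by \eqref{eq:fsafe22}, under $\sigma_{\ell_0}$ the event $\chi_e$ is, up to $\frac{\gamma}{2}$, contained in $\chi_e \wedge \eventually\closure{\ini_e}$; and every run entering $\closure{\ini_e}$ enters $\ini_e\subseteq\fix_e$ within one more step, at which point $\pi$ switches to $\tau_e$. From any state of $\fix_e = \safesub{\?L_{e-2},\tau_e,\theta_e}{\beta}$, $\tau_e$ achieves $\theta_e = \varphi_e \wedge \always\neg\Fix{e-2}$ with probability $\ge\beta = 1-\gamma$, hence in particular stays inside $\fix_e$ (so also inside $\closure{\fix_e}\subseteq\Fix{e}$) — but we want $\eventually\core_e$, which holds trivially since the run is already at a state of $\fix_e$; more precisely, $\eventually\closure{\ini_e}$ already implies $\eventually\core_e$ because $\ini_e\subseteq\core_e = \safesub{\?L_{e-2},\tau_e,\theta_e}{\alpha}$ as argued after Claim~\ref{claim:eps-opt-parity-1}. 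Thus the $\chi_e$-mass of $\sigma_{\ell_0}$, minus $\frac{\gamma}{2}$, flows into the $\eventually\core_e$-disjunct of $\psi_e$ under $\pi$.

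The one delicate point is disjointness / no double counting: the runs contributing to $\eventually\core_e$ via the rerouting, those contributing to the surviving high-color disjuncts, and those contributing to the old $\core_{e'}$-disjuncts must not be summed twice, and the ``$\always\neg\closure{\fix_e}$'' restriction I imposed on the surviving disjuncts must be shown to cost nothing extra beyond the $\frac{\gamma}{2}$ already spent. I would organize this by splitting $\Prob{\?L_{e-2},\ell_0,\sigma_{\ell_0}}(\psi_{e-2})$ according to whether the run ever visits $\closure{\fix_e}$: runs that do visit it contribute (via the rerouting) to $\eventually\core_e$ with loss $\le\frac{\gamma}{2}$ by \eqref{eq:fsafe22} applied with Remark~\ref{rem:quasi-tail}; runs that do not visit $\closure{\fix_e}$ keep their $\psi_{e-2}$-value under $\pi=\sigma_{\ell_0}$ and witness $\psi_e$ directly. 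Summing, $\Prob{\?L_e,\ell_0,\pi}(\psi_e) \ge \Prob{\?L_{e-2},\ell_0,\sigma_{\ell_0}}(\psi_{e-2}) - \frac{\gamma}{2} \ge \valueof{\?L_{e-2},\psi_{e-2}}{\ell_0} - \frac{\gamma}{2} - \frac{\gamma}{2} \ge \valueof{\?L_{e-2},\psi_{e-2}}{\ell_0} - 2\gamma$ (the second $\frac\gamma2$ being the sub-optimality of $\sigma_{\ell_0}$; the bound is in fact $-\gamma$, and $2\gamma$ is a safe over-estimate absorbing the technicalities of the one-step delay into $\fix_e$ and of switching strategies at a random state, which is exactly why the layered MDP with its intermediate controlled states $(t,i)$ is needed). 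The main obstacle, as usual in these plastering arguments, is making the informal ``reroute into $\fix_e$ in at most one more step, then play $\tau_e$ forever'' precise as a single well-defined strategy in $\?L_e$ and verifying that the tail event $\theta_e$ (whose satisfaction depends on the whole suffix) is unaffected by the finite prefix played under $\sigma_{\ell_0}$ — this is where suffix-closedness of $\theta_e$ and the tail-ness of $\eventually\theta_e$, together with $\?L_e = \fixin{\?L_{e-2}}{\tau_e,\fix_e}$ agreeing with $\?L_{e-2}$ outside $\fix_e$, do the work.
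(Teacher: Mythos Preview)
Your overall decomposition (split by whether the run visits $\closure{\fix_e}$, keep the non-visiting runs for the surviving disjuncts, reroute the visiting runs into $\eventually\core_e$) is the same as the paper's. But there is a genuine gap in the rerouting step.

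You assert that $\ini_e\subseteq\core_e$, citing the discussion after \cref{claim:eps-opt-parity-1}. That passage shows only $\ini_e\subseteq\fix_e$: states in $\ini_e$ have \emph{value} $\ge\alpha$ for $\theta_e$, and since $\tau_e$ is $(\alpha-\beta)$-optimal from $\ini_e$, it achieves $\ge\beta$ there---not $\ge\alpha$. So entering $\closure{\ini_e}$ (or $\closure{\fix_e}$) does \emph{not} directly give $\eventually\core_e$. Your line ``which holds trivially since the run is already at a state of $\fix_e$'' is a non-sequitur, since $\core_e\subsetneq\fix_e$ in general.

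The paper closes this gap with a second application of L\'evy's zero-one law (\cref{lem:LZO-2}.2): once $\pi$ enters $\fix_e$ and switches to $\tau_e$, it achieves $\theta_e$ with probability $\ge\beta$; and since $\core_e=\safesub{\?L_{e-2},\tau_e,\theta_e}{\alpha}$, almost every $\theta_e$-run under $\tau_e$ eventually enters $\core_e$. Hence $\Prob{\?L_e,\ell_0,\pi}(\eventually\core_e)\ge\Prob{\?L_{e-2},\ell_0,\sigma_{\ell_0}}(\eventually\closure{\fix_e})\cdot\beta\ge\Prob{\?L_{e-2},\ell_0,\sigma_{\ell_0}}(\eventually\closure{\fix_e})-\gamma$. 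This is precisely the additional $\gamma$ of loss (beyond the two $\frac{\gamma}{2}$'s you correctly identified) that makes the bound $2\gamma$; your ``safe over-estimate'' is not slack but reflects a step you omitted. Note also that the paper uses this bound for \emph{all} runs visiting $\closure{\fix_e}$, not only the $\chi_e$-runs covered by \eqref{eq:fsafe22}, which is what lets the $\psi\wedge\rho$ mass be absorbed into the $\eventually\core_e$ disjunct.
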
 

We summarize the main steps in the proof of \cref{claim-eps-optimal-pi} here.
We first prove the claim that  if $\pi$ ever enters $\closure{\fix_e}$ then it is possible to define it in such a way that it actually enters~$\fix_e$.

Comparing~$\psi_e$ with~$\psi_{e-2}$, one notices that 
two significant terms in the symmetric difference of these two objectives are $\chi_e$ 
and $\eventually \core_e$. 
Roughly speaking, we  use Equation~\eqref{eq:fsafe22} to move from $\chi_e$ to $\eventually \closure{\fix_e}$.
Then  we move from $\eventually \closure{\fix_e}$ to $\eventually \core_e$  by proving that
 $\Prob{\?L_{e},\ell_0,\pi}(\eventually \core_e)$
is almost as high as  
$\Prob{\?L_{e-2},\ell_0,\pi}(\eventually \closure{\fix_e})$, modulo small errors. 
To derive the latter, we rely on    two   facts: 
 another application of L\'evy's zero-one law that guarantees $\Prob{\?L_{e},\ell_0,\pi}(\theta_e \wedge \eventually \core_e)$
 is equal to $\Prob{\?L_{e},\ell_0,\pi}(\theta_e)$;
and the fact that, as soon as $\pi$ visits the first state $\ell\in \fix_e$, it switches to $\tau_e$ forever, and thus attains~$\theta_e$ with probability at least~$\beta$.

\subparagraph*{Reach iteration:}
After all $\frac{e_{\max}}{2}$-iterations for even colors and the fixing, by {\bf Invariant~\eqref{inv-eps}}, for all $\ell_0\in L_0$, we have:
\begin{equation}\label{eqpsimax}
	\valueof{\?L_{e_{\max}},\psi_{e_{\max}}}{\ell_0}\geq \valueof{\?L,\varphi}{\ell_0}-e_{\max}\gamma.
\end{equation}

Recall that~$\psi_{e_{\max}}=\bigvee_{e\in \even(\cset)}  \eventually \core_e$.
At this last iteration, we  fix the choice of all remaining states in $\?L_{e_{\max}}$ such that  the probability of~$\psi_{e_{\max}}$ is maximized.
Recall that there are uniformly $\eps$-optimal MD strategies for reachability objectives~\cite{Ornstein:AMS1969}.
Hence, there is a single MD strategy~$\tau_{\mathrm{reach}}$ in $\?L_{e_{\max}}$ that is uniformly $\gamma$-optimal w.r.t.~$\psi_{e_{\max}}$;
in particular, $\tau_{\mathrm{reach}}$ is  $\gamma$-optimal from every state~$\ell_0\in L_0$.

Let $\?L'\eqdef \fixin{\?L_{e_{\max}}}{\tau_{\mathrm{reach}},L}.$
Let $\hat{\sigma}$ be the MD strategy in~$\?L$ that plays from $L_0$
as prescribed by all the fixings in $\?L'$.
Since all choices in all the $\fix_e$-region are resolved according
 to~$\tau_e$, $e\in \{2,\cdots, e_{\max}\}$, we can apply L\'evy's zero-one
 law another time.

\begin{restatable}{lemma}{lemLZOone}\label{lem:LZO-1}
Let $0< \beta_1<  \beta_2 \leq 1$ and $\E$ a tail objective. For
$s \in \safesub{\E}{\beta_2}$, the following holds:
$
\Prob{s}(\always \, \safesub{\E}{\beta_1})\geq \frac{\beta_2-\beta_1}{1-\beta_1}.
$
\end{restatable}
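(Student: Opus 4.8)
The lemma is applied to a Markov chain (all controlled choices having been fixed), so for a state $s$ the set $\safesub{\E}{\beta}$ is simply $\{s \mid \Prob{s}(\E) \ge \beta\}$ and the value of $s$ for $\E$ equals $\Prob{s}(\E)$. The idea is to track the value along a random run as a bounded martingale. Fix $s \in \safesub{\E}{\beta_2}$ and set $X_n(\play) \eqdef \Prob{\play(n)}(\E)$. Since $\E$ is a tail objective, the (strong) Markov property gives $X_n = \Prob{s}(\E \mid \mathcal{F}_n)$ almost surely, where $\mathcal{F}_n$ is generated by $\play(0)\cdots\play(n)$; hence $(X_n)_n$ is a bounded martingale with $X_0 = \Prob{s}(\E) \ge \beta_2$, and by L\'evy's zero-one law (\cref{app:levy01}) it converges almost surely to a limit $X_\infty \in [0,1]$ (in fact $X_\infty = \indicatorfun_\E$, but only $X_\infty \le 1$ is needed below).

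Next I would consider the stopping time $\tau \eqdef \inf\{n \mid X_n < \beta_1\}$, so that as events $\always\,\safesub{\E}{\beta_1} = \{\tau = \infty\}$, and apply optional stopping at the bounded times $\tau \wedge n$: $\beta_2 \le X_0 = \expectval_{s}[X_{\tau \wedge n}]$ for every $n$. Letting $n \to \infty$, bounded convergence (the integrands lie in $[0,1]$ and converge pointwise a.s. to $X_\tau$, with the convention $X_\tau \eqdef X_\infty$ on $\{\tau = \infty\}$) yields $\beta_2 \le \expectval_{s}[X_\tau]$.

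Finally I would split this expectation. On $\{\tau < \infty\}$ we have $X_\tau < \beta_1$ by definition of $\tau$, contributing at most $\beta_1\,\Prob{s}(\tau < \infty) = \beta_1\bigl(1 - \Prob{s}(\always\,\safesub{\E}{\beta_1})\bigr)$; on $\{\tau = \infty\}$ we use $X_\tau \le 1$, contributing at most $\Prob{s}(\always\,\safesub{\E}{\beta_1})$. Hence $\beta_2 \le \beta_1 + (1 - \beta_1)\,\Prob{s}(\always\,\safesub{\E}{\beta_1})$, and dividing by $1 - \beta_1 > 0$ gives the claim. (Equivalently, and avoiding martingale convergence altogether, one can decompose $\E$ along $\tau$ via the strong Markov property: $\beta_2 \le \Prob{s}(\E) = \Prob{s}(\E \cap \{\tau < \infty\}) + \Prob{s}(\E \cap \{\tau = \infty\}) \le \beta_1\,\Prob{s}(\tau < \infty) + \Prob{s}(\tau = \infty)$, the first summand being bounded using $\Prob{s}(\E \mid \mathcal{F}_\tau) = \Prob{\play(\tau)}(\E) < \beta_1$ on $\{\tau < \infty\}$.) The only delicate points are the martingale identity $X_n = \Prob{s}(\E\mid\mathcal{F}_n)$, which relies on $\E$ being tail, and the limit passage at $\{\tau = \infty\}$; both are routine given the Markov/tail structure already in place, so I do not anticipate a substantive obstacle here.
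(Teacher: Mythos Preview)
Your proof is correct and, at its core, is the same decomposition the paper uses: split $\Prob{s}(\E)$ according to whether the run ever leaves $\safesub{\E}{\beta_1}$, bound the ``leave'' part by $\beta_1$ via the tail/Markov property, and the ``stay'' part by~$1$. Your parenthetical alternative is essentially the paper's proof verbatim. The only difference is packaging: the paper conditions directly on $\always\,\safesub{\E}{\beta_1}$ versus its complement and cites \cref{col:01lawG} (a consequence of L\'evy's zero-one law) to get $\Prob{s}(\E \mid \always\,\safesub{\E}{\beta_1}) = 1$, whereas your optional-stopping argument needs only the trivial bound $X_\infty \le 1$ on $\{\tau=\infty\}$ and so avoids that extra lemma; in that sense your route is marginally more self-contained.
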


By \cref{lem:LZO-1}, for all states~$\ell\in \core_e$,  
\begin{equation}\label{eq-gfix-eps1}
\Prob{\?L_{e_{\max}}, \ell,\tau_e} (\always \fix_e)\geq \frac{\alpha-\beta}{1-\beta}\geq 1-\gamma.
\end{equation}
States in $\fix_e$ have a high value for $\theta_e$ and thus also for
$\eventually \varphi_e$.

\begin{restatable}{lemma}{lemLZOtwo}
\label{lem:01lawG}\label{lem:LZO-2}
Let $0 < \beta <1$ and~$\E$ a tail objective. For all states $s\in \safesub{\E}{\beta}$:
\begin{enumerate}
	\item $\Prob{s} (\eventually \always \safesub{\E}{\beta} \setminus \E )=0$; and 
	\item $\Prob{s} (\E \setminus \eventually \always \safesub{\E}{\beta})=0$. 
\end{enumerate}
\end{restatable}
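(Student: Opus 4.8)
The plan is to read off both statements from the standard consequence of L\'evy's zero-one law for tail objectives recalled above and proved in \cref{app:levy01}. I argue in a fixed Markov chain, in which $\safesub{\E}{\beta} = \{t \mid \Prob{t}(\E) \ge \beta\}$; write $v(t) \eqdef \Prob{t}(\E)$ for the value of state $t$.

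First I would recall the dichotomy it provides. Because $\E$ is tail, the probability of $\E$ conditioned on the first $n{+}1$ states $s_0,\dots,s_n$ of a run equals $\Prob{s_n}(\E) = v(s_n)$ by the Markov property; thus, along a run $\rho$, the sequence $n \mapsto v(\rho(n))$ is the Doob martingale of $\indicatorfun_\E$ for the natural filtration generated by finite prefixes, and L\'evy's zero-one law yields that for $\Prob{s}$-almost every run $\rho$ the limit $\lim_{n} v(\rho(n))$ exists and equals $\indicatorfun_\E(\rho)$; that is, $v(\rho(n)) \to 1$ if $\rho \in \E$ and $v(\rho(n)) \to 0$ otherwise.

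For Part~1 I would show that $\eventually\always\safesub{\E}{\beta} \setminus \E$ is contained in the $\Prob{s}$-null event $\{\rho \notin \E \mid v(\rho(n)) \not\to 0\}$: whenever $\rho \in \eventually\always\safesub{\E}{\beta}$ we have $v(\rho(n)) \ge \beta$ for all large $n$, which, since $\beta > 0$, is incompatible with $v(\rho(n)) \to 0$; hence a run outside $\E$ that lies in $\eventually\always\safesub{\E}{\beta}$ must fail $v(\rho(n)) \to 0$. Thus $\Prob{s}(\eventually\always\safesub{\E}{\beta} \setminus \E) = 0$, and the assumption $\beta > 0$ enters exactly here. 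Part~2 is symmetric: $\E \setminus \eventually\always\safesub{\E}{\beta}$ is contained in the $\Prob{s}$-null event $\{\rho \in \E \mid v(\rho(n)) \not\to 1\}$, because whenever $v(\rho(n)) \to 1$ we have, since $\beta < 1$, that $v(\rho(n)) \ge \beta$ for all large $n$, i.e.\ $\rho \in \eventually\always\safesub{\E}{\beta}$; hence a run in $\E$ outside $\eventually\always\safesub{\E}{\beta}$ must fail $v(\rho(n)) \to 1$. Here the assumption $\beta < 1$ is used; the hypothesis $s \in \safesub{\E}{\beta}$ is in fact not needed for either part.

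I do not expect a serious obstacle. The only point requiring care is the identification, in the second paragraph, of the Doob martingale of $\indicatorfun_\E$ evaluated along $\rho$ with the state value $v(\rho(n))$ --- this is precisely where the tail (shift-invariance) property of $\E$ is used, and it is exactly what the statement of L\'evy's law in \cref{app:levy01} delivers. Given that, Parts~1 and~2 become one-line case distinctions according to whether $\rho \in \E$.
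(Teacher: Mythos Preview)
Your proposal is correct and follows essentially the same route as the paper's proof: both arguments invoke L\'evy's zero-one law to conclude that $\Prob{s}(\E \mid \F_n)(\rho) \to \indicatorfun_\E(\rho)$ almost surely, use the tail property (the paper's \cref{remo1law}) to identify $\Prob{s}(\E \mid \F_n)(\rho)$ with $v(\rho(n))$, and then read off each part by a case distinction on whether $\rho \in \E$. Your observation that the hypothesis $s \in \safesub{\E}{\beta}$ is not actually needed is also correct---the paper's proof does not use it either.
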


By \cref{lem:LZO-2}.2, we  satisfy  $\eventually \varphi_e$ almost surely:
\begin{equation}\label{eq-gfix-eps2}
\Prob{\?L_{e_{\max}}, \ell,\tau_e} (\eventually \varphi_e \mid \always \fix_e)=1.
\end{equation}
Using Equations~\eqref{eqpsimax} and~\eqref{eq-gfix-eps1}, we prove that
following. %

\begin{restatable}{claim}{claimepsoptimalhatsigma}
\label{claim-eps-optimal-hatsigma}
The MD strategy~$\hat{\sigma}$ is $\eps$-optimal for  parity objective~$\varphi$, from every state $\ell_0\in L_0$.
\end{restatable}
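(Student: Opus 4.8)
The plan is to lower-bound $\Prob{\?L',\ell_0,\hat\sigma}(\varphi)$ by tracking, one even color at a time, the probability mass that the construction has committed to winning through that color, and showing that almost none of it is lost. Fix $\ell_0 \in L_0$. After the $\frac{e_{\max}}{2}$ colour iterations, Equation~\eqref{eqpsimax} gives $\Prob{\?L_{e_{\max}},\ell_0,\tau_{\mathrm{reach}}}(\psi_{e_{\max}}) \ge \valueof{\?L,\varphi}{\ell_0} - e_{\max}\gamma - \gamma$, since $\tau_{\mathrm{reach}}$ is $\gamma$-optimal for $\psi_{e_{\max}} = \bigvee_{e} \eventually \core_e$ from $\ell_0$. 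Because the $\eventually\core_e$ need not be disjoint, I would first argue that under $\hat\sigma$ the events ``first reach $\core_e$'' can be taken disjoint in the obvious way (ordering colours), or simply work with $\sum_e \Prob{}(\eventually\core_e) \ge \Prob{}(\bigvee_e \eventually \core_e)$ in the direction we need is the wrong way — so instead I reach for the stronger fact that the construction actually gives $\sum_{e} \Prob{\?L',\ell_0,\hat\sigma}(\eventually \core_e \wedge \always\fix_e)$ as a lower bound, using that the $\closure{\fix_e}$ for distinct $e$ are pairwise disjoint (each $\fix_{e}$ is carved out of $\?L_{e-2}$ after removing $\Fix{e-2} = \bigcup_{e'<e}\closure{\fix_{e'}}$, by the $\always\neg\Fix{e-2}$ conjunct in $\theta_e$).

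Next, for each even colour $e$ and each state $\ell \in \core_e$, I combine Equation~\eqref{eq-gfix-eps1}, $\Prob{\?L_{e_{\max}},\ell,\tau_e}(\always \fix_e) \ge 1-\gamma$, with Equation~\eqref{eq-gfix-eps2}, $\Prob{\?L_{e_{\max}},\ell,\tau_e}(\eventually\varphi_e \mid \always\fix_e) = 1$, to conclude $\Prob{\?L_{e_{\max}},\ell,\tau_e}(\eventually\varphi_e) \ge 1-\gamma$, hence $\Prob{\?L_{e_{\max}},\ell,\tau_e}(\varphi) \ge 1-\gamma$ since $\eventually\varphi_e \subseteq \varphi$. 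Now $\hat\sigma$ is defined so that on $\fix_e$ it plays exactly as $\tau_e$ (this is how $\?L_e$ was fixed from $\?L_{e-2}$ in Equation~\eqref{eq:Le}, and the later iterations and the reach iteration only fix states \emph{outside} all $\fix_{e'}$), so from any state of $\core_e$, $\hat\sigma$ behaves identically to $\tau_e$ and therefore achieves $\varphi$ with probability $\ge 1-\gamma$ from there. Applying the strong Markov property at the (stopping) time of first entry into $\bigcup_e \core_e$, and summing over the disjoint first-entry events,
\[
\Prob{\?L',\ell_0,\hat\sigma}(\varphi)\ \ge\ (1-\gamma)\cdot \Prob{\?L',\ell_0,\hat\sigma}\Bigl(\eventually \bigcup_{e\in\even(\cset)}\core_e\Bigr)\ =\ (1-\gamma)\cdot \Prob{\?L',\ell_0,\hat\sigma}(\psi_{e_{\max}})\ \ge\ (1-\gamma)\bigl(\valueof{\?L,\varphi}{\ell_0} - (e_{\max}+1)\gamma\bigr).
\]
Since values are in $[0,1]$, the right-hand side is $\ge \valueof{\?L,\varphi}{\ell_0} - \gamma - (e_{\max}+1)\gamma = \valueof{\?L,\varphi}{\ell_0} - (e_{\max}+2)\gamma = \valueof{\?L,\varphi}{\ell_0} - \eps$ by the choice $\gamma = \eps/(e_{\max}+2)$. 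Finally, since $\hat\sigma$ plays in $\?L$ exactly the MD decisions recorded in $\?L'$ and the fixings never changed the transition structure or colouring, $\Prob{\?L,\ell_0,\hat\sigma}(\varphi) = \Prob{\?L',\ell_0,\hat\sigma}(\varphi) \ge \valueof{\?L,\varphi}{\ell_0}-\eps$, which is what we need.

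The main obstacle I anticipate is the bookkeeping in the very first step: making precise that $\Prob{\?L',\ell_0,\hat\sigma}(\psi_{e_{\max}})$ genuinely lower-bounds the probability of \emph{disjointly} entering some $\core_e$ with the right colour, and that once inside $\core_e$ the strategy $\hat\sigma$ restricted to $\fix_e$ really coincides with $\tau_e$ on all reachable states (one must check that subsequent iterations $e' > e$ and the reach iteration do not overwrite any choice inside $\fix_e$, which holds because $\fix_{e'}$ is disjoint from $\Fix{e'-2} \supseteq \closure{\fix_e}$ and $\tau_{\mathrm{reach}}$ is fixed only on states that were still controlled in $\?L_{e_{\max}}$, i.e.\ outside all $\fix_e$). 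Once that is nailed down, the rest is the telescoping inequality above together with the two L\'evy-law corollaries already established.
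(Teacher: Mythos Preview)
Your argument is correct and follows essentially the same skeleton as the paper's proof: use Equations~\eqref{eq-gfix-eps1} and~\eqref{eq-gfix-eps2} to show that from any state in $\core_e$ the fixed strategy achieves $\varphi$ with probability at least $1-\gamma$, multiply by $\Prob{}(\psi_{e_{\max}})$, and finish with Equation~\eqref{eqpsimax} and the $\gamma$-optimality of $\tau_{\mathrm{reach}}$.

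The one point worth flagging is your first paragraph, where you dismiss the inequality $\sum_e \Prob{}(\eventually\core_e) \ge \Prob{}\bigl(\bigvee_e \eventually\core_e\bigr)$ as ``the wrong way''. It is in fact the right way, and this is exactly how the paper proceeds: it writes $\Prob{}(\varphi) \ge \sum_e \Prob{}(\eventually\varphi_e \wedge \eventually\core_e)$ (using that the events $\eventually\varphi_e$ are pairwise disjoint), bounds each summand below by $(1-\gamma)\Prob{}(\eventually\core_e)$, and then uses $\sum_e \Prob{}(\eventually\core_e) \ge \Prob{}(\psi_{e_{\max}})$. Your alternative via the first-entry stopping time into $\bigcup_e \core_e$ (relying on the pairwise disjointness of the $\fix_e$, which you correctly justify) is equally valid and arguably cleaner, since it avoids summing possibly overlapping events and yields $(1-\gamma)\Prob{}(\psi_{e_{\max}})$ in one step. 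Either route gives the same final bound $\valueof{\?L,\varphi}{\ell_0}-(e_{\max}+2)\gamma$.
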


This concludes the proof of \cref{lem-eps-opt}.

\section{Optimal Strategies for Parity}\label{as-par}
In this section we show \cref{theo:opt-par-main}, i.e.,
that optimal strategies for parity, where they exist,
can be chosen deterministic 1-bit Markov.

\begin{figure}
\centering
    \includegraphics[width=\linewidth]{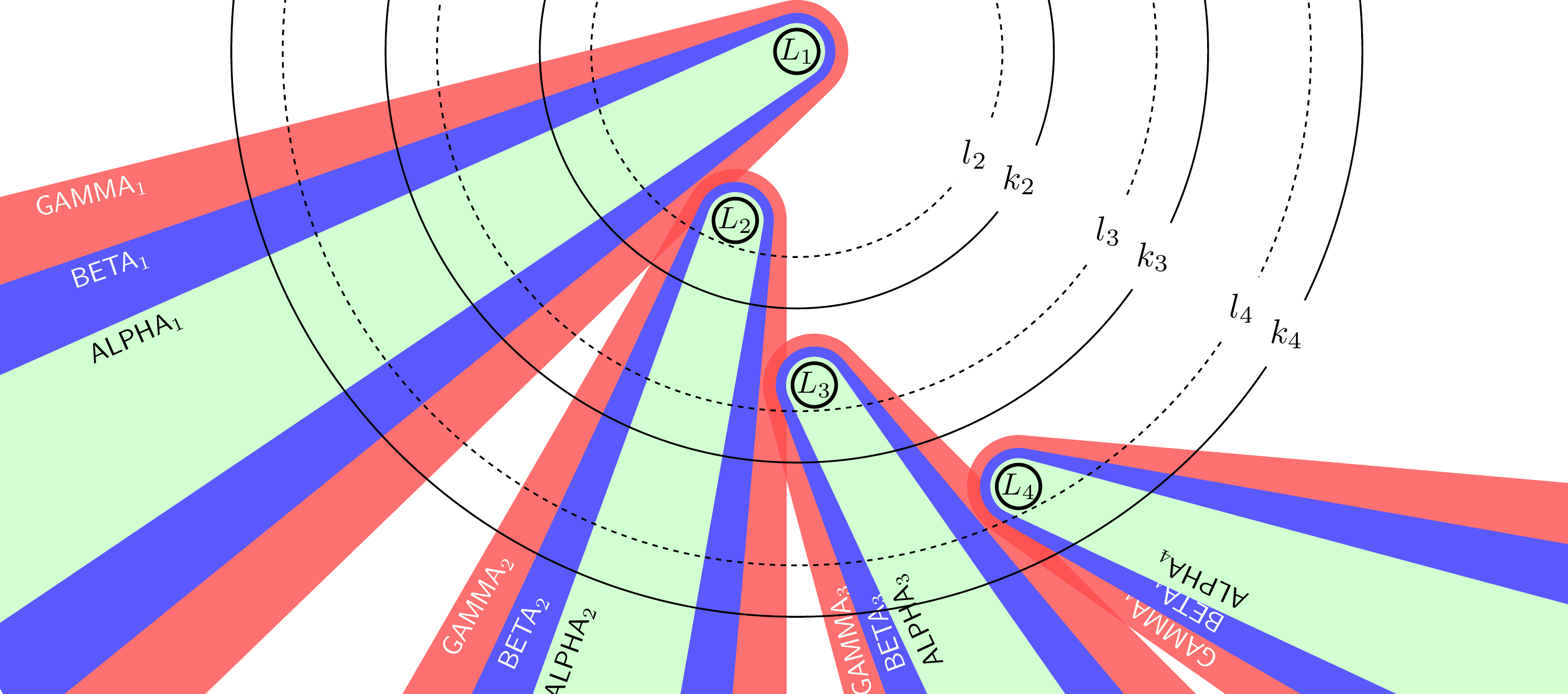}
\caption{Initial segment of the sea urchin construction.
$\?L_i$ is the result of fixing $\obs{i}$ inside $\Betaset{i}$ and then $\ors{i}$
inside the $k_i$-bubble (the set of states reachable from the initial state(s)
in $\le k_i$ steps).
Drawn here for $i=1,2,3,4$. 
}\label{fig:as-par:sea-urchin}
\end{figure}

First we show the main technical result of this section.

\begin{restatable}{lemma}{optparacyclic}
\label{thm:opt-par-acyclic}
Let $\?L(\mdp)$ be the layered MDP obtained from an acyclic and
finitely branching MDP~$\mdp$ and a coloring~$\coloring$
such that
all states are almost surely winning for $\formula=\Parity{\coloring}$
(i.e., every state $\state$ has a strategy $\zstrat_\state$ such that
$\probm_{\?L(\mdp),\state,\zstrat_\state}(\formula) = 1$).

For every initial state $\state_0$
there exists an MD
strategy $\zstrat$
that almost surely wins, i.e.,
$\probm_{\?L(\mdp),\state_0,\zstrat}(\formula) = 1$.
\end{restatable}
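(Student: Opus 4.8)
The plan is an infinite ``plastering'' (the \emph{sea urchin}): build the strategy in rounds $i=1,2,\dots$, where round $i$ commits MD choices on a growing finite \emph{body} $B_i$ (the states reachable from $\state_0$ within $k_i$ steps, for an increasing sequence $k_i$ fixed along the way) and on one more infinite \emph{spike} $\Betaset{i}$. Write $\?L_i$ for the resulting partially fixed MDP and $R_i=B_i\cup\Betaset{1}\cup\dots\cup\Betaset{i}$ for the committed region. I would maintain the invariant that, under the committed choices from $\state_0$ in $\?L_i$: (i) $\Prob{}(\text{the run ever leaves }R_i)\le 2^{-i}$; (ii) $\Prob{}(\formula\mid\text{the run never leaves }R_i)=1$; and moreover the runs that leave $R_i$ almost surely do so into a state whose value is $\ge 1-\delta_i$, for a summable, and hence vanishing, error sequence $\delta_i$.

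For a single round $i$, the runs escaping $R_{i-1}$ leave through a possibly infinite frontier, so I would first apply L\'evy's zero-one law (\cref{lem:LZO-3}) to extract a finite set $F_i$ of states of value $\ge 1-\delta_{i-1}-\gamma_i$ that almost all escaping runs pass through. From $F_i$ I invoke the construction underlying \cref{lem-eps-opt} (re-run in $\?L_{i-1}$, which is again acyclic and finitely branching with the layered two-step structure still present on the free states): this yields an MD strategy $\obs{i}$ together with regions $\fix_e,\core_e$ as in \eqref{eq:defShell-fix-core}, attaining the ``win through one color $e$ while avoiding earlier spikes'' objective $\theta$ with probability $\ge\beta=1-\gamma_i$ on its $\beta$-safe region; I set $\Betaset{i}$ to be that $\beta$-safe region and fix $\obs{i}$ there. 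Then, mirroring the ``reach iteration'' of \cref{lem-eps-opt} but inside a finite window, I pick a $\gamma_i$-optimal MD reachability strategy $\ors{i}$ towards $\bigcup_{j\le i}\Betaset{j}$ and then $k_i$ large enough that $\ors{i}$, fixed on the bubble states not already committed, hits a spike within $k_i$ steps from $F_i$ except with probability $\gamma_i$; taking $\gamma_i$ small enough and using that almost all escaping mass enters $F_i$, the mass escaping $R_i$ is at most $2^{-i}$, which is (i). Clause (ii) holds because, by acyclicity, never leaving $R_i$ forces the run to eventually remain inside a single spike $\Betaset{j}$, where $\beta$-safety for $\theta$ entails $\beta$-safety for the tail objective $\eventually\varphi_e\subseteq\formula$, so \cref{lem:LZO-1} and \cref{lem:LZO-2} yield an almost-sure win of $\formula$. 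Acyclicity also makes the commitments of different rounds automatically consistent and the bodies merely grow.

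To conclude, let $\zstrat$ be the MD strategy given by the union of all the fixings; it is well defined since the spikes are pairwise disjoint and disjoint from the later bodies, and the bodies only grow. Since $R_i\subseteq R_{i+1}$, the events ``the run from $\state_0$ under $\zstrat$ leaves $R_i$'' are decreasing in $i$, so their intersection has probability $\lim_i 2^{-i}=0$; hence almost surely the run eventually stays inside some $R_i$, on which event $\zstrat$ plays exactly the round-$i$ commitment. By clause (ii) and a union bound over $i$ the run then wins $\formula$ almost surely, i.e.\ $\probm_{\?L(\mdp),\state_0,\zstrat}(\formula)=1$.

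The step I expect to be the main obstacle is the bookkeeping that keeps the iteration from diverging: after fixing $\obs{j}$ and $\ors{j}$ the value of the newly exposed frontier (computed in the partially fixed MDP) might have dropped, and one has to show it stays $\ge 1-\delta_i$ with $\delta_i$ summable. The key point making this work is that a run routed into an earlier spike $\Betaset{j}$ already wins $\formula$ with probability $\ge 1-\gamma_j$ inside it, so pushing probability mass into earlier commitments is never harmful. A related subtlety is the precise ordering within a round (choose and fix $\obs{i}$ on $\Betaset{i}$ first, then choose $k_i$ and fix $\ors{i}$ only outside $\Betaset{i}$, as in \cref{fig:as-par:sea-urchin}), together with verifying that the construction of \cref{lem-eps-opt} legitimately re-applies to the already-fixed MDP $\?L_{i-1}$.
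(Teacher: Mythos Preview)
Your sea urchin plan is the paper's approach, and your invariants (i)--(ii) coincide with the paper's quantity $p_i(s_0)=\probm_{\?L_i,s_0,\sigma}(\formula\land\always\,\Fixx{i})$ and its argument that $p_i(s_0)\ge 1-2^{-i}$. Two points deserve sharpening.

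First, in each round the paper uses \cref{lem-eps-opt} as a \emph{black box}: $\obs{i}$ is an $\eps$-optimal MD strategy for the combined objective $\formula_{i}=\formula\land\always(L\setminus\closure{\Fixx{i-1}})$ (full parity while avoiding the already-committed region, handled via \cref{rem:quasi-tail}), and the spike regions $\Alphaset{i}\subseteq\Betaset{i}\subseteq\Gammaset{i}$ are the safe sets of $\obs{i}$ for $\formula_i$ at \emph{three} global thresholds $\alpha>\beta>\gamma$. Your description instead re-opens the per-color structure $\fix_e,\core_e$ inside \cref{lem-eps-opt}; that can be made to work but is unnecessary and obscures what the spike actually targets.

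Second, and more substantively, your step ``apply \cref{lem:LZO-3} to extract a finite set $F_i$ that almost all escaping runs pass through'' has a gap: \cref{lem:LZO-3} is a statement about a Markov chain, so you must first specify a \emph{full} strategy in $\?L_{i-1}$, not just the partial commitment on $R_{i-1}$. The paper closes this gap by constructing a \emph{resetting strategy} $\zstrat$ in $\?L_{i-1}$: play as committed inside $\Fixx{i-1}$; outside, play an original almost-surely-winning strategy for $\formula$; whenever $\closure{\Betaset{j}}$ is re-entered, switch layers into $\Betaset{j}$ and keep playing $\obs{j}$ throughout $\Gammaset{j}$. The third threshold $\gamma$ is precisely what makes the analysis of $\zstrat$ go through: from any state in $\Betaset{j}$ there is, by \cref{lem:LZO-1}, a fixed positive chance of staying in $\Gammaset{j}$ forever, so $\zstrat$ switches modes only finitely often almost surely, still wins $\formula$ almost surely in $\?L_{i-1}$, and (now legitimately via \cref{lem:LZO-3}) reaches a finite set $L_{i}'\subseteq\safesub{\?L_{i-1},(\always\Alphaset{\le i-1}\lor\formula_{i})}{1-\eps}$. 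Your ``summable $\delta_i$'' bookkeeping is aimed at this same issue but does not by itself supply the required witness strategy; the resetting strategy with its extra $\Gammaset{}$-level is the missing piece.
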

\begin{proof}[Proof sketch]
The full version of this rather complex proof can be found in \cref{as-parapp}.

For some intuition consider \cref{fig:as-par:sea-urchin}.
The sea urchin construction is a plastering construction with infinitely many
iterations where MD strategies are fixed in larger and larger subspaces.
Its name comes from the shape of the subspace
in which player choices are fixed up-to iteration $i$: A growing finite body of states
that are reachable from the initial state $\state_0$
within $\le k_i$ steps, plus $i$ different spikes of infinite size.
Each spike is composed of nested subsets
$\Alphaset{i} \subseteq \Betaset{i}$
(and $\subseteq \Gammaset{i}$, which is used only in the correctness argument)
that correspond to different levels of
attainment of certain $\eps$-optimal MD strategies
$\obs{i}$, obtained from \cref{lem-eps-opt}.
Strategy $\obs{i}$ is then fixed in $\Betaset{i}$
(and thus in $\Alphaset{i}$).
Other MD strategies $\ors{i}$ are fixed elsewhere in the finite body,
up-to horizon $k_i$.
Using L\'evy's zero-one law, we prove that, once inside $\Alphaset{i}$,
there is a high chance of never leaving the $i$-th spike $\Betaset{i}$.
Moreover, almost all runs that stay in the $i$-th spike satisfy parity.
Finally, the strategies $\ors{i}$ ensure that
at least $1/2$ (by probability mass) of the runs from $\state_0$
that don't stay in one of the first $i$ spikes will
eventually stay in the $(i+1)$-th spike and
satisfy parity there.
Thus, at the stage with $i$ spikes, the fixed MD strategy attains parity with
some probability $\ge 1-2^{-i}$ already \emph{inside} this fixed subspace.
In the limit of $i \rightarrow \infty$, the resulting MD strategy attains parity almost surely.
\end{proof}

\begin{definition}
\label{def:conditionedmdp}
For a tail objective~$\formula$ and an MDP $\mdp=\mdptuple$,
we define the \emph{conditioned version} of~$\mdp$ w.r.t.~$\formula$
to be the MDP $\pmdp = \tuple{\pstates,\pzstates,\prstates,\ptransition,\pprobp}$ with
$
\pstates = \{s \in \states \mid \exists\,\zstrat.\; \probm_{\mdp,s,\zstrat}(\formula) = \valueof{\mdp}{s} > 0\}
$
and $\pzstates = \pstates \cap \zstates$ and $\prstates = \pstates \cap \rstates$ and
\[
\mathord{\ptransition} = 
\{(s,t) \in \pstates \times \pstates \mid {} s \transition t \text{ and if $s \in \pzstates$ then $\valueof{\mdp}{s} = \valueof{\mdp}{t}$}\}
\]
and $\pprobp : \prstates \to \dist(\pstates)$ so that
$
\pprobp(s)(t) = \probp(s)(t) \cdot \frac{\valueof{\mdp}{t}}{\valueof{\mdp}{s}}
$
for all $s \in \prstates$ and $t \in \pstates$ with $s \, \ptransition \, t$.

\end{definition}

See \cref{app:lics17} for a proof that 
$\pprobp(s)$ is a probability distribution for all $s \in \prstates$
and therefore that the conditioned MDP~$\pmdp$ is well-defined.
The name stems from a useful property
(cf.~\cref{lem:conditioned-construction}.2)
that %
for all strategies that are optimal for~$\formula$ in~$\mdp$, the probability in~$\pmdp$ of any event
is the same as that of its probability in~$\mdp$ conditioned under $\formula$.

The following theorem is a very slight generalization of
\cite[Theorem 5]{KMSW2017} (cf.\ \cref{as-parapp}).
It gives a sufficient condition
under which we can conclude the existence of MD optimal strategies
from the existence of MD almost-sure winning strategies. %

\begin{restatable}{theorem}{reductiontoas}\label{thm:reduction-to-as}
Let $\formula$ be a tail objective.
Let $\mdp=\mdptuple$ be an MDP and $\pmdp
= \tuple{\pstates,\pzstates,\prstates,\ptransition,\pprobp}$ its conditioned
version wrt.\ $\formula$.
Then:
\begin{enumerate}
\item For all $s \in \pstates$ there exists a strategy~$\zstrat$ with $\probm_{\pmdp,s,\zstrat}(\formula) = 1$.
\item Suppose that for every $s \in \pstates$ there exists an MD strategy~$\zstrat''$ with $\probm_{\pmdp,\state,\zstrat''}(\formula) = 1$.
    Then there is an MD strategy~$\zstrat'$ such that for all $\state \in \states$:
\[
\big(
\exists \zstrat \in \zstratset.\,
\probm_{\mdp,\state,\zstrat}(\formula) = \valueof{\mdp}{s}
\big) \Longrightarrow 
\ \probm_{\mdp,\state,\zstrat'}(\formula) = \valueof{\mdp}{s}
\]
\end{enumerate}
\end{restatable}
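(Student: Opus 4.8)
The plan is to follow the proof of \cite[Theorem~5]{KMSW2017}, using the conditioned MDP~$\pmdp$ as a bridge between optimal strategies in $\mdp$ and almost-sure winning strategies in $\pmdp$, and exploiting that $\formula$ is tail throughout. For Part~1, fix $s\in\pstates$. By definition of~$\pstates$ there is a strategy $\zstrat$ with $\probm_{\mdp,s,\zstrat}(\formula)=\valueof{\mdp}{s}>0$, i.e.\ $\zstrat$ is optimal from~$s$. By the defining property of the conditioned MDP (\cref{lem:conditioned-construction}), there is a corresponding strategy $\zstrat^{\ast}$ in $\pmdp$ under which the $\pmdp$-probability of any event equals its $\mdp$-probability under $\zstrat$ conditioned on~$\formula$; taking the event~$\formula$ itself gives $\probm_{\pmdp,s,\zstrat^{\ast}}(\formula)=\probm_{\mdp,s,\zstrat}(\formula\mid\formula)=1$.

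For Part~2 I proceed in three steps. \emph{(i) Uniformising inside~$\pmdp$.} From the per-state MD almost-sure winning strategies I build a single MD strategy $\hat\zstrat$ that is almost surely winning from every state of $\pmdp$: enumerate $\pstates=\{q_1,q_2,\dots\}$, pick for each $i$ an MD strategy $\tau_i$ with $\probm_{\pmdp,q_i,\tau_i}(\formula)=1$, let $R_i$ be the set of states reachable in $\pmdp$ from $q_i$ under $\tau_i$, and at a controlled state $t$ set $\hat\zstrat(t)\eqdef\tau_{i(t)}(t)$ where $i(t)$ is least with $t\in R_i$. Every step of $\pmdp$ under $\hat\zstrat$ sends a state of index $i$ to a state of index $\le i$, so along any run the index is eventually constant, say~$j$; from the stabilisation point the run is governed by the MD strategy $\tau_j$ started at a state of $R_j$, and since $\formula$ is tail and $\probm_{\pmdp,q_j,\tau_j}(\formula)=1$ we also get $\probm_{\pmdp,t,\tau_j}(\formula)=1$ for every $t\in R_j$; summing over the (countably many) stabilisation data yields $\probm_{\pmdp,q,\hat\zstrat}(\formula)=1$ for all $q\in\pstates$. \emph{(ii) Lifting to~$\mdp$.} Let $\zstrat'$ be the MD strategy on $\mdp$ that plays $\hat\zstrat$ on $\pstates\cap\zstates$ and plays arbitrarily on $\zstates\setminus\pstates$. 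For controlled $t\in\pstates$ the edge $(t,\hat\zstrat(t))$ lies in $\ptransition$, so $\valueof{\mdp}{\hat\zstrat(t)}=\valueof{\mdp}{t}$, and at every random state $\valueof{\mdp}{t}=\sum_u\probp(t)(u)\valueof{\mdp}{u}$; hence for any $s\in\pstates$ the sequence $M_n\eqdef\valueof{\mdp}{\play(n)}$ is a bounded martingale under $\probm_{\mdp,s,\zstrat'}$, converging almost surely to some $M_\infty$ with $\expectval_{\mdp,s,\zstrat'}[M_\infty]=\valueof{\mdp}{s}=M_0$. Since a controlled value-$0$ state has only value-$0$ successors, and likewise for random value-$0$ states, a run starting in $\pstates$ that ever leaves $\pstates$ must have reached value~$0$, so $M_\infty=0$ there. \emph{(iii) Optimality.} Telescoping the value ratios over a cylinder $\play(0)\cdots\play(n)$ that stays in $\pstates$ shows that for every $\F_n$-measurable event $A$ we have $\probm_{\pmdp,s,\hat\zstrat}(A)=\expectval_{\mdp,s,\zstrat'}[(M_n/M_0)\,\indicatorfun_A]$; by uniform integrability this identity extends to all $A\in\F_\infty$, in particular to tail events. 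Applied to $A=\complementof\formula$ together with $\probm_{\pmdp,s,\hat\zstrat}(\formula)=1$, it forces $M_\infty=0$ $\probm_{\mdp,s,\zstrat'}$-almost surely on $\complementof\formula$, whence $\probm_{\mdp,s,\zstrat'}(\formula)\ge\probm_{\mdp,s,\zstrat'}(M_\infty>0)\ge\expectval_{\mdp,s,\zstrat'}[M_\infty]=\valueof{\mdp}{s}$; the reverse inequality is trivial, so $\zstrat'$ is optimal from~$s$. Finally, a state has an optimal strategy iff it lies in $\pstates$ or has value~$0$, and in the latter case every strategy is optimal, so the single MD strategy $\zstrat'$ witnesses the claimed implication for all $\state\in\states$.

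I expect steps~(i) and~(ii) to be routine bookkeeping; the crux is step~(iii): making precise the change of measure between $\probm_{\mdp,s,\zstrat'}$ and $\probm_{\pmdp,s,\hat\zstrat}$ and reading off from $\hat\zstrat$'s almost-sure winning in $\pmdp$ that, under $\zstrat'$ in $\mdp$, the probability mass that fails $\formula$ is exactly the mass that drifts to value~$0$. This is where tail-ness of $\formula$ (via martingale convergence and Lévy's zero-one law) is essential, and also where one must double-check that the mild generalisation of \cite[Theorem~5]{KMSW2017} stated here --- replacing a concrete objective by an arbitrary tail objective --- goes through without change.
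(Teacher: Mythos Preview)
Your proposal is correct and follows the same overall architecture as the paper: prove Part~1 directly from \cref{lem:conditioned-construction}, then for Part~2 first uniformise the per-state MD almost-sure strategies inside~$\pmdp$ (the paper does this via \cref{lem:MDP-as-uniform}, your least-index stitching is an equivalent variant), and finally transfer the resulting MD strategy back to~$\mdp$ via the value-ratio change of measure.

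The one genuine difference is in how you carry out the transfer. The paper only needs the \emph{inequality} $\valueof{\mdp}{s_0}\cdot\probm_{\pmdp,s_0,\zstrat'}(\playset)\le\probm_{\mdp,s_0,\zstrat'}(\playset)$, which it reads off from \cref{lem:conditioned-construction}.1 on cylinders (using $\valueof{\mdp}{s_n}\le 1$) and extends to all measurable~$\playset$ by the monotone class theorem (\cref{lem:measure-theory}); plugging in $\playset=\denotationof{\formula}{}$ immediately gives $\valueof{\mdp}{s_0}\le\probm_{\mdp,s_0,\zstrat'}(\formula)$. You instead set up the bounded martingale $M_n=\valueof{\mdp}{\play(n)}$, identify $M_\infty/M_0$ as the Radon--Nikodym derivative $d\probm_{\pmdp}/d\probm_{\mdp}$ via uniform integrability, and deduce $M_\infty=0$ a.s.\ on $\complementof{\formula}$. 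Both routes are standard and sound; the paper's is a little more elementary (no martingale machinery, only an inequality is needed), while yours makes the measure-change structure more explicit and would also yield the converse absolute-continuity statement if one wanted it.
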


\begin{theorem}\label{thm:opt-par}
Consider an acyclic MDP $\mdp$ and 
a parity objective.

There exists a deterministic $1$-bit strategy that is
optimal from all states that have an optimal strategy.
\end{theorem}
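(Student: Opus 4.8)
\textbf{Proof plan for \cref{thm:opt-par}.}
The plan is to combine the layered-MDP machinery from \cref{epsParity} with the
conditioning construction of \cref{def:conditionedmdp} and the reduction of
\cref{thm:reduction-to-as}, with \cref{thm:opt-par-acyclic} supplying the key
almost-sure case. First I would reduce to the layered MDP: given the acyclic
MDP $\mdp$ and its parity objective $\formula = \Parity{\coloring}$, pass to
$\?L(\mdp)$, which is again acyclic (by the remark following
\cref{def:layered}). By \cref{prop:layered} and \cref{rem:as-layered-memconf},
an MD strategy in $\?L(\mdp)$ that is optimal from all states that have an
optimal strategy yields a deterministic $1$-bit strategy in $\mdp$ with the same
property, and the initial memory mode is irrelevant. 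So it suffices to exhibit
such an MD strategy in the acyclic MDP $\?L \eqdef \?L(\mdp)$.

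Next I would apply \cref{thm:reduction-to-as} to $\?L$ and $\formula$. Since
$\formula$ is a tail objective, it remains only to verify the hypothesis of
part~2: that in the conditioned MDP $\pmdp = \?L_{*}$ every state $s \in
\pstates$ has an \emph{MD} strategy that wins $\formula$ almost surely. Here is
where \cref{thm:opt-par-acyclic} comes in, but two things must first be checked.
(i) The conditioned MDP $\pmdp$ is again acyclic and finitely branching — it is
a sub-MDP of $\?L$ (same underlying graph restricted to $\pstates$ and, at
controlled states, to value-preserving edges), hence acyclicity is inherited,
and the first reduction (\cref{lem:inf-branching-to-finite-branching}) lets us
assume $\?L$, hence $\pmdp$, is finitely branching. (ii) By
\cref{thm:reduction-to-as}.1, every state of $\pmdp$ has \emph{some} (not
necessarily MD) strategy winning $\formula$ almost surely, i.e.\ all states of
$\pmdp$ are almost surely winning for $\formula$. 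A further subtlety:
\cref{thm:opt-par-acyclic} is stated for a \emph{layered} MDP $\?L(\mdp')$, so I
would argue that $\pmdp$, although not literally of that syntactic form,
inherits what is actually used in that lemma — namely that siblings have equal
value (which in $\pmdp$ both equal the conditioned value $1$) and that
\cref{lem-eps-opt} applies. In fact the cleanest route is to observe that $\pmdp$
is itself (isomorphic to) the layered MDP of the conditioned version of $\mdp$,
so \cref{thm:opt-par-acyclic} applies verbatim; I would make this identification
explicit. Then \cref{thm:opt-par-acyclic} gives, for each initial state $s_0 \in
\pstates$, an MD strategy almost surely winning $\formula$ from $s_0$. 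Since
$\pstates$ is countable, one can merge these into a single MD strategy that is
almost surely winning from every $s \in \pstates$ (standard diagonalization /
the fact that MD strategies are just functions $\pzstates \to \pstates$ and one
can stitch them together region by region, as already done implicitly via
\cref{lem-eps-opt}'s uniformity statement).

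With the hypothesis of \cref{thm:reduction-to-as}.2 established, the conclusion
gives an MD strategy $\zstrat'$ in $\?L$ such that for every state $s$ of $\?L$
that has an optimal strategy, $\zstrat'$ is optimal from $s$. Pulling this back
through \cref{prop:layered}/\cref{rem:as-layered-memconf} yields a deterministic
$1$-bit strategy in $\mdp$ optimal from every state that has an optimal strategy,
with irrelevant initial memory, which is exactly \cref{thm:opt-par}.

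The main obstacle I expect is item (i)--(ii) above, i.e.\ making the reduction to
\cref{thm:opt-par-acyclic} airtight: one must check that conditioning preserves
acyclicity and finite branching, that the "layered" shape is preserved (or that
\cref{thm:opt-par-acyclic} only really needs the sibling-equal-value property
and applicability of \cref{lem-eps-opt}, both of which survive conditioning), and
that the per-state MD almost-sure winning strategies from \cref{thm:opt-par-acyclic}
can be uniformized over all of $\pstates$. A secondary point requiring care is
that \cref{thm:reduction-to-as} is only a "very slight generalization" of
\cite[Theorem~5]{KMSW2017}, so I would double-check that its statement as given
here is exactly what we invoke (tail objective, MD almost-sure winning in the
conditioned MDP implies MD optimal in the original), and that nothing about the
parity objective beyond tailness is needed.
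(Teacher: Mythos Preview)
Your proposal is correct and follows essentially the same route as the paper: pass to the layered MDP $\?L(\mdp)$, form its conditioned version $\?L(\mdp)_*$ (which the paper, like you, argues is again an acyclic layered MDP because siblings have equal value and conditioning is therefore symmetric), invoke \cref{thm:reduction-to-as}.1 and \cref{thm:opt-par-acyclic} (with \cref{lem:inf-branching-to-finite-branching} for the branching issue) to get per-state MD almost-sure winning strategies, then apply \cref{thm:reduction-to-as}.2 and pull back via \cref{prop:layered}. One small simplification: the uniformization step you flag is unnecessary, since the hypothesis of \cref{thm:reduction-to-as}.2 only asks for an MD almost-sure strategy \emph{per state}, not a single uniform one.
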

\begin{proof}
Consider the corresponding layered system $\?L(\mdp)$
(cf.\ \cref{def:layered}), which is also acyclic.
Let $\states_{\mathit{opt}}$ be the subset of states that have an optimal
strategy in $\mdp$.
Thus all states in $\states_{\mathit{opt}}\x\{0,1\}$ have an optimal strategy in $\?L(\mdp)$
by \cref{prop:layered}.

We now use \cref{thm:reduction-to-as} to obtain an MD strategy $\zstrat'$ in $\?L(\mdp)$
that is optimal for all states in $\?L(\mdp)$ that have
an optimal strategy.
First, the parity objective is tail.
Second, in $\?L(\mdp)$, any two siblings have the same value w.r.t.~parity
by \cref{rem:as-layered-memconf}.
Therefore the changes from $\?L(\mdp)$ to its conditioned version
$\?L(\mdp)_*$ (wrt.\ the parity objective)
are symmetric in the two layers.
Thus $\?L(\mdp)_*$ is also a layered acyclic MDP
(i.e., there exists some acyclic MDP $\mdp'$ s.t.\ $\?L(\mdp)_* = \?L(\mdp')$), and
by \cref{thm:reduction-to-as}.1 all states in $\?L(\mdp)_*$ are almost surely winning.
Now we can apply
\cref{thm:opt-par-acyclic}
(generalized to infinitely branching acyclic layered MDPs
by \cref{lem:inf-branching-to-finite-branching}) to $\?L(\mdp)_*$
and obtain that for every state in $\?L(\mdp)_*$ there is an MD strategy that
almost surely wins.
By \cref{thm:reduction-to-as}.2 there is an MD strategy $\zstrat'$ in $\?L(\mdp)$ that is
optimal for all states that have an optimal strategy.
In particular, $\zstrat'$ is optimal for the states in
$\states_{\mathit{opt}}\x\{0,1\}$ in $\?L(\mdp)$.
By \cref{prop:layered}, this yields a deterministic 1-bit strategy in $\mdp$
that is optimal for all states in $\states_{\mathit{opt}}$.
\end{proof}

In \cref{thm:opt-par} the initial memory mode of the 1-bit strategy is
irrelevant (recall \cref{rem:as-layered-memconf}).

\cref{theo:opt-par-main} now follows directly from \cref{thm:opt-par} and \cref{lem:acyclic-Markov}(3).

\section{Optimal Strategies for $\cParity{\{0,1,2\}}$}\label{as012}
\begin{restatable}{theorem}{thmZeroOneTwoPar}\label{thm:012quant}
Let $\mdp=\mdptuple$ be an MDP, $\formula$ a $\cParity{\{0,1,2\}}$ objective
and $\pmdp=\tuple{\pstates,\pzstates,\prstates,\ptransition,\pprobp}$
its conditioned version wrt.\ $\formula$.
Assume that in $\pmdp$ 
for every safety objective (given by some target $T\subseteq\pstates$)
and $\eps>0$ there exists a uniformly $\eps$-optimal MD strategy.
Let $\states_{\mathit{opt}}$ be the subset of states that have an optimal
strategy for $\formula$ in $\mdp$.

Then there exists an MD strategy in $\mdp$ that is optimal for $\formula$
from every state in $\states_{\mathit{opt}}$.
\end{restatable}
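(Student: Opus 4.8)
The plan is to reduce the statement to an almost-sure winning problem via \cref{thm:reduction-to-as} and then to fold the ``eventually always colour~$0$'' part of the objective into a B\"uchi target. Since the $\cParity{\{0,1,2\}}$ objective $\formula$ is tail, \cref{thm:reduction-to-as}.1 gives that in the conditioned MDP $\pmdp$ every state is almost surely winning for $\formula$, and by \cref{thm:reduction-to-as}.2 it suffices to exhibit a single MD strategy that almost surely wins $\formula$ from every state of $\pmdp$. The assumed uniformly $\eps$-optimal MD safety strategies are available in this sub-MDP, since the hypothesis is stated directly about $\pmdp$. Writing $T_2$ and $T_0$ for the sets of colour-$2$ and colour-$0$ states, the colour set $\{0,1,2\}$ makes $\formula$ equivalent to $\buchi{T_2}\vee\eventually\always T_0$: the largest colour seen infinitely often is even iff colour $2$ occurs infinitely often, or from some point on only colour $0$ occurs. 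Note that $\eventually\always T_0 = \eventually\safety{\states\setminus T_0}$.

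Next I would isolate the \emph{safe core} $G\subseteq T_0$: the maximal set of states from which the controller can keep the run inside $T_0$ with probability one. This is the greatest fixpoint of the operator retaining $s\in T_0$ iff $s$ is controlled with some successor in the current set, or $s$ is random with all successors in it; being a trap for the random player, $G$ carries a uniform MD strategy $\zstrat_G$ that keeps every run starting in $G$ inside $G$ (hence inside $T_0$) forever. Thus runs that reach $G$ and then follow $\zstrat_G$ win $\formula$ outright, and so do runs that visit $T_2$ infinitely often; hence it suffices to produce an MD strategy that almost surely satisfies the B\"uchi objective $\buchi{T_2\cup G}$ in the MDP $\?N$ obtained from $\pmdp$ by fixing $\zstrat_G$ on $G$ (which makes $G$ absorbing and can only help).

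The core step, and the one that uses the safety hypothesis, is to show that in $\pmdp$ every state is already almost surely winning for $\buchi{T_2\cup G}$. Fix an almost-sure $\formula$-winning strategy $\zstrat$ from a state $s$. On the event ``$T_2$ infinitely often'' the objective $\buchi{T_2\cup G}$ holds; on the complementary event the run satisfies $\eventually\always T_0$, and one must show that such runs almost surely reach $G$. Applying L\'evy's zero--one law to the tail objective $\eventually\always T_0 = \eventually\safety{\states\setminus T_0}$ (in the form of \cref{lem:LZO-2} and \cref{lem:LZO-1}) shows that along almost all such runs the current state eventually has value arbitrarily close to $1$ for the safety objective $\safety{\states\setminus T_0}$. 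One then grafts in the assumed uniformly $\eps$-optimal MD safety strategies, at a suitable L\'evy-style stopping time, to turn ``high safety value at states visited infinitely often'' into ``the run almost surely reaches the a.s.-safe core $G$''. The delicate point is the quantifier order: the winning strategy $\zstrat$ may split its mass between the B\"uchi route and the ``eventually safe'' route differently at each history, so the safety strategies must be attached at an appropriate stopping time and the resulting loss shown to be an arbitrarily small probability that can be driven to $0$. Carrying this patching out cleanly is the main obstacle.

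Finally, $\buchi{T_2\cup G}$ has value $1$ at every state of $\?N$, and optimal strategies for B\"uchi objectives can be chosen MD \cite{KMSW2017} (the standard nested greatest-fixpoint / attractor argument, with MD strategies composing); in particular there is a uniform MD strategy that almost surely wins $\buchi{T_2\cup G}$ from every state of $\?N$. Combined with $\zstrat_G$ on $G$, this is an MD strategy in $\pmdp$ that almost surely wins $\buchi{T_2\cup G}$, and hence $\formula$, from every state. Feeding it into \cref{thm:reduction-to-as}.2 yields the desired MD strategy in $\mdp$ that is optimal for $\formula$ from every state of $\states_{\mathit{opt}}$.
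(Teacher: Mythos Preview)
Your reduction via \cref{thm:reduction-to-as} is the right outer shell, and the paper does the same. The gap is in your choice of target set $G$ and the claim that, under an almost-surely winning strategy for $\formula$, runs on the event $\eventually\always T_0$ almost surely reach the sure/almost-sure safe core $G$. This claim is false, and no amount of L\'evy patching will rescue it, because $G$ can simply be empty. Take $\pstates=\{s_0,s_1,\ldots\}\cup\{t\}$, all random, with $\coloring(s_i)=0$ and $\coloring(t)=2$; let $t$ be absorbing and let $s_i$ go to $t$ with probability $2^{-(i+2)}$ and to $s_{i+1}$ otherwise. Every state is almost surely winning for $\formula$ (either you hit $t$, or you stay in colour~$0$ forever). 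Yet your fixpoint $G$ is empty, since every $s_i$ is random with a successor $t\notin T_0$. Hence $\buchi{T_2\cup G}=\buchi{\{t\}}$ has value strictly below~$1$ at $s_0$, so your reduction ``it suffices to win $\buchi{T_2\cup G}$ almost surely'' already fails before any strategy construction. L\'evy's law only tells you the safety \emph{value} along such runs tends to~$1$; it never reaches~$1$, and there is no reason states of safety value~$1$ are even reachable.

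The paper's construction avoids this by never asking for safety value~$1$. It fixes a uniformly $\eps$-optimal MD safety strategy $\epsoptav$ (this is where the hypothesis enters) and works with the sub-$1$ threshold sets $\safe{\tau}=\safesub{\pmdp,\epsoptav,\safety{\colorset{\states}{\neq}{0}}}{\tau}$ for $\tau\in\{\tfrac13,\tfrac23\}$. One first fixes $\epsoptav$ on $\safe{\tfrac13}$, then shows (via \cref{lem:as012-return-to-safe}) that in the resulting MDP the \emph{reachability} objective $\eventually(\safe{\tfrac23}\cup\colorset{\states}{=}{2})$ has value~$1$ everywhere, picks a uniform MD reachability strategy for the complement (Ornstein), and finally uses \cref{lem:LZO-1} to argue that the combined MD strategy wins $\formula$ almost surely: infinitely many visits to $\safe{\tfrac23}$ force an eventual $\always\,\safe{\tfrac13}\subseteq\always\,\colorset{\states}{=}{0}$. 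In short, the right target is a level set of the \emph{given MD safety strategy} at a threshold below~$1$, and the combination is safety-plus-reachability rather than a reduction to B\"uchi.
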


The above result generalizes \cite[Theorem~16]{KMSW2017}, which considers only
finitely-branching MDPs and uses the fact that for every safety
objective, an MD strategy exists that is uniformly \emph{optimal}.
This is not generally true for infinitely-branching acyclic MDPs \cite{KMSW2017}.
To prove \cref{thm:012quant}, we adjust the construction so that it only
requires uniformly \emph{$\eps$-optimal} MD strategies for safety objectives
(in the conditioned MDP $\pmdp$).

In order to apply \cref{thm:012quant} to infinitely-branching acyclic MDPs,
we now show that acyclicity guarantees the existence of
uniformly $\eps$-optimal MD strategies for safety objectives.

\begin{restatable}{lemma}{thmEpsOptSafety}
\label{thm:eps-optimal-safety}
For every acyclic MDP with a safety objective and every $\eps>0$
there exists an MD strategy that is uniformly $\eps$-optimal.
\end{restatable}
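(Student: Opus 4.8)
The plan is to fix an acyclic MDP $\mdp = \mdptuple$, a safety objective $\safety{T}$ for some target set $T\subseteq\states$, and $\eps>0$, and to build a single MD strategy that attains at least $\valueof{\mdp}{s}-\eps$ from every state $s$. The key structural fact to exploit is acyclicity: it induces a well-founded partial order on states (``$s$ precedes $t$ if $t$ is reachable from $s$''), and — crucially — it lets us attach to each state a notion of \emph{depth} or rank that we can use to control errors. The natural idea is to allocate the error budget $\eps$ across ``levels'' of the MDP geometrically, so that the total loss telescopes. Concretely, I would first deal with the trivial states: states in $T$ have value $0$, and states that cannot reach $T$ at all have value $1$ and any choice is fine there. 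For the remaining states I would order them and define the strategy greedily, level by level.

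First I would recall the standard fixed-point characterisation of the safety value: $\valueof{\mdp}{s}$ is the greatest fixed point of the Bellman operator, i.e.\ $\valueof{\mdp}{s}=0$ for $s\in T$, $\valueof{\mdp}{s}=\sup_{s\transition t}\valueof{\mdp}{t}$ for $s\in\zstates\setminus T$, and $\valueof{\mdp}{s}=\sum_t \probp(s)(t)\valueof{\mdp}{t}$ for $s\in\rstates\setminus T$. At a controlled state $s$ the supremum over successors may not be attained (that is exactly why only $\eps$-optimal, not optimal, strategies exist in general — cf.\ the remark after \cref{thm:012quant}), so we pick a successor $t_s$ with $\valueof{\mdp}{t_s}\ge \valueof{\mdp}{s}-\delta_s$ for a carefully chosen $\delta_s>0$. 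The MD strategy $\sigma$ is then $\sigma(s)=t_s$. The point of acyclicity is that a run under $\sigma$ visits each controlled state at most once, so the accumulated multiplicative loss along a run is a product $\prod_i(1-\delta_{s_i}/\valueof{\mdp}{s_i})$-type quantity over \emph{distinct} states, which we can make $\ge 1-\eps$ by choosing the $\delta_s$ small enough relative to a per-state weight summing to $\eps$.

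The cleanest way to make the bookkeeping rigorous is an argument on a product/unfolding or directly via an inductive claim: define, for each state $s$, the quantity $w(s)$ to be a strictly positive ``error allowance'' such that $\sum$ of allowances along any path from $s$ is bounded — here acyclicity plus the fact that we only need the bound from the fixed finite (or countable) set of relevant states is used, e.g.\ by indexing states $s_1,s_2,\dots$ and setting $\delta_{s_n}$ proportional to $2^{-n}\eps\cdot\valueof{\mdp}{s_n}$. Then I would prove by well-founded induction over the acyclic order (from $T$ upward) the invariant
\[
\probm_{\mdp,s,\sigma}(\safety{T}) \;\ge\; \valueof{\mdp}{s} - \textstyle\sum_{t \succeq s} \delta_t \;\ge\; \valueof{\mdp}{s}-\eps,
\]
where at a controlled state we use $\probm_{\mdp,s,\sigma}(\safety{T})=\probm_{\mdp,t_s,\sigma}(\safety{T})\ge \valueof{\mdp}{t_s}-(\text{tail sum})\ge \valueof{\mdp}{s}-\delta_s-(\text{tail sum})$, and at a random state we average the inductive hypotheses and use linearity. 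To make the induction go through for infinite descending behaviour (a run may be infinite even though the order is well-founded only ``upward''), I would cut the run at a large but finite horizon $N$: since $\probm_{\mdp,s,\sigma}(\safety{T})=\lim_{N\to\infty}\probm_{\mdp,s,\sigma}(\text{no visit to }T \text{ within }N\text{ steps})$, it suffices to prove the horizon-$N$ version of the invariant by induction on $N$ (using acyclicity only to guarantee that the same state is never revisited, so the $\delta$'s charged are distinct) and then take $N\to\infty$.

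The main obstacle I expect is precisely this interplay between the infinite length of runs and the finite/countable error budget: one has to ensure that the \emph{total} error charged along a single run never exceeds $\eps$, which requires that the $\delta_s$ be summable along every run — and acyclicity is what prevents a run from revisiting a state and ``recharging'' its error, so the sum is over a sub(multi)set of distinct states and is dominated by $\sum_n \delta_{s_n}\le\eps$. A secondary subtlety is states of value $0$ (in $T$ or doomed to reach $T$) versus value in $(0,1]$: I would simply restrict attention to the sub-MDP of states with positive value (where the multiplicative formulation $\valueof{\mdp}{t_s}/\valueof{\mdp}{s}\ge 1-\delta_s/\valueof{\mdp}{s}$ makes sense), noting that from states of value $0$ nothing is required, and from states that surely avoid $T$ the strategy is unconstrained. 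Assembling these pieces yields the single uniformly $\eps$-optimal MD strategy, which is exactly what is needed to invoke \cref{thm:012quant} on infinitely-branching acyclic MDPs.
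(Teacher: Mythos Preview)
Your proposal is correct and takes essentially the same approach as the paper: enumerate the states, allocate a geometrically decaying error budget per state, at each controlled state pick a successor whose value is within that budget (multiplicatively), and use acyclicity to guarantee that no state is revisited so the losses along any run form a convergent product/sum; then prove a finite-horizon invariant by induction on the horizon and pass to the limit. The paper phrases the invariant as a lower bound on the expected value of the $n$th state (i.e.\ $\mathbb{E}[\valueof{}{s_n}] \ge \valueof{}{s_0}\prod_{s'\in\mathit{Post}^*(s_0)}(1-\eps 2^{-\iota(s')})$) and concludes via $[\next^n\neg T]\ge \valueof{}{s_n}$, but this is only a presentational difference from your direct bound on the safety probability.
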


\noindent
While we defined $\eps$-optimality wrt.~additive errors
(cf.~\cref{sec:prelim}), our proof of \cref{thm:eps-optimal-safety} shows that the claim holds even wrt.~multiplicative
errors (in the style of \cite{Ornstein:AMS1969}).

\begin{restatable}{theorem}{corZeroOneTwoPar}
\label{cor:012Parity}
Consider an MDP $\mdp$ with a $\cParity{\{0,1,2\}}$ objective
and let $\states_{\mathit{opt}}$ be the subset of states that have an optimal strategy.
\begin{enumerate}
\item
If $\mdp$ is acyclic then there exists an MD strategy
that is optimal from every state in $\states_{\mathit{opt}}$.
\item
There exists a deterministic Markov strategy
that is optimal from every state in $\states_{\mathit{opt}}$.
\end{enumerate}
\end{restatable}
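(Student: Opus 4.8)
The plan is to obtain both parts as a bookkeeping assembly of three results already established: \cref{thm:012quant} (which reduces the existence of MD optimal strategies for a $\cParity{\{0,1,2\}}$ objective to the existence of uniformly $\eps$-optimal MD strategies for safety objectives in the conditioned MDP $\pmdp$), \cref{thm:eps-optimal-safety} (acyclicity yields uniformly $\eps$-optimal MD strategies for safety), and \cref{lem:acyclic-Markov}(3) (lifting an acyclic result to arbitrary MDPs at the cost of a Markov step-counter).

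For part~1, assume $\mdp$ is acyclic and let $\formula$ be the given $\cParity{\{0,1,2\}}$ objective. Since $\formula$ is tail, its conditioned version $\pmdp$ is defined (\cref{def:conditionedmdp}). The key structural observation is that $\pmdp$ is again acyclic: by the definition, $\pstates \subseteq \states$ and $\ptransition$ is contained in $\transition$ restricted to $\pstates$, so the underlying graph of $\pmdp$ is a subgraph of the acyclic underlying graph of $\mdp$; rescaling probabilities does not change this. Hence \cref{thm:eps-optimal-safety} applies to $\pmdp$, giving, for every safety objective in $\pmdp$ and every $\eps>0$, a uniformly $\eps$-optimal MD strategy. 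This is precisely the hypothesis of \cref{thm:012quant}, whose conclusion is the desired MD strategy in $\mdp$ that is optimal for $\formula$ from every state in $\states_{\mathit{opt}}$.

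For part~2, drop acyclicity. An MD strategy is a deterministic $0$-bit strategy, so part~1 reads: for every acyclic MDP carrying a $\cParity{\{0,1,2\}}$ objective there is a deterministic $0$-bit strategy that is optimal from all states that have an optimal strategy. The acyclic MDP produced by the step-counter encoding used in \cref{lem:acyclic-Markov} colors each of its states by the color of the underlying $\mdp$-state, so it still uses only colors in $\{0,1,2\}$, and part~1 is applicable to it. Applying \cref{lem:acyclic-Markov}(3) with $k=0$ then yields a deterministic $0$-bit Markov strategy — i.e.\ a deterministic Markov strategy — in the original $\mdp$ that is optimal from every state in $\states_{\mathit{opt}}$.

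The only genuinely checkable points, and where I would expect to spend the small amount of effort this proof needs, are the two preservation facts invoked above: that conditioning preserves acyclicity (immediate, since $\pmdp$ only deletes states and transitions and rescales probabilities) and that the step-counter encoding of a $\cParity{\{0,1,2\}}$ MDP is still a $\cParity{\{0,1,2\}}$ MDP rather than one demanding extra colors. Both are routine once unwound, so there is no new technical obstacle — the corollary is essentially a corollary.
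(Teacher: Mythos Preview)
Your proposal is correct and follows essentially the same approach as the paper: for part~1, observe that conditioning preserves acyclicity, apply \cref{thm:eps-optimal-safety} in $\pmdp$, and conclude via \cref{thm:012quant}; for part~2, invoke \cref{lem:acyclic-Markov}(3) with $k=0$. Your additional remarks checking that conditioning and step-counter encoding preserve the relevant structural properties are sound and slightly more explicit than the paper's terse proof.
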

\begin{proof}
  Towards item 1, if $\mdp$ is acyclic then also its conditioned version $\pmdp$
  (wrt.\ $\cParity{\{0,1,2\}}$) is acyclic. Thus, by
  \cref{thm:eps-optimal-safety}, in $\pmdp$ for every $\eps >0$ and every
  safety objective there is a uniformly $\eps$-optimal MD strategy.
  The result now follows from \cref{thm:012quant}.

  Item 2 follows from Item 1 and \cref{lem:acyclic-Markov} (item 3 with $k=0$).
\end{proof}

\section{$\eps$-Optimal Strategies for $\cParity{\{0,1\}}$ (co-B\"uchi)}\label{eps01}

\begin{restatable}{theorem}{thmcoBuchi}
\label{thm:coBuchi}
Suppose that $\mdp=\mdptuple$ is an MDP such that
for every safety objective (given by some target $T\subseteq\states$)
and $\eps>0$
there exists a uniformly $\eps$-optimal MD strategy.

Then for every co-B\"uchi objective
(given by some coloring $\coloring:\states\to \{0,1\}$)
and $\eps>0$
there exists a uniformly $\eps$-optimal MD strategy.
\end{restatable}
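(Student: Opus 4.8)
The plan is to reduce a co-B\"uchi objective to a countable family of safety objectives and then combine the given uniformly $\eps$-optimal MD safety strategies into a single uniformly $\eps$-optimal MD co-B\"uchi strategy. Recall that a co-B\"uchi objective $\Parity{\coloring}$ with $\coloring:\states\to\{0,1\}$ is the set of runs that visit states of color~$1$ only finitely often; equivalently, writing $B\eqdef\colorset{\states}{=1}{}$ for the set of ``bad'' states, it is $\eventually\always\neg B = \bigcup_{k} \next^k\safety{B}$. So a run is winning iff, from some step onwards, it stays inside the safe region $\states\setminus B$ forever. The natural idea is: play safety-like, but allow a bounded number of visits to $B$, front-loading the ``budget'' so that the total error stays below~$\eps$.

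Concretely, fix $\eps>0$. For each state $s$, let $v(s)\eqdef\valueof{\mdp}{s}$ be its co-B\"uchi value. I would first establish (this is the routine part) that $v$ satisfies the usual Bellman-type consistency under $\transition$ and $\probp$, and that for the \emph{safety} objective $\safety{B}$ the value is pointwise below $v$; more importantly, I want a fixed-point-flavoured statement: from a state of co-B\"uchi value $v(s)$ one can, with arbitrarily small additive loss, reach a set of states whose \emph{safety} value (wrt.~$B$) is close to their co-B\"uchi value, and co-B\"uchi value is a tail quantity so L\'evy's zero-one law applies. The cleanest route is to exploit that $\eventually\always\neg B$ is a tail objective and use the machinery already available in the paper: by L\'evy-type arguments (as in \cref{lem:LZO-2} / \cref{lem2:reachG01}) almost all winning runs eventually enter, and stay in, the region $\safesub{\mdp,\safety{B}}{1-\delta}$ for any $\delta>0$, because once a run is committed to avoiding $B$ forever its current state has high safety value. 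This gives a ``staircase'' decomposition of the co-B\"uchi objective into finitely-erroneous reachability-then-safety phases.

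Here is the strategy I would build. Pick a summable sequence $\delta_1,\delta_2,\ldots$ with $\sum_i \delta_i \le \eps/2$, e.g.\ $\delta_i = \eps 2^{-i-1}$. For each threshold $\beta_i \eqdef 1-\delta_i$ let $\zstrat_i$ be the given uniformly $\delta_i$-optimal MD strategy for the safety objective $\safety{B}$, and let $T_i \eqdef \safesub{\mdp,\zstrat_i,\safety{B}}{1-\sqrt{\delta_i}}$ (or a similarly chosen level set) be the region where $\zstrat_i$ keeps the run safe with high probability. The combined MD strategy $\hat\zstrat$ acts on a state $s$ depending on the best index $i$ for which $s$ is ``$i$-good'' (has co-B\"uchi value witnessing it can commit to safety at level $\beta_i$ from nearby): it plays $\zstrat_i(s)$. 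Because the $T_i$ are increasing in safety-quality, a run under $\hat\zstrat$ can only ever \emph{increase} its index; each time it leaves the current safe region it pays at most $\delta_i$, and after that it uses a strictly smaller remaining budget. Since the budgets are summable, with probability $\ge v(s)-\eps$ the run eventually settles into some $T_i$ and then never visits $B$ again, hence satisfies co-B\"uchi. The MD property is immediate because $\hat\zstrat$ is a state-indexed selection among finitely-described MD strategies.

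The main obstacle — and the point that needs the most care — is ensuring \textbf{monotone progress}: that under $\hat\zstrat$ a run never ``falls back'' to a worse index infinitely often, since an infinite sequence of fallbacks could consume unbounded total error and in fact produce a run visiting $B$ infinitely often. This is exactly the ``stick to your commitment'' phenomenon flagged in the introduction for parity. I would resolve it by defining the regions so that the index is a monotone nondecreasing function of the run's history under $\hat\zstrat$: once $\hat\zstrat$ has committed a run to safety level $\beta_i$, every successor reachable under $\zstrat_i$ either stays in $T_i$ (index unchanged) or already lies in some $T_j$ with $j > i$ (index strictly up). A clean way to get this is to take the $T_i$ to be \emph{nested} ($T_1 \subseteq T_2 \subseteq \cdots$) by replacing $\zstrat_i$ on $T_1\cup\cdots\cup T_{i-1}$ with the earlier, better strategies; formally one defines $\hat\zstrat(s) = \zstrat_{i(s)}(s)$ where $i(s) = \min\{i : s\in T_i\}$, and verifies that $\safesub{\mdp,\safety{B}}{\cdot}$ level sets interact correctly so that from $s\in T_i$ the strategy keeps the run in $T_i$ with probability $\ge 1-\sqrt{\delta_i}$ \emph{as an $\hat\zstrat$-run}, not merely as a $\zstrat_i$-run. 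Granting that, a Borel--Cantelli / telescoping estimate over the $\delta_i$ finishes the bound $\probm_{\mdp,s,\hat\zstrat}(\eventually\always\neg B) \ge v(s) - \eps$ uniformly in $s$.
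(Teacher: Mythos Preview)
Your proposal has a genuine gap: the MD strategy $\hat\zstrat$ you construct is built \emph{only} from safety strategies $\zstrat_i$, and you never explain how a run reaches the good safety regions $T_i$ in the first place. Consider a state $s$ with $\coloring(s)=1$ (so $s\in B$) that deterministically leads to a state that can be kept in $\states\setminus B$ forever. Then $\valueof{\mdp}{s}=1$ for co-B\"uchi, but the \emph{safety} value of $s$ is $0$, so $s\notin T_i$ for any $i$, and your $\hat\zstrat$ is undefined (or arbitrary) at $s$. More generally, co-B\"uchi value can vastly exceed safety value, and an MD strategy needs a \emph{reachability} component to navigate towards the high-safety region. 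Your first paragraph correctly identifies that winning runs eventually enter such a region, but the strategy you build in the second paragraph does not implement that step.

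The ``monotone progress'' argument is also unfounded. You claim that from $s\in T_i$ every successor under $\zstrat_i$ either stays in $T_i$ or lies in some $T_j$ with $j>i$. But when the safety strategy $\zstrat_i$ fails (the run hits $B$, which happens with probability up to $\sqrt{\delta_i}$), the resulting state can have arbitrarily small safety \emph{and} co-B\"uchi value --- it need not lie in any $T_j$. Your nesting direction is also inverted: with $\delta_i\downarrow 0$ the thresholds $1-\sqrt{\delta_i}$ increase, so the level sets $T_i$ should shrink, not grow.

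The paper's proof avoids these issues with a two-threshold argument rather than an infinite cascade. One fixes a single $\eps$-optimal MD safety strategy $\optav$ on the region $\states'=\safesub{\mdp,\optav,\safety{\reachset}}{\tau_1}$, shows this costs only $\eps_1$ in co-B\"uchi value, and then observes (via \cref{lem:as012-return-to-safe}) that every near-optimal co-B\"uchi strategy must reach the smaller region $\states''\subseteq\states'$ where the safety attainment exceeds $\tau_2>\tau_1$. Ornstein's theorem then supplies a uniformly $\eps_2$-optimal MD \emph{reachability} strategy $\zstrat'$ towards $\states''$ in the fixed MDP. The final MD strategy plays $\optav$ on $\states'$ and $\zstrat'$ elsewhere; once $\states''$ is reached, \cref{lem:LZO-1} gives $\probm(\always\,\states')\ge(\tau_2-\tau_1)/(1-\tau_1)$, which can be pushed arbitrarily close to $1$ by choosing $\tau_2$ close to~$1$. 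The reachability step is the piece your construction is missing.
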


The precondition of \cref{thm:coBuchi}
is satisfied by many classes of MDPs.
Indeed, we obtain the following.

\begin{corollary}\label{cor:coBuchi}
Consider an MDP $\mdp$ and a co-B\"uchi objective. %
\begin{enumerate}
\item
If $\mdp$ is acyclic then, for every $\eps>0$,
there exists a uniformly $\eps$-optimal MD strategy.
\item
If $\mdp$ is finitely branching 
then, for every $\eps>0$,
there exists a uniformly $\eps$-optimal MD strategy.
\item
For every $\eps>0$
there exists a deterministic Markov strategy that,
from every initial state $\state$,
attains at least $\valueof{\mdp}{\state} -\eps$.
\end{enumerate}
\end{corollary}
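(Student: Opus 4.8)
The plan is to obtain all three items as consequences of \cref{thm:coBuchi}, which reduces co-B\"uchi to safety: its only hypothesis is that, on the fixed MDP, every safety objective admits a uniformly $\eps$-optimal MD strategy. So for items~1 and~2 it suffices to supply that hypothesis from the structural assumption on $\mdp$, and for item~3 to additionally upgrade the acyclic statement to a Markov strategy on arbitrary MDPs via \cref{lem:acyclic-Markov}.

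For item~1, suppose $\mdp$ is acyclic. Then \cref{thm:eps-optimal-safety} provides, for every target $T\subseteq\states$ and every $\eps>0$, a uniformly $\eps$-optimal MD strategy for $\safety{T}$ in $\mdp$; this is exactly the hypothesis of \cref{thm:coBuchi}, which therefore yields a uniformly $\eps$-optimal MD strategy for the co-B\"uchi objective. For item~2, suppose $\mdp$ is finitely branching. Here I would invoke the classical fact (see, e.g., \cite{KMSW2017}) that finitely branching MDPs admit, for every safety objective, a uniformly \emph{optimal} MD strategy --- which is a fortiori uniformly $\eps$-optimal for every $\eps>0$ --- so that \cref{thm:coBuchi} again applies and gives a uniformly $\eps$-optimal MD strategy for co-B\"uchi.

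For item~3, note first that co-B\"uchi is the parity objective $\cParity{\{0,1\}}$, and recall from \cref{sec:prelim} that a deterministic $0$-bit strategy coincides with an MD strategy and a deterministic $0$-bit Markov strategy with a deterministic Markov strategy. By item~1, for every acyclic MDP with a co-B\"uchi objective and every $\eps>0$ there is a deterministic $0$-bit strategy that is $\eps$-optimal from all states. Feeding this into \cref{lem:acyclic-Markov}(2) instantiated at $k=0$ --- whose step-counter encoding only records the path length in the state and leaves the coloring unchanged, hence maps acyclic co-B\"uchi MDPs to acyclic co-B\"uchi MDPs --- yields, for every MDP with a co-B\"uchi objective and every $\eps>0$, a deterministic Markov strategy that is $\eps$-optimal from all states; equivalently, it attains at least $\valueof{\mdp}{\state}-\eps$ from every initial state $\state$.

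I do not expect a genuine obstacle here: the substantive work already resides in \cref{thm:coBuchi} and \cref{thm:eps-optimal-safety}. The points that need a little care are that the hypothesis of \cref{thm:coBuchi} is truly available in each regime (immediate from \cref{thm:eps-optimal-safety} in the acyclic case, and in the finitely branching case reliant on the cited --- and not reproved here --- fact that safety admits uniformly optimal MD strategies); the harmless terminological identifications ``deterministic $0$-bit $=$ MD'' and ``deterministic $0$-bit Markov $=$ deterministic Markov'' when applying \cref{lem:acyclic-Markov} at $k=0$; and the observation that the reduction inside \cref{lem:acyclic-Markov} preserves the objective class, so the $k=0$ instance is legitimate for co-B\"uchi even though MD strategies do not suffice for general parity objectives, which is why this shortcut does not extend beyond the co-B\"uchi level.
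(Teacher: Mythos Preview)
Your proposal is correct and follows essentially the same approach as the paper: items~1 and~2 verify the safety hypothesis of \cref{thm:coBuchi} via \cref{thm:eps-optimal-safety} and the existence of uniformly optimal MD safety strategies in finitely branching MDPs, respectively, and item~3 combines item~1 with \cref{lem:acyclic-Markov}(2) at $k=0$. Your additional care about the $0$-bit/MD identification and the objective-preserving nature of the step-counter encoding is sound and more explicit than the paper's terse derivation.
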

\begin{proof}
Towards (1), for acyclic MDPs,
uniformly $\eps$-optimal strategies for safety can be chosen MD by \Cref{thm:eps-optimal-safety}.
Towards (2), for finitely branching MDPs there always exists
even a uniformly optimal MD strategy for every safety objective.
In both cases the claim then follows from \cref{thm:coBuchi}.
Claim (3) follows directly from (1) and \cref{lem:acyclic-Markov} (item 2 with $k=0$). 
\end{proof}

\newpage

\bibliography{journals,conferences,refs}

\begin{thebibliography}{10}

\bibitem{NIPS2004_2569}
P.~Abbeel and A.~Y. Ng.
\newblock Learning first-order {Markov} models for control.
\newblock In {\em Advances in Neural Information Processing Systems 17}, pages
  1--8. MIT Press, 2004.
\newblock URL:
  \url{http://papers.nips.cc/paper/2569-learning-first-order-markov-models-for-control}.

\bibitem{ModCheckPrinciples08}
C.~Baier and J.-P. Katoen.
\newblock {\em Principles of Model Checking}.
\newblock MIT Press, 2008.

\bibitem{billingsley-1995-probability}
P.~Billingsley.
\newblock {\em Probability and Measure}.
\newblock Wiley, New York, NY, 1995.
\newblock Third Edition.

\bibitem{blondel2000survey}
V.~D. Blondel and J.~N. Tsitsiklis.
\newblock A survey of computational complexity results in systems and control.
\newblock {\em Automatica}, 36(9):1249--1274, 2000.

\bibitem{bauerle2011finance}
N.~Bäuerle and U.~Rieder.
\newblock {\em Markov Decision Processes with Applications to Finance}.
\newblock Springer-Verlag Berlin Heidelberg, 2011.

\bibitem{Chatterjee:2004:QSP:982792.982808}
K.~Chatterjee, M.~Jurdzi\'{n}ski, and T.~Henzinger.
\newblock Quantitative stochastic parity games.
\newblock In {\em Annual ACM-SIAM Symposium on Discrete Algorithms}, pages
  121--130, Philadelphia, PA, USA, 2004. Society for Industrial and Applied
  Mathematics.
\newblock URL: \url{http://dl.acm.org/citation.cfm?id=982792.982808}.

\bibitem{ModCheckHB18}
E.~M. Clarke, T.~A. Henzinger, H.~Veith, and R.~Bloem, editors.
\newblock {\em Handbook of Model Checking}.
\newblock Springer, 2018.
\newblock URL: \url{https://doi.org/10.1007/978-3-319-10575-8}, \href
  {http://dx.doi.org/10.1007/978-3-319-10575-8}
  {\path{doi:10.1007/978-3-319-10575-8}}.

\bibitem{CGP:book}
E.M. Clarke, O.~Grumberg, and D.~Peled.
\newblock {\em Model Checking}.
\newblock MIT Press, Dec. 1999.

\bibitem{GTW:2002}
E.~Gr\"{a}del, W.~Thomas, and T.~Wilke, editors.
\newblock {\em Automata, Logics, and Infinite Games}, volume 2500 of {\em
  LNCS}, 2002.

\bibitem{Hill:79}
T.~P. Hill.
\newblock On the existence of good {Markov} strategies.
\newblock {\em Transactions of the American Mathematical Society},
  247:157--176, 1979.
\newblock \href
  {http://dx.doi.org/https://doi.org/10.1090/S0002-9947-1979-0517690-9}
  {\path{doi:https://doi.org/10.1090/S0002-9947-1979-0517690-9}}.

\bibitem{Hill:99}
T.~P. Hill.
\newblock Goal problems in gambling theory.
\newblock {\em Revista de Matem\'{a}tica: Teor\'{i}a y Aplicaciones},
  6(2):125--132, 1999.

\bibitem{KMST:ICALP2019}
S.~Kiefer, R.~Mayr, M.~Shirmohammadi, and P.~Totzke.
\newblock {B\"uchi} objectives in countable {MDPs}.
\newblock In {\em International Colloquium on Automata, Languages and
  Programming}, volume 132. LIPIcs, 2019.
\newblock A technical report is available at
  \url{https://arxiv.org/abs/1904.11573}.

\bibitem{KMSW2017}
S.~Kiefer, R.~Mayr, M.~Shirmohammadi, and D.~Wojtczak.
\newblock Parity objectives in countable {MDPs}.
\newblock In {\em Annual IEEE Symposium on Logic in Computer Science}, 2017.

\bibitem{KMShW17}
S.~Kiefer, R.~Mayr, M.~Shirmohammadi, and D.~Wojtczak.
\newblock Parity objectives in countable {MDPs}.
\newblock Technical report, arxiv.org, 2017.
\newblock Available at \verb|https://arxiv.org/pdf/1704.04490.pdf|.

\bibitem{Mostowski:84}
A.~Mostowski.
\newblock Regular expressions for infinite trees and a standard form of
  automata.
\newblock In {\em Computation Theory}, volume 208 of {\em LNCS}, pages
  157--168, 1984.

\bibitem{Ornstein:AMS1969}
D.~Ornstein.
\newblock On the existence of stationary optimal strategies.
\newblock {\em Proceedings of the American Mathematical Society}, 20:563--569,
  1969.

\bibitem{Puterman:book}
M.~L. Puterman.
\newblock {\em Markov Decision Processes: Discrete Stochastic Dynamic
  Programming}.
\newblock John Wiley \& Sons, Inc., New York, NY, USA, 1st edition, 1994.

\bibitem{schal2002markov}
M.~Sch{\"a}l.
\newblock {Markov} decision processes in finance and dynamic options.
\newblock In {\em Handbook of {Markov} Decision Processes}, pages 461--487.
  Springer, 2002.

\bibitem{sigaud2013markov}
O.~Sigaud and O.~Buffet.
\newblock {\em Markov Decision Processes in Artificial Intelligence}.
\newblock John Wiley \& Sons, 2013.

\bibitem{sutton2018reinforcement}
R.S. Sutton and A.G Barto.
\newblock {\em Reinforcement Learning: An Introduction}.
\newblock Adaptive Computation and Machine Learning. MIT Press, 2018.

\bibitem{Vardi:probabilistic}
M.Y. Vardi.
\newblock Automatic verification of probabilistic concurrent finite-state
  programs.
\newblock In {\em Proc.\ of FOCS'85}, pages 327--338, 1985.

\bibitem{Zielonka:1998}
W.~Zielonka.
\newblock Infinite games on finitely coloured graphs with applications to
  automata on infinite trees.
\newblock {\em Theoretical Computer Science}, 200(1-2):135--183, 1998.

\bibitem{Zielonka04}
W.~Zielonka.
\newblock Perfect-information stochastic parity games.
\newblock In {\em Foundations of Software Science and Computation Structures},
  volume 2987 of {\em LNCS}, pages 499--513. Springer, 2004.

\end{thebibliography}

\newpage
\appendix

\section{Reductions in Section~\ref{epsParity} and related Lemmas}\label{app:sec:auxiliary}

By the following lemma, the strategy complexity of general parity
objectives does not depend on the branching degree of the MDPs.
However, this does not hold for particular parity objectives with a restricted
set of colors, since the construction introduces an extra color.

\infbranchingtofinitebranching*
\begin{proof}
Towards item (1), we encode an infinitely branching acyclic MDP $\mdp$ into a finitely
branching acyclic MDP $\mdp'$.
Every controlled state $x$ with infinite branching
$x \to y_i$ for all $i \in \N$ is replaced by a gadget
$x \to z_1, z_i \to z_{i+1}, z_i \to y_i$ for all $i \in \N$
with fresh controlled states $z_i$.
Infinitely branching random states with $x \step{p_i}{} y_i$ for all $i \in \N$
are replaced by a gadget $x \step{1}{} z_1, z_i \step{1-p_i'} z_{i+1}, z_i \step{p_i'} y_i$ for all
$i \in \N$, with fresh random states $z_i$ and suitably adjusted probabilities
$p_i'$ to ensure that the gadget
is left at state $y_i$ with probability $p_i$, i.e.,
$p_i' = p_i/(\prod_{j=1}^{i-1}(1-p_j'))$.
The fresh states are labeled with an unfavorable color that is smaller than
all other colors, e.g., $-1$.

We take an $\eps$-optimal deterministic 1-bit strategy
$\zstrat'$ for parity from all states $\state \in \states_0$ in $\mdp'$.
We construct a 1-bit deterministic $\eps$-optimal strategy $\zstrat$
for $\mdp$ as follows.
Consider some state $x$ that is infinitely branching in $\mdp$
and its associated gadget in $\mdp'$.
Whenever a run in $\mdp'$ according to $\zstrat'$ reaches $x$ with some
memory value $\alpha \in \{0,1\}$ there exist
values $p_i$ for the probability that the gadget is left at state
$y_i$. Let $p \eqdef 1-\sum_{i\in \N} p_i$ be the probability that the gadget is
never left. (If $x$ is controlled then only one $p_i$ (or $p$) is nonzero,
since $\zstrat'$ is deterministic. If $x$ is random then $p=0$.)
Since $\zstrat'$ is deterministic, the memory updates are deterministic,
and thus there are values $\alpha_i' \in \{0,1\}$ such that whenever
the gadget is left at state $y_i$ the memory will be $\alpha_i'$.
We now define the behavior of the 1-bit deterministic strategy $\zstrat$ 
at state $x$ with memory $\alpha$ in $\mdp$.

If $x$ is controlled and $p\neq 1$ then $\zstrat''$ picks the successor state $y_i$ where
$p_i=1$ and sets the memory to $\alpha_i'$.
If $p=1$ then any run according to $\zstrat'$ that enters the gadget
does not satisfy the objective. Thus $\zstrat$ performs at least as
well in $\mdp$ regardless of its choice, e.g., pick successor $y_1$ and
$\alpha' = \alpha$.

If $x$ is random then $p=0$ and the successor is chosen according to the defined
distribution (which is the same in $\mdp$ and $\mdp'$)
and $\zstrat$ can only update its memory.
Whenever the successor $y_i$ is chosen, $\zstrat$ updates the memory to
$\alpha_i'$.

In states that are not infinitely branching in $\mdp$, $\zstrat$ does
exactly the same in $\mdp$ as $\zstrat'$ in $\mdp'$.

Since all states in the gadgets are labeled with color $-1$, $\zstrat$ performs
at least as well in $\mdp$ as $\zstrat'$ in $\mdp'$ and is thus
$\eps$-optimal from every $\state \in \states_0$.

Towards item (2), the proof is almost identical, expect that we consider optimal
strategies from initial states $\state$ that have an optimal strategy.
\end{proof}

In order to show the existence of Markov (resp.\ 1-bit Markov) strategies,
it suffices to show the existence of memoryless (resp.\ 1-bit)
strategies in an MDP that is made acyclic by encoding a step counter in the state space.
(Note that deterministic $0$-bit strategies are MD strategies
and $0$-bit Markov strategies are Markov strategies.)
This idea appears already in~\cite{KMST:ICALP2019} and can be formally stated as follows.

\acyclicMarkov*
\begin{proof}
The construction is similar for all three items.

Consider an MDP $\mdp=\mdptuple$ with sets of initial states $\states_0$ (finite),
$\states$ and $\states_{\mathit{opt}}$, respectively.

We transform it into an acyclic MDP $\mdp'$ by encoding a
step-counter into the states, i.e.,
$\mdp' = \tuple{\states',\zstates',\rstates',\transition',\probp'}$
where $\states' \eqdef \states \times \N$,
$\zstates' \eqdef \zstates \times \N$,
$\rstates' \eqdef \rstates \times \N$,
$\coloring((\state, n)) \eqdef \coloring(\state)$,
$(\state,n) \transition' (\state',n+1)$ iff $\state \transition \state'$
and $\probp'((\state,n))((\state',n+1)) \eqdef \probp(\state)(\state')$.

For every deterministic $k$-bit strategy $\zstrat'$ in $\mdp'$
there is a corresponding deterministic $k$-bit Markov strategy 
$\zstrat$ in $\mdp$, and vice-versa.
At any state $\state$, $\zstrat$ in memory mode $\memconf$ and
step-counter $n$ plays exactly like $\zstrat'$ in memory mode $\memconf$
at state $(\state,n)$.

It follows from the definition of the colorings that
$\zstrat$ (with memory mode $\memconf$) attains the same
from any initial state $\state$ as $\zstrat'$ (with
memory mode $\memconf$) attains from $(\state,0)$.
Moreover, every state $\state$ has the same value
as its corresponding state $(\state,0)$.

\begin{enumerate}
\item
In $\mdp'$ we consider the set of initial states
$\states_0' \eqdef \states_0 \times \{0\}$, which is finite since $\states_0$
is finite.
By our assumption, for every $\eps > 0$, there exists a deterministic $k$-bit
strategy $\zstrat'$ in $\mdp'$
that is $\eps$-optimal from all states $\state \in \states_0'$.
Thus $\zstrat$ is $\eps$-optimal from all states $\state \in \states_0$.
\item
Like above, except that the set of initial states $\states_0 \eqdef \states$
is not finite. Since $\zstrat'$ is assumed to be $\eps$-optimal from
all states in $\mdp'$, in particular it is $\eps$-optimal from
all states in $\states_0' \eqdef \states \times \{0\}$.
Thus $\zstrat$ is $\eps$-optimal from all states $\state \in \states$.
\item
Here the set of initial states is $\states_{\mathit{opt}}$.
Every state $\state \in \states_{\mathit{opt}}$ has the same value
as its corresponding state $(\state,0) \in \states_{\mathit{opt}} \times
\{0\}$ and the corresponding strategies $\zstrat$ and $\zstrat'$ attain the
same from $\state$ and $(\state,0)$, respectively.
Therefore $\states_{\mathit{opt}} \times \{0\} \subseteq \states_{\mathit{opt}}'$. 
Since the strategy $\zstrat'$ is assumed to be optimal from
all states $\state \in \states_{\mathit{opt}}'$, it is optimal from all states
in $\states_{\mathit{opt}} \times \{0\}$, and thus $\zstrat$ is
optimal from all states in $\states_{\mathit{opt}}$.
\end{enumerate}
\end{proof}

For ease of presentation, we will, instead of showing the existence of $1$-bit
strategies in an acyclic MDP $\mdp$, show the existence of MD strategies in the corresponding
\emph{layered} MDP $\?L(\mdp)$, which encodes the two memory modes into the
states by having two copies of $\mdp$ (called \emph{layers} $0$ and $1$). 
The transitions and probability functions, as well as whether a state is
randomized, and its (parity) color, are lifted naturally.

The next lemma shows the correspondence between deterministic 1-bit strategies
in $\mdp$ and MD strategies in $\?L(\mdp)$.

\proplayered*
\begin{proof}

    For the ``$\mdp \implies \?L(\mdp)$'' direction, given $u[m_0]$,
we define the MD strategy $\tau$ to play in $\?L(\mdp)$ as follows. For  $b,b'\in \{0,1\}$,
\begin{itemize}
  \item for a controlled state~$s\in \zstates$,
if $u[m_0](b,s)=(b',s')$ meaning that $u[m_0]$ chooses~$s'$ at $s$,  by taking a transition $t=(s,s')$,
and updates the bit to~$b'$, we define $\tau((s,b)) \eqdef (t,b)$ and $\tau((t,b))\eqdef(s',b')$;
\item for a random state~$s\in \rstates$,
if $u[m_0]$ updates the memory bit to~$b'$ in case the random successor resolves to~$s'$, by taking a transition $t=(s,s')$, 
we define $\tau((t,b)) \eqdef (s',b')$.
\end{itemize}

Similarly, for the ``$\mdp \impliedby \?L(\mdp)$'' direction,
given $\tau$  in $\?L(\mdp)$, we define an update function~$u$,
such that for all initial bit~$m_0\in \{0,1\}$
 the deterministic $1$-bit strategy $u[m_0]$ in $\mdp$  plays from any state $\state \in \states$ 
as $\tau$ plays in $\?L(\mdp)$ from $(\state, m_0)$. The construction is as follows. For all $b,b'\in \{0,1\}$ and all transitions $t=(s,s')$,
\begin{itemize}
\item if $s\in \zstates$, and if $\tau((s,b))=(t,b)$ and $\tau((t,b))=(s',b')$,  we define $u(b,s)\eqdef(b',s')$; 

\item if $s\in \rstates$, and if $\tau((t,b))=(s',b')$,  we define $u(b,s)(b',s')\eqdef P(s)(s')$. 
\end{itemize}

Denote by $\chain^{\tau}$  the Markov chain obtained from $\?L(\mdp)$ after fixing~$\tau$, and by $\chain^{u[m_0]}$  the Markov chain obtained from $\mdp$ after fixing~$u[m_0]$.
Observe there is  a clear bijection between the runs in the Markov chains $\chain^{\tau}$ and $\chain^{u[m_0]}$.
Since the parity colors are lifted accordingly, we conclude that $
\probm_{\?L(\mdp),(\state_0,m_0),\tau}(\formula) =
\probm_{\mdp,\state_0,\inducedStrat{\updatefun}{m_0}}(\formula) 
$, as required.
\end{proof}

\section{L\'evy's zero-one law}
\label{app:levy01}

We fix   a finitely branching Markov chain $\chain$ with state space~$S$. 
We use the probability measure~$\Prob{s}$ when  starting in a state s.

For an event $\mathcal{E}\in {\mathcal F}$,  the indicator function~$\indicatorfun_{\mathcal{E}}:S^{\omega} \to \{0,1\}$ is defined by
\[
\indicatorfun_{\mathcal{E}}(\rho)=\begin{cases}
1 & \text{if } \rho \in \mathcal{E},\\
0 & \text{otherwise.}
\end{cases}\]

Below we recall    L\'evy's zero-one law;
we state this result for a specific family of  sub $\sigma$-algebras that is used throughout our proofs.
Consider the simplest sequence of sub $\sigma$-algebras $({\mathcal F}_i)_{i\in \mathbb{N}}$ of $\mathcal{F}$
where each~${\mathcal F}_i$ is the $\sigma$-algebra generated by all events that depend only on the length-$i$ prefixes.
Formally, for all $i\in \nat$, define the sub $\sigma$-algebra 
\[\F_i=\{ A\cdot S^{\omega} \subseteq  S^{\omega}\mid   A \subseteq  S^{i}\}.\]
Observe that $\mathcal{F}_1 \subset \mathcal{F}_2 \subset \cdots \subset \mathcal{F}_{\infty}$
where $\mathcal{F}_{\infty}=\F$ is the smallest $\sigma$-algebra containing all the~$\mathcal{F}_i$.
The sub $\sigma$-algebra $\F_i$, $i\in \nat$, introduces an 
equivalence class~$\sim_i$ on~$S^{\omega}$
where $\rho \sim_i \rho'$ if and only if for all
$\E \in \F_i$, the condition
$\rho \in \E \Leftrightarrow \rho' \in \E$ is met.
Given a run~$\rho$, denote by $[\rho]_{\sim_ i}$ the equivalence class of~$\rho$.
By  definition of the~$\F_i$, if~$\rho\in s_1\cdots s_{i} S^{\omega}$
then $[\rho]_{\sim_i}=s_1\cdots s_{i} S^{\omega}$.

Given a state~$s$, 
define the random variable $\Prob{s}(\mathcal{E} \mid \mathcal{F}_i):S^{\omega} \to [0,1] \cup\{\bot\}$
such that, for all runs~$\rho\in s_1\cdots s_{i} S^{\omega}$,  
\begin{equation}\label{eq:defrand}
\Prob{s}(\E \mid \F_i)(\rho)=
\begin{cases}
\Prob{s}(\E\mid [\rho]_{\sim_i})& \text{ if } \Prob{s}(s_1\cdots s_{i}) \neq 0;\\
\bot \text{ (read as undefined) } & \text{ otherwise}.
\end{cases}
\end{equation}
By  L\'evy's zero-one law for all events $\mathcal{E} \subseteq {\mathcal F}_{\infty}$ we have that
\[\lim_{i\to \infty} \Prob{s}(\mathcal{E} \mid \mathcal{F}_i)=\indicatorfun_{\mathcal{E}}\]
holds $\Prob{}$-almost-surely.

\begin{remark}\label{remo1law}
Given a suffix-closed objective~$\E$ and a run~$\rho\in s_1\cdots s_i S^{\omega}$,
if $\Prob{s}(\E \mid \F_i)(\rho)$ is defined, then
\begin{equation*}
\begin{aligned}
\Prob{s_1}(\E \mid \F_i)(\rho)&=
\Prob{s_1}(\E \mid [\rho]_{\sim_i})  \\
&=\Prob{s_1}(\E \mid s_1\cdots s_i S^{\omega})\\
&\le\Prob{s_i}(\E \mid s_i S^{\omega})\\
&=\Prob{s_i}(\E).
\end{aligned}
\end{equation*}
If $\E$ is tail then $\Prob{s_1}(\E \mid \F_i)(\rho) = \Prob{s_i}(\E)$.
\end{remark}

For the fixed Markov chain~$\chain$ and $\eps>0$, 
we define $\safesub{\E}{1-\eps} \eqdef \{s\mid \Prob{s}(\E) \geq 1-\eps\}$.

\begin{lemma} \label{lem:reachG01-helper}
Let~$s_0\in S$ and $\E$ be a suffix-closed objective and $\eps > 0$.
Then $\Prob{s_0}(\eventually \E \land \neg \eventually \safesub{\E}{1-\eps}) = 0$.
\end{lemma}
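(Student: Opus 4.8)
The plan is to first prove the ``$\eventually$-free'' version of the claim --- that $\Prob{t}(\E \setminus \eventually \safesub{\E}{1-\eps}) = 0$ for \emph{every} starting state $t$ --- and then lift it to the stated claim about $\eventually\E$ using the Markov property of $\chain$. The key ingredients are L\'evy's zero-one law together with Remark~\ref{remo1law}.

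For the first step I would apply L\'evy's zero-one law directly to the measurable (and suffix-closed) event $\E$: for $\Prob{t}$-almost every run $\rho$ we have $\Prob{t}(\E \mid \F_i)(\rho) \to \indicatorfun_{\E}(\rho)$ (this in particular restricts attention to runs all of whose finite prefixes have positive probability). Now fix such a run with $\rho \in \E$, so that $\Prob{t}(\E \mid \F_i)(\rho) \to 1$. Since $\E$ is suffix-closed, Remark~\ref{remo1law} gives $\Prob{t}(\E \mid \F_i)(\rho) \le \Prob{\rho(i)}(\E) \le 1$ for every $i$, so by squeezing $\Prob{\rho(i)}(\E) \to 1$. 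Hence some $\rho(i)$ lies in $\safesub{\E}{1-\eps}$, i.e. $\rho \in \eventually\safesub{\E}{1-\eps}$. As this holds for almost every run in $\E$, we get $\Prob{t}(\E \setminus \eventually\safesub{\E}{1-\eps}) = 0$, for every state $t$.

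For the second step, observe that, because $\E$ is suffix-closed, $\eventually\E = \bigcup_{k\in\N}\E_k$ where $\E_k \eqdef \{\rho \mid \rho(k)\rho(k+1)\cdots \in \E\}$, and that $\neg\eventually\safesub{\E}{1-\eps} = \always\,\neg\safesub{\E}{1-\eps}$. So it suffices to show $\Prob{s_0}(\E_k \cap \always\,\neg\safesub{\E}{1-\eps}) = 0$ for each $k$ and take the countable union. I would condition on the length-$(k+1)$ prefix $w = s_0 \cdots s_k$ of the run: any $w$ that already visits $\safesub{\E}{1-\eps}$ contributes $0$ (the intersection with $\always\,\neg\safesub{\E}{1-\eps}$ is empty), and for the remaining $w$ the Markov property gives a contribution of $\Prob{s_0}(wS^\omega)\cdot\Prob{s_k}(\E \cap \always\,\neg\safesub{\E}{1-\eps}) = \Prob{s_0}(wS^\omega)\cdot\Prob{s_k}(\E \setminus \eventually\safesub{\E}{1-\eps})$, which is $0$ by the first step. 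Summing over all $w$ yields $\Prob{s_0}(\E_k \cap \always\,\neg\safesub{\E}{1-\eps}) = 0$.

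The main obstacle is conceptual, not computational: applying L\'evy's law directly to the tail objective $\eventually\E$ would only tell us that $\Prob{\rho(i)}(\eventually\E) \to 1$ along almost every run of $\eventually\E$, which says nothing about membership in $\safesub{\E}{1-\eps}$, since the latter measures the value of $\E$, not of $\eventually\E$. The crux is therefore to apply the law to the (merely suffix-closed, not tail) event $\E$ itself --- legitimate since L\'evy's law holds for arbitrary measurable events, and Remark~\ref{remo1law} still supplies the one-sided bound $\Prob{t}(\E\mid\F_i)(\rho)\le\Prob{\rho(i)}(\E)$ we need --- and only afterwards exploit memorylessness of $\chain$ to shift the ``eventually'' out of $\E$ to the front. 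The remaining points (almost every run has positive-probability finite prefixes; $\eventually\E$ is the increasing union of the $\E_k$) are routine.
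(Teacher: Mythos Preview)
Your proof is correct and follows essentially the same approach as the paper's: first establish $\Prob{t}(\E \land \always\,\neg\safesub{\E}{1-\eps}) = 0$ for every state~$t$ via L\'evy's zero-one law together with the one-sided bound of Remark~\ref{remo1law}, then lift to $\eventually\E$ by decomposing over finite prefixes and invoking the Markov property. The paper phrases the first step as a set inclusion $\denotationof{\always\,\neg\safesub{\E}{1-\eps}}{}\subseteq\{\rho\mid\lim_i\Prob{s_0}(\E\mid\F_i)(\rho)\neq 1\}$ and the second as a union bound over prefix strings, but these are merely presentational variants of your argument.
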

\begin{proof}
Let $s_0 \in S$.
We have:
\begin{align*}
\denotationof{\always \neg \safesub{\E}{1-\eps}}{s_0}
&\ =\ \{s_0 s_1 \cdots \mid \forall\,i\,.\, \Prob{s_i}(\E) < 1-\eps \} \\
&\ \subseteq\ \{\rho \in s_0 S^\omega \mid \forall\,i\,.\, \Prob{s_0}(\E \mid \F_i)(\rho) < 1-\eps \} & \text{by \cref{remo1law}} \\
&\ \subseteq\ \{\rho \in s_0 S^\omega \mid \lim_{i \to \infty} \Prob{s_0}(\E \mid \F_i)(\rho) \ne 1 \}
\end{align*}
It follows
\begin{equation} \label{eq:reachG01-helper-1}
\Prob{s_0}(\E \land \always \neg \safesub{\E}{1-\eps}) \ \le\ \Prob{s_0}(\E \cap \{\rho \in
s_0 S^\omega \mid \lim_{i \to \infty} \Prob{s_0}(\E \mid \F_i)(\rho) \ne 1 \}) \ = \ 0
\end{equation}
by L\'evy's zero-one law.

Let $s_0 \in S$.
We have:
\begin{align*}
\Prob{s_0}(\eventually \E &\land \neg \eventually \safesub{\E}{1-\eps})\\
&\ =\ \Prob{s_0}(\eventually \E \land \always \neg \safesub{\E}{1-\eps}) \\
&\ =\ \Prob{s_0}\left(\bigcup_{s_1 \cdots s_i \in S^*} s_0 s_1 \cdots s_{i-1} (\E \cap s_i S^\omega) \land \always \neg \safesub{\E}{1-\eps}\right) & \text{} \\
&\ \le\ \sum_{s_1 \cdots s_i \in S^*} \Prob{s_0}(s_0 s_1 \cdots s_{i-1} (\E \cap s_i S^\omega) \land \always \neg \safesub{\E}{1-\eps}) & \text{union bound} \\
&\ \le\ \sum_{s_1 \cdots s_i \in S^*} \Prob{s_i}((\E \cap s_i S^\omega) \land \always \neg \safesub{\E}{1-\eps}) \\
&\ =\ \sum_{s_1 \cdots s_i \in S^*} \Prob{s_i}(\E \land \always \neg \safesub{\E}{1-\eps}) \\
&\ =\ 0 & \text{by \cref{{eq:reachG01-helper-1}}}\tag*{\qedhere}
\end{align*}
\end{proof}

\lemtworeachGzeroone*
\begin{proof}
By \cref{lem:reachG01-helper} we have
\[
\Prob{s_0}(\eventually \E \land \eventually \safesub{\E}{1-\eps}) = \Prob{s_0}(\eventually \E)\,.
\]
By continuity of measures it follows that there is $n$ such that
\[
\Prob{s_0}(\eventually \E \land \eventually^{\le n} \safesub{\E}{1-\eps}) \ge \Prob{s_0}(\eventually \E) - \eps'\,.
\]
Let $\bubble{n}{s_0}$ be the set of states that can be reached from $s_0$ within
at most $n$ steps. Since the Markov chain~$\chain$ is finitely branching,
$F \eqdef \safesub{\E}{1-\eps} \cap \bubble{n}{s_0}$ is a finite set.
Then we have $\denotationof{\eventually^{\le n} F}{} = \denotationof{\eventually^{\le n} \safesub{\E}{1-\eps}}{}$ and the statement of the lemma follows.
\end{proof}

\lemLZOtwo*
\begin{proof}
By  L\'evy's zero-one law, 
\begin{equation}\label{eqq01}
\begin{aligned}
& \, \Prob{s}(\{\rho\mid \lim_{i \to \infty} \Prob{s}(\E \mid \F_i)(\rho)=\indicatorfun_{\E}(\rho)\})=1, \text{ and } \\
 & \,
\Prob{s}(\{\rho\mid \lim_{i \to \infty} \Prob{s}(\E \mid \F_i)(\rho)\neq \indicatorfun_{\E}(\rho)\})=0.
\end{aligned}
\end{equation}
On one hand Equation~\eqref{eqq01} implies that 
\begin{align*}
 & \, \Prob{s}(\{\rho\mid \lim_{i \to \infty} \Prob{s}(\E \mid \F_i)(\rho)=0 \wedge \indicatorfun_{\E} (\rho)=0
\} \\&
\cup \{\rho\mid \lim_{i \to \infty} \Prob{s}(\E \mid \F_i)(\rho)= 1 \wedge \indicatorfun_{\E} (\rho)=1
\})=1 &\\
\Rightarrow & \, \Prob{s}(\{\rho\mid \lim_{i \to \infty} \Prob{s}(\E \mid \F_i)(\rho)=0 
\} \cup \{\rho \mid  \indicatorfun_{\E} (\rho)=1
\})=1 &\\
\Leftrightarrow & \, \Prob{s}(\{\rho\mid \forall \eps>0 \, \exists n\,  \forall i\geq n \, \Prob{s}(\E \mid \F_i)(\rho)< \eps   \}
 \cup \E
)=1 &\\
\Rightarrow & \, \Prob{s}(\{\rho\mid  \exists n\,  \forall i\geq n \, \Prob{s}(\E \mid \F_i)(\rho)< \beta  \}
 \cup \E
)=1 &\\
\Rightarrow & \, \Prob{s}( \eventually \always \neg \safesub{\E}{\beta}
 \cup \E
)=1 & \text{by Remark~\ref{remo1law}}
\intertext{
since $\bigdenotationof{\eventually \always \neg \safesub{\E}{\beta}}{}\subseteq \bigdenotationof{\neg \eventually \always \safesub{\E}{\beta} }{}$}
\Rightarrow & 
\, \Prob{s}(\neg \eventually \always \safesub{\E}{\beta}  \cup  \E)=1  & \\
\Leftrightarrow & 
\, \Prob{s}(\eventually \always \safesub{\E}{\beta}  \cap \neg \E)=0  &\\
\Leftrightarrow  & 
\, \Prob{s}(\eventually \always \safesub{\E}{\beta} \setminus \E)=0 & 
\end{align*}
On the other hand Equation~\eqref{eqq01} implies that 
\begin{equation*}
\begin{aligned}
 & \, \Prob{s}(\{\rho\mid \lim_{i \to \infty} \Prob{s}(\E \mid \F_i)(\rho)\neq 1 \wedge \indicatorfun_{\E} (\rho)=1
\})=0 &\\
\Rightarrow & 
\, \Prob{s}(\{\rho\mid \lim_{i \to \infty} \Prob{s}(\E \mid \F_i)(\rho)\neq 1\} \cap \{\rho \mid \indicatorfun_{\E} (\rho)=1
\})=0 &\\
\Rightarrow & 
\, \Prob{s}(\{\rho\mid \forall n \,  \exists i\geq n \, \Prob{s}(\E \mid \F_i)(\rho)< \beta\} \cap \{\rho \mid \indicatorfun_{\E} (\rho)=1
\})=0 &\\
\Leftrightarrow & 
\, \Prob{s}(\neg \eventually \always \safesub{\E}{\beta}  \cap  \E)=0  & & \text{by Remark~\ref{remo1law}}\\
\Leftrightarrow  & 
\, \Prob{s}(\E \setminus \eventually \always \safesub{\E}{\beta}  )=0  &
\end{aligned}
\end{equation*}

\end{proof}

\begin{corollary}\label{col:01lawG}
Let $0 < \beta <1$ and~$\E$ a tail objective. For all states $s\in \safesub{\E}{\beta}$, we have
 $\Prob{s} (\E \mid \always \safesub{\E}{\beta})=1$.
\end{corollary}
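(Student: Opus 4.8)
The plan is to derive \cref{col:01lawG} directly from \cref{lem:LZO-2}.1. The one observation needed is the trivial inclusion $\always\,\safesub{\E}{\beta} \subseteq \eventually\always\,\safesub{\E}{\beta}$: a run that stays in $\safesub{\E}{\beta}$ \emph{forever} in particular \emph{eventually} stays in it (with witnessing time $0$). Hence $\always\,\safesub{\E}{\beta} \setminus \E \subseteq \eventually\always\,\safesub{\E}{\beta} \setminus \E$, and since $\Prob{s}(\eventually\always\,\safesub{\E}{\beta} \setminus \E)=0$ by \cref{lem:LZO-2}.1 (applicable because $\E$ is tail and $0<\beta<1$), we get $\Prob{s}(\always\,\safesub{\E}{\beta}\setminus\E)=0$. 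Equivalently, $\Prob{s}(\E \cap \always\,\safesub{\E}{\beta}) = \Prob{s}(\always\,\safesub{\E}{\beta})$, which is exactly the claim $\Prob{s}(\E \mid \always\,\safesub{\E}{\beta})=1$, provided the conditioning event is not a null set.

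It remains to note that $\Prob{s}(\always\,\safesub{\E}{\beta})>0$ for every $s\in\safesub{\E}{\beta}$, so that the conditional probability is well defined. Suppose it were $0$; then the run almost surely leaves $\safesub{\E}{\beta}$. Let $\tau$ be the first time it reaches a state of value $<\beta$, which is then finite almost surely, and let $t$ be that state. Since $\E$ is tail, the strong Markov property gives $\Prob{s}(\E) = \expectval_{s}[\Prob{t}(\E)]$; but $\Prob{t}(\E)<\beta$ holds surely, so $\beta-\Prob{t}(\E)$ is a strictly positive random variable and thus has strictly positive expectation, giving $\Prob{s}(\E)<\beta$ and contradicting $s\in\safesub{\E}{\beta}$.

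I do not expect a real obstacle here: \cref{col:01lawG} is essentially a restatement of \cref{lem:LZO-2}.1, and the only point that needs a word of care is the well-definedness of the conditional probability, which the short first-hitting-time argument settles. If one wants to avoid that point altogether, the content can instead be recorded in the unconditioned form $\Prob{s}(\always\,\safesub{\E}{\beta}\setminus\E)=0$, which the first paragraph already proves and which is what downstream applications actually invoke.
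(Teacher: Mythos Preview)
Your proof is correct and follows the same route as the paper: the inclusion $\always\,\safesub{\E}{\beta}\subseteq\eventually\always\,\safesub{\E}{\beta}$ together with \cref{lem:LZO-2}.1 gives $\Prob{s}(\always\,\safesub{\E}{\beta}\setminus\E)=0$, whence the conditional probability equals~$1$. You go a step further than the paper by checking that the conditioning event has positive probability (the paper's proof simply writes the quotient without comment); your stopping-time argument is sound, and your closing remark is apt---the only downstream use (in the proof of \cref{lem:LZO-1}) multiplies the conditional probability by $\Prob{s}(\always\,\safesub{\E}{\beta})$, so the unconditioned identity $\Prob{s}(\E\cap\always\,\safesub{\E}{\beta})=\Prob{s}(\always\,\safesub{\E}{\beta})$ is all that is really needed.
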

\begin{proof}
Since $\always \safesub{\E}{\beta}$ is contained in $\eventually \always \safesub{\E}{\beta}$,  Lemma~\ref{lem:01lawG}  leads to $\Prob{s} (\always \safesub{\E}{\beta}\setminus \E )=0$. Then, 
\begin{align*}
\Prob{s} ( \E \cap\always \safesub{\E}{\beta})
& =\Prob{s} (\always \safesub{\E}{\beta} )  - \Prob{s} (\always \safesub{\E}{\beta}\setminus \E )\\
& =\Prob{s} (\always \safesub{\E}{\beta}).
\end{align*}
By the above equality, we get that $\Prob{s} (\E \mid \always \safesub{\E}{\beta}) =\frac{\Prob{s} (\E \cap \always \safesub{\E}{\beta})}{\Prob{s} (\always \safesub{\E}{\beta})}=1$.
\end{proof}

\begin{corollary}\label{col:01law-gamma-alpha}
Let $0 < \beta_1, \beta_2 <1$ and~$\E$ a tail objective.
For all states $s$ we have
\[
\Prob{s}(\eventually\always \, \safesub{\E}{\beta_1} \setminus
\eventually\always \, \safesub{\E}{\beta_2})
= 0
\]
\end{corollary}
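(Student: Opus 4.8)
The plan is to reduce the claim to the two clauses of \cref{lem:LZO-2} by observing that, up to a $\Prob{s}$-null set, each of $\eventually\always\safesub{\E}{\beta_1}$ and $\eventually\always\safesub{\E}{\beta_2}$ coincides with $\E$ itself. Concretely, I would first record the elementary inclusion
\[
\eventually\always\safesub{\E}{\beta_1}\setminus\eventually\always\safesub{\E}{\beta_2}
\ \subseteq\
\bigl(\eventually\always\safesub{\E}{\beta_1}\setminus\E\bigr)\ \cup\ \bigl(\E\setminus\eventually\always\safesub{\E}{\beta_2}\bigr),
\]
which holds because a run in the left-hand side either satisfies $\E$, and then lies in the second set on the right, or it does not, and then lies in the first set on the right. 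By the union bound it then suffices to show, for every state $s$, that $\Prob{s}(\eventually\always\safesub{\E}{\beta_1}\setminus\E)=0$ and $\Prob{s}(\E\setminus\eventually\always\safesub{\E}{\beta_2})=0$.

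These two equalities are precisely the two clauses of \cref{lem:LZO-2}, applied with $\beta=\beta_1$ and with $\beta=\beta_2$ respectively, \emph{except} that \cref{lem:LZO-2} is stated only for starting states $s\in\safesub{\E}{\beta}$, whereas here we want all $s$. I would close this gap by remarking that the proof of \cref{lem:LZO-2} does not actually use this hypothesis: it only invokes L\'evy's zero-one law for $\Prob{s}$ and \cref{remo1law}, both of which are available for every state $s$. (If one prefers to use the lemma only as stated, one can instead decompose $\Prob{s}$ over the first hitting time of $\safesub{\E}{\beta_i}$: on $\eventually\always\safesub{\E}{\beta_1}$ this time is a.s.\ finite, $\eventually\always\safesub{\E}{\beta_1}\setminus\E$ is tail, and at the hitting state \cref{lem:LZO-2}.1 applies; symmetrically, $\Prob{s}$-almost every run satisfying $\E$ eventually reaches $\safesub{\E}{\beta_2}$, and one decomposes over that hitting time before applying \cref{lem:LZO-2}.2.)

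The only part that is not pure bookkeeping is this extension from $s\in\safesub{\E}{\beta}$ to arbitrary $s$ — in particular, the fact that a run satisfying the tail objective $\E$ almost surely reaches $\safesub{\E}{\beta}$, which itself comes from L\'evy's zero-one law together with \cref{remo1law} (along almost every run of $\E$, the value of $\E$ at the current state, which by \cref{remo1law} equals $\Prob{s}(\E\mid\F_i)(\rho)$, tends to $1$ and hence eventually exceeds $\beta$). With that in hand, the set inclusion and the union bound finish the proof immediately.
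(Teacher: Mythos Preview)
Your proof is correct and follows essentially the same route as the paper: the same set inclusion $A\setminus B\subseteq(A\setminus\E)\cup(\E\setminus B)$ followed by an appeal to both parts of \cref{lem:LZO-2} (which is the same as \cref{lem:01lawG}). The paper simply invokes \cref{lem:01lawG} for arbitrary $s$ without comment; your observation that the hypothesis $s\in\safesub{\E}{\beta}$ is never used in that lemma's proof is exactly right and makes your write-up slightly more careful than the original.
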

\begin{proof}
We have that
\begin{equation}
\begin{aligned}
\eventually\always \, \safesub{\E}{\beta_1} \setminus
\eventually\always \, \safesub{\E}{\beta_2}
\\
&=
[\E \cap (\eventually\always \, \safesub{\E}{\beta_1} \setminus
\eventually\always \, \safesub{\E}{\beta_2})]\\
&
\quad  \cup
[(\eventually\always \, \safesub{\E}{\beta_1} \setminus
\eventually\always \, \safesub{\E}{\beta_2}) \setminus \E]\\
&\subseteq
(\E \setminus \eventually\always \, \safesub{\E}{\beta_2})
\cup
(\eventually\always \, \safesub{\E}{\beta_1} \setminus \E).
\end{aligned}
\end{equation}
Thus  
$\Prob{s}(\eventually\always \, \safesub{\E}{\beta_1} \setminus
\eventually\always \, \safesub{\E}{\beta_2})
\le
\Prob{s}(\E \setminus \eventually\always \, \safesub{\E}{\beta_2})
+
\Prob{s}(\eventually\always \, \safesub{\E}{\beta_1} \setminus \E)
= 0$, by \cref{lem:01lawG}.
\end{proof}

\lemLZOone*
\begin{proof}
Write $x$ for~$\Prob{s}(\always  \, \safesub{\E}{\beta_1})$.
We condition the probability of~$\E$ under~$\always \, \safesub{\E}{\beta_1}$. 
By the law of total probability, we have
\[
\beta_2 \leq \, \Prob{s}(\E) =\Prob{s}(\E \mid \always \, \safesub{\E}{\beta_1}) \cdot x+ \Prob{s}(\E\mid \neg \always  \,\safesub{\E}{\beta_1}) \cdot (1-x).\]
By  Corollary~\ref{col:01lawG},  we have $\Prob{s} (\E \mid \always \safesub{\E}{\beta_1})=1$.
Hence  we have
$\beta_2 \leq \,  x+ \beta_1 \cdot (1-x)$; and $x \geq \frac{\beta_2-\beta_1}{1-\beta_1}$ follows. 
\end{proof}

\section{The Conditioned MDP}
\label{app:lics17}
\newcommand{\cyl}{\mathfrak C}
\newcommand{\classcyl}{\mathcal{C}}
\newcommand{\classmon}{\mathcal{Q}}

In this section we adapt some results from~\cite{KMSW2017}.

We will need the following lemma, which is a variant of~\cite[Lemma~20]{KMShW17}:

\begin{lemma}\label{lem:prefix-ind-optimality}
Let $\formula$ be a tail objective.
Let $\mdp=\mdptuple$ be an MDP, and $s_0 \in \states$, and $\zstrat$ be a strategy with $\probm_{\mdp,s_0,\zstrat}(\formula) = \valueof{\mdp}{s_0}$.
Suppose that $s_0 s_1 \cdots s_n$ for some $n \ge 0$ is a partial run starting in $s_0$ and induced by~$\zstrat$.
Then:
\begin{enumerate}
\item $\valueof{\mdp}{s_n} = \probm_{\mdp,s_0,\zstrat}(\denotationof{\formula}{s_0} \mid s_0 s_1 \cdots s_n \states^\omega)$.
\item If $s_n \in \rstates$ then $\valueof{\mdp}{s_n} = \sum_{s_{n+1} \in \states} \probp(s_n)(s_{n+1}) \cdot \valueof{\mdp}{s_{n+1}}$.
\item If $s_n \in \zstates$ then $\valueof{\mdp}{s_n} = \valueof{\mdp}{s_{n+1}}$ for all $s_{n+1} \in \supp(\sigma(s_0 s_1 \cdots s_n))$.
\end{enumerate}
\end{lemma}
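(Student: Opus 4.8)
The plan is to prove \cref{lem:prefix-ind-optimality} by relating the value of a state along an optimal partial run to a conditional probability, and then exploiting the tail property of $\formula$ together with the fact that $\zstrat$ is optimal to force equalities where one might naively only expect inequalities.

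First I would prove item~(1). Set $\beta \eqdef \probm_{\mdp,s_0,\zstrat}(s_0 s_1 \cdots s_n \states^\omega) > 0$ (it is positive since the partial run is induced by $\zstrat$). Decompose
\[
\valueof{\mdp}{s_0} = \probm_{\mdp,s_0,\zstrat}(\formula)
= \probm_{\mdp,s_0,\zstrat}(\formula \mid s_0 \cdots s_n \states^\omega)\cdot \beta
+ \probm_{\mdp,s_0,\zstrat}(\formula \mid \states^\omega\setminus s_0\cdots s_n\states^\omega)\cdot(1-\beta).
\]
For the first conditional term: since $\formula$ is tail and the residual strategy of $\zstrat$ after the prefix $s_0\cdots s_n$ is \emph{some} strategy from $s_n$, we get $\probm_{\mdp,s_0,\zstrat}(\formula \mid s_0 \cdots s_n \states^\omega) \le \valueof{\mdp}{s_n}$. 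For the second term, the conditional probability is at most $\valueof{\mdp}{s_0}$, because conditioning on the complement of a cylinder of a $\zstrat$-run and then following $\zstrat$ is just following some strategy from $s_0$ (more carefully: $\probm_{\mdp,s_0,\zstrat}(\formula \mid \states^\omega\setminus s_0\cdots s_n\states^\omega)$ is a convex combination, over the other length-$n$ cylinders $C$ that $\zstrat$ can produce, of $\probm_{\mdp,s_0,\zstrat}(\formula\mid C)$, each of which is $\le \valueof{\mdp}{s_0}$ by the same tail argument applied at the endpoint of $C$, whose value is $\le\valueof{\mdp}{s_0}$). Plugging both bounds in:
\[
\valueof{\mdp}{s_0} \le \valueof{\mdp}{s_n}\cdot\beta + \valueof{\mdp}{s_0}\cdot(1-\beta),
\]
which rearranges to $\valueof{\mdp}{s_n}\ge\valueof{\mdp}{s_0}$... but that is the wrong direction, so the cleaner route is: we always have $\probm_{\mdp,s_0,\zstrat}(\formula\mid s_0\cdots s_n\states^\omega)\le\valueof{\mdp}{s_n}$, and separately $\valueof{\mdp}{s_n}\le\valueof{\mdp}{s_0}$ (this last inequality follows from \cref{lem2:reachG01} or more elementarily because $\zstrat$ restricted past the prefix attains $\ge\probm(\formula\mid\text{cyl})$ while the whole thing attains $\valueof{\mdp}{s_0}$; actually the clean statement is: since $\probm_{\mdp,s_0,\zstrat}(\formula)=\valueof{\mdp}{s_0}$ is the supremum, no conditional branch can exceed it on average, and the tail property lets an adversary pump up a low branch). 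The equality $\valueof{\mdp}{s_n}=\probm_{\mdp,s_0,\zstrat}(\formula\mid s_0\cdots s_n\states^\omega)$ then follows because if it were strict ($<$), one could modify $\zstrat$ only on the cylinder $s_0\cdots s_n\states^\omega$ to play near-optimally from $s_n$, strictly increasing $\probm_{\mdp,s_0,\zstrat}(\formula)$ beyond $\valueof{\mdp}{s_0}$, a contradiction; the tail property guarantees this modification does not hurt the prefix.

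Items~(2) and~(3) then follow from~(1) by one more conditioning step. For~(2), with $s_n\in\rstates$, condition the event $\denotationof\formula{s_0}$ (given the cylinder up to $s_n$) on the next state $s_{n+1}$: by~(1) applied at both $s_n$ and each $s_{n+1}$,
\[
\valueof{\mdp}{s_n}
= \probm_{\mdp,s_0,\zstrat}(\formula\mid s_0\cdots s_n\states^\omega)
= \sum_{s_{n+1}}\probp(s_n)(s_{n+1})\cdot\probm_{\mdp,s_0,\zstrat}(\formula\mid s_0\cdots s_n s_{n+1}\states^\omega)
= \sum_{s_{n+1}}\probp(s_n)(s_{n+1})\cdot\valueof{\mdp}{s_{n+1}},
\]
using that $s_0\cdots s_{n+1}$ is again a $\zstrat$-induced partial run whenever $\probp(s_n)(s_{n+1})>0$. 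For~(3), with $s_n\in\zstates$ and $s_{n+1}\in\supp(\zstrat(s_0\cdots s_n))$, the same computation gives $\valueof{\mdp}{s_n}$ as a convex combination of $\valueof{\mdp}{s_{n+1}}$ over the support; but each term satisfies $\valueof{\mdp}{s_{n+1}}\le\valueof{\mdp}{s_n}$ (again by the tail argument: a controller at $s_n$ can do no better than the best successor, and in fact the value at $s_n$ is the sup over successors of their values, so $\ge$ holds, while $\le$ holds because going to $s_{n+1}$ and then playing optimally is one available strategy from $s_n$ — hence $\valueof{\mdp}{s_n}\ge\valueof{\mdp}{s_{n+1}}$). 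A convex combination equal to the maximum of its terms forces every term with positive weight to equal that maximum, giving $\valueof{\mdp}{s_{n+1}}=\valueof{\mdp}{s_n}$ for all $s_{n+1}$ in the support.

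The main obstacle is item~(1): making precise the ``modify $\zstrat$ only on one cylinder'' argument and the claim that conditional values across the other cylinders cannot exceed $\valueof{\mdp}{s_0}$. Both rely essentially on $\formula$ being a tail objective, so that altering the strategy's behaviour on a measure-zero-at-the-boundary prefix event does not change membership in $\formula$ of the resulting runs, and on the measure-theoretic fact that $\probm_{\mdp,s_0,\zstrat}(\formula)$ decomposes as a (possibly infinite) convex combination of conditional probabilities over the length-$n$ cylinders induced by $\zstrat$. Once the tail property is leveraged correctly, the rest is routine conditioning; items~(2) and~(3) are essentially corollaries. Since this lemma is stated as a variant of \cite[Lemma~20]{KMShW17}, I would also simply cite that source for the parts that transfer verbatim and only spell out the (minor) adaptation to arbitrary tail objectives.
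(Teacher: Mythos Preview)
Your proposal is essentially the paper's proof: the $\le$ direction of item~(1) via the residual strategy, the $\ge$ direction by the ``splice in a better strategy on one cylinder'' contradiction, and items~(2)--(3) by one further conditioning step plus the convex-combination-equals-maximum observation. Two points to clean up: first, drop the false aside that $\valueof{\mdp}{s_n}\le\valueof{\mdp}{s_0}$ in general (a random step can raise the value, e.g.\ from a state of value $\tfrac12$ to one of value~$1$); you never actually use it, so just delete it along with the irrelevant reference to \cref{lem2:reachG01}. Second, the opening convex-combination paragraph that you yourself note ``goes the wrong direction'' should simply be removed---start directly with the two inequalities that give item~(1).
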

\begin{proof}
First we show $\probm_{\mdp,s_0,\zstrat}(\denotationof{\formula}{s_0} \mid s_0 s_1 \cdots s_n \states^\omega) \le \valueof{\mdp}{s_n}$.
Define a strategy $\zstrat' : \states^*\zstates \to \dist(S)$ by $\zstrat'(w) = \zstrat(s_0 s_1 \cdots s_{n-1} w)$ for all $w \in \states^*\zstates$.
Then we have $\probm_{\mdp,s_0,\zstrat}(\denotationof{\formula}{s_0} \mid s_0 s_1 \cdots s_n \states^\omega) = \probm_{\mdp,s_n,\zstrat'}(\denotationof{\formula}{s_n}) \le \valueof{\mdp}{s_n}$.

Next we show $\valueof{\mdp}{s_n} \le \probm_{\mdp,s_0,\zstrat}(\denotationof{\formula}{s_0} \mid s_0 s_1 \cdots s_n \states^\omega)$.
Towards a contradiction, suppose that $\valueof{\mdp}{s_n} > \probm_{\mdp,s_0,\zstrat}(\denotationof{\formula}{s_0} \mid s_0 s_1 \cdots s_n \states^\omega)$.
Then, by the definition of $\valueof{\mdp}{s_n}$, there is a strategy~$\zstrat'$ with $\probm_{\mdp,s_n,\zstrat'}(\denotationof{\formula}{s_n}) > \probm_{\mdp,s_0,\zstrat}(\denotationof{\formula}{s_0} \mid s_0 s_1 \cdots s_n \states^\omega)$.
Define a strategy~$\zstrat''$ that plays according to~$\zstrat$;
if and when partial run $s_0 s_1 \cdots s_n$ is played, then $\zstrat''$ acts like~$\zstrat'$ henceforth;
otherwise $\zstrat''$ continues with~$\zstrat$ forever.
Using the tail property we get:
\begin{align*}
& \probm_{\mdp,s_0,\zstrat''}(\denotationof{\formula}{s_0}) \\
& = \probm_{\mdp,s_0,\zstrat''}(\denotationof{\formula}{s_0} \mid s_0 s_1 \cdots s_n \states^\omega) \cdot \probm_{\mdp,s_0,\zstrat''}(s_0 s_1 \cdots s_n \states^\omega) \\
& \ + \probm_{\mdp,s_0,\zstrat''}(\denotationof{\formula}{s_0} \setminus s_0 s_1 \cdots s_n \states^\omega) \\
& = \probm_{\mdp,s_n,\zstrat'}(\denotationof{\formula}{s_n}) \cdot \probm_{\mdp,s_0,\zstrat}(s_0 s_1 \cdots s_n \states^\omega) \\
& \ + \probm_{\mdp,s_0,\zstrat}(\denotationof{\formula}{s_0} \setminus s_0 s_1 \cdots s_n \states^\omega) && \text{def.~of~$\zstrat''$}\\
& > \probm_{\mdp,s_0,\zstrat}(\denotationof{\formula}{s_0} \mid s_0 s_1 \cdots s_n \states^\omega) \cdot \probm_{\mdp,s_0,\zstrat}(s_0 s_1 \cdots s_n \states^\omega) \\
& \ + \probm_{\mdp,s_0,\zstrat}(\denotationof{\formula}{s_0} \setminus s_0 s_1 \cdots s_n \states^\omega) && \text{def.~of~$\zstrat'$}\\
& = \probm_{\mdp,s_0,\zstrat}(\denotationof{\formula}{s_0}) \\
& = \valueof{\mdp}{s_0} && \text{def.~of~$\zstrat$}
\end{align*}
This contradicts the definition of $\valueof{\mdp}{s_0}$.
Hence we have shown item~1.

Towards items 2~and~3, we extend $\zstrat:\states^*\zstates \to \dist(\states)$ to $\zstrat : \states^* \states \to \dist(\states)$ by defining $\zstrat(w s) = \probp(s)$ for $w \in \states^*$ and $s \in \rstates$.
Then we have for all $s_{n+1} \in \states$:
\begin{equation} \label{eq:lem:prefix-ind-optimality}
\probm_{\mdp,s_0,\zstrat}(s_0 s_1 \cdots s_n s_{n+1} \states^\omega)
= \probm_{\mdp,s_0,\zstrat}(s_0 s_1 \cdots s_n \states^\omega)
  \cdot \zstrat(s_0 s_1 \cdots s_n)(s_{n+1})
\end{equation}
Further we have:
\begin{align*}
& \valueof{\mdp}{s_n} \\
& = \probm_{\mdp,s_0,\zstrat}(\denotationof{\formula}{s_0} \mid s_0 s_1 \cdots s_n \states^\omega)
  && \text{by item 1} \\
& = \frac{\probm_{\mdp,s_0,\zstrat}(\denotationof{\formula}{s_0} \cap s_0 s_1 \cdots s_n \states^\omega)}{\probm_{\mdp,s_0,\zstrat}(s_0 s_1 \cdots s_n \states^\omega)} \\
& = \frac{\sum_{s_{n+1} \in \states} \probm_{\mdp,s_0,\zstrat}(\denotationof{\formula}{s_0} \cap s_0 s_1 \cdots s_n s_{n+1} \states^\omega)}{\probm_{\mdp,s_0,\zstrat}(s_0 s_1 \cdots s_n \states^\omega)} \\
& = \frac{1}{\probm_{\mdp,s_0,\zstrat}(s_0 s_1 \cdots s_n \states^\omega)}
\cdot \sum_{s_{n+1} \in \states} \probm_{\mdp,s_0,\zstrat}(s_0 s_1 \cdots s_n s_{n+1} \states^\omega)
\cdot \mbox{} \\
& \hspace{47mm} \mbox{} \cdot \probm_{\mdp,s_0,\zstrat}(\denotationof{\formula}{s_0} \mid s_0 s_1 \cdots s_n s_{n+1} \states^\omega) \\
& = \sum_{s_{n+1} \in \states}  \zstrat(s_0 s_1 \cdots s_n)(s_{n+1})
\cdot \probm_{\mdp,s_0,\zstrat}(\denotationof{\formula}{s_0} \mid s_0 s_1 \cdots s_n s_{n+1} \states^\omega)
 && \text{by~\eqref{eq:lem:prefix-ind-optimality}} \\
& = \sum_{s_{n+1} \in \states}  \zstrat(s_0 s_1 \cdots s_n)(s_{n+1})
\cdot \valueof{\mdp}{s_{n+1}} 
  && \text{by item 1}
\end{align*}
Thus we have shown item~2.
Towards item~3, suppose $s_n \in \zstates$.
Then, by the tail property, $\valueof{\mdp}{s_n} \ge \valueof{\mdp}{s_{n+1}}$ for all $s_{n+1}$ with $s_n \transition s_{n+1}$.
Since $\zstrat(s_0 s_1 \cdots s_n)$ is a probability distribution, the equality chain above shows that $\valueof{\mdp}{s_n} = \valueof{\mdp}{s_{n+1}}$ for all $s_{n+1} \in \supp(\sigma(s_0 s_1 \cdots s_n))$.
Thus we have shown item 3.
\end{proof}

\begin{lemma}
    The conditioned version $\pmdp$ of $\mdp$ w.r.t.\ tail objective $\formula$ (cf. \cref{def:conditionedmdp}
is well defined.
\end{lemma}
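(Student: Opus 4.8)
The plan is to check the one non-trivial point in \cref{def:conditionedmdp}: that for each random state $s \in \prstates$ the map $t \mapsto \pprobp(s)(t)$ is a probability distribution whose support consists of $\ptransition$-successors of $s$. (The partition of $\pstates$ and the relation $\ptransition$ are manifestly well-formed, and $\supp(\pprobp(s)) \subseteq \pstates$ is immediate from $\ptransition \subseteq \pstates \times \pstates$; I will also note at the end that every state of $\pmdp$ has at least one $\ptransition$-successor, so that $\pmdp$ is an MDP in the sense of \cref{sec:prelim}.) Since $s \in \pstates$ forces $\valueof{\mdp}{s} > 0$, the division in $\pprobp(s)(t) = \probp(s)(t)\cdot\valueof{\mdp}{t}/\valueof{\mdp}{s}$ is legitimate and each value is $\ge 0$; so everything reduces to the identity
\[
\sum_{t\,:\,s\,\ptransition\,t} \probp(s)(t)\cdot\valueof{\mdp}{t} \;=\; \valueof{\mdp}{s}\,,
\]
because dividing it by $\valueof{\mdp}{s}$ gives $\sum_t \pprobp(s)(t) = 1$ (in particular the successor set is non-empty), and each summand being $\le 1$ then follows automatically.

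First I would use $s \in \pstates$ to fix a strategy $\zstrat$ with $\probm_{\mdp,s,\zstrat}(\formula) = \valueof{\mdp}{s}$, i.e.\ one optimal from $s$. Applying \cref{lem:prefix-ind-optimality}(2) to the (trivially $\zstrat$-induced) length-$0$ partial run $s$ — which is legitimate since $s \in \rstates$ and $\formula$ is tail — yields $\valueof{\mdp}{s} = \sum_{t\,:\,s\transition t} \probp(s)(t)\cdot\valueof{\mdp}{t}$. Comparing this with the target identity, and recalling that for a random $s$ we have $s \ptransition t$ iff $s \transition t$ and $t \in \pstates$ (the extra clause of $\ptransition$ is imposed only at controlled states), it remains only to show that the summands with $t \notin \pstates$ contribute $0$, i.e.\ that $s \transition t$, $\probp(s)(t) > 0$ and $t \notin \pstates$ together force $\valueof{\mdp}{t} = 0$.

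For that last point, fix $t$ with $s \transition t$ and $\probp(s)(t) > 0$; then $s\,t$ is a partial run induced by $\zstrat$, so \cref{lem:prefix-ind-optimality}(1) gives $\valueof{\mdp}{t} = \probm_{\mdp,s,\zstrat}(\formula \mid s\,t\,\states^\omega)$. The shifted strategy $\zstrat'$ defined by $\zstrat'(w) \eqdef \zstrat(s\,w)$ satisfies (by the usual shift argument, cf.\ the proof of \cref{lem:prefix-ind-optimality}) $\probm_{\mdp,t,\zstrat'}(\formula) = \probm_{\mdp,s,\zstrat}(\formula \mid s\,t\,\states^\omega) = \valueof{\mdp}{t}$, hence $\zstrat'$ is optimal from $t$; thus $\valueof{\mdp}{t} > 0$ would place $t \in \pstates$, so $t \notin \pstates$ indeed forces $\valueof{\mdp}{t} = 0$, completing the identity. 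Running the same argument from a controlled state $s \in \pzstates$, with $t \in \supp(\zstrat(s))$ and \cref{lem:prefix-ind-optimality}(3) giving $\valueof{\mdp}{t} = \valueof{\mdp}{s} > 0$, shows $(s,t) \in \ptransition$, so every state of $\pmdp$ has a successor and $\pmdp$ is a genuine MDP. There is no real obstacle here: the whole argument is bookkeeping on top of \cref{lem:prefix-ind-optimality}, the only mild subtlety being the asymmetry in $\ptransition$ between controlled and random states, which is precisely what makes the displayed sum over $\ptransition$-successors of a random $s$ coincide with the sum over all its $\transition$-successors of positive value.
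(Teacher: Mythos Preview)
Your proposal is correct and follows the same approach as the paper, which simply invokes \cref{lem:prefix-ind-optimality}.2 to conclude that $\pprobp(s)$ is a probability distribution for each $s \in \prstates$. You spell out two details the paper leaves implicit: that summands with $t \notin \pstates$ vanish (because such a $t$ would inherit an optimal strategy from the shift of $\zstrat$, forcing $\valueof{\mdp}{t}=0$), and that every state of $\pmdp$ retains at least one $\ptransition$-successor.
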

\begin{proof}
By Lemma~\ref{lem:prefix-ind-optimality}.2 we have that $\pprobp(s)$ is a probability distribution for all $s \in \prstates$; hence the conditioned MDP~$\pmdp$ is well-defined.
\end{proof}

The following lemma is a reformulation of \cite[Lemma~6]{KMSW2017}:
\begin{lemma} \label{lem:conditioned-construction}
Let $\formula$ be a tail objective.
Let $\mdp=\mdptuple$ be an MDP, and let $\pmdp = \tuple{\pstates,\pzstates,\prstates,\ptransition,\pprobp}$ be its conditioned version.
Then:
\begin{enumerate}
\item
For all $\zstrat \in \zstratset_{\pmdp}$ and all $n \ge 0$ and all $s_0, \ldots, s_n \in \pstates$ with $s_0 \ptransition\, s_1 \ptransition\, \cdots \ptransition\, s_n$:
\[
\probm_{\pmdp,s_0,\zstrat}(s_0 s_1 \cdots s_n \states^\omega) \ = \ 
\probm_{\mdp,s_0,\zstrat}(s_0 s_1 \cdots s_n \states^\omega) \cdot \frac{\valueof{\mdp}{s_n}}{\valueof{\mdp}{s_0}}
\]
\item 
For all $s_0 \in \pstates$ and all $\zstrat \in \zstratset_{\mdp}$ with $\probm_{\mdp,s_0,\zstrat}(\formula) = \valueof{\mdp}{s_0} > 0$ and all measurable $\playset \subseteq s_0 \states^\omega$ we have
$
\probm_{\pmdp,s_0,\zstrat}(\playset) = \probm_{\mdp,s_0,\zstrat}(\playset \mid \denotationof{\formula}{s_0})
$.
\end{enumerate}
\end{lemma}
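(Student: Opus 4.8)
The plan is to prove the two items in turn, with item~1 doing most of the work for item~2. For item~1 I would argue by induction on $n$ on the cylinder probabilities. For item~2 I would first observe that $\zstrat$ (being optimal from $s_0$ in $\mdp$) can legitimately be read as a strategy of $\pmdp$ as far as the runs from $s_0$ are concerned, then verify the claimed identity on cylinders using item~1 together with \cref{lem:prefix-ind-optimality}, and finally extend it to all measurable sets by uniqueness of probability measures on the generating $\pi$-system of cylinders.

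For item~1, the base case $n=0$ is immediate: $\valueof{\mdp}{s_0}>0$ because $s_0\in\pstates$, and both sides equal $1$. For the inductive step, write
\[
\probm_{\pmdp,s_0,\zstrat}(s_0 \cdots s_{n+1}\states^\omega)=\probm_{\pmdp,s_0,\zstrat}(s_0 \cdots s_n\states^\omega)\cdot\bar{\zstrat}(s_0 \cdots s_n)(s_{n+1})
\]
and similarly for $\mdp$, where $\bar{\zstrat}$ extends $\zstrat$ by $\pprobp$ (in $\pmdp$) resp.\ $\probp$ (in $\mdp$) on random states. If $s_n\in\pzstates$ then $\bar{\zstrat}$ agrees with $\zstrat$ there in both MDPs and $(s_n,s_{n+1})\in\ptransition$ forces $\valueof{\mdp}{s_n}=\valueof{\mdp}{s_{n+1}}$ (\cref{def:conditionedmdp}), so the factor $\valueof{\mdp}{s_n}/\valueof{\mdp}{s_0}$ delivered by the induction hypothesis is already the required $\valueof{\mdp}{s_{n+1}}/\valueof{\mdp}{s_0}$. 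If $s_n\in\prstates$ then $\bar{\zstrat}(s_0 \cdots s_n)(s_{n+1})=\pprobp(s_n)(s_{n+1})=\probp(s_n)(s_{n+1})\cdot\valueof{\mdp}{s_{n+1}}/\valueof{\mdp}{s_n}$, and multiplying the induction hypothesis by this both replaces $\valueof{\mdp}{s_n}$ by $\valueof{\mdp}{s_{n+1}}$ in the numerator and supplies the factor $\probp(s_n)(s_{n+1})$ that promotes the length-$n$ cylinder probability in $\mdp$ to the length-$(n+1)$ one. Either way the statement holds for $n+1$. (Equivalently, the ratio of the two cylinder probabilities telescopes along the random transitions, the controlled transitions contributing ratio~$1$ by the value constraint in $\ptransition$.)

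For item~2, note first that since $\zstrat$ is optimal from $s_0$, \cref{lem:prefix-ind-optimality} implies that on any partial run $s_0 s_1 \cdots s_n$ induced by $\zstrat$ with $\valueof{\mdp}{s_n}>0$ all the states lie in $\pstates$ (each has positive value and the corresponding tail of $\zstrat$ is an optimal strategy from it) and all controlled transitions used lie in $\ptransition$ (\cref{lem:prefix-ind-optimality}.3), so such a prefix is a prefix of $\pmdp$ induced by $\zstrat$; moreover a run of $\zstrat$ from $s_0$ can only leave $\pstates$ via a random transition into a value-$0$ state, and by \cref{lem:prefix-ind-optimality}.1 such runs carry no mass under $\denotationof{\formula}{s_0}$. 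Hence $\zstrat$ restricts to a $\pmdp$-strategy for the purposes of $\probm_{\pmdp,s_0,\zstrat}$. Now fix a cylinder $w\states^\omega$ with $w=s_0 s_1 \cdots s_n$. If $w$ is not induced by $\zstrat$, or if $\valueof{\mdp}{s_n}=0$ (so $s_n\notin\pstates$), then $\probm_{\pmdp,s_0,\zstrat}(w\states^\omega)=0=\probm_{\mdp,s_0,\zstrat}(w\states^\omega\cap\denotationof{\formula}{s_0})$. Otherwise, using $\probm_{\mdp,s_0,\zstrat}(\denotationof{\formula}{s_0}\mid w\states^\omega)=\valueof{\mdp}{s_n}$ (\cref{lem:prefix-ind-optimality}.1) and $\probm_{\mdp,s_0,\zstrat}(\denotationof{\formula}{s_0})=\valueof{\mdp}{s_0}$,
\[
\probm_{\mdp,s_0,\zstrat}(w\states^\omega\mid\denotationof{\formula}{s_0})=\probm_{\mdp,s_0,\zstrat}(w\states^\omega)\cdot\frac{\valueof{\mdp}{s_n}}{\valueof{\mdp}{s_0}}=\probm_{\pmdp,s_0,\zstrat}(w\states^\omega)
\]
by item~1. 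The cylinders form a $\pi$-system generating the Borel $\sigma$-algebra on $s_0\states^\omega$, and $\playset\mapsto\probm_{\pmdp,s_0,\zstrat}(\playset)$ and $\playset\mapsto\probm_{\mdp,s_0,\zstrat}(\playset\mid\denotationof{\formula}{s_0})$ are probability measures that agree on it, so Dynkin's $\pi$--$\lambda$ theorem gives the identity for every measurable $\playset\subseteq s_0\states^\omega$.

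I expect the only genuine obstacle to be the bookkeeping in item~2 needed to treat $\zstrat\in\zstratset_{\mdp}$ as a strategy of $\pmdp$: one must check that the part of $\mdp$ reachable under $\zstrat$ from $s_0$ with positive value sits inside $\pmdp$ and that the value-$0$ escapes are $\denotationof{\formula}{s_0}$-null, which is exactly where \cref{lem:prefix-ind-optimality} and the tail property of $\formula$ enter. The induction in item~1 and the measure-theoretic extension are otherwise routine.
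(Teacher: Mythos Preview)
Your proposal is correct and follows essentially the same approach as the paper: item~1 by induction on~$n$ with the same two-case analysis on $s_n\in\prstates$ versus $s_n\in\pzstates$, and item~2 by first observing via \cref{lem:prefix-ind-optimality} that $\zstrat$ is a legitimate $\pmdp$-strategy, then verifying the identity on cylinders using item~1 and \cref{lem:prefix-ind-optimality}.1, and finally extending to all measurable sets (you invoke Dynkin's $\pi$--$\lambda$ theorem where the paper cites uniqueness of the Carath\'eodory extension, which amounts to the same thing).
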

\begin{proof}
We prove item~1 by induction on~$n$.
For $n=0$ it is trivial.
For the step, suppose that the equality in item~1 holds for some~$n$.
If $s_n \in \prstates$ then we have:
\begin{align*}
& \probm_{\pmdp,s_0,\zstrat}(s_0 s_1 \cdots s_n s_{n+1} \states^\omega) \\
& = \probm_{\pmdp,s_0,\zstrat}(s_0 s_1 \cdots s_n \states^\omega) \cdot \pprobp(s_n)(s_{n+1}) \\
& = \probm_{\mdp,s_0,\zstrat}(s_0 s_1 \cdots s_n \states^\omega) \cdot \frac{\valueof{\mdp}{s_n}}{\valueof{\mdp}{s_0}} \cdot \pprobp(s_n)(s_{n+1}) && \text{ind.\ hyp.} \\
& = \probm_{\mdp,s_0,\zstrat}(s_0 s_1 \cdots s_n \states^\omega) \cdot \frac{\valueof{\mdp}{s_n}}{\valueof{\mdp}{s_0}} \cdot \probp(s_n)(s_{n+1}) \cdot \frac{\valueof{\mdp}{s_{n+1}}}{\valueof{\mdp}{s_n}} && \text{def.~of~$\pprobp$} \\
& = \probm_{\mdp,s_0,\zstrat}(s_0 s_1 \cdots s_n s_{n+1} \states^\omega) \cdot \frac{\valueof{\mdp}{s_{n+1}}}{\valueof{\mdp}{s_0}}
\end{align*}
Let now $s_n \in \pzstates$.
If $\zstrat(s_0 s_1 \ldots s_n)(s_{n+1}) = 0$ then the inductive step is trivial.
Otherwise we have:
\begin{align*}
& \probm_{\pmdp,s_0,\zstrat}(s_0 s_1 \cdots s_n s_{n+1} \states^\omega) \\
& = \probm_{\pmdp,s_0,\zstrat}(s_0 s_1 \cdots s_n \states^\omega) \cdot \zstrat(s_0 s_1 \ldots s_n)(s_{n+1}) \\
& = \probm_{\mdp,s_0,\zstrat}(s_0 s_1 \cdots s_n \states^\omega) \cdot \frac{\valueof{\mdp}{s_n}}{\valueof{\mdp}{s_0}} \cdot \zstrat(s_0 s_1 \ldots s_n)(s_{n+1}) && \text{ind.\ hyp.} \\
& = \probm_{\mdp,s_0,\zstrat}(s_0 s_1 \cdots s_n \states^\omega) \cdot \frac{\valueof{\mdp}{s_{n+1}}}{\valueof{\mdp}{s_0}} \cdot \zstrat(s_0 s_1 \ldots s_n)(s_{n+1}) && \text{def.~of~$\mathord{\ptransition}$} \\
& = \probm_{\mdp,s_0,\zstrat}(s_0 s_1 \cdots s_n s_{n+1} \states^\omega) \cdot \frac{\valueof{\mdp}{s_{n+1}}}{\valueof{\mdp}{s_0}}
\end{align*}
This completes the inductive step, and we have proved item~1.

Towards item~2, let $s_0 \in \pstates$ and $\zstrat \in \zstratset_{\mdp}$ such that $\probm_{\mdp,s_0,\zstrat}(\formula) = \valueof{\mdp}{s_0} > 0$.
Observe that $\zstrat$ can be applied also in the MDP~$\pmdp$.
Indeed, for any $s \in \pzstates$, if $t$ is a possible successor state of~$s$ under~$\zstrat$, then $\valueof{\mdp}{s} = \valueof{\mdp}{t}$ by Lemma~\ref{lem:prefix-ind-optimality}.3 and thus $t \in \pstates$.

Let again $n \ge 0$ and $s_0, s_1, \ldots, s_n \in \states$.
\begin{itemize}
\item
Suppose $s_0 s_1 \cdots s_n$ is a partial run in~$\pmdp$ induced by~$\zstrat$.
Then we have:
\begin{align*}
& \probm_{\pmdp,s_0,\zstrat}(s_0 s_1 \cdots s_n \states^\omega) \cdot \probm_{\mdp,s_0,\zstrat}(\formula) \\
& = \probm_{\mdp,s_0,\zstrat}(s_0 s_1 \cdots s_n \states^\omega) \cdot \frac{\valueof{\mdp}{s_n}}{\valueof{\mdp}{s_0}}
\cdot \probm_{\mdp,s_0,\zstrat}(\formula)
 && \text{item~1} \\
& = \probm_{\mdp,s_0,\zstrat}(s_0 s_1 \cdots s_n \states^\omega) \cdot \valueof{\mdp}{s_n}
 && \text{assumption~on~$\zstrat$} \\
& = \probm_{\mdp,s_0,\zstrat}(s_0 s_1 \cdots s_n \states^\omega) \cdot \probm_{\mdp,s_0,\zstrat}(\denotationof{\formula}{s_0} \mid s_0 s_1 \cdots s_n \states^\omega)
 && \text{Lemma~\ref{lem:prefix-ind-optimality}.1} \\
& = \probm_{\mdp,s_0,\zstrat}(\denotationof{\formula}{s_0} \cap s_0 s_1 \cdots s_n \states^\omega)
\end{align*}
\item
Suppose $s_0 s_1 \cdots s_n$ is not a partial run in~$\pmdp$ induced by~$\zstrat$.
Hence $\probm_{\pmdp,s_0,\zstrat}(s_0 s_1 \cdots s_n \states^\omega) = 0$.
If $s_0 s_1 \cdots s_n$ is not a partial run in~$\mdp$ induced by~$\zstrat$ then $\probm_{\mdp,s_0,\zstrat}(s_0 s_1 \cdots s_n \states^\omega) = 0$.
Otherwise, since $\zstrat$ is optimal, there is $i \le n$ with $\valueof{\mdp}{s_i} = 0$, hence $\probm_{\mdp,s_0,\zstrat}(\denotationof{\formula}{s_0} \cap s_0 s_1 \cdots s_n \states^\omega)$.
In either case we have
$
\probm_{\pmdp,s_0,\zstrat}(s_0 s_1 \cdots s_n \states^\omega) \cdot \probm_{\mdp,s_0,\zstrat}(\formula)
= 0 
= \probm_{\mdp,s_0,\zstrat}(\denotationof{\formula}{s_0} \cap s_0 s_1 \cdots s_n \states^\omega)
$.
\end{itemize}
In either case we have the equality $\probm_{\pmdp,s_0,\zstrat}(\playset) = \probm_{\mdp,s_0,\zstrat}(\playset \mid \denotationof{\formula}{s_0})$ for cylinders $\playset = s_0 s_1 \cdots s_n \states^\omega$.
Since probability measures extend uniquely from cylinders~\cite{billingsley-1995-probability}, the equality holds for all measurable $\playset \subseteq s_0 \states^\omega$.
Thus we have shown item~2.
\end{proof}

The following lemma is~\cite[Lemma~7]{KMSW2017}.
\begin{lemma}\label{lem:MDP-as-uniform}
Let $\mdp=\mdptuple$ be an MDP.
Let $\formula$ be an objective that is prefix-independent in~$\{\mdp\}$.
Suppose that for any $s \in \states$ and any strategy~$\zstrat$ with $\probm_{\mdp,\state,\zstrat}(\formula) = 1$ there exists an MD-strategy~$\zstrat'$ with $\probm_{\mdp,\state,\zstrat'}(\formula) = 1$.
Then there is an MD-strategy~$\zstrat'$ such that for all $\state \in \states$:
\[
\big(
\exists \zstrat \in \zstratset.\,
\probm_{\mdp,\state,\zstrat}(\formula) = 1
\big)
\quad\Longrightarrow\quad
\probm_{\mdp,\state,\zstrat'}(\formula) = 1
\]
\end{lemma}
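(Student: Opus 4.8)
I would start by letting $W \eqdef \{s \in \states \mid \exists\,\zstrat.\ \probm_{\mdp,s,\zstrat}(\formula) = 1\}$ denote the set of almost-surely winning states; the claim is vacuous if $W = \emptyset$, so assume $W \neq \emptyset$ and, using countability of $\states$, fix an enumeration $W = \{w_1, w_2, \dots\}$. For each $s \in W$ fix an MD strategy $\zstrat_s$ with $\probm_{\mdp,s,\zstrat_s}(\formula) = 1$, which exists by hypothesis. The first ingredient is a \emph{reachability-closure} fact: for every $s \in W$, if $t$ is reachable from $s$ along a partial run induced by $\zstrat_s$, then $\probm_{\mdp,t,\zstrat_s}(\formula) = 1$ (in particular $t \in W$). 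To see this, observe that such a partial run $s = s_0 \cdots s_n = t$ has positive probability under $\zstrat_s$, so conditioning on its cylinder is legitimate and yields conditional probability $1$ for $\formula$; since $\zstrat_s$ is memoryless, that conditional measure, read on the future, is exactly $\probm_{\mdp,t,\zstrat_s}$, and prefix-independence of $\formula$ matches up the corresponding events --- this is the shifting argument behind \cref{lem:prefix-ind-optimality}.

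Next I would define the uniform strategy. For $s \in W$ let $f(s)$ be the least $i$ such that $s$ is reachable from $w_i$ along a partial run induced by $\zstrat_{w_i}$; this is well defined because $f(w_j) \le j$. Set $\zstrat'(s) \eqdef \zstrat_{w_{f(s)}}(s)$ for controlled $s \in W$, and let $\zstrat'$ be arbitrary on $\zstates \setminus W$. Along any run $s_0 s_1 \cdots$ induced by $\zstrat'$ with $s_0 \in W$ the indices $f(s_0) \ge f(s_1) \ge \cdots$ are non-increasing: at a controlled $s_n$ the chosen successor $\zstrat_{w_{f(s_n)}}(s_n)$ extends, by one step, a partial run from $w_{f(s_n)}$ induced by $\zstrat_{w_{f(s_n)}}$, and at a random $s_n$ every positive-probability successor does the same. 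Hence each $f(s_n) \le f(s_0) < \infty$, so every such run stays in $W$ and $\zstrat'$ is well defined all along it.

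It then remains to show $\probm_{\mdp,s_0,\zstrat'}(\formula) = 1$ for all $s_0 \in W$, which I would prove by strong induction on $j \eqdef f(s_0)$. Let $\tau$ be the stopping time ``first $n$ with $f(s_n) < j$''. Decomposing $\{\tau < \infty\}$ along the prefixes $s_0 \cdots s_m$ realising $\tau = m$: conditioned on such a prefix the future follows $\probm_{\mdp,s_m,\zstrat'}$ (as $\zstrat'$ is MD), and $f(s_m) < j$, so the induction hypothesis and prefix-independence give conditional probability $1$ for $\formula$; summing over $m$ yields $\probm_{\mdp,s_0,\zstrat'}(\formula \cap \{\tau < \infty\}) = \probm_{\mdp,s_0,\zstrat'}(\tau < \infty)$. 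On $\{\tau = \infty\}$ the run stays forever in $\{s \in W : f(s) = j\}$, where $\zstrat'$ by construction makes exactly the choices of $\zstrat_{w_j}$; comparing cylinder probabilities inside that set and passing to the limit shows that $\probm_{\mdp,s_0,\zstrat'}$ and $\probm_{\mdp,s_0,\zstrat_{w_j}}$ assign equal mass to every measurable subset of $\{\tau = \infty\}$. Since $f(s_0) = j$ means $s_0$ is reachable from $w_j$ under $\zstrat_{w_j}$, the reachability-closure fact gives $\probm_{\mdp,s_0,\zstrat_{w_j}}(\formula) = 1$, hence $\probm_{\mdp,s_0,\zstrat'}(\formula \cap \{\tau = \infty\}) = \probm_{\mdp,s_0,\zstrat'}(\tau = \infty)$. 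Adding the two contributions gives $\probm_{\mdp,s_0,\zstrat'}(\formula) = 1$, and the base case $j = 1$ is just the instance where $\{\tau < \infty\} = \emptyset$.

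The step I expect to be the main obstacle is the treatment of $\{\tau = \infty\}$: this is not a cylinder event, so one cannot condition on it directly, and one must instead write it as a decreasing limit of unions of cylinders lying inside $\{f = j\}$ and invoke uniqueness of the Carath\'eodory extension to transport the value $1$ from $\probm_{\mdp,s_0,\zstrat_{w_j}}$ to $\probm_{\mdp,s_0,\zstrat'}$. The surrounding bookkeeping --- well-definedness and monotonicity of $f$, and the partition of $\{\tau < \infty\}$ into prefix-cylinders --- is routine by comparison.
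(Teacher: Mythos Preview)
Your proof is correct and follows essentially the same plastering idea as the paper: enumerate the almost-surely winning states, take an MD witness from each, and at every state play the witness with the smallest index whose reachability set contains it. The organizational difference is that the paper iteratively fixes each $\zstrat_i$ on its reachable set $U_i$ to form modified MDPs $\mdp_1, \mdp_2, \ldots$, so that by construction $U_i$ is closed under $\zstrat'$ and the correctness $\probm_{\mdp,s,\zstrat'}(\formula \cap U_i^\omega)=1$ follows without induction; your version works directly in $\mdp$, so your index $f$ may strictly decrease along a run, which you handle cleanly with the stopping-time decomposition and strong induction on $f(s_0)$. Both routes are sound; yours makes the measure-theoretic step for $\{\tau=\infty\}$ explicit (and you correctly identify it as the place needing care), while the paper's closure of $U_i$ under $\zstrat'$ sidesteps that step at the cost of threading the modified MDPs through the construction.
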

\begin{proof}
We can assume that all states are almost-surely winning, since in order to achieve an almost-sure winning objective, the player must forever remain in almost-surely winning states.
So we need to define an MD-strategy~$\zstrat'$ so that for all $s \in \states$ we have $\probm_{\mdp,\state,\zstrat'}(\formula) = 1$.

Fix an arbitrary state $s_1 \in \states$.
By assumption there is an MD-strategy~$\zstrat_1$ with $\probm_{\mdp,s_1,\zstrat_1}(\formula) = 1$.
Let $U_1 \subseteq \states$ be the set of states that occur in plays that both start from~$s_1$ and are induced by~$\zstrat_1$.
We have $\probm_{\mdp,s_1,\zstrat_1}(\denotationof{\formula}{s_1} \cap U_1^\omega) = 1$.
In fact, for any $s \in U_1$ and any strategy~$\zstrat$ that agrees with~$\zstrat_1$ on~$U_1$ we have $\probm_{\mdp,s,\zstrat}(\denotationof{\formula}{s} \cap U_1^\omega) = 1$.

If $U_1=\states$ we are done.
Otherwise, consider the MDP~$\mdp_1$ obtained from~$\mdp$ by fixing~$\zstrat_1$ on~$U_1$ (i.e., in~$\mdp_1$ we can view the states in~$U_1$ as random states).
We argue that, in~$\mdp_1$, for any state~$s$ there is an MD-strategy~$\zstrat_1'$ with $\probm_{\mdp_1,\state,\zstrat_1'}(\formula) = 1$.
Indeed, let $s \in \states$ be any state.
Recall that there is an MD-strategy~$\zstrat$ with $\probm_{\mdp,\state,\zstrat}(\formula) = 1$.
Let $\zstrat_1'$ be the MD-strategy obtained by restricting~$\zstrat$ to the non-$U_1$ states (recall that the $U_1$ states are random states in~$\mdp_1$).
This strategy~$\zstrat_1'$ almost surely generates a run that \emph{either} satisfies~$\formula$ without ever entering~$U_1$ \emph{or} at some point enters~$U_1$.
In the latter case, $\formula$ is satisfied almost surely: this follows from prefix-independence and the fact that $\zstrat_1'$ agrees with~$\zstrat_1$ on~$U_1$.
We conclude that $\probm_{\mdp_1,\state,\zstrat_1'}(\formula) = 1$.

Let $s_2 \in \states \setminus U_1$.
We repeat the argument from above, with $s_2$ instead of~$s_1$, and with $\mdp_1$ instead of~$\mdp$.
This yields an MD-strategy~$\zstrat_2$ and a set $U_2 \ni s_2$ with $\probm_{\mdp_1,s_2,\zstrat_2}(\denotationof{\formula}{s_2} \cap U_2^\omega) = 1$.
In fact, for any $s \in U_2$ and any strategy~$\zstrat$ that agrees with~$\zstrat_2$ on~$U_2$ and with~$\zstrat_1$ on~$U_1$ we have $\probm_{\mdp,s,\zstrat}(\denotationof{\formula}{s} \cap U_2^\omega) = 1$.

If $U_1 \cup U_2=\states$ we are done.
Otherwise we continue in the same manner, and so forth.
Since $\states$ is countable, we can pick $s_1, s_2, \ldots$ to have $\bigcup_{i \ge 1} U_i= \states$.
Define an MD-strategy~$\zstrat'$ such that for any $s \in \zstates$ we have $\zstrat'(s) = \zstrat_i(s)$ for the smallest~$i$ with $s \in U_i$.
Thus, if $s \in U_i$, we have $\probm_{\mdp,s,\zstrat'}(\formula) \ge \probm_{\mdp,s,\zstrat'}(\denotationof{\formula}{s} \cap U_i^\omega) = 1$.
\end{proof}

The following lemma is~\cite[Lemma~8]{KMShW17}.
\begin{lemma} \label{lem:measure-theory}
Let $\states$ be countable and $s \in \states$.
Call a set of the form $s w \states^\omega$ for $w \in \states^*$ a \emph{cylinder}.
Let $\probm, \probm'$ be probability measures on $s S^\omega$ defined in the standard way, i.e., first on cylinders and then extended to all measurable sets $\playset \subseteq s \states^\omega$.
Suppose there is $x \ge 0$ such that $x \cdot \probm(\cyl) \le \probm'(\cyl)$ for all cylinders~$\cyl$.
Then $x \cdot \probm(\playset) \le \probm'(\playset)$ holds for all measurable $\playset \subseteq s S^\omega$.
\end{lemma}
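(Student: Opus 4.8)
The plan is to reduce the claim, via the monotone class theorem, from cylinders to all measurable sets. Write $\mathcal{F}$ for the $\sigma$-algebra of measurable subsets of $s\states^\omega$ and let $\classcyl$ be the algebra of sets generated by the cylinders; since the cylinders generate $\mathcal{F}$ we have $\sigma(\classcyl)=\mathcal{F}$. Set
\[
\classmon \eqdef \{\playset \in \mathcal{F} \mid x\cdot\probm(\playset) \le \probm'(\playset)\}.
\]
First I would check that $\classmon$ is a monotone class. If $\playset_1\subseteq\playset_2\subseteq\cdots$ all lie in $\classmon$ then, by continuity of measures from below, $x\probm(\bigcup_n\playset_n)=\lim_n x\probm(\playset_n)\le\lim_n\probm'(\playset_n)=\probm'(\bigcup_n\playset_n)$, so $\bigcup_n\playset_n\in\classmon$; the decreasing case is symmetric, using continuity from above (legitimate since $\probm$ and $\probm'$ are finite).

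Next I would show $\classcyl\subseteq\classmon$. A member $A$ of $\classcyl$ is a finite Boolean combination of cylinders, hence open in the product topology on $s\states^\omega$; as the cylinders form a countable basis of this space, every open subset of $s\states^\omega$ is a countable union of cylinders, so $A=\bigcup_{i\in\N}\cyl_i$ for suitable cylinders $\cyl_i$. Put $A_n\eqdef\cyl_1\cup\cdots\cup\cyl_n$. Any two cylinders are either disjoint or nested, so after discarding redundant ones $A_n$ is a finite \emph{disjoint} union of cylinders $\mathfrak{D}_1,\dots,\mathfrak{D}_m$, and hence by additivity together with the hypothesis $x\probm(A_n)=\sum_{j=1}^m x\probm(\mathfrak{D}_j)\le\sum_{j=1}^m\probm'(\mathfrak{D}_j)=\probm'(A_n)$. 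Since $A_n\uparrow A$, continuity from below yields $x\probm(A)\le\probm'(A)$, i.e.\ $A\in\classmon$.

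Finally, the monotone class theorem states that the smallest monotone class containing the algebra $\classcyl$ is $\sigma(\classcyl)=\mathcal{F}$; since $\classmon$ is a monotone class containing $\classcyl$, it follows that $\mathcal{F}\subseteq\classmon$, which is precisely the statement of the lemma. (Incidentally $x\le1$ follows from the hypothesis applied to the cylinder $s\states^\omega$, but this fact is not needed above.)

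The only point that is not entirely routine is the structural fact that every set in the generated algebra $\classcyl$ is a countable union of cylinders — this is what lets us promote the hypothesis from cylinders to all of $\classcyl$. It is trivial when $\states$ is finite; for countably infinite $\states$ it follows from the topological remark above, or can be proved directly by induction on the Boolean structure of the set, using that both the intersection of two cylinders and the complement of a cylinder (within $s\states^\omega$) are countable unions of cylinders. Everything else — continuity of finite measures and the monotone class theorem — is standard. The reason a more direct $\pi$--$\lambda$ argument on $\classmon$ itself does not work is that $\classmon$ need not be closed under complement, which is exactly why the argument is routed through an algebra and a monotone class.
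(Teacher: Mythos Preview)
Your proof is correct and follows the same overall strategy as the paper: define $\classmon$, show it is a monotone class, show it contains the algebra generated by cylinders, and apply the monotone class theorem. The only difference is in how you establish that the generated algebra lies in $\classmon$. The paper simply asserts (citing Billingsley) that every element of the algebra is a \emph{finite} disjoint union of cylinders and then uses additivity; you instead observe that cylinders are clopen in the product topology, so every element of the algebra is open and hence a \emph{countable} union of cylinders, and then pass to the limit. Your route is in fact the more careful one here: when $\states$ is countably infinite, the complement of a single cylinder $s s_1 \states^\omega$ in $s\states^\omega$ is an \emph{infinite} disjoint union of cylinders, so the algebra is not literally the set of finite disjoint unions of cylinders in the paper's sense, and your argument sidesteps this issue cleanly. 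Your closing remarks about why a $\pi$--$\lambda$ argument on $\classmon$ directly would fail are also apt.
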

\begin{proof}
Let $\classcyl = \{\cyl \subseteq s \states^\omega \mid \cyl \text{ cylinder}\}$ denote the class of cylinders.
This class generates an algebra $\classcyl_* \supseteq \classcyl$, which is the closure of~$\classcyl$ under finite union and complement.
The classes $\classcyl$ and $\classcyl_*$ generate the same $\sigma$-algebra $\sigma(\classcyl)$.
The class~$\classcyl_*$ is the set of finite disjoint unions of cylinders~\cite[Section~2]{billingsley-1995-probability}.
Hence $x \cdot \probm(\playset) \le \probm'(\playset)$ for all $\playset \in \classcyl_*$.

Define
\[
 \classmon = \{\playset \in \sigma(\classcyl) \mid x \cdot \probm(\playset) \le \probm'(\playset) \}\,.
\]
We have $\classcyl \subseteq \classcyl_* \subseteq \classmon \subseteq \sigma(\classcyl)$.
We show that $\classmon$ is a \emph{monotone} class, i.e., if $\playset_1, \playset_2, \ldots \in \classmon$, then $\playset_1 \subseteq \playset_2 \subseteq \cdots$ implies $\bigcup_i \playset_i \in \classmon$, and $\playset_1 \supseteq \playset_2 \supseteq \cdots$ implies $\bigcap_i \playset_i \in \classmon$.
Suppose $\playset_1, \playset_2, \ldots \in \classmon$ and $\playset_1 \subseteq \playset_2 \subseteq \cdots$.
Then:
\begin{align*}
x \cdot \probm\Big(\bigcup_i \playset_i\Big) 
& = \sup_i x \cdot \probm(\playset_i) && \text{measures are continuous from below} \\
& \le \sup_i \probm'(\playset_i) && \text{definition of~$\classmon$} \\
& = \probm'\Big(\bigcup_i \playset_i\Big) && \text{measures are continuous from below}
\end{align*}
So $\bigcup_i \playset_i \in \classmon$.
Using the fact that measures are continuous from above, one can similarly show that if $\playset_1, \playset_2, \ldots \in \classmon$ and $\playset_1 \supseteq \playset_2 \supseteq \cdots$ then $\bigcap_i \playset_i \in \classmon$.
Hence $\classmon$ is a monotone class.

Now the \emph{monotone class theorem} (see, e.g., \cite[Theorem~3.4]{billingsley-1995-probability}) implies that $\sigma(\classcyl) \subseteq \classmon$, thus $\classmon = \sigma(\classcyl)$.
Hence $x \cdot \probm(\playset) \le \probm'(\playset)$ for all $\playset \in \sigma(\classcyl)$.
\end{proof}

The following theorem is a variant of~\cite[Theorem~5]{KMSW2017}.

\reductiontoas*
\begin{proof}
Towards item~1, let $s \in \pstates$.
By the definition of~$\pstates$, there is a strategy~$\zstrat$ with $\probm_{\mdp,s,\zstrat}(\formula) = \valueof{\mdp}{s} > 0$.
By Lemma~\ref{lem:conditioned-construction}.2, we have $\probm_{\pmdp,s,\zstrat}(\formula) = 1$, as desired.

It remains to prove item~2.
Suppose that for any $s \in \pstates$ there exists an MD-strategy~$\zstrat''$ with $\probm_{\pmdp,\state,\zstrat''}(\formula) = 1$.
By Lemma~\ref{lem:MDP-as-uniform}, it follows that there is an MD-strategy~$\zstrat'$ with $\probm_{\pmdp,s,\zstrat'}(\formula) = 1$ for all $s \in \pstates$.
We show that this strategy~$\zstrat'$ satisfies the property claimed in the statement of the theorem.

To this end, let $n \ge 0$ and $s_0, s_1, \ldots, s_n \in \states$.
If $s_0 s_1 \cdots s_n$ is a partial run in~$\pmdp$ then, by Lemma~\ref{lem:conditioned-construction}.1, 
\[
\probm_{\pmdp,s_0,\zstrat'}(s_0 s_1 \cdots s_n \states^\omega) 
\ = \
\probm_{\mdp,s_0,\zstrat'}(s_0 s_1 \cdots s_n \states^\omega) \cdot \frac{\valueof{\mdp}{s_n}}{\valueof{\mdp}{s_0}} \,,
\]
and thus, as $\valueof{\mdp}{s_n} \le 1$,
\[
\valueof{\mdp}{s_0} \cdot \probm_{\pmdp,s_0,\zstrat'}(s_0 s_1 \cdots s_n \states^\omega) 
\ \le \ \probm_{\mdp,s_0,\zstrat'}(s_0 s_1 \cdots s_n \states^\omega)\,.
\]
If $s_0 s_1 \cdots s_n$ is not a partial run in~$\pmdp$ then $\probm_{\pmdp,s_0,\zstrat'}(s_0 s_1 \cdots s_n \states^\omega) = 0$ and the previous inequality holds as well.
Therefore, by Lemma~\ref{lem:measure-theory}, we get for all measurable sets $\playset \subseteq s_0 \states^\omega$:
\[
\valueof{\mdp}{s_0} \cdot \probm_{\pmdp,s_0,\zstrat'}(\playset) \le \probm_{\mdp,s_0,\zstrat'}(\playset)
\]
In particular, since $\probm_{\pmdp,s_0,\zstrat'}(\formula) = 1$, we obtain $\valueof{\mdp}{s_0} \le \probm_{\mdp,s_0,\zstrat'}(\formula)$.
The converse inequality $\probm_{\mdp,s_0,\zstrat'}(\formula) \le \valueof{\mdp}{s_0}$ holds by the definition of~$\valueof{\mdp}{s_0}$, hence we conclude $\probm_{\mdp,s_0,\zstrat'}(\formula) = \valueof{\mdp}{s_0}$.
\end{proof}

\section{Missing proofs in Section~\ref{epsParity}}\label{epsParityapp}
We first recall our results~\cite{KMST:ICALP2019} on the strategy complexity of B\"uchi objectives:

\theoBuchiIcalp*

We will prove that

\claimepsoptimaltaue*
\begin{claimproof}
Consider the original MDP~$\mdp$. Given a set $B\subseteq L$ in~$\?L$, we
use $\mathrm{project}(B) \eqdef \{s\mid (s,b)\in B, b\in\{0,1\}\}$ to project the set into~$\mdp$.

We first slightly modify~$\mdp$ to obtain~$\mdp'$. The  modification guarantees
that, for all states~$s$ and  runs~$\rho$ of~$\M'$,
\[s\rho \in \bigdenotationof{\always\eventually \colorset{\states}{= e}{} }{} \quad \text{ if and only if } \quad s\rho \in \bigdenotationof{\always\eventually \colorset{S}{= e}{} \wedge
\always \colorset{S}{\leq  e }{} \wedge \always \neg \mathrm{project}(\Fix{e})}{}.\]
We redirect all  out-going transitions of  states $s'\in \mathrm{project}(\Fix{e})$ or $s'$
with $\coloring(s')>e$  to an infinite chain~ $q_0 q_1 q_2 \cdots$ of controlled states
where $\coloring(q_i)=1$ and~$q_i \transition{} q_{i+1}$. 
We also update the color of all states~$s$ with $\coloring(s)<e$ to~$1$.

The objective $\always\eventually \colorset{\states}{= e}{}$ is 
a B\"uchi Objective in~$\M'$.
By  Theorem~\ref{theo-buchi-icalp}, given 
the \emph{finite} set~$\mathrm{project}(\ini_e)$ of initial states,
there exists  a deterministic 1-bit strategy~$\sigma$ in $\mdp'$ that is
$(\alpha-\beta)$-optimal w.r.t.~$\always\eventually \colorset{\states}{= e}{}$ for
every state $s\in \mathrm{project}(\ini_e)$ (with the memory bit initially set to~$0$).  

Since the fixed choices  in~$\?L_{e-2}$ are only in the $\Fix{e-2}$-region,
strategy~$\sigma$ can be translated in a natural way
to  a deterministic memoryless strategy~$\sigma'$ in~$\?L_{e-2}$:
For a state~$s\in \zstates$ and $b \in \{0,1\}$,
if $\sigma$  chooses the successor state~$s'$,  by taking a transition $t=(s,s')$,
and updates the bit to~$b'$, we define $\sigma'((s,b)) \eqdef (t,b)$ and $\sigma'((t,b))=(s',b')$.
For a random state~$s\in \rstates$ and $b \in \{0,1\}$,
if the strategy
$\sigma$ updates the memory bit to~$b'$ in case the random successor resolves to~$s'$, by taking a transition $t=(s,s')$ , 
we define $\sigma'((t,b)) \eqdef (s',b')$.
Recall that the bit is initially set to~$0$ in $\sigma$. Consequently,
 the strategy $\sigma'$ is 
$(\alpha-\beta)$-optimal  for~$\theta_e$  from  every state~$\ell\in \ini_e$ in the layered MDP~$\?L_{e-2}$.
\end{claimproof}

\bigskip

We next prove the main technical claim in Section~\ref{epsParity}: 

\claimepsoptimalpi*
\begin{claimproof}

Recall the definition of~$\pi$: it starts by following  $\sigma_{\ell_0}$.   If it ever enters $\closure{\fix_e}$ then we ensure that it 
enters~$\fix_e$ as well (in at most one more step). Then $\pi$ continues by  playing as~$\tau_e$ does forever.	

Below we argue that   
 if  $\pi$  ever enters $\closure{\fix_e}$ then 
  it is in fact possible to choose the layer in such a way that $\pi$  enters~$\fix_e$ instead. Assume 
  $\pi$ enters $\closure{\fix_e}$ at $q$ after taking a transition from $p$ to $q$. Let $\bar{q}\in \fix_e$ be the sibling of~$q$.
  By construction, 
  \begin{enumerate}
  \item either $p \in \transition_1\times \{0,1\}$ is controlled: the controller switches the layer in~$p$, by choosing $\bar{q}$ rather than~$q$ and enters~$\fix_e$; 
  \item or 	$q \in \transition_1\times \{0,1\}$ is  controlled.
    By definition~\eqref{eq:defShell-fix-core}, the MD strategy $\tau_e$ attains a high value from state~$\bar{q}$ for~$\theta_e$. Hence, $\tau_e(\bar{q})\in \fix_e$. Hence, the controller can switch the layer in~$q$ by playing~$\tau_e(\bar{q})$ and enters~$\fix_e$.
  \end{enumerate}

For all $e'\in \{2,4,\cdots,e_{\max}\}$ define 
\begin{align*}
	\chi_{e'}\eqdef \, \eventually \varphi_{e'} \wedge \always \, \neg \Fix{e-2} &\quad &
	\tilde{\chi}_{e'}\eqdef \, \eventually \varphi_{e'} \wedge \always \, \neg \Fix{e}.
\end{align*}
We define 
\begin{equation}
\begin{aligned}
	\psi\eqdef \bigvee_{e'<e} \eventually \core_{e'} \vee \bigvee_{e'>e} \chi_{e'}\\
	\psi'\eqdef \bigvee_{e'<e} \eventually \core_{e'} \vee \bigvee_{e'>e} \tilde{\chi}_{e'}
\end{aligned}
\end{equation}
By definition of $\psi_{e-2}$ and $\psi_e$, see definition~\eqref{def-psi}, we have $\psi_{e-2}=\psi \vee \chi_e$ and $\psi_e=\psi' \vee \eventually \core_{e}$. 
For brevity, further define $\rho\eqdef\eventually \closure{\fix_e}$.
 Observe that $\denotationof{ \psi \wedge \neg \rho}{}\subseteq \denotationof{\psi'}{}$.

We first have that 
\begin{equation} \label{eq-claim4.5.2}
\begin{aligned}
\Prob{\?L_{e},\ell_0,\pi}&(\eventually \core_e)\, &\\
\geq &\, \Prob{\?L_{e},\ell_0,\pi}(\eventually \core_{e} \land \rho)	&\\
\geq & \, \Prob{\?L_{e},\ell_0,\pi}(\neg \closure{\fix_e} \text{ until }(\closure{\fix_e} \wedge \eventually \varphi_e \wedge \eventually \core_e \wedge \always \,\neg \Fix{e-2} ))&\\
= & \, \sum_{\ell \in \closure{\fix_e}}\Prob{\?L_{e},\ell_0,\pi}(\neg \closure{\fix_e} \text{ until } \ell) \cdot  
\Prob{\?L_{e},\ell,\tau_e}(\theta_e \wedge \eventually \core_e)&\\
= & \, \sum_{\ell \in \closure{\fix_e}}\Prob{\?L_{e},\ell_0,\pi}(\neg \closure{\fix_e} \text{ until } \ell) \cdot  
\Prob{\?L_{e},\ell,\tau_e}(\theta_e )& \text{by~\cref{lem:LZO-2}.2}\\
\geq &\, \sum_{\ell \in \closure{\fix_e}}\Prob{\?L_{e},\ell_0,\pi}(\neg \closure{\fix_e} \text{ until } \ell) \cdot  
\beta &\\
= &\, \Prob{\?L_{e},\ell_0,\pi}(\rho) \cdot  
\beta &\\
= &\, \Prob{\?L_{e-2},\ell_0,\sigma_{\ell_0}}(\rho) \cdot  
\beta &\\
 = &\, \Prob{\?L_{e-2},\ell_0,\sigma_{\ell_0}}(\rho) \cdot  
(1-\gamma) &\\
\geq &\, \Prob{\?L_{e-2},\ell_0,\sigma_{\ell_0}}(\rho) - \gamma &
\end{aligned}
\end{equation}

We use the  law of total probability:
\begin{equation}\label{eq:claim4.5.h}
\begin{aligned}
	\Prob{\?L_{e-2},\ell_0,\sigma_{\ell_0}} (\psi_{e-2})= \Prob{\?L_{e-2},\ell_0,\sigma_{\ell_0}}(\psi \wedge \neg \rho) + \Prob{\?L_{e-2},\ell_0,\sigma_{\ell_0}}(\chi_e)+ \Prob{\?L_{e-2},\ell_0,\sigma_{\ell_0}}(\psi \wedge \rho)
\end{aligned}	
\end{equation}

In one hand, since $\?L_{e}$ and $\?L_{e-2}$ only differ in the $\fix_e$-region, and since $\pi$ plays as $\sigma_{\ell_0}$ on all runs contained in $\neg \rho$:
\[\Prob{\?L_{e-2},\ell_0,\sigma_{\ell_0}} (\psi \wedge \neg \rho) = \Prob{\?L_{e},\ell_0,\pi} (\psi \wedge \neg \rho) \leq \Prob{\?L_{e},\ell_0,\pi} (\psi')\]

In the other hand, by Equation~\eqref{eq:fsafe22}:  
\begin{align*}
	\Prob{\?L_{e-2},\ell_0,\sigma_{\ell_0}}(\chi_e) \leq &  \, \Prob{\?L_{e-2},\ell_0,\sigma_{\ell_0}}(\chi_e \wedge \rho)+\frac{\gamma}{2}%
\end{align*}

Applying the above to Equation~\eqref{eq:claim4.5.h} yields:
\begin{equation*}
\begin{aligned}
	\Prob{\?L_{e-2},\ell_0,\sigma_{\ell_0}}& (\psi_{e-2}) 
	  \\
	\leq & \, \Prob{\?L_{e},\ell_0,\pi} (\psi')+ \Prob{\?L_{e-2},\ell_0,\sigma_{\ell_0}}(\chi_e \wedge \rho) + \Prob{\?L_{e-2},\ell_0,\sigma_{\ell_0}}(\psi \wedge \rho) + \frac{\gamma}{2}&\\
	= & \, \Prob{\?L_{e},\ell_0,\pi} (\psi')+ \Prob{\?L_{e-2},\ell_0,\sigma_{\ell_0}}((\chi_e \vee \psi) \wedge \rho)+ \frac{\gamma}{2}& \text{since $\chi_e$ and $\psi$ are disjoint}\\
	\leq &\, \Prob{\?L_{e},\ell_0,\pi} (\psi')+ \Prob{\?L_{e-2},\ell_0,\sigma_{\ell_0}}( \rho) + \frac{\gamma}{2}&\\
	\leq & \, \Prob{\?L_{e},\ell_0,\pi} (\psi')+ \Prob{\?L_{e-2},\ell_0,\pi}( \eventually \core_e)  + \gamma+ \frac{\gamma}{2} & \text{by Equation~\eqref{eq-claim4.5.2}}\\
	= & \, \Prob{\?L_{e},\ell_0,\pi} (\psi_{e})  + \frac{3\gamma}{2} &
	\end{aligned}	
\end{equation*}

To conclude the proof we recall that $\sigma_{\ell_0}$ is $\frac{\gamma}{2}$-optimal w.r.t~$\psi_{e-2}$.
\end{claimproof}

\claimepsoptimalhatsigma*
\begin{claimproof}
	
 For the MD strategy~$\hat{\sigma}$, by the law of total probability, we have
\begin{align*}
\Prob{\?L, \ell_0,\hat{\sigma}}(\varphi)\geq & \sum_{e\in \even(\cset)}\Prob{\?L,\ell_0,\hat{\sigma}}(\eventually \varphi_e \land \eventually \core_e)\,. &
\intertext{Let $\cset'$ be the set of even colors~$e$ where $\Prob{\?L, \ell_0,\hat{\sigma}} (\eventually \core_e)>0$. Then:}
=&  \, \sum_{e\in \cset'}\Prob{\?L,\ell_0,\hat{\sigma}}(\eventually \varphi_e \mid \eventually \core_e) \cdot \Prob{\?L, \ell_0,\hat{\sigma}} (\eventually \core_e) 
&\\
\geq &  \,  \sum_{e\in \cset'}\Prob{\?L,\ell_0,\hat{\sigma}}(\always \fix_e \mid \eventually \core_e) \cdot \Prob{\?L, \ell_0,\hat{\sigma}} (\eventually \core_e)& \text{
by Equation~\eqref{eq-gfix-eps2}
}\\
\geq &\, \sum_{e\in \cset'}(1-\gamma) \cdot \Prob{\?L, \ell_0,\hat{\sigma}} (\eventually \core_e)& \text{
by Equation~\eqref{eq-gfix-eps1}
}\\
= &\, (1-\gamma) \cdot \sum_{e\in \even(\cset)}\Prob{\?L, \ell_0,\hat{\sigma}} (\eventually \core_e)&\\
\geq &\, (1-\gamma) \cdot \Prob{\?L, \ell_0,\hat{\sigma}} (\psi_{e_{\max}})&
\intertext{since $\tau_{\mathrm{reach}}$ is $\gamma$-optimal and by Equation~\eqref{eqpsimax}, }
\geq &\, (1-\gamma) \cdot (\valueof{\?L,\varphi}{\ell_0}-e_{\max}\gamma-\gamma)&\\
\geq &\, \valueof{\?L,\varphi}{\ell_0}-(e_{\max}+2)\gamma &
\end{align*} 
Recall that $\eps=(e_{\max}+2)\gamma$.
Thus we have shown that the MD strategy $\hat{\sigma}$ is $\eps$-optimal w.r.t.~$\varphi$ from every state~$\ell_0 \in L_0$.
\end{claimproof}

\section{Missing proofs in Section~\ref{as-par}}\label{as-parapp}

\begin{definition}[Bubbles]
    \label{def-fix-bubb}
    Let $\mdp$ be an MDP with states $\states$, $R\subseteq \states$, $l\in\N$.
    The \emph{$l$-bubble} around $R$ is the set
    \[\bubblearound{\mdp}{R}{l}
    \eqdef\{s \mid \exists s_0\in R. \exists \tau. \probm_{\mdp,\state_0,\tau}(\eventually^{\leq l} s) > 0 \}\]
    of states that can be reached from $R$ in at most $l$ steps.
    Any bubble around a closed set $R\subseteq L$ is closed.

\end{definition}

Recall that for an MD strategy $\tau$,
    we write $\fixin{\mdp}{\tau,R}$ for the MDP
    obtained from $\mdp$ by fixing the strategy $\tau$ for all states in $R$.
    We will simply write $\fixin{\mdp}{\tau}$ for $\fixin{\mdp}{\tau, \states}$,
    where $\tau$ is fixed everywhere,
    and 
    $\fixinbubble{\mdp}{\tau}{R}{l}\eqdef
    \fixin{\mdp}{\tau,\bubblearound{\mdp}{R}{l}}$
    that fixes $\tau$ in the $l$-bubble around~$R$.

\optparacyclic*
\begin{proof}
Directly from \cref{thm:opt-par-acyclic-general} (let $L_0 \eqdef \closure{\{\state_0\}}$).
\end{proof}

\begin{lemma}\label{thm:opt-par-acyclic-general}
Let $\?L(\mdp)$ be the layered MDP obtained from an acyclic and
finitely branching MDP~$\mdp$ and a coloring~$\coloring$
such that
all states are almost surely winning for $\formula=\Parity{\coloring}$
(i.e., every state $\state$ has a strategy $\zstrat_\state$ such that
$\probm_{\?L(\mdp),\state,\zstrat_\state}(\formula) = 1$).

For every finite closed set $L_0$ of initial states
there exists an MD
strategy $\hat{\zstrat}$
that almost surely wins
from every state $\state_0\in L_0$.
That is,
$\forall \state_0 \in L_0.~
\probm_{\?L(\mdp),\state_0,\hat{\zstrat}}(\formula) = 1$.
\end{lemma}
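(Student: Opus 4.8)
The plan is to build $\hat\zstrat$ as the limit of a plastering construction --- the \emph{sea urchin} --- running through infinitely many stages $i=0,1,2,\dots$ and permanently committing an MD strategy on a growing region $R_i\subseteq L$. At stage $i$ the region $R_i$ consists of a finite \emph{body} $B_i$ of states reachable from $L_0$ within a finite horizon $k_i$ (that are not already in a spike), together with $i$ infinite \emph{spikes} $\Betaset{1},\dots,\Betaset{i}$ that are pairwise disjoint and disjoint from $B_i$; see \cref{fig:as-par:sea-urchin}. Commitments are never revised: $R_i\subseteq R_{i+1}$, the MD choices on $R_i$ stay fixed, and in the limit every state reachable from $L_0$ lies in some $R_i$, so the limiting MD strategy $\hat\zstrat$ is well defined. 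Writing $\?L_i$ for the MDP obtained from $\?L=\?L(\mdp)$ by fixing the stage-$i$ commitments, I would maintain the \textbf{invariant} that for every $\ell_0\in L_0$ and every strategy $\tau$ extending the stage-$i$ commitments, the set of runs from $\ell_0$ under $\tau$ that eventually enter some committed spike $\Betaset{j}$ ($j\le i$), never leave it again, and satisfy $\formula$ has probability $\ge 1-2^{-i}$. Two uses of acyclicity close the argument: with finite branching it makes every bubble finite, and it forbids a run from staying forever in the finite body $B_i$, so a run that never commits to a winning spike must escape every $B_i$ --- an event of probability $\le\inf_i 2^{-i}=0$. Hence $\probm_{\?L,\ell_0,\hat\zstrat}(\formula)=1$ for every $\ell_0\in L_0$.

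For the \textbf{inductive step} from stage $i$ to $i+1$, I would first apply \cref{lem-eps-opt} inside $\?L_i$, with the finite initial set $L_0$ and a small error $\eps_{i+1}$, to obtain an MD strategy $\obs{i+1}$ that is $\eps_{i+1}$-optimal for $\formula$ from every $\ell_0\in L_0$. For thresholds $1>\alpha_{i+1}>\beta_{i+1}>\gamma_{i+1}$ chosen close to $1$, let $\Alphaset{i+1}\subseteq\Betaset{i+1}\subseteq\Gammaset{i+1}$ be the sets $\safesub{\?L_i,\obs{i+1},\formula}{\alpha_{i+1}}$, $\safesub{\?L_i,\obs{i+1},\formula}{\beta_{i+1}}$, $\safesub{\?L_i,\obs{i+1},\formula}{\gamma_{i+1}}$ (with already-committed states removed), and fix $\obs{i+1}$ on $\Betaset{i+1}$ --- the $(i{+}1)$-st spike. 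By \cref{lem:LZO-1}, applied to the tail objective $\formula$ in the Markov chain induced by $\obs{i+1}$, from any state of $\Alphaset{i+1}$ the probability of forever remaining in $\Betaset{i+1}$ is at least $\frac{\alpha_{i+1}-\beta_{i+1}}{1-\beta_{i+1}}$, which is close to $1$; and by \cref{lem:LZO-2}.2 almost every run that stays in $\Betaset{i+1}\subseteq\safesub{\?L_i,\formula}{\beta_{i+1}}$ forever satisfies $\formula$. Since $\formula$ is tail and $\obs{i+1}$ is $\eps_{i+1}$-optimal from each $\ell_0$, \cref{lem2:reachG01} yields a horizon $n_{i+1}$ and a finite set $F\subseteq\Alphaset{i+1}$ such that $\obs{i+1}$ reaches $F$ from $\ell_0$ within $n_{i+1}$ steps with probability at least $\valueof{\?L_i}{\ell_0}-\eps_{i+1}'$ without prematurely leaving the ``reaching region''. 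I would then pick $k_{i+1}\ge k_i$ large enough that $B_{i+1}\supseteq B_i$ and $B_{i+1}$ contains that reaching region, and commit on $B_{i+1}\setminus R_i$ an MD strategy $\ors{i+1}$ (either $\obs{i+1}$ restricted there, or an $\eps$-optimal MD strategy for reachability of $\bigcup_{j\le i+1}\Alphaset{j}$, which exists by~\cite{Ornstein:AMS1969}). A run from $\ell_0$ that has not yet committed to a winning spike then either reaches some $\Alphaset{j}$ ($j\le i+1$) along the fixed choices in $B_{i+1}$ --- and, with probability close to $1$, stays in $\Betaset{j}$ forever and wins --- or escapes $B_{i+1}$ (possible by acyclicity), in which case it is still counted as uncommitted. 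Combining these probabilities with the stage-$i$ invariant and the value estimates, and using that $\eps_{i+1}$, $\eps_{i+1}'$ and $1-\alpha_{i+1}$, $1-\beta_{i+1}$ can all be taken arbitrarily small, one shows the uncommitted mass drops from $\le 2^{-i}$ to $\le 2^{-(i+1)}$, which is the stage-$(i{+}1)$ invariant.

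The \textbf{main obstacle} I expect is the global bookkeeping rather than any single appeal to L\'evy's law: one must choose the sequences $(\eps_i)$, $(\alpha_i,\beta_i,\gamma_i)$ and $(k_i)$ so that the three sources of per-stage loss --- $\obs{i}$ being merely $\eps_i$-optimal, runs that enter $\Alphaset{i}$ but escape $\Betaset{i}$, and runs that escape the body --- get re-absorbed at later stages instead of accumulating, and one must argue that the residual MDP $\?L_i$ keeps its relevant (non-committed) states almost surely winning, so that the repeated appeal to \cref{lem-eps-opt} from $L_0$ stays meaningful and the spikes and bodies can genuinely be kept pairwise disjoint across stages. Making all of these interlocking estimates close simultaneously is exactly the content of the full proof in \cref{as-parapp}.
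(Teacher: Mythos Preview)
Your overall architecture matches the paper's sea-urchin construction (growing finite body, infinite spikes, L\'evy's law to argue runs stay in a spike and win, a $1-2^{-i}$ invariant). But there is a genuine gap in how you build the $(i{+}1)$-st spike.

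You obtain $\obs{i+1}$ by invoking \cref{lem-eps-opt} for the \emph{original} objective $\formula$ from the \emph{original} set $L_0$ in $\?L_i$, and then define $\Alphaset{i+1}\subseteq\Betaset{i+1}\subseteq\Gammaset{i+1}$ as safe sets for $\obs{i+1}$ w.r.t.\ $\formula$, ``with already-committed states removed''. This does not work: in $\?L_i$ the strategy $\obs{i+1}$ may achieve $\formula$ largely by steering runs \emph{into} the already-fixed region $\Fixx{i}$ (where it has no control anyway), so $\safesub{\?L_i,\obs{i+1},\formula}{\beta_{i+1}}$ can massively overlap $\Fixx{i}$. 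Removing committed states post hoc destroys exactly the property you need: the remaining set is no longer a safe set for any single strategy/objective, so \cref{lem:LZO-1} and \cref{lem:LZO-2} no longer apply to it, and you have no argument that the new spike is disjoint from the old ones or that it captures any \emph{new} (previously uncommitted) mass.

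The paper's fix is the key technical device you are missing: it replaces $\formula$ by the restricted objective $\formula_{i+1}\eqdef\formula\land\always(L\setminus\closure{\Fixx{i}})$. It first argues, via a carefully defined \emph{resetting strategy} in $\?L_i$, that almost every run from $L_0$ either already stays in $\Fixx{i}$ forever or satisfies $\eventually\formula_{i+1}$ (this is the content of \eqref{eq:alphaOrNextObj}). Only then does \cref{lem2:reachG01} produce a finite set $L_{i+1}\subseteq\safesub{\?L_i,\formula_{i+1}}{1-\eps}$ of states \emph{outside} $\closure{\Fixx{i}}$, and \cref{lem-eps-opt} is invoked from $L_{i+1}$ for $\formula_{i+1}$ (not from $L_0$ for $\formula$). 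The resulting sets $\Alphaset{i+1}\subseteq\Betaset{i+1}\subseteq\Gammaset{i+1}$ are safe sets for $\formula_{i+1}$ and are therefore \emph{automatically} disjoint from $\closure{\Fixx{i}}$; this is the invariant \eqref{eq:invariant}, which is what makes the spikes genuinely new and the L\'evy-law estimates valid. Your plan has no analogue of $\formula_{i+1}$, of the resetting-strategy argument establishing $\eventually\formula_{i+1}$, or of the new initial set $L_{i+1}$; ``removing committed states'' is not a substitute.

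A secondary point: saying the thresholds are ``close to $1$'' is not enough. What you need (and what the paper arranges) is that the ratios $\frac{\alpha-\beta}{1-\beta}$ and $\frac{\beta-\gamma}{1-\gamma}$ are each bounded below so that their product, times the small reachability losses, is at least $1/2$; this is how the halving in the invariant is obtained. With $\alpha,\beta,\gamma$ all merely ``close to $1$'' these ratios can be arbitrarily small.
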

\begin{proof}
We iteratively produce an infinite sequence $\?L_0,\?L_1,\?L_2,\ldots$ of
layered MDPs. They have the same structure as $\?L(\mdp)$, but in each step
from $\?L_i$ to $\?L_{i+1}$ the choices in some subset of states
(reachable from $L_0$) are fixed.
In the limit all choices from all controlled states reachable from $L_0$ are
fixed. Hence this prescribes an MD strategy $\hat{\sigma}$ from $L_0$ in $\?L(\mdp)$.
It is not sufficient that these fixings of MD strategies in subspaces are
compatible with some strategy almost sure winning for $\formula$, since progress
(e.g., towards visiting a particular color) might only be made outside of the fixed
subspace, and thus be delayed forever.
Instead we prove the stronger property that $\hat{\sigma}$ ensures $\formula$ with some probability
$p_i(s_0)$ from $s_0 \in L_0$ already in the fixed
subspace of $\?L_i$ alone, and that $\lim_{i\rightarrow\infty} p_i(s_0) =1$.
This then implies that $\hat{\sigma}$ is almost surely winning for $\formula$ in $\?L(\mdp)$.

\paragraph*{The sea urchin construction.}
Its name comes from the shape of the subspace where strategies are fixed: a
finite body $H_i$ out of which come finitely many spikes ($\Betaset{i}$,
where each spike is infinite). As the body grows, more spikes are added.
Eventually the sea urchin covers the entire space; see \cref{fig:as-par:sea-urchin}.

The construction uses some global thresholds 
$1>\alpha>\beta>\gamma>0$, to be determined later.
Moreover, in each step from $\?L_i$ to $\?L_{i+1}$ we will define the
following notions.
\begin{itemize}
\item
  Small error thresholds $\err{i}{j}>0$
  for $j\in \{0,1,2,3\}$).
\item
  Thresholds $l_i, k_i \in \N$ of a number of steps from $L_0$.
\item
  Finite closed subsets of states $H_i$ where $H_0 \eqdef \emptyset$ and
  $H_i \eqdef \bubblearound{\?L}{L_0}{k_i}$ for $i > 0$.
  ($H_i$ is finite, because $\?L$ is finitely branching.)
\item
  Finite subsets $L_i \subseteq L$ as starting sets for certain modified
  objectives $\formula_i$ (see below).
\item
  MD strategies $\obs{i}$ (for $i >0$) and
  subsets of states $\Alphaset{i} \subseteq \Betaset{i} \subseteq \Gammaset{i} \subseteq L$,
  where $\Alphaset{i}$ (resp.\ $\Betaset{i}$, $\Gammaset{i}$) are the sets of
  states from which $\obs{i}$ attains $\ge \alpha$ (resp.\ $\ge \beta$,
  $\ge \gamma$) for objective $\formula_i$ (see below) in $\?L_{i-1}$ (and $\?L_i$).
  Let $\Alphaset{0} = \Betaset{0} = \Gammaset{0} \eqdef \emptyset$,
  $\Alphaset{i} \eqdef \safesub{\?L_{i-1},\obs{i},\formula_{i}}{\alpha}$
  (and similar for $\Betaset{i}, \Gammaset{i}$).

We write $\Alphaset{\le i}\eqdef \bigcup_{j\le i}\Alphaset{j}$ and similar for
  $\Betaset{\le i},\Gammaset{ \le i}$.
\item
  Let $\Fixx{i} \eqdef \Betaset{\le i} \cup H_i$.
  This is the subspace where choices are fixed in rounds up-to $i$.
\item
 Modified objectives $\formula_i$ with $\formula_0 \eqdef \formula$ and
 $\formula_\nexti \eqdef \formula \land \always (L\setminus\closure{\Fixx{i}})$.
 For $i>0$ the $\formula_i$ are not strictly tail objectives,
 but they still enjoy the same properties as tail objectives
 wrt.\ the Levy zero-one law; cf.~\cref{rem:quasi-tail}. 
\item
In $\?L_i$ the choices inside $\Fixx{i}$ are already fixed.
Inside $\Betaset{i}$ the strategy $\obs{i}$ is fixed,
and inside $H_i \setminus \Betaset{\le i}$ the choices are fixed according
to another MD strategy $\ors{i}$.
\item
It follows from the properties above that we have the invariant
\begin{equation}\label{eq:invariant}
  \Gammaset{\nexti} \cap \closure{\Fixx{i}} = \emptyset
\end{equation}
In particular, the sets $\Betaset{j}$ are disjoint for different $j$.
However, for $j' >j$, it is possible that $\Gammaset{j}$ overlaps with
$\Betaset{j'}$ (and $\Gammaset{j'}$).
\end{itemize}

\paragraph*{Base case.}
We start with the MDP $\?L_0 \eqdef \?L \eqdef \?L(\mdp)$.
By assumption, in $\?L_0$ all states are almost surely winning for
$\formula_0 \eqdef \formula$ (the unrestricted parity objective).
The invariant \eqref{eq:invariant} is trivially satisfied for $i=0$,
since $\closure{\Fixx{0}} = \emptyset$.

\paragraph*{Step.}
Now we define the step from $\?L_i$ to $\?L_{i+1}$ for $i \ge 0$.
We assume that for all $j \le i$ the MD strategies $\obs{j}$ and the 
sets $\Alphaset{j} \subseteq \Betaset{j} \subseteq \Gammaset{j}$ and $H_j$ are already 
defined.
Moreover, $\obs{j}$ is fixed inside $\Betaset{j}$, and in $\?L_j$ the strategy
$\obs{j}$ attains at least $\beta$ for objective $\formula_j$
from each state $\state \in \Betaset{j}$.
Moreover, some other MD strategy is fixed in $H_j \setminus \Betaset{\le j}$.
(All this trivially holds for the base case $i=0$. For $i>0$ our construction
will ensure these properties.)

We now consider $\?L_i$.
By $\state_0$ we denote initial states in $L_0$. (General states are denoted by $\state$.)
We show that in $\?L_i$, all initial states $\state_0 \in L_0$ are still almost surely winning
for $\formula$, as witnessed by a \emph{resetting strategy} $\zstrat$ defined
below (where $\zstrat$ is generally not MD, except inside the subspace $\Fixx{i}$).
First we need a basic property of $\Alphaset{j}, \Betaset{j}$.

\begin{claim}
    \label{claim:as-par:progress-in-B}
    Let $0\le j\le \previ$ and $\sigma$ be an arbitrary strategy in $\?L_\previ$.
    If $\state\in\Alphaset{j}$ then
    $\probm_{\?L_\previ,\state,\sigma}(\always~\Betaset{j}) \ge \frac{\alpha-\beta}{1-\beta}$.
\end{claim}
\begin{proof}
    By \cref{lem:LZO-1}, since $\sigma$ behaves just like $\obs{j}$ in the relevant subspaces already fixed to $\obs{j}$ in $\?L_\previ$.
\end{proof}

Recall that for every state $\state\in L$ there exists an almost surely winning strategy $\sigma(\state)$ for $\formula$ in $\?L$.
The \emph{resetting strategy} $\zstrat$ in $\?L_\previ$ starts in $L_0$ and behaves as
specified in the three different modes $\memconf_1,\memconf_2,\memconf_3$ as follows. For all $j \le i$:
\begin{enumerate}
\item%
    In $H_j$ it plays as prescribed by the fixing there, (starting in memory mode $\memconf_1$).
\item%
  Whenever $\zstrat$ enters a set $\closure{\Betaset{j}} \setminus H_i$ then 
  it switches to mode $\memconf_2$ and chooses the layer in such a way that it
  enters even $\Betaset{j}$ (in at most one more step)
  and continues playing $\obs{j}$, as required by the fixing inside $\Betaset{j}$.
  \footnote{By \cref{def:layered}, either the current state or the next state
    allows to switch between layers; cf.\ the proof of \cref{claim-eps-optimal-pi}.}
\footnote{Remember that \eqref{eq:invariant} implies that the sets $\Betaset{j}$ are disjoint.}
Inside $\Betaset{j}$, it plays $\obs{j}$ that is fixed in $\Betaset{j}$.
It continues to play $\obs{j}$ even in $\Gammaset{j}\setminus (\Betaset{\le i} \cup H_i)$.
\item%
While playing in mode $\memconf_2$ (or $\memconf_1$), upon
reaching an unfixed state $\state$ outside of $\Gammaset{j}$
(and outside of $H_i$), it goes to mode $\memconf_3$ and resets to an almost
surely winning strategy $\sigma(\state)$ for $\formula$ in $\?L$.
It keeps playing $\sigma(\state)$ until (and if) it reaches the fixed part
$\Betaset{\le i} \cup H_i$, whereupon it continues as before with mode $\memconf_2$.
\end{enumerate}

We will see that, not only is the resetting strategy $\zstrat$ almost surely winning for
$\formula$,
but every time it re-enters $\Betaset{\le i}$ it has a lower-bounded chance of
\emph{eventually} staying in $\Alphaset{\le i}$ forever.

We now classify the runs induced by the resetting strategy $\zstrat$ (from
some initial state $\state_0 \in L_0$) according to how often
which modes $\memconf_1,\memconf_2,\memconf_3$ are used.

First we note that, since $\?L_i$ is acyclic, under any strategy
(and in particular $\zstrat$), any run can
visit any finite set (in particular $H_\previ$) only finitely often and therefore has an
infinite suffix that is always outside $H_i$. Thus $\zstrat$ is eventually
always not in mode $\memconf_1$.

By our invariant \eqref{eq:invariant},
playing in $\Betaset{i}$ and $\Gammaset{i}$ is not restricted by our previous
fixings in $\Betaset{\le i-1} \cup H_{i-1}$.
Thus, when playing from $\state \in \Betaset{i}$ in mode $\memconf_2$, we keep playing $\obs{i}$
even in $\Gammaset{i}$. Analogously to \cref{claim:as-par:progress-in-B},
the chance of staying in the set $\Gammaset{i}$ can be lower bounded.
\begin{equation}
    \label{claim:as-par:exit-1}
\forall \state \in \Betaset{i}\, \probm_{\?L_\previ,\state,\zstrat[\memconf_2]}(\always~\Gammaset{i}) \ge \frac{\beta-\gamma}{1-\gamma} > 0
\end{equation}
This again follows from \cref{lem:LZO-1}, observing that $\zstrat$ behaves
just like $\obs{i}$ even inside $\Gammaset{i}$ while staying in mode $\memconf_2$.

When playing from $\state \in \Betaset{j}$ for some $j < i$, a similar
property holds.
If a run visits some state $\state' \in \closure{\Betaset{j'}}$, for some $j' >j$,
then we can assume that we have even $\state' \in \Betaset{j'}$ by our
assumption on $\sigma$ above, because outside of the fixed region the layer can be chosen freely.
Then the strategy switches from $\obs{j}$ to $\obs{j'}$ from
$\state' \in \Betaset{j'}$.
Otherwise we keep playing $\obs{j}$ while in $\Gammaset{j}$, i.e., by
\cref{lem:LZO-1},
we get, for every $\state \in\Betaset{j}$, that
\begin{equation}
    \label{claim:as-par:exit-1-new}
    \probm_{\?L_\previ,\state,\zstrat[\memconf_2]}(\always~\Gammaset{j} \vee \eventually~\Betaset{>j}) \ge \frac{\beta-\gamma}{1-\gamma} > 0.
\end{equation}
From \eqref{claim:as-par:exit-1} and \eqref{claim:as-par:exit-1-new},
we obtain that the set of runs that infinitely often switch from mode $\memconf_2$ to
$\memconf_3$ are a null-set.
Moreover, as shown above, every run has an infinite suffix where the mode is
not $\memconf_1$.
It follows that, except for a null-set, all runs either have an infinite
suffix in mode $\memconf_2$ or an infinite suffix in mode $\memconf_3$.
Let $\playset_2$ and $\playset_3$ denote these subsets of runs, respectively.
I.e., we have $\forall \state_0 \in L_0$
\begin{equation}\label{eq:mode2or3}
\probm_{\?L_\previ,\state_0,\zstrat}(\playset_2 \cup \playset_3)=1.
\end{equation}
In mode $\memconf_3$ the resetting strategy $\zstrat$ plays an almost surely winning strategy
for $\formula$ outside of $\closure{\Fixx{i}}$ that is
not impeded by the fixings in $\?L_i$, and $\formula$ is a tail
objective. Thus, for all $\state_0 \in L_0$,
\begin{equation}\label{eq:mode3impliesParity}
\probm_{\?L_\previ,\state_0,\zstrat}(\playset_3)
= \probm_{\?L_\previ,\state_0,\zstrat}(\playset_3 \wedge \formula \wedge \eventually\always(L\setminus\closure{\Fixx{i}})).
\end{equation}
From the property that $\formula$ is tail and the definition of
$\formula_\nexti$ as $\formula \land \always (L\setminus\closure{\Fixx{i}})$
we obtain that, for all $\state_0 \in L_0$,
\begin{equation}\label{eq:mode3impliesNextObj}
\probm_{\?L_\previ,\state_0,\zstrat}(\playset_3)
= \probm_{\?L_\previ,\state_0,\zstrat}(\playset_3 \wedge \eventually \formula_\nexti)
\end{equation}

In mode $\memconf_2$ the resetting strategy $\zstrat$ plays some MD strategy
$\obs{j}$ in $\Gammaset{j}$ (for some $j \le i$). 
Thus, for all $\state_0 \in L_0$,
\begin{equation}\label{eq:mode2impliesStayGamma}
\probm_{\?L_\previ,\state_0,\zstrat}(\playset_2)
= \probm_{\?L_\previ,\state_0,\zstrat}(\playset_2 \wedge \eventually\always\,\Gammaset{\le i}).
\end{equation}
Since in mode $\memconf_2$ the resetting strategy $\zstrat$ plays some MD strategy
$\obs{j}$ with attainment $\ge \gamma$ (resp.\ $\ge \alpha$) in
$\Gammaset{j}$ (resp.\ $\Alphaset{j}$), we can apply Levy's
zero-one law (\cref{col:01law-gamma-alpha})
and obtain even
\begin{equation}\label{eq:mode2impliesStayAlpha}
\probm_{\?L_\previ,\state_0,\zstrat}(\playset_2)
= \probm_{\?L_\previ,\state_0,\zstrat}(\playset_2 \wedge \eventually\always\,\Alphaset{\le i})
\end{equation}
By \eqref{eq:mode2or3}, \eqref{eq:mode3impliesNextObj}
and \eqref{eq:mode2impliesStayAlpha}
we obtain
\begin{equation}\label{eq:alphaOrNextObj}
\probm_{\?L_\previ,\state_0,\zstrat}(\eventually\always\,\Alphaset{\le i}
\,\vee\, \eventually \formula_\nexti)=1
\end{equation}
$\zstrat$ plays like $\obs{j}$ inside $\Alphaset{j}$ which attains
$\ge \alpha$ for $\formula$.
By using Levy's zero-one law (\cref{lem:01lawG}(1))
for safety sets at level $\alpha$, we obtain that
$\probm_{\?L_\previ,s_0,\zstrat}\left(\formula \wedge
\eventually\always\,\Alphaset{\le i}\right) =
\probm_{\?L_\previ,s_0,\zstrat}\left(\eventually\always\,\Alphaset{\le i}\right)
$.
Since $\eventually\formula_{i+1} \subseteq \formula$ it follows from
\cref{eq:alphaOrNextObj}
that
$\probm_{\?L_\previ,s_0,\zstrat}\left(\formula\right) =1$, i.e., the resetting
strategy $\zstrat$ wins $\formula$ almost surely.

For $s_0 \in L_0$ let
\[
p_i(s_0) \eqdef \probm_{\?L_\previ,s_0,\sigma}\left(\formula\, \wedge\,
\always\,\Fixx{i}\right)
\]
be the attainment for $\formula$ inside the fixed region $\Fixx{i}$ of $\?L_i$.

Since $H_i$ is finite and $\?L_i$ is acyclic, almost surely $H_i$ is
eventually left forever. Moreover, the sets $\Alphaset{j}$ are safety sets (at
level $\alpha$) for $\formula$. It follows from Levy's
zero-one law (cf.~\cref{col:01lawG}) that 
\begin{equation}\label{eq:fixi-implies-parity}
p_i(s_0)
=
\probm_{\?L_\previ,s_0,\sigma}\left(\formula \wedge \always \,\Fixx{i}\right)
=
\probm_{\?L_\previ,s_0,\sigma}\left(\always \,\Fixx{i}\right)
\end{equation}

Let's now consider only those runs from states $s_0 \in L_0$ that do \emph{not}
satisfy $\always\, \Fixx{i}$ (the rest satisfy $\formula$ already inside the
fixed part of $\?L_i$ by \eqref{eq:fixi-implies-parity}).
From \eqref{eq:alphaOrNextObj} we obtain
\begin{equation}\label{eq:as-par:eventually-decide-nonfixed2}
\begin{aligned}
\probm_{\?L_\previ,s_0,\zstrat}\left(\left(\eventually\always\,\Alphaset{\le i}
\,\vee\, \eventually \formula_\nexti\right)
\wedge \neg\always\,\Fixx{i}\right)\\
    =
    \probm_{\?L_\previ,s_0,\zstrat}\left(\neg\always\,\Fixx{i}\right)
\end{aligned}
\end{equation}

Using \cref{lem:LZO-3}, we show the following claim.

\begin{restatable}{claim}{claimreachLnext}
\label{claim:reach-L-next}
For every $\err{\nexti}{1}, \err{\nexti}{2} >0$,
there must exist a
threshold $l_{\nexti}$ and a finite set
\[
L_{\nexti}' \subseteq \safesub{\?L_\previ,(\always\,\Alphaset{\le
i}\lor \formula_\nexti)}{1-\err{\nexti}{1}}
\]
such that,
following $\sigma$ from any state $\state_0\in L_0$,
the chance of satisfying $\neg\always\,\Fixx{i}$
and within at most $l_{\nexti}$ steps reaching a state $s$
in $L_{\nexti}'$ is at least
$\probm_{\?L_\previ,s_0,\zstrat}\left(\neg\always\,\Fixx{i}\right)(1-\err{\nexti}{2})$.
\begin{equation}\label{eq:claim:reach-L-next}
\probm_{\?L_\previ,s_0,\zstrat}\left(\neg\always\,\Fixx{i} \wedge \eventually^{\le
l_{\nexti}}L_{\nexti}'\right) \ge
\probm_{\?L_\previ,s_0,\zstrat}\left(\neg\always\,\Fixx{i}\right)(1-\err{\nexti}{2})
\end{equation}
\end{restatable}
\begin{proof}
For those $\state_0 \in L_0$ where $\probm_{\?L_\previ,s_0,\zstrat}\left(\neg\always\,\Fixx{i}\right)=0$ 
the claim holds trivially.

We now consider the remaining cases of those states $\state_0 \in L_0$ where
$\probm_{\?L_\previ,s_0,\zstrat}\left(\neg\always\,\Fixx{i}\right) > 0$.
Let
\begin{equation}\label{eq:cl53-0}
\delta \eqdef \err{\nexti}{2} \cdot
\min_{\state_0 \in
  L_0}\{\probm_{\?L_\previ,s_0,\zstrat}\left(\neg\always\,\Fixx{i}\right) >0\}
\end{equation}
where $\delta >0$ since $L_0$ is finite.
Let $\E \eqdef \always\,\Alphaset{\le i} \,\vee\, \formula_\nexti$.
By \eqref{eq:alphaOrNextObj} we have for every $\state_0 \in L_0$
\[
\probm_{\?L_\previ,\state_0,\zstrat}(\eventually\E)=1
\]
We now consider the finitely many Markov chains $\chain_{\state_0}$
induced by playing $\zstrat$ in $\?L_\previ$
from the finitely many initial states $\state_0 \in L_0$.
Thus we obtain for every $\state_0 \in L_0$
\begin{equation}\label{eq:cl53-1}
\probm_{\chain_{s_0}}(\eventually\E)=1
\end{equation}
Since $\E$ is suffix-closed,
we can apply \cref{lem:LZO-3} to each Markov chain $\chain_{\state_0}$.
Thus there exist thresholds $l^{\state_0}$ and finite sets 
\[
L^{\state_0} \subseteq \safesub{\chain_{{\state_0}},(\E)}{1-\err{\nexti}{1}}
\]
such that
\begin{equation}\label{eq:cl53-2}
\probm_{\chain_{s_0}}\left(\eventually^{\le l^{\state_0}}L^{\state_0}\right) \ge
\probm_{\chain_{s_0}}\left(\eventually\E\right) - \delta
=
1 - \delta
\end{equation}
where the last equality is due to \eqref{eq:cl53-1}.
Let now
$L_{\nexti}' \eqdef \bigcup_{s_0 \in L_0} L^{\state_0}$
(which is finite, since it is a finite union of finite sets)
and
$l_{\nexti} \eqdef \max_{s_0 \in L_0} l^{\state_0}$
(which is finite as the maximum of a finite set of numbers).

For every $\state_0 \in L_0$ we have
\[
\safesub{\chain_{{\state_0}},(\E)}{1-\err{\nexti}{1}} \subseteq 
\safesub{\?L_\previ,(\always\,\Alphaset{\le i}\lor \formula_\nexti)}{1-\err{\nexti}{1}}
\]
since the required value
for $\E = \always\,\Alphaset{\le i} \,\vee\, \formula_\nexti$
is witnessed by the strategy $\zstrat$, and thus
\[
L_{\nexti}' \subseteq \safesub{\?L_\previ,(\always\,\Alphaset{\le
i}\lor \formula_\nexti)}{1-\err{\nexti}{1}}
\]
as required.
From \eqref{eq:cl53-2} we obtain that for all $\state_0 \in L_0$
\begin{equation}\label{eq:cl53-3}
\probm_{\?L_\previ,s_0,\zstrat}\left(\eventually^{\le l_{\nexti}}L_{\nexti}'\right) \ge
1 - \delta
\end{equation}
Now we are ready to show \eqref{eq:claim:reach-L-next}. We have for all $\state_0 \in L_0$
\begin{align*}
\probm_{\?L_\previ,s_0,\zstrat}&\left(\neg\always\,\Fixx{i} \wedge \eventually^{\le
  l_{\nexti}}L_{\nexti}'\right) \\& = 
\probm_{\?L_\previ,s_0,\zstrat}\left(\eventually^{\le l_{\nexti}}L_{\nexti}'\right)
-
\probm_{\?L_\previ,s_0,\zstrat}\left(\always\,\Fixx{i} \wedge \eventually^{\le
  l_{\nexti}}L_{\nexti}'\right) & \text{law of total prob.}\\
& \ge 
\probm_{\?L_\previ,s_0,\zstrat}\left(\eventually^{\le l_{\nexti}}L_{\nexti}'\right)
-
\probm_{\?L_\previ,s_0,\zstrat}\left(\always\,\Fixx{i}\right)\\
& \ge 
1-\delta
-
(1-\probm_{\?L_\previ,s_0,\zstrat}\left(\neg\always\,\Fixx{i}\right)) &
\text{by \eqref{eq:cl53-3}}\\
& \ge
1 - \err{\nexti}{2} \cdot
\probm_{\?L_\previ,s_0,\zstrat}\left(\neg\always\,\Fixx{i}\right)
-
(1-\probm_{\?L_\previ,s_0,\zstrat}\left(\neg\always\,\Fixx{i}\right)) &
\text{by \eqref{eq:cl53-0}}\\
& =
\probm_{\?L_\previ,s_0,\zstrat}\left(\neg\always\,\Fixx{i}\right)(1-\err{\nexti}{2})
\end{align*}
\end{proof}

\noindent
Notice that $L_{\nexti} \cap \Fixx{i} = \emptyset$, because
every state in $L_{\nexti}$ must have a value
$\ge 1-\err{\nexti}{1}$ for $\formula_\nexti$.
(In the special case of $i=0$ we have $\Fixx{0}=\emptyset$
and $\formula_1 = \formula$
and thus $l_1=0$ and $L_1 = L_1' = L_0$.)
Also recall that
\begin{align*}
L_{\nexti}'
&\subseteq
\safesub{\?L_\previ,(\always\,\Alphaset{\le i}\lor \formula_\nexti)}{1-\err{\nexti}{1}}\\
&\subseteq
\Alphaset{\le i} \cup \safesub{\?L_\previ,\formula_{\nexti}}{1-\err{\nexti}{1}}.
\end{align*}
We define $L_{\nexti}$ as 
$L_{\nexti} \eqdef L_{\nexti}' \setminus \Alphaset{\le i}$.

Since $\formula_{\nexti} \eqdef \formula \land \always (L\setminus\closure{\Fixx{i}})$ and $\formula$ is a parity objective, we can, by \cref{lem-eps-opt} and \cref{rem:quasi-tail}, pick an MD strategy $\obs{\nexti}$ that is
$\err{\nexti}{0}$-optimal for $\formula_{\nexti}$
from all states in $L_{\nexti}$.

Based on this strategy
$\obs{\nexti}$
and parameters $\alpha>\beta>\gamma>0$, we define
$\Alphaset{\nexti} \subseteq \Betaset{\nexti} \subseteq \Gammaset{\nexti}
\subseteq L$
to be the sets of states from which $\obs{\nexti}$ attains at least values
$\alpha,\beta$ and $\gamma$, for $\formula_{\nexti}$, respectively.
E.g.,
\[
\Betaset{\nexti} \eqdef \safesub{\?L_i,\obs{\nexti},\formula_{\nexti}}{\beta}
\]
In particular, this definition satisfies our invariant \eqref{eq:invariant}, i.e.,
$\Gammaset{\nexti} \cap \closure{\Fixx{i}} = \emptyset$,
because a high attainment $\gamma$ for
$\formula_{\nexti} = \formula \land \always (L\setminus \closure{\Fixx{i}}$
requires that $\closure{\Fixx{i}}$ is not visited.

W.l.o.g., by choosing $\err{\nexti}{1}, \err{\nexti}{0}$ sufficiently small, we
can assume that 
$\alpha<(1-{\err{\nexti}{1}}-{\err{\nexti}{0}})$, and therefore that
$L_{\nexti}\subseteq \Alphaset{\nexti} \subseteq \Betaset{\nexti}$
(we only need $\subseteq \Betaset{\nexti}$).

\smallskip
Let $\?L_i'\eqdef\fixin{\?L_i}{\obs{\nexti},\Betaset{\nexti}}$.
Note that in $\?L_i'$ the strategy $\sigma$ might not be able
to reach $L_{\nexti}$ with the same probability as in $\?L_i$,
because the choices in $\Betaset{\nexti}$ are now fixed.
However, a similar strategy $\sigma'$ can reach $\Betaset{\nexti}$ in $\?L_i'$
with at least the probability by which $\sigma$ reaches $L_{\nexti}$ in $\?L_i$.
We now define a new resetting strategy $\sigma'$ in $\?L_i'$.
It behaves like the previous strategy $\sigma$ until (and if) it reaches
$\closure{\Betaset{\nexti}}$. Without restriction we can assume that it
reaches even $\Betaset{\nexti}$ in this case (similar to the argument for
$\sigma$ above).
Then it plays like $\obs{\nexti}$ while in $\Gammaset{\nexti}$.
This is possible, since $\Gammaset{\nexti} \cap \closure{\Fixx{i}} = \emptyset$~ by our invariant \eqref{eq:invariant}.
If and when it exits $\Gammaset{\nexti}$ at some state $s$ then it resets to some
almost surely winning strategy $\sigma(s)$ for $\formula$ in $\?L$ until it reaches
$\Betaset{\nexti}$ (or another previously fixed part) again, etc. 

From \cref{claim:reach-L-next} (\cref{eq:claim:reach-L-next}) and the fact that $\sigma'$ behaves like
$\sigma$ until it reaches $\Betaset{i+1}$ we obtain that
\begin{equation}\label{eq:reach-alpha-or-next-beta}
\begin{aligned}
&\probm_{\?L_i',s_0,\sigma'}\left(\neg\always\,\Fixx{i} \wedge \eventually^{\le l_{i+1}}\left(\Alphaset{\le i}\lor \Betaset{i+1}\right)\right)
\\
&\ge
\probm_{\?L_\previ,s_0,\zstrat}\left(\neg\always\,\Fixx{i}\right)(1-\err{\nexti}{2})\\
&=
\probm_{\?L_\previ',s_0,\zstrat'}\left(\neg\always\,\Fixx{i}\right)(1-\err{\nexti}{2}),
\end{aligned}
\end{equation}
where the last equality holds because $\?L_\previ$ and $\?L_\previ'$
(resp.\ $\zstrat$ and $\zstrat'$) coincide inside $\Fixx{i}$.
Analogously to \cref{claim:as-par:progress-in-B}, from any state in $\Betaset{i+1}$,
the chance of staying in the set $\Gammaset{i+1}$ can be lower-bounded.
\begin{equation}\label{claim:stay-gamma-next}
\forall \state \in \Betaset{i+1}\, \probm_{\?L_i',\state,\sigma'}(\always~\Gammaset{i+1}) \ge \frac{\beta-\gamma}{1-\gamma} > 0
\end{equation}

$\Gammaset{\nexti} \cap \closure{\Fixx{i}} = \emptyset$
by \eqref{eq:invariant}
and $\sigma'$ continues to play $\obs{\nexti}$ in $\Gammaset{i+1}$.
Since $\always~\Gammaset{i+1} \subseteq \eventually\always~\Gammaset{i+1}$,
we can apply Levy's zero-one law
(\cref{col:01law-gamma-alpha})
to \eqref{claim:stay-gamma-next}
and obtain
\begin{equation}\label{eq:get-to-alpha-next}
\forall \state \in \Betaset{i+1}\, \probm_{\?L_i',\state,\sigma'}(\eventually\always~\Alphaset{i+1}) \ge \frac{\beta-\gamma}{1-\gamma} > 0.
\end{equation}
By combining \eqref{eq:reach-alpha-or-next-beta} with
\eqref{eq:get-to-alpha-next}, we get
\begin{equation*}
    \label{eq:reach-alpha-or-next-alpha}
    \begin{aligned}
    &\probm_{\?L_i',s_0,\sigma'}\left(\neg\always\,\Fixx{i} \wedge \eventually^{\le
      l_{i+1}}\left(\Alphaset{\le i}\lor \eventually\always~\Alphaset{i+1}\right) \right) \\
    &\ge
    \left(\probm_{\?L_\previ',s_0,\zstrat'}\left(\neg\always\,\Fixx{i}\right)(1-\err{\nexti}{2})\right)
\frac{\beta-\gamma}{1-\gamma}\\
    \end{aligned}
\end{equation*}
By continuity of measures (recall that $\eventually X = \bigcup_{k\in\N}\eventually^{k} X$),
for every $\err{\nexti}{3}>0$ there must exist a threshold $k_\nexti \ge
l_{i+1}$ of steps such that, for all $\state_0 \in L_0$,
\begin{equation}\label{eq:as-par:reach-phase}
    \begin{aligned}
    &\probm_{\?L_i',s_0,\sigma'}\left(\neg\always\,\Fixx{i} \wedge \eventually^{\le
      k_{i+1}} \Alphaset{\le \nexti}\right) \\
    &\ge
    \left(\probm_{\?L_\previ',s_0,\zstrat'}\left(\neg\always\,\Fixx{i}\right)(1-\err{\nexti}{2})\right)
\frac{\beta-\gamma}{1-\gamma}(1-\err{\nexti}{3}).
    \end{aligned}
\end{equation}
(Since $L_0$ is finite, we can have the same multiplicative error
$(1-\err{\nexti}{3})$ for all $\state_0 \in L_0$.)
(In the special case of $i=0$, we have $k_1=0$, since
$L_0=L_1\subseteq \Alphaset{1}$.)
Once inside $\Alphaset{\le\nexti}$, there is a bounded chance
$\ge \frac{\alpha-\beta}{1-\beta}$
of staying inside $\Betaset{\le\nexti}$ forever, by
\cref{claim:as-par:progress-in-B}.
Thus from \eqref{eq:as-par:reach-phase} we get
\begin{align}
&\probm_{\?L_i',\state_0,\sigma'}
    \label{eq:as-par:reach-phase-stay}
  (\neg\always\,\Fixx{i} \wedge \eventually^{\le k_\nexti}\ \always\ \Betaset{\le\nexti}) \\
&\ge
  \left(
\left(\probm_{\?L_\previ',s_0,\zstrat'}\left(\neg\always\,\Fixx{i}\right)(1-\err{\nexti}{2})\right)
\frac{\beta-\gamma}{1-\gamma}
(1- \err{\nexti}{3})
\right) \frac{\alpha-\beta}{1-\beta}
\nonumber
\end{align}
Consider the finite $k_\nexti$-bubble
$H_{i+1} \eqdef \bubblearound{\?L}{L_0}{k_{i+1}}$
around $L_0$.
Remember that in finite MDPs, there are uniformly optimal MD strategies for reachability
objectives \cite{Ornstein:AMS1969}.
Consequently, since $H_{i+1}$ is finite, there exists an MD strategy
$\ors{\nexti}$ in $\?L_i'$ that is optimal from $H_{i+1}$ for the objective of reaching $\Alphaset{\le \nexti}$
(from $L_0$) \emph{inside} $H_{i+1}$
without leaving $H_{i+1}$.
We fix $\ors{\nexti}$ inside $H_{i+1}$,
and obtain our new MDP
\[
\?L_{\nexti}\eqdef 
\fixinbubble{\?L_i'}{\ors{\nexti}}{L_0}{k_{\nexti}}
\]
See \cref{fig:as-par:sea-urchin} for an illustration
after round $i=3$.
We define $\Fixx{i+1} \eqdef \Betaset{\le i+1} \cup H_{i+1}$ as the region where the strategy
is already fixed in $\?L_{i+1}$.
We now define an almost surely winning resetting strategy $\sigma''$ in
$\?L_{\nexti}$, analogously as $\sigma$ previously in $\?L_i$.
Similarly as in \cref{eq:fixi-implies-parity} for $\?L_{i}$,
we can derive the corresponding property for $\?L_\nexti$.
\begin{equation}\label{eq:fixi-implies-parity-next}
\probm_{\?L_\nexti,s_0,\sigma''}\left(\formula \wedge \always \,\Fixx{\nexti}\right)
=
\probm_{\?L_\nexti,s_0,\sigma''}\left(\always \,\Fixx{\nexti}\right)
\end{equation}

$\?L_\nexti$, $\?L_\previ'$ and $\?L_\previ$ 
(resp. the strategies $\zstrat''$, $\zstrat'$ and
$\zstrat$) coincide inside $\Fixx{i}$.
Thus by \eqref{eq:fixi-implies-parity} we have
\begin{equation}\label{eq:coincide-fixi}
\begin{aligned}
\probm_{\?L_\nexti,s_0,\zstrat''}\left(\always\,\Fixx{i}\right) & = &
\probm_{\?L_\previ',s_0,\zstrat'}\left(\always\,\Fixx{i}\right) \\
& = & \probm_{\?L_\previ,s_0,\zstrat}\left(\always\,\Fixx{i}\right)\\
& = & 
p_i(\state_0)
\end{aligned}
\end{equation}
By the optimality of the reachability strategy $\ors{\nexti}$ that is fixed in
$H_{\nexti}$ and 
$\Fixx{i+1} = \Betaset{\le i+1} \cup H_{i+1}$, we obtain from
this and \cref{eq:as-par:reach-phase-stay} that
\begin{equation}\label{eq:as-par:reach-phase-stay2}
\begin{aligned}
&\probm_{\?L_\nexti,\state_0,\sigma''}
  (\neg\always\,\Fixx{i} \wedge \always\,\Fixx{\nexti}) \\
  &\ge
  \left(
(1 - p_i(s_0)) (1-\err{\nexti}{2})
\frac{\beta-\gamma}{1-\gamma}
(1-\err{\nexti}{3})
\right) \frac{\alpha-\beta}{1-\beta}
\end{aligned}
\end{equation}

The crucial question is how much $\sigma''$ attains for $\formula$ in the fixed part alone,
i.e., how large is 
$
\probm_{\?L_\nexti,s_0,\sigma''}\left(\formula\,
\wedge\, \always\,\Fixx{i+1}\right) = p_{i+1}(s_0)
$
?
For all $\state_0 \in L_0$ we have
\begin{align*}
&p_{i+1}(s_0)\\
&=\probm_{\?L_\nexti,s_0,\sigma''}\left(\always \,\Fixx{\nexti}\right)\\
&=\probm_{\?L_\nexti,s_0,\sigma''}\left(\always\,\Fixx{i}\right) +\probm_{\?L_\nexti,\state_0,\sigma''}(\neg\always\,\Fixx{i} \wedge \always\,\Fixx{\nexti}) \\
&\ge
p_i(s_0)
+ (1-p_i(s_0))\left((1-\err{\nexti}{2})\frac{\beta-\gamma}{1-\gamma}
(1-\err{\nexti}{3})\frac{\alpha-\beta}{1-\beta}\right),
\end{align*}
where the first equality is due to \eqref{eq:fixi-implies-parity-next}
and the last inequation is due to
\cref{eq:as-par:reach-phase-stay2,eq:coincide-fixi}.

We can suitably choose the parameters
$\alpha,\beta,\gamma,\err{\nexti}{2},\err{\nexti}{3}$ such that
$\left((1-\err{\nexti}{2})\frac{\beta-\gamma}{1-\gamma}
(1-\err{\nexti}{3})\frac{\alpha-\beta}{1-\beta}\right)$
is arbitrarily close to $1$, and thus in particular $\ge 1/2$,
and obtain that
$p_{i+1}(s_0) \ge p_i(s_0) + (1-p_i(s_0))/2$.
Since $p_0(s_0)=0$, we get
$1-p_i(s_0) \le 2^{-i}$ and thus
$\lim_{i\rightarrow\infty} p_i(s_0) = 1$, as required.

Finally, let $\hat{\sigma}$ be the MD strategy in $\?L$ that plays from $L_0$
as prescribed by all the fixings in $\bigcup_i \Fixx{i}$ in the systems $\?L_i$.
Then, for all $\state_0\in \states_0$ and every $i \in \N$, it holds that
\[
\probm_{\?L,s_0,\hat{\sigma}}\left(\formula\right)
\ge
\probm_{\?L_i,s_0,\hat{\sigma}}\left(\formula\, \wedge \, \always\,\Fixx{i}\right)
=
p_i(s_0) \ge 1-2^{-i}
\]
Since this holds for every $i \in \N$ we get that
$\probm_{\?L,s_0,\hat{\sigma}}\left(\formula\right) = 1$, i.e., the MD
strategy $\hat{\sigma}$
wins $\formula$ almost surely from every $s_0 \in L_0$.
\end{proof}

\section{Optimal Strategies for $\cParity{\{0,1,2\}}$}
\label{app:as012}
\thmZeroOneTwoPar*

In the rest of this section we prove \cref{thm:012quant}.
It generalizes \cite[Theorem 16]{KMSW2017}, which considers only
finitely-branching MDPs and uses the fact that for every safety
objective, an MD strategy exists that is uniformly \emph{optimal}.
This is not generally true for infinitely-branching acyclic MDPs \cite{KMSW2017}.
To prove \cref{thm:012quant}, we adjust the construction so that it only
requires uniformly \emph{$\eps$-optimal} MD strategies for safety objectives
(in the conditioned MDP $\pmdp$).

\begin{theorem}[from Theorem B in \cite{Ornstein:AMS1969}]\label{thm:reach-eps}
For every MDP $\mdp$ there exist uniform $\eps$-optimal MD-strategies for
reachability objectives.
\end{theorem}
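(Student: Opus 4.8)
The statement is Theorem~B of \cite{Ornstein:AMS1969}, which one may simply invoke; for completeness I sketch how one would prove it from scratch. First I would set up the standard value-iteration machinery: assume w.l.o.g.\ that $\reachset$ is a sink, write $v(\state)\eqdef\valueof{\mdp}{\state}$ for the value of $\reach{\reachset}$, and let $v_n(\state)$ be the maximal probability of reaching $\reachset$ within $n$ steps, so that $v_n\uparrow v$ pointwise and $v$ satisfies the Bellman optimality equations ($v(\state)=1$ on $\reachset$; $v(\state)=\sup_{\state\transition\state'}v(\state')$ for $\state\in\zstates\setminus\reachset$; $v(\state)=\sum_{\state'}\probp(\state)(\state')v(\state')$ for $\state\in\rstates\setminus\reachset$). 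On states with $v(\state)=0$, and more generally on states with $v(\state)\le\eps$, the MD strategy may be defined arbitrarily since any strategy is $\eps$-optimal there; and any probability mass that a run from a high-value state sends into such states is already accounted for by the Bellman equation, so losing it is harmless.

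Next I would fix $\eps>0$, choose an error schedule $\eps_0,\eps_1,\dots>0$ with $\sum_k\eps_k\le\eps$ (say $\eps_k\eqdef\eps\,2^{-k-1}$), and build a single MD strategy $\sigma$ together with a \emph{rank} function $\operatorname{rk}\colon\{v>0\}\to\N$, where $\operatorname{rk}(\state)$ records the least horizon within which $\state$ essentially attains its value (roughly $\min\{n:v_n(\state)\ge v(\state)-\eps_{\operatorname{rk}(\state)}\}$, defined consistently). At a controlled state $\state$ of rank $r$, $\sigma$ would pick a successor $\state'$ that both \emph{nearly preserves the value} ($v(\state')\ge v(\state)-\eps_r$) and \emph{makes progress} ($\operatorname{rk}(\state')<r$); such a successor exists by the Bellman equation for $v_r$ together with $v(\state')\le v(\state)$, at the price of an $\eps_r$-slack charged to the budget. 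Since ranks strictly decrease along controlled moves while the total slack accumulated along any run is at most $\sum_k\eps_k\le\eps$, a run following $\sigma$ cannot loop forever among states of almost-equal value; combining this with near-value-preservation at each step, a martingale/limit argument (or an induction on rank) would give $\probm_{\mdp,\state,\sigma}(\reach{\reachset})\ge v(\state)-\eps$ from every $\state$. Running the same scheme with multiplicative error budgets even yields the stronger bound $\probm_{\mdp,\state,\sigma}(\reach{\reachset})\ge(1-\eps)\,v(\state)$, as in \cite{Ornstein:AMS1969}.

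The main obstacle is precisely what makes the statement nontrivial: a naive greedy (merely value-preserving) MD strategy can cycle forever inside a set of states of equal value without ever entering $\reachset$, so a memoryless decision at a controlled state must also encode directed progress, which is what the rank supplies. The delicate point is that ranks need \emph{not} decrease across random transitions --- a low-probability successor of large value can have a much larger horizon --- so the bookkeeping must be arranged, as in Ornstein's original stage-by-stage construction, so that such rank increases are paid for by the probability mass that actually flows to those successors. Making this accounting precise is the real work, and for that I would defer to \cite{Ornstein:AMS1969}.
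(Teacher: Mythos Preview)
The paper does not give a proof of this theorem at all; it simply records the result and attributes it to Ornstein~\cite{Ornstein:AMS1969}. Your proposal does exactly the same---invoke the citation---and then goes further by sketching the underlying idea, correctly identifying the key obstacle (a merely value-preserving MD choice can stall forever) and the remedy (a rank/horizon bookkeeping that forces progress while charging the accumulated slack to a summable error budget). That sketch is faithful to the spirit of Ornstein's construction; since the paper itself offers nothing beyond the reference, there is nothing to compare and your proposal is at least as complete as the paper's treatment.
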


The following simple lemma provides a scheme for proving almost-sure properties.
\newcommand{\lemaspartitionscheme}{
Let $\probm$ be a probability measure over the sample space~$\Omega$.
Let $(\playset_i)_{i \in I}$ be a countable partition of~$\Omega$ in measurable events.
Let ${\?E} \subseteq \Omega$ be a measurable event.
Suppose $\probm(\playset_i \cap {\?E}) = \probm(\playset_i)$ holds for all $i \in I$.
Then $\probm({\?E}) = 1$.
}
\begin{lemma}[Lem.~18 in \cite{KMSW2017}]
    \label{lem-as-partition-scheme}
\lemaspartitionscheme
\end{lemma}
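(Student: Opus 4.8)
\textbf{Proof plan for \cref{lem-as-partition-scheme}.}
The statement is that if $(\playset_i)_{i\in I}$ is a countable measurable partition of $\Omega$ and ${\?E}$ is measurable with $\probm(\playset_i\cap{\?E})=\probm(\playset_i)$ for every $i\in I$, then $\probm({\?E})=1$. The plan is to exploit countable additivity directly. First I would note that since the $\playset_i$ partition $\Omega$, we have $\Omega=\bigcup_{i\in I}\playset_i$ as a disjoint union, hence ${\?E}={\?E}\cap\Omega=\bigcup_{i\in I}({\?E}\cap\playset_i)$, again a disjoint union (the $\playset_i$ are pairwise disjoint, so the sets ${\?E}\cap\playset_i$ are too, and each is measurable as an intersection of measurable sets). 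By countable additivity of $\probm$ it then follows that
\[
\probm({\?E})=\sum_{i\in I}\probm({\?E}\cap\playset_i)=\sum_{i\in I}\probm(\playset_i)=\probm\Big(\bigcup_{i\in I}\playset_i\Big)=\probm(\Omega)=1,
\]
where the second equality uses the hypothesis $\probm(\playset_i\cap{\?E})=\probm(\playset_i)$ termwise, the third uses countable additivity again (on the partition itself), and the last is the normalization of the probability measure.

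There is essentially no obstacle here: the whole argument is one application of countable additivity, used twice, together with the hypothesis applied termwise. The only points worth stating carefully are (i) that $I$ being countable is exactly what licenses the use of $\sigma$-additivity rather than mere finite additivity, and (ii) that each ${\?E}\cap\playset_i$ is genuinely measurable, which is immediate since $\sigma$-algebras are closed under finite intersection. One could alternatively phrase it via the complement: $\probm(\Omega\setminus{\?E})=\sum_{i\in I}\probm(\playset_i\setminus{\?E})=\sum_{i\in I}(\probm(\playset_i)-\probm(\playset_i\cap{\?E}))=0$, but the direct computation above is cleaner. I would present the displayed chain of equalities as the complete proof, with a one-line justification of each step.
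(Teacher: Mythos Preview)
Your proof is correct; the paper itself does not give a proof of this lemma but merely cites it from \cite{KMSW2017}, and your direct countable-additivity argument is exactly the standard justification one would expect.
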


We need a few lemmas about safety objectives first.
Recall the definition of safe sets (\cref{def:safeset}).

\newcommand{\lemaszotreturntosafe}{
  Let $\mdp=\mdptuple$ be an MDP, $\reachset\subseteq \states$, $\zstrat$ a
  strategy from state $\state \in \states$ and $\tau < 1$.
It holds that $\probm_{\mdp,\state,\zstrat}(\eventually \always \neg\safesub{\mdp,\safety{\reachset}}{\tau} \land \eventually \always (\states\setminus\reachset)) = 0$.
}
\begin{lemma}
\label{lem:as012-return-to-safe}
\lemaszotreturntosafe
\end{lemma}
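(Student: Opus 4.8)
The plan is to run L\'evy's zero-one law pathwise. Write $W \eqdef \safesub{\mdp,\safety{\reachset}}{\tau}$, so that the event in question is $\eventually\always\neg W \land \eventually\always(\states\setminus\reachset)$, and the goal is to show it has probability $0$ under $\probm_{\mdp,\state,\zstrat}$. The first step is to decompose $\eventually\always(\states\setminus\reachset)$ as the countable increasing union $\bigcup_{M\in\N} C_M$, where
\[
C_M \eqdef \{\rho\in\state\states^\omega \mid \rho(j)\notin\reachset \text{ for all } j\ge M\}\,.
\]
By a union bound (or continuity of measure from below) it then suffices to prove, for each fixed $M$, that $\probm_{\mdp,\state,\zstrat}(\eventually\always\neg W \land C_M)=0$.

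Fix $M$. The \emph{key step} is to apply L\'evy's zero-one law (cf.\ \cref{app:levy01}) to the single event $C_M$, in the probability space of runs from $\state$ under $\zstrat$, equipped with the filtration $(\F_i)_{i\in\N}$ where $\F_i$ is generated by the length-$i$ prefixes. This yields $\lim_{i\to\infty}\probm_{\mdp,\state,\zstrat}(C_M\mid\F_i)(\rho)=\indicatorfun_{C_M}(\rho)$ for $\probm_{\mdp,\state,\zstrat}$-almost every $\rho$; in particular $\probm_{\mdp,\state,\zstrat}(C_M\mid\F_i)(\rho)\to 1$ for almost every $\rho\in C_M$. Next I would establish the pointwise bound: for every $\rho\in C_M$ and every $i\ge M$,
\[
\probm_{\mdp,\state,\zstrat}(C_M\mid\F_i)(\rho)\ \le\ \valueof{\mdp,\safety{\reachset}}{\rho(i)}\,.
\]
The reason is that, since $\rho\in C_M$ and $i\ge M$, the observed prefix $\rho(0)\cdots\rho(i)$ already avoids $\reachset$ from position $M$ on; hence, conditioned on this prefix, $C_M$ becomes exactly the event that the continuation never enters $\reachset$, and together with $\rho(i)\notin\reachset$ this is the event that the run from $\rho(i)$ onwards satisfies $\safety{\reachset}$. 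Since the restriction of $\zstrat$ to histories extending $\rho(0)\cdots\rho(i)$ induces a strategy from $\rho(i)$ (the standard shifting argument, as in the proof of \cref{lem:prefix-ind-optimality}), the left-hand side equals the $\safety{\reachset}$-probability of that strategy from $\rho(i)$, which is at most $\valueof{\mdp,\safety{\reachset}}{\rho(i)}$ by definition of the value.

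Combining the two, for almost every $\rho\in C_M$ we get $\liminf_{i\to\infty}\valueof{\mdp,\safety{\reachset}}{\rho(i)}\ \ge\ \lim_{i\to\infty}\probm_{\mdp,\state,\zstrat}(C_M\mid\F_i)(\rho)=1$, hence $\valueof{\mdp,\safety{\reachset}}{\rho(i)}\to 1$. As $\tau<1$, this forces that for almost every $\rho\in C_M$ there is some $K$ with $\rho(i)\in\safesub{\mdp,\safety{\reachset}}{\tau}=W$ for all $i\ge K$; such a run lies in $\eventually\always W$ and therefore not in $\eventually\always\neg W$. Thus $\probm_{\mdp,\state,\zstrat}(C_M\cap\eventually\always\neg W)=0$, and summing over the countably many $M$ completes the argument. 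I expect the main obstacle to be the middle step: correctly identifying $\probm_{\mdp,\state,\zstrat}(C_M\mid\F_i)(\rho)$ on the event $C_M$ as a bona fide $\safety{\reachset}$-probability from $\rho(i)$ and bounding it by the MDP value. This hinges on the observation that membership of $\rho$ in $C_M$ makes the observed prefix ``safe beyond $M$'', so the conditioned event is purely about the future, and on handling the fact that $\zstrat$ need not be memoryless, so it must be shifted along the prefix.
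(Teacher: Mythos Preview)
Your argument is correct and takes a genuinely different route from the paper's own proof. The paper does \emph{not} invoke L\'evy's zero-one law here; instead it gives a direct combinatorial estimate: for each state $s\notin\safesub{\mdp,\safety{\reachset}}{\tau}$ it picks a horizon $n(s)$ by which the probability of staying outside $\reachset$ has dropped below $(1+\tau)/2$, builds a language $L$ of such segments that stay outside both $\reachset$ and the safe set, and shows $\probm(L^k\states^\omega)\le\bigl(\frac{1+\tau}{2}\bigr)^k$, whence $\probm(L^\omega)=0$; the claim then follows since $\denotationof{\always\neg\safesub{}{\tau}\land\always\neg\reachset}{}=L^\omega$ and $\eventually$ is a countable union of shifts.

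Your approach trades this explicit geometric decay for a martingale limit: you show that on $C_M$ the conditional probability $\probm(C_M\mid\F_i)$ is bounded by the safety value at the current state (via the standard strategy-shifting argument), so L\'evy forces that value to~$1$ along almost every run in $C_M$, which immediately pushes such runs into $\eventually\always W$. This is shorter and conceptually cleaner, and it meshes well with the paper's repeated use of L\'evy elsewhere. The paper's argument, on the other hand, is entirely elementary and gives a quantitative decay rate, and it sidesteps any need to reason about conditional probabilities on null sets or about the filtration indexing. Both are valid; yours is arguably the more natural fit for this paper's toolkit.
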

\begin{proof}
For any $n \in \nat$ define $Z_n \eqdef \left(\states\setminus\reachset\right)^n$.
That is, $Z_n S^\omega$ is the event that the first $n$ visited states are
outside $\reachset$.
For every state $\state \not\in \safesub{\mdp,\safety{\reachset}}{\tau}$
and every strategy $\zstrat$ from $\state$
we have that
$\lim_{n \rightarrow \infty} \probm_{\mdp,\state,\zstrat}(Z_n S^\omega) < \tau
< (1+\tau)/2$ by \cref{def:safeset}.
Let $n(\state) \in \nat$ be the smallest number such that $\probm_{\mdp,\state,\zstrat}(Z_{n(s)} S^\omega) \le (1+\tau)/2$.
Let $L \subseteq \states^*$ be the set of finite sequences
$s_0 s_1 \cdots s_{n-1}$ such that $s_0 \not\in
\safesub{\mdp,\safety{\reachset}}{\tau}$
and $n = n(s_0)$ and
$\forall i < n .\, s_i \in \left(\states\setminus\reachset\right) \setminus \safesub{\mdp,\safety{\reachset}}{\tau}$.

We show for all $s \in \states \setminus \safesub{\mdp,\safety{\reachset}}{\tau}$ and all $k \in \nat$ that $\probm_{\mdp,s,\zstrat}(L^k \states^\omega) \le \left(\frac{1+\tau}{2}\right)^k$.
We proceed by induction on~$k$.
The case $k=0$ is trivial.
For the induction step let $k \ge 0$.
\begin{align*}
\probm_{\mdp,s,\zstrat}(L^{k+1} \states^\omega)
& \le \probm_{\mdp,s,\zstrat}(Z_{n(s)} L^{k} \states^\omega) \\ %
& \le \probm_{\mdp,s,\zstrat}(Z_{n(s)} \states^\omega)\;\;\;\cdot\;\smashoperator{\sup_{{s' \in \states \setminus \safesub{\mdp,\safety{\reachset}}{\tau}}}}~\probm_{\mdp,s',\zstrat}(L^{k} \states^\omega) \\
& \le \probm_{\mdp,s,\zstrat}(Z_{n(s)} \states^\omega) \cdot \left(\frac{1+\tau}{2}\right)^k %
\le \left(\frac{1+\tau}{2}\right)^{k+1}  %
\end{align*}
where the first inequality uses that $L \cap \{s\} S^* \subseteq Z_{n(s)}$,
the third uses the induction hypothesis, and the last the definition of $n(s)$.
This completes the induction proof.

Write $\formula \eqdef \always \neg\safesub{\mdp,\safety{\reachset}}{\tau} \land \always \left(\states\setminus\reachset\right)$.
For all $s \in \states$,
\begin{align*}
\probm_{\mdp,s,\zstrat}(\formula)
& = \probm_{\mdp,s,\zstrat}(L^\omega) && \text{because $\denotationof{\formula}{} = L^\omega$}\\
& = \lim_{k \to \infty} \probm_{\mdp,s,\zstrat}(L^k \states^\omega) &&
\text{by continuity of measures}\\
& \le \lim_{k \to \infty} \left(\frac{1+\tau}{2}\right)^k && \text{as shown above} \\
& = 0 && \text{because $\tau < 1$}
\end{align*}
It follows that
$\probm_{\mdp,s,\zstrat}(\next^j \formula) = 0$,
for all $s \in \states$ and all $j \in \nat$
and therefore that
\begin{align*}
&\probm_{\mdp,\state,\zstrat}(\eventually \always \safesub{\mdp,\safety{\reachset}}{\tau} \land \eventually \always \left(\states\setminus\reachset\right))\\
& \ = \ \probm_{\mdp,\state,\zstrat}(\eventually \formula) \\
& \ = \ \probm_{\mdp,\state,\zstrat}\Big(\bigcup_{j \in \nat} \denotationof{\next^j \formula}{s}\Big) \\
& \ \le \ \sum_{j \in \nat} \probm_{\mdp,\state,\zstrat}(\next^j \formula)\  = \ 0 \qedhere
\end{align*}
\end{proof}

Now we show that if an MDP admits uniformly $\eps$-optimal strategies for all
safety objectives, then optimal strategies for $\cParity{\{0,1,2\}}$ (where
they exist) can be chosen MD.

\begin{lemma}
\label{prop:as012}
Let $\mdp=\mdptuple$ be an MDP such that for every safety objective (given by
some target set $T\subseteq \states$) and $\eps >0$ there exists a uniformly
$\eps$-optimal MD strategy.
Let $\state_0 \in \states$,
$\coloring:\states\to \{0,1,2\}$, $\formula=\Parity{\coloring}$,
and $\zstrat$ a strategy with
$\probm_{\mdp,\state_0,\zstrat}(\formula) = 1$.
Then there is an MD-strategy~$\zstrat'$ with $\probm_{\mdp,\state_0,\zstrat'}(\formula) = 1$.
\end{lemma}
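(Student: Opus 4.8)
The plan is to adapt the proof of \cite[Theorem~16]{KMSW2017}, replacing its reliance on uniformly \emph{optimal} MD strategies for safety (not available here) by the uniformly \emph{$\eps$-optimal} ones granted by the hypothesis. First I would reduce to the case where every state reachable from $\state_0$ under $\zstrat$ is almost-surely winning for $\formula$: restrict $\mdp$ to these states, keeping at each controlled state only the transitions to almost-surely winning successors. The conditioned tail of $\zstrat$ witnesses that every such state is almost-surely winning (and hence that no controlled state loses all of its successors), every successor of a random state is automatically almost-surely winning, and the safety hypothesis carries over to the sub-MDP because a target $T$ there can be replaced by $T$ together with the removed states in~$\mdp$. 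From now on write $T_i \eqdef \colorset{\states}{= i}{}$ for $i\in\{1,2\}$ and use
\[
\formula \ = \ (\always\eventually T_2)\ \cup\ (\eventually\always\neg T_1)\,;
\]
since $\safety{T_1}=\always\neg T_1$ is suffix-closed, all of $\always\eventually T_2$, $\eventually\always\neg T_1$ and $\formula$ are tail objectives, so L\'evy's zero--one law (\cref{app:levy01}) applies.

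Two ingredients will build the MD strategy: for each $\eps>0$ a uniformly $\eps$-optimal MD strategy for $\safety{T_1}$ (hypothesis), and for each $\eps>0$ a uniformly $\eps$-optimal MD strategy for reachability (\cref{thm:reach-eps}). Let $R\eqdef\safesub{\mdp,\safety{T_1}}{1}$ be the states of value~$1$ for avoiding color~$1$; note $R\cap T_1=\emptyset$ and $\eventually\always R\subseteq\eventually\always\neg T_1\subseteq\formula$, so a run that eventually stays in $R$ wins outright. The key value fact is obtained from \cref{lem:as012-return-to-safe} (instantiated with target~$T_1$) together with L\'evy's zero--one law: an almost-surely winning run for $\formula$ either visits $T_2$, or satisfies $\eventually\always\neg T_1$ and is then driven almost surely into states whose value for $\eventually\always\neg T_1$ tends to~$1$; consequently, from every state the value for reaching $T_2$ together with such high-value states equals~$1$, which is exactly the input the reachability ingredient requires.

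The heart of the argument is an iterated ``plastering'' construction over a fixed enumeration $s_1,s_2,\dots$ of the (countably many) states, in the spirit of the greedy construction of \cref{lem:MDP-as-uniform} and the error-absorption steps of \cref{lem-eps-opt} and \cref{thm:opt-par-acyclic}. At step~$i$, on a fresh region containing $s_i$, one fixes on the part of this region lying in $R$ an $\eps_i$-optimal MD safety strategy, and on the rest an $\eps_i$-optimal MD reachability strategy (aimed at $T_2$ and at the current high-value region), with the $\eps_i$ decreasing geometrically. The decisive point, established via L\'evy's zero--one law, is that the probability mass that leaks at step~$i$ — touching color~$1$, or failing to reach the target — lands in states whose value for the residual parity objective, and for $\safety{T_1}$, has increased, so it is re-absorbed at later steps. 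Since the state space is countable the fixed region eventually exhausts all reachable states, and the resulting MD strategy $\zstrat'$ attains $\formula$ with probability $\ge 1-2^{-i}$ already inside the region fixed after step~$i$; as this holds for every~$i$, $\probm_{\mdp,\state_0,\zstrat'}(\formula)=1$. The partition-scheme lemma (\cref{lem-as-partition-scheme}), applied to the partition of runs according to the fixed region in which they ultimately remain, is a clean way to package this last deduction.

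I expect the main obstacle to be the color-$2$ (B\"uchi) part: no finite-memory, hence no MD, strategy is $\eps$-optimal for a B\"uchi objective in general, so one cannot simply ``win $\always\eventually T_2$ almost surely with MD''. The construction circumvents this only because $\cParity{\{0,1,2\}}$ offers a color-$0$ escape — a run failing to reach color~$2$ is pushed by L\'evy's law into the safe region~$R$ and wins through the co-B\"uchi disjunct — and because this lemma asks only for \emph{almost-sure} winning, which permits the errors $\eps_i$ to decay geometrically over the countably many steps without the strategy ever recording a step counter. Arranging the iteration so that the two kinds of leakage (into color~$1$, and failure to reach the target) are simultaneously governed by the single shrinking parameter~$\eps_i$, while the fixed region keeps growing and retains the ``$\ge 1-2^{-i}$ attained inside'' guarantee, is where the bulk of the work lies.
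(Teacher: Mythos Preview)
Your proposal has a genuine gap, and it is also much more elaborate than what is needed.

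The paper's proof is a \emph{one-shot} construction, not an iterative plastering: it fixes a \emph{single} uniformly $\eps$-optimal MD safety strategy~$\epsoptav$ (for some fixed $\eps<\tfrac13$) and a \emph{single} uniformly $\eps$-optimal MD reachability strategy, and glues them together along one threshold pair. Concretely, with $\reachset \eqdef \colorset{\states}{\ne}{0}$ and $\safe{\tau} \eqdef \safesub{\mdp,\epsoptav,\safety{\reachset}}{\tau}$, it sets $\zstrat'=\epsoptav$ on $\safe{\tfrac13}$, shows via a resetting-strategy argument (using \cref{lem:as012-return-to-safe}) that the value of $\eventually(\safe{\tfrac23}\cup\colorset{\states}{=}{2})$ is $1$ everywhere in the resulting MDP, fixes an MD reachability strategy for that target outside $\safe{\tfrac13}$, and finally uses \cref{lem:LZO-1} at the two thresholds $\tfrac23>\tfrac13$ (from any state in $\safe{\tfrac23}$ one stays in $\safe{\tfrac13}$ forever with probability $\ge \tfrac12$) together with \cref{lem-as-partition-scheme} to conclude $\probm_{\mdp,\state,\zstrat'}(\formula)=1$. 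No enumeration of states, no sequence of $\eps_i$, no growing regions.

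The specific place where your sketch goes wrong is the claim that leaked mass ``lands in states whose value \dots\ for $\safety{T_1}$ has increased''. Mass that leaks by \emph{touching} $T_1$ lands precisely in a color-$1$ state, where the safety value is $0$; there is nothing to re-absorb via safety there. In the paper's argument this is handled not by re-absorption at later iterations but by the $\tfrac23/\tfrac13$ threshold trick: you enter the fixed region only via the \emph{inner} set $\safe{\tfrac23}$, so \cref{lem:LZO-1} gives a uniform lower bound $\ge\tfrac12$ on staying forever in the \emph{outer} set $\safe{\tfrac13}$; together with $\always\eventually$-recurrence of the reachability target, this forces $\eventually\always\safe{\tfrac13}$ (hence $\formula$) almost surely on one branch of the partition, and $\always\eventually\colorset{\states}{=}{2}$ on the other. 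Your iterative scheme would also need to explain what ``a fresh region containing $s_i$'' is and how different steps' MD choices are reconciled into one MD strategy; as written, those points are left undefined.
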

\begin{proof}%
To achieve an almost-sure winning objective, the player must forever remain in states from which the objective can be achieved almost surely.
So we can assume without loss of generality that all states are almost-sure winning, i.e., for all $\state \in \states$ we have $\probm_{\mdp,\state,\zstrat}(\formula) = 1$ for some strategy $\zstrat$.
We will define an MD-strategy~$\zstrat'$ with $\probm_{\mdp,\state,\zstrat'}(\formula) = 1$ for all $\state \in \states$.

Recall that $\colorset{\states}{\neq}{0}\subseteq \states$ denotes the subset of states of color $1$ or $2$.
Let $\reachset\eqdef\colorset{\states}{\neq}{0}$ and let $\epsoptav$ be a
uniformly $\eps$-optimal MD strategy
for $\safety{\reachset}$,
whose existence is guaranteed by our assumption on $\mdp$.
The precise $\eps>0$ is immaterial, we only need that $\eps < \frac13$. %
The MD-strategy $\zstrat'$ will be based on special subsets (\cref{def:safeset}):
\begin{equation}\label{eq:def_safeM}
\safesub{\mdp}{\tau} \eqdef \safesub{\mdp,\epsoptav,\safety{\reachset}}{\tau}\subseteq \states.
\end{equation}
We first define the MD-strategy~$\zstrat'$ partially for the states in $\safesub{\mdp}{\frac13}$ and then extend the definition of~$\zstrat'$ to all states.
For the states in $\safesub{\mdp}{\frac13}$ define $\zstrat' \eqdef \epsoptav$
(which is MD).
Let $\mdp'$ be the MDP obtained from~$\mdp$ by restricting the transition
relation as prescribed by the partial MD-strategy~$\zstrat'$
in $\safesub{\mdp}{\frac13}$ (elsewhere the choices remain free).
We define $\safesub{\mdp'}{\tau}$ for $\mdp'$ as in
Equation~\eqref{eq:def_safeM} for $\mdp$. Thus,
for any $\tau \in [0,1]$, we have $\safesub{\mdp}{\tau} = \safesub{\mdp'}{\tau}$.
Indeed, since $\mdp'$ restricts the options of the player,
we have $\safesub{\mdp}{\tau} \supseteq \safesub{\mdp'}{\tau}$.
Conversely, let $\state \in \safesub{\mdp}{\tau}$.
The strategy $\epsoptav$
attains $\probm_{\mdp,\state,\epsoptav}(\always \colorset \states = 0) \ge \tau$.
Since $\epsoptav$ can be applied in~$\mdp'$, and results in the same Markov chain
as applying it in $\mdp$, we conclude $\state \in \safesub{\mdp'}{\tau}$.
This justifies to write $\safe{\tau}$ for $\safesub{\mdp}{\tau} = \safesub{\mdp'}{\tau}$ in the remainder of the proof.

Next we show that, also in~$\mdp'$, for all states $\state \in \states$ there exists a strategy~$\zstrat_1$ with $\probm_{\mdp',\state,\zstrat_1}(\formula) = 1$.
This strategy~$\zstrat_1$ is defined as follows.
First play according to an almost-surely winning strategy~$\zstrat$ from the statement of the theorem.
If and when the play visits $\safe{\frac13}$, switch to the
MD-strategy~$\epsoptav$.
If and when the play then visits $\colorset \states \ne 0$, switch back to an almost-surely winning strategy~$\zstrat$ from the statement of the theorem, and so forth.

We show that $\zstrat_1$ attains $\probm_{\mdp',\state,\zstrat_1}(\formula) = 1$.
To this end we will use Lemma~\ref{lem-as-partition-scheme}.
We partition the runs of $s S^\omega$ into three events $\playset_0, \playset_1, \playset_2$ as follows:
\begin{itemize}
\item
$\playset_0$ contains the runs where $\zstrat_1$ switches between $\epsoptav$ and~$\zstrat$ infinitely often.
\item
$\playset_1$ contains the runs where $\zstrat_1$ eventually only plays according to~$\epsoptav$.
\item
$\playset_2$ contains the runs where $\zstrat_1$ eventually only plays according to~$\zstrat$.
\end{itemize}
Each time $\zstrat_1$ switches to~$\epsoptav$, there is, by
definition of $\safesub{\mdp}{\frac13}$,
a probability of at least~$\frac13$ of never visiting a color-$\{1,2\}$ state again and thus of never again switching to~$\zstrat$.
It follows that $\probm_{\mdp',\state,\zstrat_1}(\playset_0) = 0$.
By the definition of the switching behavior of $\zstrat_1$,
we have $\playset_1 \subseteq \denotationof{\eventually \always \colorset \states = 0}{} \subseteq \denotationof{\formula}{}$, and hence $\probm_{\mdp',\state,\zstrat_1}(\playset_1 \cap \denotationof{\formula}{}) = \probm_{\mdp',\state,\zstrat_1}(\playset_1)$.
Since $\probm_{\mdp,\state,\zstrat}(\formula) = 1$ and $\formula$ is tail, we have that
$\probm_{\mdp',\state,\zstrat_1}(\playset_2 \cap \denotationof{\formula}{}) = \probm_{\mdp',\state,\zstrat_1}(\playset_2)$.
Using Lemma~\ref{lem-as-partition-scheme}, we obtain $\probm_{\mdp',\state,\zstrat_1}(\formula) = 1$.

Next we show that for all $s \in \states$ the strategy~$\zstrat_1$ defined above achieves $\probm_{\mdp',\state,\zstrat_1}(\eventually \safe{\frac23} \lor \eventually \colorset \states = 2) = 1$.
To this end we will use Lemma~\ref{lem-as-partition-scheme} again.
We partition the runs of $s S^\omega$ into three events $\playset_1', \playset_2', \playset_0'$ as follows:
\begin{itemize}
\setlength\itemsep{.8em}
\item
$\playset_1' = \denotationof{\eventually \always \colorset \states = 0}{s}$
\item
$\playset_2' = \denotationof{\always \eventually \colorset \states = 2}{s}$
\item
$\playset_0' = s S^\omega \setminus \denotationof{\formula}{s}$
\end{itemize}
We have previously shown that $\probm_{\mdp',\state,\zstrat_1}(\formula) = 1$ and hence that $\probm_{\mdp',\state,\zstrat_1}(\playset_0') = 0$.
We now invoke Lemma~\ref{lem:as012-return-to-safe}
with $\tau \eqdef \frac23 +\eps < 1$
and obtain that almost all runs in~$\playset_1'$
satisfy $\always \eventually\safesub{\mdp,\safety{\reachset}}{\tau}$.
Since $\epsoptav$ is uniformly $\eps$-optimal for $\safety{\reachset}$ we have
$\safesub{\mdp,\safety{\reachset}}{\tau} \subseteq \safe{\tau -\eps} = \safe{\frac23}$
and thus
almost all runs in~$\playset_1'$
satisfy
$\always \eventually \safe{\frac23}$.
Since $\denotationof{\always \eventually \safe{\frac23}}{} \subseteq \denotationof{\eventually \safe{\frac23}}{}$,
we observe that
\[
\probm_{\mdp',\state,\zstrat_1}(\playset_1' \cap \denotationof{\eventually \safe{\frac23} \lor \eventually \colorset \states = 2}{}) = \probm_{\mdp',\state,\zstrat_1}(\playset_1').
\]
Since $\playset_2' \subseteq \denotationof{\eventually \colorset \states = 2}{}$, we also have that
\[
    \probm_{\mdp',\state,\zstrat_1}(\playset_2' \cap \denotationof{\eventually \safe{\frac23} \lor \eventually \colorset \states = 2}{}) = \probm_{\mdp',\state,\zstrat_1}(\playset_2').
\]
By \cref{lem-as-partition-scheme} we obtain $\probm_{\mdp',\state,\zstrat_1}(\eventually \safe{\frac23} \lor \eventually \colorset \states = 2) = 1$.

Writing $\reachset' \eqdef \safe{\frac23} \cup \colorset \states = 2$ we have just shown that for all $s \in \states$ there is a strategy~$\zstrat_1$ with $\probm_{\mdp',\state,\zstrat_1}(\eventually \reachset') = 1$.
Since this holds for all $s \in \states$, it follows from \cref{thm:reach-eps}
that there is an MD-strategy~$\hat\zstrat$ for~$\mdp'$ with $\probm_{\mdp',\state,\hat\zstrat}(\eventually \reachset') = 1$ for all $s \in \states$.
We extend the (so far partially defined) strategy~$\zstrat'$ by~$\hat\zstrat$.
Thus we obtain a (fully defined) strategy~$\zstrat'$ for~$\mdp$ such that for all $s \in \states$ we have $\probm_{\mdp,\state,\zstrat'}(\eventually \reachset') = 1$.

It remains to show that
$\probm_{\mdp,\state,\zstrat'}(\formula) = 1$ holds for all $s \in \states$.
To this end we will use Lemma~\ref{lem-as-partition-scheme} again.
We partition the runs of $s S^\omega$ into two events $\playset_1'', \playset_2''$:
\begin{itemize}
\setlength\itemsep{.6em}
\item
$\playset_1'' = \denotationof{\always\eventually \safe{\frac23}}{s}$, i.e., $\playset_1''$ contains the runs that visit $\safe{\frac23}$ infinitely often.
\item
$\playset_2'' = \denotationof{\eventually\always \neg\safe{\frac23}}{s}$, i.e., $\playset_2''$ contains the runs that from some point on never visit $\safe{\frac23}$.
\end{itemize}
Recall that $\zstrat'$ plays like $\epsoptav$ inside of $\safe{\frac13}$, that
$\safe{\frac23} \subseteq \safe{\frac13}$, and that $\epsoptav$ is an
MD-strategy. %
Thus we can invoke Lemma~\ref{lem:LZO-1} with
$\beta_2 \eqdef \frac{2}{3}$ and $\beta_1 \eqdef \frac{1}{3}$ and conclude that
every time a run (according to $\zstrat'$) enters $\safe{\frac23}$,
the probability that the run remains in $\safe{\frac13}$ forever is at least $\frac12$.
It follows that almost all runs in~$\playset_1''$ eventually remain in $\safe{\frac13}$ forever. That is,
$\probm_{\mdp,\state,\zstrat'}(\playset_1'' \cap \denotationof{\eventually \always \safe{\frac13}}{}) = \probm_{\mdp,\state,\zstrat'}(\playset_1'')$.
Since $\safe{\frac13} \subseteq \colorset \states = 0$, we have $\denotationof{\eventually \always \safe{\frac13}}{} \subseteq \denotationof{\eventually \always \colorset \states = 0}{} \subseteq \denotationof{\formula}{}$.
Hence also $\probm_{\mdp,\state,\zstrat'}(\playset_1'' \cap \denotationof{\formula}{}) = \probm_{\mdp,\state,\zstrat'}(\playset_1'')$.

We have previously shown that $\probm_{\mdp,\state,\zstrat'}(\eventually \reachset') = 1$ holds for all $\state \in \states$.
Hence also $\probm_{\mdp,\state,\zstrat'}(\always \eventually \reachset') = 1$ holds for all $\state \in \states$.
In particular, almost all runs in~$\playset_2''$ satisfy $\always \eventually \reachset'$.
By comparing the definitions of $\playset_2''$ and $\reachset'$ we see that almost all runs in~$\playset_2''$ even satisfy $\always \eventually \colorset \states = 2$.
Since $\denotationof{\always \eventually \colorset \states = 2}{} \subseteq \denotationof{\formula}{}$, we obtain $\probm_{\mdp,\state,\zstrat'}(\playset_2'' \cap \denotationof{\formula}{}) = \probm_{\mdp,\state,\zstrat'}(\playset_2'')$.
A final application of Lemma~\ref{lem-as-partition-scheme} yields
$\probm_{\mdp,\state,\zstrat'}(\formula) = 1$ for all $s \in \states$.
\end{proof}

We are ready to prove \cref{thm:012quant}.

\begin{proof}[Proof of \cref{thm:012quant}]
Let $\mdp=\mdptuple$ be an MDP and $\formula$ a $\cParity{\{0,1,2\}}$ objective.
Since $\formula$ is tail, it is possible to define (see \cref{def:conditionedmdp})
the conditioned version
$\pmdp=\tuple{\pstates,\pzstates,\prstates,\ptransition,\pprobp}$
of $\mdp$ wrt.\ $\formula$.
Assume that in $\pmdp$ 
for every safety objective (given by some target $T\subseteq\pstates$)
and $\eps>0$ there exists a uniformly $\eps$-optimal MD strategy.
Let $\states_{\mathit{opt}}$ be the subset of states that have an optimal
strategy for $\formula$ in $\mdp$.

By \cref{thm:reduction-to-as}.1, all states in $\pmdp$ are almost sure winning
for $\formula$.
By our above condition about safety objectives in $\pmdp$,
we can apply \cref{prop:as012} to $\pmdp$ and obtain that for every state in
$\pmdp$ there is an MD strategy that is almost surely winning for $\formula$.
By \cref{thm:reduction-to-as}.2, there is an MD strategy in $\mdp$ that is
optimal for $\formula$ from every state in $\states_{\mathit{opt}}$, as required.
\end{proof}

In order to apply \cref{thm:012quant} to infinitely-branching acyclic MDPs,
we now show that acyclicity guarantees the existence of
uniformly $\eps$-optimal MD strategies for safety objectives.

\thmEpsOptSafety*
\begin{proof}
Let $\safety{\reachset}$ be the safety objective and shortly write
$\valueof{}{s_0}=\valueof{\mdp,\safety{\reachset}}{s_0}$ for the value of a state $\state_0$ w.r.t.~this objective.
Assume w.l.o.g.\ that the target $\reachset\subseteq\states$ is a sink and let $\iota:\states\to\N$ be an enumeration of the state space. %

Let $\zstrat$ be an MD-strategy that, at any state $\state\in\zstates$, picks a
successor $\state'$ such that
$$\valueof{}{\state'}\quad\ge\quad \valueof{}{\state} (1- \eps 2^{-\iota(\state)}).$$
We show that
$\probm_{\mdp,\state_0,\sigma}(\safety{\reachset}) \ge \valueof{}{\state_0}(1-\eps)$
holds for every initial state $\state_0$.

Let's write $\poststar{s}\subseteq \states$ for the set of states reachable from state $s\in\states$
and define $\ferr{s} \eqdef\prod_{s'\in\poststar{s}}(1-\eps2^{-\iota(s')})$.
Further, let $\valueof{s}{n}$ be the random variable denoting the value of the $n$th state of a random run that starts in $s$.
In particular, $\valueof{s}{0}=\valueof{}{\state}$.
An induction on $n$ using our choice of strategy gives, for every $\state_0\in S$, that
\begin{equation}
    \label{eq:exi-1}
    \expectation(\valueof{s_0}{n})
    \ge \valueof{}{s_0}\ferr{s_0}.
\end{equation}
Indeed, this trivially holds for $n=0$. For the induction step there are two cases.

\noindent
\emph{Case 1: $s_0\in \zstates$ and $\sigma(\state_0)=s$.} Then
\begin{align*}
    &\expectation(\valueof{s_0}{n+1})\\
    &=
    \expectation(\valueof{s}{n})\\
    &\ge
    \valueof{}{s}\ferr{s}
    &\text{ind. hyp.}\\
    &\ge
    \valueof{}{\state_0}
    \left(1-\eps2^{-\iota(\state_0)}\right)
    \ferr{s}
    &\text{def. of $\sigma$}\\
    &\ge
    \valueof{}{s_0}\ferr{s_0}
    &\text{acyclicity; def.~of } \ferr{s_0}.
\end{align*}

\noindent
\emph{Case 2: $s_0\in \rstates$.} Then
\begin{align*}
    &\expectation(\valueof{s_0}{n+1})\\
    &=
    \sum_{\state_0\transition{} s}\probp(\state_0)(s) \cdot \expectation(\valueof{s}{n})
    \\
    &\ge
    \sum_{\state_0\transition{} s}\probp(\state_0)(s) \cdot \valueof{}{\state}\ferr{s}
    \quad\quad\text{by ind.~hyp.}\\
    &\ge
    \sum_{\state_0\transition{} s}\probp(\state_0)(s) \cdot \valueof{}{\state}
    \left(1-\eps2^{-\iota(\state_0)}\right)
    \ferr{s}
    \\
    &\ge
    \sum_{\state_0\transition{} s}\probp(\state_0)(s) \cdot \valueof{}{s}
    \ferr{\state_0}
    \quad\text{acyclicity; def.~of } \ferr{\state_0}\\
    &=
    \valueof{}{s_0}\ferr{s_0}.
\end{align*}

Together with the observation that $\ferr{s_0}>(1-\eps)$ for every $\state_0$,
we derive that
    \begin{equation}
        \label{eq:exi-lim}
        \liminf_{n\to\infty} \expectation(\valueof{\state_0}{n}) \ge \valueof{}{\state_0}(1-\eps).
    \end{equation}

    To show the claim, fix $\state_0\in \states$ and
shortly write $\probm$ for $\probm_{\mdp,\state_0,\zstrat}$ here.
Let $\RVNotInT{n} : \states^\omega \to \{0,1\}$ be the random variable that indicates that the $n$th state is not in the target set $\reachset$.
Note that $[\next^n \neg \reachset] \ge \valueof{\state_0}{n}$ because target states have value $0$.
We conclude that
\begin{align*}
    &\probm(\safety{\reachset})\\
    &=\quad\probm\left(\bigcap_{i=0}^\infty{\denotationof{X^i\lnot T}{}}\right)
 && \text{semantics of~$\safety{\reachset}=\always\neg \reachset$} \\
 & =\quad\smashoperator{\lim_{n\to\infty}} \probm\left(\bigcap_{i=0}^n \denotationof{X^i\lnot T}{} \right)
 && \text{cont.~of measures from above} \\
 & =\quad \smashoperator{\lim_{n\to\infty}} \probm\left(\denotationof{X^i\lnot T}{}\right)
 && \text{$\reachset$ is a sink} \\
 & =\quad\smashoperator{\lim_{n\to\infty}} \expectation(\RVNotInT{n})
 && \text{definition of $\RVNotInT{n}$} \\
& \ge\quad \liminf_{n\to\infty} \expectation(\valueof{}{n})
 && \text{as $\RVNotInT{n} \ge \valueof{\state_0}{n}$}\\
 & \ge\quad \valueof{}{\state_0}(1-\eps)
 && \text{by \cref{eq:exi-lim}.}
\qedhere
\end{align*}
\end{proof}

\section{$\eps$-Optimal Strategies for $\cParity{\{0,1\}}$}\label{app:eps01}
\thmcoBuchi*

\begin{proof}
Let $\mdp=\mdptuple$ be an MDP such that 
for all safety objectives,
uniformly 
$\eps$-optimal strategies can be chosen MD.
Let $\coloring:\states\to \{0,1\}$ be a coloring and $\formula=\Parity{\coloring}$ be the resulting
co-B\"uchi objective.

We show that there exist uniformly $\eps$-optimal
MD-strategies for $\formula$.
I.e., for every $\eps >0$ there is an MD-strategy $\zstrat_\eps$
with $\forall_{\state_0 \in \states}\probm_{\mdp,\state_0,\zstrat_\eps}(\formula) \ge
\valueof{\mdp}{\state_0} - \eps$.

To construct this MD-strategy strategy $\zstrat_\eps$, we first need several
auxiliary notions.

Let $\eps_1 >0$ be a suitably small number (to be determined later)
and $\tau_1 \eqdef 1-\eps_1 >0$.
Let $\tau_2 \eqdef 1 - \eps_1/k + \lambda$ for a suitably large $k \ge 1$ (to be
determined later) and let $\lambda < \eps_1/k$ (e.g., $\lambda \eqdef \eps_1/(2k)$).
Thus $\tau_2 <1$.

Let $\reachset \eqdef \coloring^{-1}(\{1\})$ be the set of states with color $1$
and $\safety{\reachset}$ the safety objective.
We have that $\formula = \eventually \always (\states\setminus\reachset)$.

By our assumption on $\mdp$, there exists a uniformly $\lambda$-optimal
MD-strategy $\optav$ for $\safety{\reachset}$.
Let $\states' \eqdef \safesub{\mdp,\optav,\safety{\reachset}}{\tau_1}$
be the set of states where $\optav$ achieves at least value $\tau_1$ for $\safety{\reachset}$
(refer to \cref{def:safeset} for the definition of safe sets).
In particular $\states' \subseteq \states\setminus\reachset$, since $\tau_1 >0$.
From $\mdp$ we obtain a modified MDP $\mdp'$ by fixing all player choices from states in
$\states'$ according to $\optav$.

We show that the value w.r.t.\ objective $\formula$
is only slightly smaller in $\mdp'$, i.e.,
\begin{equation}\label{thm:coBuchi:small-loss}
\valueof{\mdp'}{\state_0} \ge \valueof{\mdp}{\state_0} - \eps_1
\quad\mbox{for every $\state_0 \in \states$.}
\end{equation}
Let $\state_0 \in \states$.
By definition of the value $\valueof{\mdp}{\state_0}$,
for every $\delta>0$ there exists a strategy $\zstrat_\delta$ in $\mdp$ from $\state_0$
s.t.\ $\probm_{\mdp,\state_0,\zstrat_\delta}(\formula) \ge \valueof{\mdp}{\state_0} -\delta$.
We define a strategy $\zstrat_\delta'$ in $\mdp'$ from state $\state_0$ as
follows.
First play like $\zstrat_\delta$. If and when a state in $\states'$ is
reached, then henceforth play like $\optav$. This is possible, since no moves
from states outside $\states'$ have been fixed in $\mdp'$,
and all moves from states inside $\states'$ have been fixed
according to $\optav$.
Then we have:
\[
\begin{aligned}
&\probm_{\mdp',\state_0,\zstrat_\delta'}(\formula) \\[1mm]
&=   \probm_{\mdp,\state_0,\zstrat_\delta}(\formula) \\
&
\quad \;  - \, \probm_{\mdp,\state_0,\zstrat_\delta}(\eventually \states') \cdot
\probm_{\mdp,\state_0,\zstrat_\delta}(\formula \mid \eventually \states') \\
&  \quad \; + \,\probm_{\mdp,\state_0,\zstrat_\delta}(\eventually \states') \cdot
\probm_{\mdp',\state_0,\zstrat_\delta'}(\formula \mid \eventually \states') \\[1mm]
& \ge   \probm_{\mdp,\state_0,\zstrat_\delta}(\formula) \\
&
\quad \;- \,  \probm_{\mdp,\state_0,\zstrat_\delta}(\eventually \states') \cdot
\probm_{\mdp,\state_0,\zstrat_\delta}(\formula \mid \eventually \states')  \\
& \quad \; + \, \probm_{\mdp,\state_0,\zstrat_\delta}(\eventually \states') \cdot \tau_1 \\[1mm]
& \ge  \valueof{\mdp}{\state_0} - \delta - \probm_{\mdp,\state_0,\zstrat_\delta}(\eventually
\states')(1-\tau_1)\\[1mm]
& \ge  \valueof{\mdp}{\state_0} - \delta - \eps_1
\end{aligned}
\]
Since this holds for every $\delta >0$ we have
$\valueof{\mdp'}{\state_0} \ge \valueof{\mdp}{\state_0} - \eps_1$, thus (\ref{thm:coBuchi:small-loss}).

Let $\states'' \eqdef \safesub{\mdp,\optav,\safety{\reachset}}{\tau_2-\lambda}$
as by definition of safe sets in \cref{def:safeset}.
In particular, $\states'' = \safesub{\mdp',\optav,\safety{\reachset}}{\tau_2-\lambda}$,
since $\optav$ is a uniform MD-strategy that has been fixed on the subset $\states'$
in the step from $\mdp$ to $\mdp'$.

Let $\state_0\in \states$ be an arbitrary state.
By definition of the value w.r.t.\ $\formula$ of $\state_0$, for every $\eps' >0$ there exists a
strategy $\zstrat_{\eps'}$ from $\state_0$ in $\mdp'$ with
$\probm_{\mdp',\state_0,\zstrat_{\eps'}}(\formula) \ge
\valueof{\mdp'}{\state_0} -\eps'$.

Since $\tau_2 <1$, can we apply Lemma~\ref{lem:as012-return-to-safe} and obtain
$\probm_{\mdp',\state_0,\hat{\zstrat}}(\eventually \safesub{\mdp',\safety{\reachset}}{\tau_2}) \ge
\probm_{\mdp',\state_0,\hat{\zstrat}}(\formula)$ for every strategy
$\hat{\zstrat}$ from $\state_0$ and
thus in particular for $\zstrat_{\eps'}$.

Therefore, $\probm_{\mdp',\state_0,\zstrat_{\eps'}}(\eventually \safesub{\mdp',\safety{\reachset}}{\tau_2}) \ge
\valueof{\mdp'}{\state_0} -\eps'$.
Since this holds for every $\eps' >0$,
in $\mdp'$ the value of $\state_0$ w.r.t.\ the reachability objective
$\eventually \safesub{\mdp',\safety{\reachset}}{\tau_2}$ is
$\ge \valueof{\mdp'}{\state_0}$ for every state $\state_0$.

By Theorem~\ref{thm:reach-eps}, for every $\eps_2 >0$
there exists a uniformly $\eps_2$-optimal MD-strategy $\zstrat'$
in $\mdp'$ for this reachability objective.
So we obtain
\begin{equation}\label{thm:coBuchi:reachcore}
\probm_{\mdp',\state_0,\zstrat'}(\eventually \safesub{\mdp',\safety{\reachset}}{\tau_2}) \ge
\valueof{\mdp'}{\state_0} - \eps_2\quad\mbox{for every state $\state_0$}.
\end{equation}

In particular, $\zstrat'$ must coincide with $\optav$
at all states in $\states'$, since in $\mdp'$ these choices are
already fixed.

Since $\sigma$ is a uniformly $\lambda$-optimal MD-strategy for
$\safety{\reachset}$ in $\mdp$ and $\mdp'$,
we have
$
\safesub{\mdp',\safety{\reachset}}{\tau_2} \subseteq
\safesub{\mdp',\sigma,\safety{\reachset}}{\tau_2-\lambda} = \states''
$
and thus by (\ref{thm:coBuchi:reachcore}) we get
\begin{equation}\label{thm:coBuchi:reachcore2}
\probm_{\mdp',\state_0,\zstrat'}(\eventually \states'') \ge
\probm_{\mdp',\state_0,\zstrat'}(\eventually
\safesub{\mdp',\safety{\reachset}}{\tau_2}) \ge
\valueof{\mdp'}{\state_0} - \eps_2.
\end{equation}

We obtain the MD-strategy $\zstrat_\eps$ in $\mdp$ by
combining the MD-strategies $\zstrat'$ and $\optav$.
The strategy $\zstrat_\eps$ plays like
$\optav$ at all states inside $\states'$ and like
$\zstrat'$ at all states outside $\states'$ (i.e., at $\states \setminus \states'$).

In order to show that $\zstrat_\eps$ has the required property
$\probm_{\mdp,\state_0,\zstrat_\eps}(\formula) \ge \valueof{\mdp}{\state_0} - \eps$, we
first estimate the probability that a play according to $\zstrat_\eps$ will never
leave the set $\states'$ after having visited a state in
$\states''$.

Let $\state \in \states''$.
Then, by applying \cref{lem:LZO-1} to the Markov chain obtained from applying the MD-strategy $\zstrat_\eps$ to~$\mdp$, we obtain
\begin{equation}\label{thm:coBuchi:stayonion}
\begin{aligned}
\probm_{\mdp,\state,\optav}(\always \states') &\quad\ge \quad \frac{(\tau_2 -\lambda) - \tau_1}{1 - \tau_1}\\[1mm]
	& \quad= \quad \frac{(1-\eps_1/k) - (1-\eps_1)}{\eps_1}\\
	& \quad= \quad  1-\frac{1}{k}\,.
\end{aligned}
\end{equation}
In particular we also have
$\probm_{\mdp,\state,\zstrat_\eps}(\always \states') \ge
1-\frac{1}{k}$,
since $\zstrat_\eps$ coincides with $\optav$ inside the set
$\states'$.
Finally we obtain for every $\state_0 \in \states$
\[
\begin{array}{lcll}
\probm_{\mdp,\state_0,\zstrat_\eps}(\formula)
& = & \probm_{\mdp,\state_0,\zstrat_\eps}(\eventually \always
(\states\setminus\reachset)) & \mbox{by def. of $\formula$}\\
& \ge &
\probm_{\mdp,\state_0,\zstrat_\eps}(\eventually \states'') %
{}\cdot
\probm_{\mdp,\state_0,\zstrat_\eps}(\eventually\always \states' \mid
\eventually \states'') & \mbox{since $\states' \subseteq \states\setminus\reachset$}\\[1mm]
& \ge &
\probm_{\mdp',\state_0,\zstrat'}(\eventually \states'')
\cdot (1-1/k) & \mbox{by (\ref{thm:coBuchi:stayonion})}
\\[1mm]
& \ge &
(\valueof{\mdp'}{\state_0} - \eps_2)
\cdot
(1-1/k) & \mbox{by (\ref{thm:coBuchi:reachcore2})} \\[1mm]
& \ge & (\valueof{\mdp}{\state_0} - \eps_1 - \eps_2)
\cdot (1-1/k) & \mbox{by (\ref{thm:coBuchi:small-loss})}
\end{array}
\]
This holds for every $1 > \eps_1, \eps_2 >0$ and
every $k \ge 1$, and moreover $\valueof{\mdp}{\state_0} \le 1$.
Thus we can set $\eps_1 = \eps_2 \eqdef \eps/4$ and
$k \eqdef \frac{2}{\eps}$
and obtain $\probm_{\mdp,\state_0,\zstrat_\eps}(\formula)
\ge \valueof{\mdp}{\state_0} -\eps$ for every $\state_0 \in \states$ as required.
\end{proof}

\ignore{
The precondition of \cref{thm:coBuchi} (uniformly $\eps$-optimal
MD-strategies for safety objectives) is satisfied by many classes of MDPs,
e.g., the following.

\begin{corollary}\label{cor:coBuchi}
\
\begin{enumerate}
\item
Acyclic MDPs and finitely branching MDPs admit uniformly $\eps$-optimal
MD-strategies for co-B\"uchi objectives.
\item
Every MDP admits $\eps$-optimal deterministic Markov-strategies for co-B\"uchi
objectives.
\end{enumerate}
\end{corollary}
\begin{proof}
Towards (1), acyclic MDPs admit uniformly $\eps$-optimal
MD-strategies for safety by \Cref{thm:eps-optimal-safety}, and
finitely branching MDPs trivially admit even uniformly optimal MD-strategies for safety.
Then apply \cref{thm:coBuchi}.

Towards (2), this follows directly from (1) for acyclic MDPs and
\cref{lem:acyclic-Markov}.
\end{proof}
}

\end{document}